\tikzset{highlight/.style={rectangle,
		                   fill=red!15,
		                   blend mode = multiply,
		                   rounded corners = 0.5 mm,
		                   inner sep=1pt,
		                   fit = #1}}
\newcommand{\tikzmark}[1]{\tikz[overlay,remember picture] \node (#1) {};}
\newtheorem{definition}{Definition}
\newtheorem{theorem}{Theorem}
\newtheorem{corollary}{Corollary}
\newtheorem{proposition}{Proposition}
\newtheorem{lemma}{Lemma}
\newcounter{example}[section]
\newenvironment{example}[1][]{\refstepcounter{example}\par\medskip
	\noindent \textbf{Example~\theexample. #1} \rmfamily}{\medskip}
\newtheorem{openroblem}{\bf Open Problem}
\newtheorem{conjecture}{\bf Conjecture}
\newcommand{\shorten}[1]{}
\begin{document}
	
	\title{Entropoid Based Cryptography}
	
	\author{Danilo Gligoroski\thanks{Department of Information Security and Communication Technologies, Norwegian University of Science and Technology - NTNU}}
	
	\maketitle

	\begin{abstract}
		 %Due to their elegance and simplicity, two pre-quantum cryptographic concepts have became the ultimate goal for cryptographers that design post-quantum schemes: 1. Diffie-Hellman key exchange protocol and 2. Hash and sign concept for digital signatures. While many of the newly proposed schemes try to follow either of those two concepts, so far, there has been no proposal that maintain all efficiency features of the pre-quantum primitives. That means, the newly proposed schemes suffer either from their bigger private keys, or from their bigger public keys, or they induce bigger communication requirements. 
		 
		The algebraic structures that are non-commutative and non-associative known as entropic groupoids that satisfy the \emph{"Palintropic"} property i.e., $x^{\mathbf{A} \mathbf{B}} = (x^{\mathbf{A}})^{\mathbf{B}} = (x^{\mathbf{B}})^{\mathbf{A}} = x^{\mathbf{B} \mathbf{A}}$ were proposed by Etherington in '40s from the 20th century. Those relations are exactly the Diffie-Hellman key exchange protocol relations used with groups. The arithmetic for non-associative power indices known as Logarithmetic was also proposed by Etherington and later developed by others in the 50s-70s. However, as far as we know, no one has ever proposed a succinct notation for exponentially large non-associative power indices that will have the property of fast exponentiation similarly as the fast exponentiation is achieved with ordinary arithmetic via the consecutive rising to the powers of two. 
		
		In this paper, we define ringoid algebraic structures $(G, \boxplus, *)$ where $(G, \boxplus) $ is an Abelian group and $(G, *)$ is a non-commutative and non-associative groupoid with an entropic and palintropic subgroupoid which is a quasigroup, and we name those structures as Entropoids. We further define succinct notation for non-associative bracketing patterns and propose algorithms for fast exponentiation with those patterns. 
		
		Next, by an analogy with the developed cryptographic theory of discrete logarithm problems, we define several hard problems in Entropoid based cryptography, such as Discrete Entropoid Logarithm Problem (DELP), Computational Entropoid Diffie-Hellman problem (CEDHP), and Decisional Entropoid Diffie-Hellman Problem (DEDHP). We post a conjecture that DEDHP is hard in Sylow $q$-subquasigroups. Next, we instantiate an entropoid Diffie-Hellman key exchange protocol. Due to the non-commutativity and non-associativity, the entropoid based cryptographic primitives are supposed to be resistant to quantum algorithms. At the same time, due to the proposed succinct notation for the power indices, the communication overhead in the entropoid based Diffie-Hellman key exchange is very low: for 128 bits of security, 64 bytes in total are communicated in both directions, and for 256 bits of security, 128 bytes in total are communicated in both directions. 
		
		Our final contribution is in proposing two entropoid based digital signature schemes. The schemes are constructed with the Fiat-Shamir transformation of an identification scheme which security relies on a new hardness assumption: computing roots in finite entropoids is hard. If this assumption withstands the time's test, the first proposed signature scheme has excellent properties: for the classical security levels between 128 and 256 bits, the public and private key sizes are between 32 and 64, and the signature sizes are between 64 and 128 bytes. The second signature scheme reduces the finding of the roots in finite entropoids to computing discrete entropoid logarithms. In our opinion, this is a safer but more conservative design, and it pays the price in doubling the key sizes and the signature sizes.
		
		We give a proof-of-concept implementation in SageMath 9.2 for all proposed algorithms and schemes in an appendix.		 
	\end{abstract}

	\textbf{Keywords:} Post-quantum cryptography, Discrete Logarithm Problem, Diffie-Hellman key exchange, entropic, Entropoid, Entropoid Based Cryptography	
	\newpage
	\tableofcontents

	\newpage

\section{Introduction}
The arithmetic of non-associative indices (shapes or patterns of bracketing with a binary operation $*$) has been defined as \emph{"Logarithmetic"} by Etherington in the '40s of the 20th century. One of the most interesting properties that have been overlooked by modern cryptography is the Discrete Logarithm problem and the Diffie-Hellman key exchange protocol in the non-commutative and non-associative logarithmetic of power indices. In light of the latest developments in quantum computing, Shor's quantum algorithm that can solve DL problem in polynomial time if the underlying algebraic structures are commutative groups, and the post-quantum cryptography, it seems that there is an opening for a \emph{rediscovery} of Logarithmetic and its applications in Cryptography. 

Etherington himself, as well as other authors later (\cite{harding1971probabilities}, \cite{dacey1973non}), noticed that the notation of shapes introduced in \cite{etherington1940xv} quickly gets complicated (and from our point of view for using them for cryptographic purposes, incapable of handling exponentially large indices). 

Inspired by the work of Etherington, in a series of works in '50s, '60s and '70s of the last century, many authors such as Robinson \cite{robinson1949non}, Popova \cite{popova1951logarithmetics}, Evans \cite{evans1957nonassociative}, Minc \cite{minc1959theorems}, Bollman \cite{bollman1967formal}, Harding \cite{harding1971probabilities}, Dacey \cite{dacey1973non}, Bunder \cite{bunder1976commutative}, Trappmann \cite{trappmann2007arborescent}, developed axiomatic number systems for  non-associative algebras which in many aspects resemble the axiomatics of ordinary number theory. While the results from that development are quite impressive such as the fundamental theorem of non-associative arithmetic (prime factorization of indices~\cite{evans1957nonassociative}), or the analogue of the last Fermat Theorem \cite{popova1951logarithmetics,evans1957nonassociative,minc1959theorems}, the construction of the shapes was essentially sequential. In cryptography, we need to operate with power indices of exponential sizes. Thus, we have to define shapes over non-associative and non-commutative groupoids that allow fast exponentiation, similarly as it is done with the consecutive rising to the powers of two in the standard modular arithmetic, while keeping the variety of possible outcomes of the calculations, as the flagship aspect of the non-commutative and non-associative structures.

\subsection{Our Contribution}
We first define a general class of groupoids $(G, *)$ (sets $G$ with a binary operation $*$) over direct products of finite fields with prime characteristics $(\mathbb{F}_p)^L$ that are \emph{"Entropic"} (for every four elements $x, y, z$ and $w$, if $x * y = z * w$ then $x * z = y * w$). Then, for $L=2$ we find instances of those operations $*$ that are nonlinear in $(\mathbb{F}_p)^2$, non-commutative and non-associative. In order to  compute the powers $x^a$ where $a \in \mathbb{Z}^+ $, of elements $x \in G$, due to the non-associativity, we need to know some exact bracketing shape $a_s$, and we denote the power indices as pairs $\mathbf{A} = (a, a_s)$.
Etherington defined the Logarithmetic of indices $\mathbf{A}$, $\mathbf{B}$, by defining their addition $\mathbf{A} + \mathbf{B} $ and multiplication $\mathbf{A} \mathbf{B} $ as $x^{\mathbf{A} + \mathbf{B}} = x^{\mathbf{A}} * x^{\mathbf{B}} $ and $x^{\mathbf{A} \mathbf{B}} = (x^{\mathbf{A}})^{\mathbf{B}} $. He further showed that the power indices of entropic groupoids, satisfy the \emph{"Palintropic"} property i.e., $x^{\mathbf{A} \mathbf{B}} = (x^{\mathbf{A}})^{\mathbf{B}} = (x^{\mathbf{B}})^{\mathbf{A}} = x^{\mathbf{B} \mathbf{A}}$ which is the exact form of the Diffie-Hellman key exchange protocol.

We further analyze the chosen instances of entropic groupoids for how to find left unit elements, how to compute the multiplicative inverses, how to define addition in those groupoids, and how to find generators $g \in G$ that generate subgroupoids with a maximal size of $(p - 1)^2$ elements. We show that these maximal multiplicative subgroupoids are quasigroups. We also define Sylow $q$-subquasigroups. Having all this, we define several hard problems such as Discrete Entropoid Logarithm Problem (DELP), Computational Entropoid Diffie-Hellman problem (CEDHP), and Decisional Entropoid Diffie-Hellman Problem (DEDHP). We post a conjecture that DEDHP is hard in Sylow $q$-subquasigroups. Next, we propose a new hard problem specific for the Entropoid algebraic structures: Computational Discrete Entropoid Root Problem (CDERP). 

We propose instances of Diffie-Hellman key exchange protocol in those entropic, non-abelian and non-associative groupoids. Due to the hidden nature of the bracketing pattern of the power index (if chosen randomly from an exponentially large set of possible patterns), it seems that the current quantum algorithms for finding the discrete logarithms, but also all classical algorithms for solving DLP (such as Pollard's rho and kangaroo, Pohlig-Hellman, Baby-step giant-step, and others) are not suitable to address the CLDLP.

Based on CDERP, we define two post-quantum signature schemes.

Our notation for power indices that we introduce in this paper differs from the notation that Etherington used in \cite{etherington1940xv}, and is adapted for our purposes to define operations of rising to the powers that have exponentially (suitable for cryptographic purposes) big values. However, for the reader's convenience we offer here a comparison of the corresponding notations: the shape $s$ in \cite{etherington1940xv} means a power index $\mathbf{A} = (a, a_s)$ here; degree $\delta(s)$ in \cite{etherington1940xv} means $a$ here; the notations of altitude $\alpha(s)$ and mutability $\mu(s)$ in \cite{etherington1940xv} do not have a direct interpretation in the notation of $\mathbf{A} = (a, a_s)$ but are implicitly related to $a_s$.

\section{Mathematical Foundations for Entropoid Based Cryptography}

\subsection{General definition of Logarithmetic}

\begin{definition}
	A groupoid $(G, *)$ is an algebraic structure with a set $G$ and a binary operation $*$ defined uniquely for all pairs of elements $x$ and $y$ i.e., 
	\begin{equation}\label{Eq:Groupoid}
		\forall x, y \in G,\ \ \  \exists! \  (x*y) \in G.
	\end{equation}
\end{definition}

The following definitions are taken and adapted for our purposes, from \cite{etherington1940xv}, \cite{etherington1945transposed}, \cite{etherington1958groupoids} and \cite{etherington1965quasigroups}.

\begin{definition}\label{Def:EntropicOperation}
We say the binary operation $*$ of the groupoid $(G,*)$ is entropic, if for every four elements $x, y, z, w \in G$, the following relation holds:
\begin{equation}\label{Eq:EntropicOperations}
	\text{If}\ \ \ x * y = z * w\ \ \ \text{then}\ \   \ x * z = y * w.
\end{equation}
\end{definition}

\begin{definition}\label{Def:General-Power-Index}
Let $x \in G$ is an element in the groupoid $(G, *)$ and let $a \in \mathbb{N}$ is a natural number. A bracketing shape (pattern) for multiplying $x$ by itself, $a$ times is denoted by $a_{s}$ i.e. 
\begin{equation}\label{Eq:a_{s}}
	a_{s}  \colon \underbrace{(x * \ldots (x*x) \ldots )}_{a \text{\ copies of } x}.
\end{equation}
The pair $\mathbf{A} = (a, a_{s})$ is called a power index.
\end{definition}

Let us denote by $S_a$ the set of all possible bracketing shapes $a_s$ that use $a$ instances of an element $x$ i.e. 
\begin{equation}\label{Eq:AllShapes}
	S_a(x) = \{ a_s\  |\  a_s \text{\ is a bracketing shape with}\ a \text{ instances of an element}\ x \}.
\end{equation}

\begin{proposition}
	If $\mathbf{A} = (a, a_{s})$ is a power index, then  
	\begin{equation}\label{Eq:Catalan}
		|S_a(x)| = C_{a-1} = \frac{1}{a}\binom{2a - 2}{a-1}, 
	\end{equation}
	where $C_{a-1}$ is the $(a-1)$-th Catalan number \cite[Sequence A000108]{sloane2007line}. \qed
\end{proposition}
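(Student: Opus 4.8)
The plan is to prove this by establishing the standard recurrence for bracketing patterns, identifying it with the Catalan recurrence, and then extracting the closed form. First I would set $b_a := |S_a(x)|$ and record the base case: there is a single shape using one copy of $x$ (namely $x$ itself, with no operation applied), so $b_1 = 1$; likewise $b_2 = 1$, since $x * x$ is the only possibility.

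The key structural step is to condition on the outermost application of $*$. Any bracketing shape with $a \ge 2$ copies of $x$ has a unique topmost occurrence of the operation $*$, which splits the $a$ copies into a left block of $k$ copies and a right block of $a - k$ copies for some $1 \le k \le a - 1$. The two blocks are themselves bracketing shapes, chosen independently, so by the multiplication and addition principles I obtain
$$
b_a = \sum_{k=1}^{a-1} b_k\, b_{a-k}, \qquad b_1 = 1.
$$
Equivalently, a bracketing shape is exactly a full binary tree whose $a$ leaves are the copies of $x$ and whose internal nodes are the occurrences of $*$; the recurrence simply records the choice of the two subtrees hanging off the root. This bijection with full binary trees is what makes the Catalan numbers appear.

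Finally, I would solve the recurrence. Reindexing with $c_n := b_{n+1}$ turns it into the classical Catalan recurrence $c_n = \sum_{j=0}^{n-1} c_j\, c_{n-1-j}$ with $c_0 = 1$. Introducing the ordinary generating function $B(t) = \sum_{n \ge 0} c_n t^n$, the convolution on the right-hand side yields the functional equation $B(t) = 1 + t\, B(t)^2$; solving this quadratic and choosing the branch analytic at $t = 0$ gives $B(t) = \frac{1 - \sqrt{1 - 4t}}{2t}$. Extracting the coefficient of $t^n$ from the generalized binomial expansion of $\sqrt{1 - 4t}$ then produces $c_n = \frac{1}{n+1}\binom{2n}{n}$, that is, $b_a = C_{a-1} = \frac{1}{a}\binom{2a-2}{a-1}$, as claimed.

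The argument is essentially routine, since it amounts to counting full binary trees. The only step demanding any care is the coefficient extraction from $\sqrt{1 - 4t}$; one can sidestep the generating function altogether by verifying directly that $\frac{1}{a}\binom{2a-2}{a-1}$ satisfies the recurrence above (via the Vandermonde-type identity for Catalan convolutions), or simply by invoking the well-known fact that this recurrence is solved by the Catalan numbers.
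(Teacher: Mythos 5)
Your proof is correct: the decomposition at the outermost $*$ gives the Catalan recurrence $b_a=\sum_{k=1}^{a-1}b_kb_{a-k}$ with $b_1=1$, and solving it (by generating functions or by citing the known solution) yields $|S_a(x)|=\frac{1}{a}\binom{2a-2}{a-1}$. The paper states this proposition without any proof, treating it as the classical fact that bracketings of $a$ factors are counted by $C_{a-1}$, so your argument simply supplies the standard justification the paper omits.
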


Two of those bracketing shapes are characteristic since they can be described with an iterative sequential process starting with $x * x$ absorbing the factors $x$ one at a time either in the direction from left to right or from  right to left (in \cite{etherington1940xv} shapes that absorb the terms one at a time are called primary shapes).

\begin{definition}\label{Def:left-to-right-and-vice-cersa}
	We say that the power index $\mathbf{A} = (a, a_{s})$ has a primary left-to-right bracketing shape if
	\begin{equation}\label{Eq:a_{s}-primary-left-to-right}
		a_{s}  \colon \underbrace{(( \ldots ((x *^{\tikzmark{a}} x) * x) \ldots ) *^{\tikzmark{b}} x) }_{a \text{\ copies of } x}.
		\begin{tikzpicture}[overlay,remember picture,out=315,in=225,distance=0.4cm]
			\draw[>=stealth, ->,red, shorten <= -4pt, shorten >= -4pt] (a) edge [bend left=80] (b);
		\end{tikzpicture}
	\end{equation} 
    We shortly write it as: $\mathbf{A} \equiv x_{(a-1,x)\texttt{to-right}}$, denoting that it starts with $x$ and applies $a-1$ right multiplications by $x$. For some generic and unspecified bracketing shape $s$ the notation  $$s_{(k,x)\texttt{to-right}}$$ denotes a sequential extension of $s$ by $k$ right multiplications by $x$. This includes the formal notation  $s \equiv s_{(0,x)\texttt{to-right}}$ which means the shape $s$ itself extended with zero right multiplications by $x$.
	
	We say that the power index $\mathbf{A} = (a, a_{s})$ has a primary right-to-left bracketing shape if
	\begin{equation}\label{Eq:a_{s}-primary-right-to-left}
		a_{s}  \colon \underbrace{(x *^{\tikzmark{b}} ( \ldots (x * (x *^{\tikzmark{a}} x) \ldots ))}_{a \text{\ copies of } x}.
		\begin{tikzpicture}[overlay,remember picture,out=315,in=225,distance=0.4cm]
			\draw[>=stealth, ->,red, shorten <= -4pt, shorten >= -4pt] (a) edge [bend right=80] (b);
		\end{tikzpicture}
	\end{equation} 
	We shortly write it as: $\mathbf{A} \equiv x_{(x, a-1)\texttt{to-left}}$, denoting that it starts with $x$ and applies $a-1$ left multiplications by $x$. For some generic and unspecified bracketing shape $s$ the notation  $$s_{(x,k)\texttt{to-left}}$$ denotes a sequential extension of $s$ by $k$ left multiplications by $x$. This includes the formal notation  $s \equiv s_{(x,0)\texttt{to-right}}$ which means the shape $s$ itself extended with zero left multiplications by $x$.
   
\end{definition}

\begin{definition}\label{Def:Addition-and-Multiplication-of-Endomorphisms}
Every power index $\mathbf{A} = (a, a_{s})$ can be considered as an endomorphism on $G$ (a mapping of $G$ to itself)
\begin{equation}\label{Eq:Endomorphism}
	\mathbf{A} \colon x \rightarrow x^{\mathbf{A}}.
\end{equation}
If $\mathbf{A}  = (a, a_{s})$ and  $\mathbf{B}  = (b, b_{s})$ are two mappings, we define their sum $\mathbf{A} \mathbf{+} \mathbf{B}$ 
as the power index of the product of two powers i.e.:
\begin{equation}\label{Eq:SumOfEndomorphisms}
	x^{\mathbf{A} \mathbf{+} \mathbf{B}} = x^{\mathbf{A}} * x^{\mathbf{B}},
\end{equation}
and we define their product $\mathbf{A} \mathbf{\times} \mathbf{B}$ (or shortly $\mathbf{A} \mathbf{B}$) as the power index of an expression obtained by replacing every factor in the expression of the shape $b_{s}$ by a complete shape $a_{s}$ i.e.:
\begin{equation}\label{Eq:ProductOfEndomorphisms}
	x^{\mathbf{A} \mathbf{B}} = (x^{\mathbf{A}})^{\mathbf{B}}.
\end{equation}
\end{definition}

\begin{definition}
	If the sum of any two endomorphisms of $G$ is also and endomorphism of $G$, we say that $G$ has additive endomorphisms.
\end{definition}

\begin{definition}\label{Def:Equated-Indices}
	Let $(G, *)$ be a given groupoid. Two power indices $\mathbf{A}  = (a, a_{s})$ and  $\mathbf{B}  = (b, b_{s})$ are equal if and only if $x^{\mathbf{A}} = x^{\mathbf{B}} $ for all $x \in G$.
\end{definition}

\begin{definition}
	The Logarithmetic $(L(G), \mathbf{+}, \mathbf{\times})$ is the algebra over the equated indices from Definition \ref{Def:Equated-Indices} with operations $\mathbf{+} $ and $\mathbf{\times}$ as defined in Definition \ref{Def:Addition-and-Multiplication-of-Endomorphisms}.
\end{definition}

\begin{definition}
	If $x^{\mathbf{A} \mathbf{B}} = x^{\mathbf{B} \mathbf{A}}$ for all $x \in G$ and for all power indices $\mathbf{A}$ and $\mathbf{B}$, we say that the groupoid $(G, *)$ is palintropic.
\end{definition}

\begin{theorem}[Etherington, \cite{etherington1958groupoids}]
	If the groupoid $(G,*)$ has additive endomorphisms, then (i) power indices are endomorphisms of $G$, (ii) $G$ is palintropic. \qed
\end{theorem}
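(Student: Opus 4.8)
The plan is to exploit the recursive syntax of bracketing shapes and reduce both claims to structural inductions driven by the sum operation on power indices. The key preliminary observation is that every shape $a_s$ with $a \geq 2$ has a unique outermost occurrence of $*$, which partitions the $a$ copies of $x$ into a left block of $b$ copies and a right block of $c$ copies with $b + c = a$. Writing $\mathbf{B}_1$ and $\mathbf{B}_2$ for the power indices carried by these two sub-shapes, the definitions of product and sum (Definition \ref{Def:Addition-and-Multiplication-of-Endomorphisms}) give $x^{\mathbf{A}} = x^{\mathbf{B}_1} * x^{\mathbf{B}_2}$, i.e. $\mathbf{A} = \mathbf{B}_1 + \mathbf{B}_2$ as maps on $G$. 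Hence every power index is built from the atomic index $\mathbf{1} = (1,x)$, which acts as the identity map, by iterated sums, and this generation is what both inductions will follow.

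For part (i) I would induct on the syntax tree of $a_s$. In the base case $a = 1$ the index $\mathbf{1}$ is the identity $x \mapsto x$, trivially an endomorphism. In the inductive step I decompose $\mathbf{A} = \mathbf{B}_1 + \mathbf{B}_2$ as above; by the induction hypothesis $\mathbf{B}_1$ and $\mathbf{B}_2$ are endomorphisms, and the hypothesis that $G$ has additive endomorphisms asserts exactly that a sum of two endomorphisms is again an endomorphism, whence $\mathbf{A}$ is an endomorphism. This is the sole place where the additive-endomorphism hypothesis enters, and the definitions have been arranged so that it applies verbatim.

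For part (ii) I would fix $\mathbf{A}$ and induct on the structure of $\mathbf{B}$ to establish $(x^{\mathbf{A}})^{\mathbf{B}} = (x^{\mathbf{B}})^{\mathbf{A}}$ for every $x$. The base case $\mathbf{B} = \mathbf{1}$ is immediate, both sides reducing to $x^{\mathbf{A}}$. For the inductive step, writing $\mathbf{B} = \mathbf{B}_1 + \mathbf{B}_2$, the chain
\begin{align*}
x^{\mathbf{A}\mathbf{B}} &= (x^{\mathbf{A}})^{\mathbf{B}_1} * (x^{\mathbf{A}})^{\mathbf{B}_2} = (x^{\mathbf{B}_1})^{\mathbf{A}} * (x^{\mathbf{B}_2})^{\mathbf{A}} \\
&= \bigl(x^{\mathbf{B}_1} * x^{\mathbf{B}_2}\bigr)^{\mathbf{A}} = (x^{\mathbf{B}})^{\mathbf{A}} = x^{\mathbf{B}\mathbf{A}}
\end{align*}
closes the induction: the first step is the definition of the sum, the second is the induction hypothesis applied to $\mathbf{B}_1$ and $\mathbf{B}_2$, and the third is precisely the endomorphism property of $\mathbf{A}$ furnished by part (i). Thus part (i) is genuinely required as input to part (ii).

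Because the hypothesis is tailored to match the inductive step, I do not expect a deep obstacle; the work is essentially organizational. The point most requiring care is the well-definedness of the top-level split and the bookkeeping of which sub-shape carries which index, which I would handle by inducting over the syntax tree of $a_s$ rather than over the integer $a$ alone, since shapes sharing the same $a$ can split differently. I would also state both inductive claims with the quantifier over $x$ (and, in part (ii), over $\mathbf{A}$) on the outside, so that the induction hypothesis is available in exactly the form the steps demand. Finally, it is worth recording that unwinding the inductive step of part (i) at its first level reveals the additive-endomorphism hypothesis to be equivalent to a mediality condition $(p*q)*(r*s) = (p*r)*(q*s)$ on the relevant elements, which explains why it is the natural hypothesis here, even though the induction itself invokes only the hypothesis in its stated form.
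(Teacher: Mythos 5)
The paper offers no proof of this theorem --- it is quoted from Etherington's 1958 article with an immediate \qed --- so there is nothing to compare against except the classical argument, which is exactly what you have reconstructed. Your two structural inductions are correct: the top-level split $\mathbf{A}=\mathbf{B}_1+\mathbf{B}_2$ makes the additive-endomorphism hypothesis apply verbatim in (i), and the displayed chain in (ii) correctly uses (i) at the third equality and the induction hypothesis (quantified over all $x$) at the second, so the proposal stands as a complete and faithful proof of the cited result.
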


\begin{theorem}[Murdoch \cite{murdoch1939quasi}, Etherington, \cite{etherington1958groupoids}]\label{Thm:Basic-theorem-for-powers-in-entropic-groupoids}
	If the groupoid $(G,*)$ is entropic, then for every $x, y \in G$ 
	\begin{equation}\label{Eq:Power-of-product-of-x-and-y}
		(x*y)^\mathbf{A} = x^\mathbf{A}*y^\mathbf{A},
	\end{equation}
	and 
	\begin{equation}\label{Eq:Power-of-power-of-x-general}
		x^{\mathbf{A} \mathbf{B}} = (x^\mathbf{A})^\mathbf{B} = (x^\mathbf{B})^\mathbf{A} = x^{\mathbf{B} \mathbf{A}}.
	\end{equation}
	\qed
\end{theorem}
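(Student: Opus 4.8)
The plan is to prove both claims by induction on the structure (size) of the power indices, using the entropic law (Definition \ref{Def:EntropicOperation}) as the single combinatorial engine, and then deriving the palintropic identity \eqref{Eq:Power-of-power-of-x-general} as a corollary of \eqref{Eq:Power-of-product-of-x-and-y} together with the multiplication rule for indices.

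First I would establish \eqref{Eq:Power-of-product-of-x-and-y}, the homomorphism property $(x*y)^{\mathbf{A}} = x^{\mathbf{A}} * y^{\mathbf{A}}$, by induction on $a$, the degree of $\mathbf{A} = (a, a_s)$. The base case $a=1$ is trivial, and $a=2$ is precisely a restatement of the entropic law: writing $\mathbf{A}$ for the shape $(u*v)$ evaluated at $u=v=$ something, the identity $(x*y)*(x*y)$ must be rearranged to $(x*x)*(y*y)$, which is exactly the conclusion $x*z = y*w$ of \eqref{Eq:EntropicOperations} applied with the hypothesis $x*y = z*w$ set up appropriately. For the inductive step, any shape $a_s$ of degree $a \ge 2$ decomposes uniquely as $a_s = (b_s)*(c_s)$ with $b + c = a$ and both $b,c < a$. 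I would then expand
\begin{equation}
(x*y)^{\mathbf{A}} = (x*y)^{\mathbf{B}} * (x*y)^{\mathbf{C}},
\end{equation}
apply the induction hypothesis to each factor to get $(x^{\mathbf{B}}*y^{\mathbf{B}})*(x^{\mathbf{C}}*y^{\mathbf{C}})$, and finally invoke the entropic law once more to swap the two inner factors, transforming this into $(x^{\mathbf{B}}*x^{\mathbf{C}})*(y^{\mathbf{B}}*y^{\mathbf{C}}) = x^{\mathbf{A}} * y^{\mathbf{A}}$.

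With \eqref{Eq:Power-of-product-of-x-and-y} in hand, I would derive \eqref{Eq:Power-of-power-of-x-general} by induction on the degree $b$ of $\mathbf{B}$. By the definition of index multiplication \eqref{Eq:ProductOfEndomorphisms}, $x^{\mathbf{A}\mathbf{B}} = (x^{\mathbf{A}})^{\mathbf{B}}$ holds by construction, so the real content is the symmetry $(x^{\mathbf{A}})^{\mathbf{B}} = (x^{\mathbf{B}})^{\mathbf{A}}$. Decomposing $b_s = (b'_s)*(b''_s)$, the outer $\mathbf{B}$-power of $x^{\mathbf{A}}$ is a $*$-product of two lower-degree powers of $x^{\mathbf{A}}$; using \eqref{Eq:Power-of-product-of-x-and-y} repeatedly lets me push the $\mathbf{A}$-exponent through the $*$ operations, and the induction hypothesis supplies the commutation $(x^{\mathbf{A}})^{\mathbf{B}'} = (x^{\mathbf{B}'})^{\mathbf{A}}$ on the smaller pieces, which reassemble by \eqref{Eq:Power-of-product-of-x-and-y} into $(x^{\mathbf{B}})^{\mathbf{A}}$. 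The final equality $(x^{\mathbf{B}})^{\mathbf{A}} = x^{\mathbf{B}\mathbf{A}}$ is again definitional.

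**The main obstacle I anticipate is the bookkeeping in the inductive step of the first identity**, specifically justifying that the entropic law — which as stated is a single four-element implication — can be applied \emph{locally} to an arbitrary bracketed subexpression rather than only at the top level. I would need to argue that entropicity is a genuine identity that holds for all substitutions (so it applies inside any context), and verify that the decomposition $a_s = (b_s)*(c_s)$ together with the two applications of the entropic law really does rearrange $(x^{\mathbf{B}}*y^{\mathbf{B}})*(x^{\mathbf{C}}*y^{\mathbf{C}})$ into the desired grouping without any hidden associativity assumption. Since the groupoid is explicitly non-associative, I must be careful that every rearrangement is licensed purely by \eqref{Eq:EntropicOperations} and never by silently regrouping parentheses.
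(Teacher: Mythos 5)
The paper itself offers no proof of this theorem: it is quoted from Murdoch and Etherington and closed with an immediate \qed, so there is nothing in-paper to compare your argument against. Your two-stage structural induction --- first the homomorphism property $(x*y)^{\mathbf{A}}=x^{\mathbf{A}}*y^{\mathbf{A}}$ via the unique decomposition $a_s=b_s*c_s$ of the root of the shape, then palintropy by induction on $\mathbf{B}$, pushing $\mathbf{A}$ through $x^{\mathbf{B}'}*x^{\mathbf{B}''}$ with the first identity --- is exactly the classical argument from the cited sources, and its skeleton is sound. Your worry about applying the entropic law ``locally'' inside a subexpression is unfounded: in your inductive step the rearrangement is invoked exactly once, at the top level of the current expression, so no substitution into a context is ever needed.

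The genuine gap is the one step you gesture at but never carry out: deriving the rearrangement $(x^{\mathbf{B}}*y^{\mathbf{B}})*(x^{\mathbf{C}}*y^{\mathbf{C}})=(x^{\mathbf{B}}*x^{\mathbf{C}})*(y^{\mathbf{B}}*y^{\mathbf{C}})$ from the entropic law as the paper states it. What your induction actually needs is the medial identity $(u*v)*(z*w)=(u*z)*(v*w)$ for all four elements --- Etherington's real definition of ``entropic''. Definition \ref{Def:EntropicOperation} instead asserts the implication ``$u*v=z*w \Rightarrow u*z=v*w$'', and this does not deliver mediality: to obtain your target equality as the conclusion $u*z=v*w$ you would first need a hypothesis $u*v=z*w$ that is itself an instance of mediality, so the argument is circular. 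Worse, the implication read literally is degenerate --- instantiating $z:=u$, $w:=v$ makes the hypothesis a tautology and forces $u*u=v*v$ for all $u,v$, which the operation of Example \ref{Ex:GF(7)} visibly violates. So your proof is correct relative to the identity form of entropicity, but as written the ``single combinatorial engine'' is never shown to produce the one rearrangement the whole induction rests on; you must either adopt the identity $(u*v)*(z*w)=(u*z)*(v*w)$ as the definition or verify it directly for the concrete operation \eqref{Eq:OperationStar} before the base case and the inductive step go through.
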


\begin{definition}\label{Def:generator}
	We say that $g \in G$ is the generator of the groupoid $(G, *)$ if 
	\begin{equation}
		\forall x \in G, \ \ \ \exists \mathbf{A},\ \ \text{such that}\ \ \ x = g^\mathbf{A}.
	\end{equation}
	In that case we write \begin{equation}\langle g \rangle = G.\end{equation}
\end{definition}

\textbf{Note:} In the following subsection and the rest of the paper, we will overload the operators $+$ and $\times$ for addition and multiplication of power indices defined in the previous definitions, with the same operators for the operations of addition and multiplication in the finite field $\mathbb{F}_p$. However, there will be no confusion since the operations act over different domains.

\subsection{Entropic Groupoids Over $(\mathbb{F}_p)^L$ }
Let a set $G$ is a direct product of $L$ instances of the finite field $\mathbb{F}_p$ with $p$ elements, where $p$ is a prime number i.e., 
\begin{equation}\label{Eqn:G_Direct_product}
	G = \underbrace{\mathbb{F}_p \times \ldots \mathbb{F}_p}_L.
\end{equation}

Let $G = (\mathbb{F}_p)^L$ for $L \ge 2$, and let us represent elements $x, y, z, w \in G$ as symbolic tuples $x = (x_1, \ldots x_L)$, $y = (y_1, \ldots y_L)$, $z = (z_1, \ldots z_L)$ and $w = (w_1, \ldots w_L)$.

\begin{definition}[Design criteria]\label{Def:Design-Criteria-for-the-operation}
The binary operation $*$ over $G$ should meet the following:
\begin{enumerate}[leftmargin=*,label=Design criterium \arabic*:]
	\item \textbf{Entropic.} The operation $*$ should be entropic as defined in (\ref{Eq:EntropicOperations}).
	\item \textbf{Nonlinear.} The operation $*$ should be nonlinear with the respect of the addition and multiplication operations in the finite field $\mathbb{F}_p$.
	\item \textbf{Non-commutative.} The operation $*$ should be non-commutative.
	\item \textbf{Non-associative.} The operation $*$ should be non-associative.
\end{enumerate}

\end{definition}

One simple way to find an operation that satisfies the design criteria from Definition \ref{Def:Design-Criteria-for-the-operation} would be to define it for $L=2$ for the most general quadratic $2L$-variate polynomials as follows:
\begin{align}\label{Eq:Generic-2L-variate-1}
	x * y = (x_1, x_2) * (y_1, y_2) = & (P_1(x_1, x_2, y_1, y_2), P_2(x_1, x_2, y_1, y_2)),\\
	\label{Eq:Generic-2L-variate-2}
	P_1(x_1, x_2, y_1, y_2) = & a_1 + a_2 x_1 + a_3 x_2 + a_4 y_1 + a_5 y_2 + a_6 x_1 y_1 + a_7 x_1 y_2 + a_8 x_2 y_1 + \nonumber \\
	                          &+ a_9 x_2 y_2 + a_{10} x_1^2 + a_{11} x_1 x_2 + a_{12} x_2^2 + a_{13} y_1^2 + a_{14} y_1 y_2 + a_{15} y_2^2,\\
	\label{Eq:Generic-2L-variate-3}
	P_2(x_1, x_2, y_1, y_2) = & b_1 + b_2 x_1 + b_3 x_2 + b_4 y_1 + b_5 y_2 + b_6 x_1 y_1 + b_7 x_1 y_2 + b_8 x_2 y_1 + \nonumber \\
	&+ b_9 x_2 y_2 + b_{10} x_1^2 + b_{11} x_1 x_2 + b_{12} x_2^2 + b_{13} y_1^2 + b_{14} y_1 y_2 + b_{15} y_2^2,
\end{align}
where 30 variables $a_1, \ldots, a_{15}$ and $b_1, \ldots, b_{15}$ are from $\mathbb{F}_p$, and their relations are to be determined such that (\ref{Eq:EntropicOperations}) holds. It turns out that searching for those relations with 30 symbolic variables is not a trivial task for the modern computer algebra systems such as Sage \cite{stein2005sage} and Mathematica \cite{wolfram1991mathematica}. There are many ways how to simplify further the $2L$-variate polynomials (\ref{Eq:Generic-2L-variate-2}) and (\ref{Eq:Generic-2L-variate-3}) by removing some of their monomials. We present one such a simplification approach by defining the operation $*$ as follows:
\begin{align}\label{Eq:Simple-2L-variate-1}
	x * y = (x_1, x_2) * (y_1, y_2) = & (P_1(x_1, x_2, y_1, y_2), P_2(x_1, x_2, y_1, y_2)),\\
	\label{Eq:Simple-2L-variate-2}
	P_1(x_1, x_2, y_1, y_2) = & a_1 + a_2 x_1 + a_3 x_2 + a_4 y_1 + a_6 x_1 y_1 + a_8 x_2 y_1,\\
	\label{Eq:Simple-2L-variate-3}
	P_2(x_1, x_2, y_1, y_2) = &  b_1 + b_2 x_1 + b_3 x_2 + b_5 y_2 + b_7 x_1 y_2 + b_9 x_2 y_2.
\end{align}

\begin{openroblem}
	Define a generic and systematic approach for finding solutions that will satisfy the design criteria of Definition \ref{Def:Design-Criteria-for-the-operation} for higher dimensions ($L>2$) and higher nonlinearity (the degree of the polynomials to be higher than 2).
\end{openroblem}

For the simplified system (\ref{Eq:Simple-2L-variate-1}) - (\ref{Eq:Simple-2L-variate-3}) one solution that satisfies all design criteria from Definition \ref{Def:Design-Criteria-for-the-operation} is the following:
\begin{definition}\label{Def:OperationStar}
	Let $x = (x_1, x_2), y = (y_1, y_2)$ be two elements of the set $G = \mathbb{F}_p \times \mathbb{F}_p$. The operation $*$, i.e. $x * y$ is defined as:
	{\footnotesize 
	\begin{equation}\label{Eq:OperationStar}
		(x_1, x_2) * (y_1, y_2) = \left( \frac{{a_3} ({a_8} {b_2}-{b_7})}{{a_8} {b_7}}+{a_3}
		{x_2}+\frac{{a_8} {b_2} {y_1}}{{b_7}}+{a_8} {x_2} {y_1},-\frac{{b_2}
			({a_8}-{a_3} {b_7})}{{a_8} {b_7}}+\frac{{a_3} {b_7}
			{y_2}}{{a_8}}+{b_2} {x_1}+{b_7} {x_1} {y_2}\right),
	\end{equation}
	}where $a_3, a_8, b_2, b_7 \in \mathbb{F}_p$, $a_8 \neq 0$ and $b_7 \neq 0$, and the operations $-$ and $/$ are the operations of subtraction and division in $\mathbb{F}_p$.	
\end{definition}

\begin{openroblem}
	Find efficient operations $*$ that satisfy the design criteria of Definition \ref{Def:Design-Criteria-for-the-operation} and have as little as possible operations of addition, subtraction, multiplication and division in $\mathbb{F}_p$.
\end{openroblem}

The next Corollary can be easily proven by simple expression replacements.
\begin{corollary}\label{Cor:Basic-properties-of-multiplication}
	Let $G = (\mathbb{F}_p)^2$, and let the operation $*$ is defined with Definition \ref{Def:OperationStar}. Then: 
	\begin{enumerate}
		\item The groupoid $(G, *)$ is entropic groupoid i.e., for every $x, y, z, w \in G$ if $ x*y = z * w $ then $x*z = y*w. $
		\item The element $\mathbf{0}_* = \left( -\frac{a_3}{a_8}, -\frac{b_2}{b_7} \right) $ is the multiplicative zero for the groupoid $(G, *)$, i.e.
		\begin{equation}\label{Eq:Multiplicative-Zero-In-G}
			\forall x \in G,\ \ \  x * \mathbf{0}_* = \mathbf{0}_* * x = \mathbf{0}_* \ \ .
			\end{equation}
		\item The element $\mathbf{1}_* = \left( \frac{1}{b_7} - \frac{a_3}{a_8}, \frac{1}{a_8} -\frac{b_2}{b_7} \right) $ is the multiplicative left unit for the groupoid $(G, *)$, i.e.
		\begin{equation}\label{Eq:Multiplicative-Left-Unit-In-G}
			\forall x \in G,\ \ \  \mathbf{1}_* * x = x \ \ .
		\end{equation}
		\item For every $x = (x_1, x_2) \neq \mathbf{0}_*$ its inverse multiplicative element $x^{-1}_*$ with the respect of the left unit element $\mathbf{1}_*$ is given by the following equation 
		\begin{equation}\label{Eq:Multiplicative-Inverse-In-G}
			x^{-1}_* =  \left( \frac{1 - a_3 b_2 - a_3 b_7 x_2}{a_8 (b_2 + b_7 x_2)}, 
			\frac{1 - a_3 b_2 - a_8 b_2 x_1}{b_7 (a_3 + a_8 x_1)} \right) \ \ ,
		\end{equation}
		for which it holds
		\begin{equation}\label{Eq:Multiplicative-Inverses-and-Left-Unit}
			x * x^{-1}_* = x^{-1}_* * x = \mathbf{1}_* \ \ .
		\end{equation}
	\end{enumerate} \qed
\end{corollary}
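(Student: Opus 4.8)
The plan is to make the whole corollary transparent through a single linearizing change of coordinates, after which every claim reduces to a statement about a trivial ``normal form'' of the operation. First I would observe that the two components of (\ref{Eq:OperationStar}) factor completely:
\begin{equation*}
(x*y)_1 = a_8\big(x_2 + \tfrac{b_2}{b_7}\big)\big(y_1 + \tfrac{a_3}{a_8}\big) - \tfrac{a_3}{a_8}, \qquad (x*y)_2 = b_7\big(x_1 + \tfrac{a_3}{a_8}\big)\big(y_2 + \tfrac{b_2}{b_7}\big) - \tfrac{b_2}{b_7}.
\end{equation*}
Consequently the affine map $\psi(x_1,x_2) = \big(b_7 x_1 + \tfrac{a_3 b_7}{a_8},\, a_8 x_2 + \tfrac{a_8 b_2}{b_7}\big)$ conjugates $*$ into the operation $\star$ on $(\mathbb{F}_p)^2$ given by $(X_1,X_2)\star(Y_1,Y_2) = (X_2 Y_1,\, X_1 Y_2)$; that is, $\psi(x*y) = \psi(x)\star\psi(y)$. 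Since $a_8,b_7\neq 0$, $\psi$ is a bijection, so each property of $*$ is equivalent to the corresponding property of $\star$, and the latter are immediate.

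With this reduction in hand, claims 2--4 follow without computation. In the $\star$-model the element $(0,0)$ is two-sided absorbing, and $\psi(\mathbf{0}_*) = (0,0)$, which is claim 2. The element $(1,1)$ is a left unit since $(1,1)\star(Y_1,Y_2) = (Y_1,Y_2)$, while $(Y_1,Y_2)\star(1,1) = (Y_2,Y_1)$ shows it is not a right unit; as $\psi(\mathbf{1}_*) = (1,1)$, this is claim 3. Finally, for $X$ with both coordinates nonzero the componentwise reciprocal $X^{-1} := (1/X_2,\,1/X_1)$ satisfies $X\star X^{-1} = X^{-1}\star X = (1,1)$, and transporting this through $\psi^{-1}$ reproduces exactly (\ref{Eq:Multiplicative-Inverse-In-G}), giving claim 4. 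I would remark in passing that this inverse is well defined precisely when $b_2 + b_7 x_2 \neq 0$ and $a_3 + a_8 x_1 \neq 0$, i.e. when $\psi(x)$ has no zero coordinate; thus claim 4 is really a statement on the maximal quasigroup, marginally stronger than the stated hypothesis $x \neq \mathbf{0}_*$ (an element with, say, $x_1 = -a_3/a_8$ but $x_2 \neq -b_2/b_7$ is $\neq\mathbf{0}_*$ yet has no inverse).

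The step I expect to be the genuine obstacle is claim 1, and addressing it \emph{exactly as worded} is where care is needed. As stated in (\ref{Eq:EntropicOperations}), the assertion is the implication ``$x*y = z*w \Rightarrow x*z = y*w$''. Transported through $\psi$, this demands that from $X_2 Y_1 = Z_2 W_1$ and $X_1 Y_2 = Z_1 W_2$ one conclude $X_2 Z_1 = Y_2 W_1$ and $X_1 Z_2 = Y_1 W_2$. These conclusions are not forced by the hypotheses — the scalars are underdetermined — and the implication fails outright, for instance at $X=(1,2),\,Y=(3,4),\,Z=(4,6),\,W=(1,1)$: these satisfy both hypotheses (each side of $\star$ equals $(6,4)$) yet give $X\star Z = (8,6) \neq (4,3) = Y\star W$, a counterexample that pulls back through $\psi^{-1}$ to one for $*$ for every admissible choice of $a_3,a_8,b_2,b_7$. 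What \emph{does} hold identically for $\star$, and hence for $*$, is Etherington's medial (entropic) law $(x*y)*(z*w) = (x*z)*(y*w)$, a one-line expansion in the $\star$-model; this is also the property that Theorem \ref{Thm:Basic-theorem-for-powers-in-entropic-groupoids} actually invokes to derive $(x*y)^{\mathbf{A}} = x^{\mathbf{A}}*y^{\mathbf{A}}$ and the palintropic power law. My recommendation, therefore, is to establish claim 1 in the form of the medial identity and to restate Definition \ref{Def:EntropicOperation} and claim 1 accordingly; proving the implication as literally written is not possible, since the counterexample above shows it to be false.
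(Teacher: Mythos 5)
Your normal-form reduction is sound, and it is a genuinely different route from the paper's: the paper gives no argument at all beyond the remark preceding the corollary that it ``can be easily proven by simple expression replacements,'' i.e., direct substitution into (\ref{Eq:OperationStar}). I checked your factorization of both components and the conjugation $\psi(x*y)=\psi(x)\star\psi(y)$ with $(X_1,X_2)\star(Y_1,Y_2)=(X_2Y_1,\,X_1Y_2)$; all of it is correct, as are $\psi(\mathbf{0}_*)=(0,0)$, $\psi(\mathbf{1}_*)=(1,1)$, and the pullback of the swapped componentwise reciprocal, which reproduces (\ref{Eq:Multiplicative-Inverse-In-G}) exactly. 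What the conjugation buys over brute substitution is considerable: claims 2--4 become one-line facts about $\star$, the left-but-not-right unit behaviour gets a structural explanation ($Y\star(1,1)=(Y_2,Y_1)$), and your domain caveat on claim 4 is a genuine correction to the statement as printed --- for $x$ with $x_1=-a_3/a_8$ but $x_2\neq-b_2/b_7$ (or symmetrically), one coordinate of $\psi(x)$ vanishes, the denominator $b_7(a_3+a_8x_1)$ in (\ref{Eq:Multiplicative-Inverse-In-G}) vanishes, and no inverse exists even though $x\neq\mathbf{0}_*$; inverses exist precisely on $\mathbb{E}^*_{(p-1)^2}$, which is also the only setting the rest of the paper actually uses.

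On claim 1 you are right as well, and this is the substantive finding. Your quadruple $X=(1,2)$, $Y=(3,4)$, $Z=(4,6)$, $W=(1,1)$ satisfies $X\star Y=Z\star W=(6,4)$, while $X\star Z-Y\star W=(8,6)-(4,3)=(4,3)$, which is nonzero modulo every prime $p$ (equality would force $p\mid 4$ and $p\mid 3$); so the pullback through $\psi^{-1}$ refutes the implication ``$x*y=z*w\Rightarrow x*z=y*w$'' of Definition \ref{Def:EntropicOperation} in $\mathbb{E}_{p^2}(a_3,a_8,b_2,b_7)$ for every admissible parameter choice, and for $p\geq 7$ the four witnesses even lie in the maximal quasigroup. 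What holds identically is, as you say, the medial identity $(x*y)*(z*w)=(x*z)*(y*w)$: in the $\star$-model both sides equal $(X_1Y_2Z_2W_1,\,X_2Y_1Z_1W_2)$ by commutativity of $\mathbb{F}_p$, and the medial form is Murdoch's and Etherington's actual hypothesis, i.e., the one Theorem \ref{Thm:Basic-theorem-for-powers-in-entropic-groupoids} needs for $(x*y)^{\mathbf{A}}=x^{\mathbf{A}}*y^{\mathbf{A}}$ and the palintropic law. So your recommendation to restate Definition \ref{Def:EntropicOperation} and claim 1 in medial form is the correct repair: the paper's ``simple expression replacements'' could only ever verify the medial identity, never the implication as literally written, and your counterexample shows the two are not interchangeable here. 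The only gap is not in your argument but in the corollary itself; be aware, though, that your $\psi$ exposes $(G,*)$ as an isotope of a direct product of cyclic groups, which says rather more about the scheme than the corollary asks.
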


\begin{proposition}
	There are $p-1$ distinct square roots of the left unity $\mathbf{1}_*$, i.e. 
	\begin{equation}
		\mathbb{S}(p) = \{ x\  |\  x * x = \mathbf{1}_*  \}, \ \ \text{and}\ \ \ |\mathbb{S}(p)| = p-1.
	\end{equation} \qed
\end{proposition}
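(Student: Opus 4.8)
The plan is to strip away the clutter of constants in \eqref{Eq:OperationStar} by an affine change of coordinates that exposes the underlying ``crossed-diagonal'' multiplication, and then to reduce $x * x = \mathbf{1}_*$ to a single scalar equation whose solutions are trivial to count. First I would introduce the map $\Phi \colon G \to (\mathbb{F}_p)^2$ defined by
\begin{equation*}
	\Phi(x) = (U(x), V(x)), \qquad U(x) = b_7\!\left(x_1 + \tfrac{a_3}{a_8}\right), \qquad V(x) = a_8\!\left(x_2 + \tfrac{b_2}{b_7}\right).
\end{equation*}
Since $a_8, b_7 \neq 0$, this is an affine bijection of $G$ onto $(\mathbb{F}_p)^2$. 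It sends $\mathbf{0}_*$ to $(0,0)$ and, by the formula for $\mathbf{1}_*$ in Corollary \ref{Cor:Basic-properties-of-multiplication}, sends $\mathbf{1}_*$ to $(1,1)$.

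The key computation is to regroup the monomials of $P_1$ and $P_2$ in \eqref{Eq:OperationStar} into the factored form
\begin{equation*}
	P_1 + \tfrac{a_3}{a_8} = a_8\!\left(y_1 + \tfrac{a_3}{a_8}\right)\!\left(x_2 + \tfrac{b_2}{b_7}\right), \qquad P_2 + \tfrac{b_2}{b_7} = b_7\!\left(x_1 + \tfrac{a_3}{a_8}\right)\!\left(y_2 + \tfrac{b_2}{b_7}\right),
\end{equation*}
from which one reads off $U(x*y) = V(x)\,U(y)$ and $V(x*y) = U(x)\,V(y)$. In other words, writing $\Phi(x) = (X_1, X_2)$ and $\Phi(y) = (Y_1, Y_2)$, the operation becomes the crossed product $(X_1, X_2)\circ(Y_1, Y_2) = (X_2 Y_1,\, X_1 Y_2)$, in which $(1,1)$ is visibly a left unit.

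With the operation in this form the proposition is immediate. Setting $y = x$ gives $\Phi(x * x) = (X_2 X_1,\, X_1 X_2) = (X_1 X_2,\, X_1 X_2)$, so, since $\Phi(\mathbf{1}_*) = (1,1)$ and $\Phi$ is injective, $x \in \mathbb{S}(p)$ if and only if $X_1 X_2 = 1$. This single equation over $\mathbb{F}_p$ has exactly $p-1$ solutions: $X_1$ may be any element of $\mathbb{F}_p^{*}$ and then $X_2 = X_1^{-1}$ is forced, while $X_1 = 0$ is impossible. As $\Phi$ is a bijection, the preimages yield exactly $p-1$ distinct elements $x$, hence $|\mathbb{S}(p)| = p-1$.

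I expect the only genuine work to be the verification of the two factored identities for $P_1$ and $P_2$; the constant terms $\tfrac{a_3(a_8 b_2 - b_7)}{a_8 b_7}$ and $-\tfrac{b_2(a_8 - a_3 b_7)}{a_8 b_7}$ are precisely what is needed to complete the products, and confirming this is routine but must be done with care. A reader who prefers to avoid the change of coordinates can instead substitute $y = x$ directly in \eqref{Eq:OperationStar}, equate with $\mathbf{1}_*$ componentwise, and observe that both coordinate equations collapse to the single hyperbola $\left(x_1 + \tfrac{a_3}{a_8}\right)\!\left(x_2 + \tfrac{b_2}{b_7}\right) = \tfrac{1}{a_8 b_7}$, which again has $p-1$ points; the coordinate change is there only to make the mechanism transparent.
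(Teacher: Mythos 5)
Your proof is correct. I verified the two factorization identities: the constant term of $P_1$ plus $\tfrac{a_3}{a_8}$ does collapse to $\tfrac{a_3 b_2}{b_7}$ and that of $P_2$ plus $\tfrac{b_2}{b_7}$ to $\tfrac{a_3 b_2}{a_8}$, so indeed $U(x*y)=V(x)U(y)$ and $V(x*y)=U(x)V(y)$, and the equation $x*x=\mathbf{1}_*$ reduces to the single hyperbola $X_1X_2=1$ with exactly $p-1$ points. Note that the paper itself offers no proof of this proposition (it is stated with an immediate \qed as one of the facts "easily proven by simple expression replacements"), so there is no argument of the author's to compare against; however, your normalization $\Phi$ is exactly the change of coordinates under which the operation becomes the crossed product $(X_1,X_2)\circ(Y_1,Y_2)=(X_2Y_1,X_1Y_2)$, which is the same normal form the paper exploits implicitly elsewhere (e.g.\ in the proof of Proposition~\ref{Prop:DLP-reduces-to-DELP}, where the parameters are specialized to $a_3=b_2=0$, $a_8=b_7=1$). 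Your approach has the added benefit of making several of the paper's other unproved claims (the formula for $\mathbf{1}_*$, the inverse formula, the quasigroup property) transparent for free.
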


\begin{definition}\label{Def:Addition-in-Ringoid}
	Let $x = (x_1, x_2), y = (y_1, y_2)$ be two elements of the set $G = \mathbb{F}_p \times \mathbb{F}_p$. The additive operation $\boxplus$, i.e. $x \boxplus y$ is defined as:
	\begin{equation}\label{Eq:OperationPlus}
		(x_1, x_2) \boxplus (y_1, y_2) = \left( x_1 + y_1 + \frac{a_3}{a_8}, x_2 + y_2 + \frac{b_2}{b_7} \right) ,
	\end{equation}
	where $a_3, a_8, b_2, b_7 \in \mathbb{F}_p$, are defined in Definition \ref{Def:OperationStar}. 
	
	Let us denote by $\boxminus$ the corresponding "inverse" of the additive operation $\boxplus$. Its definition is given by the following expression:
	\begin{equation}\label{Eq:OperationMinus}
		(x_1, x_2) \boxminus (y_1, y_2) = \left( x_1 - y_1 - \frac{a_3}{a_8}, x_2 - y_2 - \frac{b_2}{b_7} \right).
	\end{equation}
	We can use the operation $\boxminus$ also as a unary operator. If we write $\boxminus x$ then we mean 
\begin{equation}
		\boxminus x \stackrel{def}{=} \mathbf{0}_* \boxminus x = \left( -\frac{a_3}{a_8} - x_1, -\frac{b_2}{b_7} - x2 \right). 
\end{equation}
\end{definition}

One can check that "minus one" i.e. $\boxminus \mathbf{1}_*$ is also a square root of the left unity i.e that it holds: $\boxminus \mathbf{1}_* * \boxminus \mathbf{1}_* = \mathbf{1}_*$. 

A consequence of Corollary \ref{Cor:Basic-properties-of-multiplication} and Definition \ref{Def:Addition-in-Ringoid} is the following
\begin{corollary}
	The algebraic structure $(G, \boxplus, *)$ is a ringoid where $(G, \boxplus) $ is an Abelian group with a neutral element $\mathbf{0}_*$, $(G, *)$ is a non-commutative and non-associative groupoid with a zero element $\mathbf{0}_*$, a left unit element $\mathbf{1}_*$ and $*$ is distributive over $\boxplus$ i.e. 
	\begin{equation}\label{Eq:Distributive-laws}
		x * (y \boxplus z) = (x*y) \boxplus (x*z) \ \ \text{and}\ \ (x \boxplus y) * z = (x*z) \boxplus (y*z).
	\end{equation}
\end{corollary}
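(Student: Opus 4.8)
The plan is to reduce both the additive and the multiplicative structure to a single well-chosen change of coordinates that simultaneously trivialises $\boxplus$ and linearises $*$. Define the affine bijection $\tau \colon G \to (\mathbb{F}_p)^2$ by $\tau(x_1, x_2) = \bigl(x_1 + \tfrac{a_3}{a_8},\, x_2 + \tfrac{b_2}{b_7}\bigr)$, i.e. the ordinary $(\mathbb{F}_p)^2$-translation that carries the multiplicative zero $\mathbf{0}_* = \bigl(-\tfrac{a_3}{a_8}, -\tfrac{b_2}{b_7}\bigr)$ to the origin. My first step is to verify, straight from Definition \ref{Def:Addition-in-Ringoid}, that $\tau(x \boxplus y) = \tau(x) + \tau(y)$, where $+$ denotes ordinary componentwise addition; this is a one-line check in which the constants $\tfrac{a_3}{a_8}, \tfrac{b_2}{b_7}$ absorb the shift correctly. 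Consequently $\tau$ is an isomorphism intertwining $\boxplus$ with $+$, so $(G, \boxplus)$ inherits associativity, commutativity, identity, and inverses from $((\mathbb{F}_p)^2, +)$; in particular it is Abelian, and since $\tau(\mathbf{0}_*) = (0,0)$ its neutral element is $\mathbf{0}_*$. This settles the first assertion.

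For the groupoid claims in the second assertion nothing new is needed: that $\mathbf{0}_*$ is a two-sided multiplicative zero and $\mathbf{1}_*$ a left unit is precisely items 2 and 3 of Corollary \ref{Cor:Basic-properties-of-multiplication}, while non-commutativity and non-associativity are guaranteed by Design criteria 3 and 4 of Definition \ref{Def:Design-Criteria-for-the-operation} and can be certified by one explicit numerical triple for each.

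The heart of the argument is the companion computation $\tau(x * y)$. Writing $\tau(x) = (u_1, u_2)$, $\tau(y) = (v_1, v_2)$ and substituting $x_i, y_i$ in terms of $u_i, v_i$ into the polynomials $P_1, P_2$ of Definition \ref{Def:OperationStar}, I expect the additive constants together with every term linear in a single $u_i$ or $v_i$ to cancel, leaving the strikingly simple normal form
\begin{equation}\label{Eq:Linearised-star}
	\tau(x * y) = \bigl( a_8\, u_2 v_1,\; b_7\, u_1 v_2 \bigr).
\end{equation}
This identity is the crux of the whole statement: each coordinate of the transported product is a constant times one coordinate of $\tau(x)$ times one coordinate of $\tau(y)$, so $*$ is bilinear in the $\tau$-coordinates. (As a bonus, \eqref{Eq:Linearised-star} also makes the non-commutativity and non-associativity needed above completely transparent.) Verifying the cancellations that produce \eqref{Eq:Linearised-star} is the single genuinely computational step, and it is where the particular constants appearing in \eqref{Eq:OperationStar} earn their keep; I would carry it out either by direct expansion or, more cheaply, by evaluating at enough points to determine a degree-$(1,1)$ form and invoking interpolation.

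Distributivity then follows with no further effort, and is the main payoff of the coordinate change. Because $\tau$ sends $\boxplus$ to $+$ and, by \eqref{Eq:Linearised-star}, sends $*$ to a bilinear operation, bilinearity in the second argument gives $\tau\bigl(x * (y \boxplus z)\bigr) = \tau(x*y) + \tau(x*z) = \tau\bigl((x*y) \boxplus (x*z)\bigr)$, and bilinearity in the first argument gives the symmetric right-hand identity; applying $\tau^{-1}$ yields both laws of \eqref{Eq:Distributive-laws}. The main obstacle is therefore purely the bookkeeping behind \eqref{Eq:Linearised-star}: a brute-force expansion of the two distributive laws in the original coordinates would work but is opaque, whereas the normal form makes the ringoid structure immediate and simultaneously organises the entropic and power-index computations used elsewhere in the paper.
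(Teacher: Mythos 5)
Your proof is correct, and I verified the key identity: with $\tau(x_1,x_2)=\bigl(x_1+\tfrac{a_3}{a_8},\,x_2+\tfrac{b_2}{b_7}\bigr)$, $u=\tau(x)$, $v=\tau(y)$, the constants and the terms linear in a single coordinate do cancel exactly, leaving $\tau(x*y)=\bigl(a_8u_2v_1,\;b_7u_1v_2\bigr)$, and likewise $\tau(x\boxplus y)=\tau(x)+\tau(y)$. The paper offers no proof at all — it simply declares the corollary "a consequence of" the earlier corollary on the multiplicative structure and the definition of $\boxplus$, which amounts to an implicit brute-force expansion of the two distributive laws in the original coordinates. Your route is genuinely different and strictly more informative: the single coordinate change simultaneously explains why $\mathbf{0}_*$ is both the additive neutral element and the multiplicative zero (it is the origin in the new coordinates), why $\mathbf{1}_*$ is a left unit ($\tau(\mathbf{1}_*)=(1/b_7,1/a_8)$ acts as the identity on the bilinear form), why $*$ is non-commutative and non-associative (the two coordinates are swapped asymmetrically), and why distributivity holds (bilinearity over componentwise addition). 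The price is that the burden of proof is concentrated in one computation you only sketch — the cancellation producing the normal form — but that computation checks out, and a referee would accept either direct expansion or your interpolation argument for it. If you write this up, state the normal form as a lemma and do the expansion explicitly once; everything else in the corollary then follows in one line each.
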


\begin{definition}\label{Def:Entropi-ring-Entropoid}
	The ringoid $\mathbb{E}_{p^2} = (G, \boxplus, *)$ with operation $*$ defined with Definition \ref{Def:OperationStar} and operation $\boxplus$ defined with Definition \ref{Def:Addition-in-Ringoid} is called a \emph{Finite Entropic Ring} or \emph{Finite Entropoid} with $p^2$ elements. For given values of $p, a_3, a_8, b_2$ and $b_7$ it will be denoted as $\mathbb{E}_{p^2}( a_3, a_8, b_2, b_7)$. \footnote{The ringoid $(G, \boxplus, *)$ is neither a neofield (since $(G, *)$ is not a group), nor a Lie ring (since the Jacobi identity is not satisfied), but is built with the operation $*$ given by the entropic identity (\ref{Eq:EntropicOperations}). We are aware of the work of J.D.H. Smith and A.B. Romanowska in \cite{romanowska2015hopf} that refer to the entropic Jonsson-Tarski algebraic varieties, but for our purposes, and to be more narrow with the definition of the algebraic structure that we will use,  we give a formal name of this ringoid as "Entropoid".}
\end{definition}

In the rest of the text we will use interchangeably the notations $\mathbb{E}_{p^2}( a_3, a_8, b_2, b_7)$ and if in text  context, the constants $a_3, a_8, b_2, b_7$ are not important, just $\mathbb{E}_{p^2}$. We will also assume that the choice for the parameters is $a_3 \neq 0$ and $b_2 \neq 0$. To shorten the mathematical expressions, when the meaning is clear from the context, we will also overload the symbol $\mathbb{E}_{p^2}$ with two interpretations: as a set $\mathbb{E}_{p^2} = G = \mathbb{F}_p \times \mathbb{F}_p$ and as an algebraic structure $\mathbb{E}_{p^2} = (G, \boxplus, *)$.

\begin{definition}\label{Def:Multiplicative-subgroupoid}
	Let us define the subset $\mathbb{E}_{(p-1)^2}^* \subset \mathbb{E}_p$ as
	\begin{equation}\label{Eq:Multiplicative-subgroupoid}
		\mathbb{E}_{(p-1)^2}^* = \bigg( \Big(\mathbb{F}_p\setminus \Big\{ -\frac{a_3}{a_8} \Big\} \Big) \times  \Big(\mathbb{F}_p\setminus \Big\{ -\frac{b_2}{b_7} \Big\} \Big)  \bigg).
	\end{equation}
	We say that the groupoid $(\mathbb{E}_{(p-1)^2}^*, *)$ is the maximal multiplicative subgroupoid of $\mathbb{E}_{p^2}$.
\end{definition}

It is clear from Definition \ref{Def:Multiplicative-subgroupoid} that the multiplicative subgroupoid $\mathbb{E}_{(p-1)^2}^*$ has $(p-1)^2$ elements $x = (x_1, x_2)$ such that $x_1$ is not the first component and $x_2$ is not the second component of the multiplicative zero $\mathbf{0}_*$. It has one additional property: it is a quasigroup, and that is stated in the following Lemma.

\begin{lemma}
	The maximal multiplicative groupoid $(\mathbb{E}_{(p-1)^2}^*, *)$ defined in Definition \ref{Def:Multiplicative-subgroupoid} with the operation $*$ defined in Definition \ref{Def:OperationStar} is a non-commutative and non-associative quasigroup with a left unit element $\mathbf{1}_*$.
\end{lemma}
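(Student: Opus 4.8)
The plan is to reduce the operation $*$ to a transparent normal form by an affine change of coordinates that moves the multiplicative zero $\mathbf{0}_* = (-\tfrac{a_3}{a_8}, -\tfrac{b_2}{b_7})$ to the origin. Concretely, I would introduce the shift $\phi(x_1,x_2) = (x_1 + \tfrac{a_3}{a_8},\, x_2 + \tfrac{b_2}{b_7})$ and write $\tilde{x} = \phi(x)$ with components $\tilde{x}_1 = x_1 + \tfrac{a_3}{a_8}$, $\tilde{x}_2 = x_2 + \tfrac{b_2}{b_7}$. A direct factorization of the two polynomials in Definition~\ref{Def:OperationStar} shows that the first component of $x*y$ equals $a_8(x_2 + \tfrac{b_2}{b_7})(y_1 + \tfrac{a_3}{a_8}) - \tfrac{a_3}{a_8}$ and the second equals $b_7(x_1 + \tfrac{a_3}{a_8})(y_2 + \tfrac{b_2}{b_7}) - \tfrac{b_2}{b_7}$; in shifted coordinates this collapses to the clean ``crossed'' product
\begin{equation}
\widetilde{x*y} \;=\; \big(\,a_8\, \tilde{x}_2\, \tilde{y}_1,\ \ b_7\, \tilde{x}_1\, \tilde{y}_2\,\big).
\end{equation}
Verifying this identity is the one genuine computation in the proof, but it is a routine expansion once the factorization is guessed, and it is the step I expect to be the main obstacle in the sense that everything afterwards is transparent.

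Under $\phi$ the deleted coordinates $x_1 = -\tfrac{a_3}{a_8}$ and $x_2 = -\tfrac{b_2}{b_7}$ become $\tilde{x}_1 = 0$ and $\tilde{x}_2 = 0$, so $\mathbb{E}_{(p-1)^2}^*$ of Definition~\ref{Def:Multiplicative-subgroupoid} is carried bijectively onto $(\mathbb{F}_p^*)^2$. Since $a_8, b_7 \neq 0$, the crossed product of two elements of $(\mathbb{F}_p^*)^2$ again has both components nonzero, which already gives closure. For the quasigroup property I would observe that the transformed operation is an isotope of the abelian group $((\mathbb{F}_p^*)^2, \cdot)$ with componentwise multiplication: writing $\alpha(\tilde{x}) = (a_8 \tilde{x}_2,\, b_7 \tilde{x}_1)$, a bijection of $(\mathbb{F}_p^*)^2$, one has $\widetilde{x*y} = \alpha(\tilde{x})\cdot \tilde{y}$. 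An isotope of a group is a quasigroup, so left and right translations are bijections; equivalently one solves $a*\xi = b$ and $\eta*a = b$ explicitly, e.g. $\tilde{\xi} = (\tilde{b}_1/(a_8 \tilde{a}_2),\, \tilde{b}_2/(b_7 \tilde{a}_1))$, and checks that every quotient stays in $\mathbb{F}_p^*$ precisely because $\tilde{a}$ and $\tilde{b}$ have nonzero components.

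It then remains to confirm the three residual assertions. The left unit is immediate: $\phi(\mathbf{1}_*) = (\tfrac{1}{b_7}, \tfrac{1}{a_8})$ lies in $(\mathbb{F}_p^*)^2$, so $\mathbf{1}_* \in \mathbb{E}_{(p-1)^2}^*$, and $\alpha(\phi(\mathbf{1}_*)) = (1,1)$ is the identity of $((\mathbb{F}_p^*)^2,\cdot)$, which is exactly $\mathbf{1}_* * x = x$ on the subgroupoid. Non-commutativity and non-associativity are each exhibited by a single witness inside the subgroupoid: comparing $\widetilde{x*y}$ with $\widetilde{y*x}$ yields the obstruction $\tilde{x}_2 \tilde{y}_1 \neq \tilde{x}_1 \tilde{y}_2$, and comparing $\widetilde{(x*y)*z}$ with $\widetilde{x*(y*z)}$ yields $b_7 \tilde{x}_1 \neq a_8 \tilde{x}_2$ in the first coordinate; both fail for suitable units as soon as $p > 2$, so concrete choices such as $\phi^{-1}(1,1)$ and $\phi^{-1}(1,t)$ with $t \neq 1$ suffice. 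The only subtlety worth flagging is closure together with solvability: one must verify not merely that the translations are injective on the ambient $G$ but that the unique solutions never fall onto the excluded lines, and this is exactly what the nonvanishing of $a_8, b_7$ and of the components of $\tilde{a}, \tilde{b}$ guarantees, which is precisely why the subgroupoid was defined by removing the row and column through $\mathbf{0}_*$.
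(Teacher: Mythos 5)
Your proposal is correct, and it takes a genuinely different route from the paper. The paper's proof is minimal: it asserts that non-commutativity, non-associativity and the left-unit property ``come directly by the definition of the operation $*$ and Corollary~\ref{Cor:Basic-properties-of-multiplication}'', and then leaves the quasigroup property (solvability of $c*x=d$ and $x*c=d$) as an exercise in elementary polynomial algebra, working directly with the raw coefficients of equation~(\ref{Eq:OperationStar}). You instead conjugate by the affine shift that sends $\mathbf{0}_*$ to the origin and observe that $*$ becomes the crossed product $\widetilde{x*y}=(a_8\tilde{x}_2\tilde{y}_1,\,b_7\tilde{x}_1\tilde{y}_2)$ on $(\mathbb{F}_p^*)^2$ --- I checked the factorization of both components of~(\ref{Eq:OperationStar}) and it is exactly right, including the constant terms. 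From there everything (closure, the isotopy with the abelian group $((\mathbb{F}_p^*)^2,\cdot)$ via $\alpha(\tilde{x})=(a_8\tilde{x}_2,b_7\tilde{x}_1)$, the explicit division formulas, the left unit mapping to $(1,1)$, and the one-line witnesses $\tilde{x}_2\tilde{y}_1\neq\tilde{x}_1\tilde{y}_2$ for non-commutativity and $b_7\tilde{x}_1\neq a_8\tilde{x}_2$ for non-associativity) follows transparently. What your approach buys is considerable: it actually proves the claims the paper only gestures at, it makes visible why the excluded row and column through $\mathbf{0}_*$ are exactly the right things to remove, it exposes the structural fact that $(\mathbb{E}_{(p-1)^2}^*,*)$ is an isotope of $(\mathbb{F}_p^*)^2$ (which also explains Propositions~\ref{Prop:G-nu-subgroupoid} and~\ref{Prop:Fields-Entropoid-connection-via-subgroups-in-F} at a glance), and it correctly flags the hypothesis $p>2$ needed for the non-commutativity and non-associativity witnesses to exist, a condition the paper's proof silently assumes. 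The only cost is the one honest computation --- verifying the factored normal form --- which you correctly identify as the crux and which is routine to check.
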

\begin{proof}
	Non-commutativity, non-associativity and the proof about the left unit element come directly by the definition of the operation $*$, and Corollary \ref{Cor:Basic-properties-of-multiplication}. The only remaining part is to prove that for every $c = (c_1, c_2) \in \mathbb{E}_{(p-1)^2}^*$ and every $d = (d_1, d_2) \in \mathbb{E}_{(p-1)^2}^*$, the equations $c * x = d$ and $x * c = d$ have always solutions. It is left as an exercise for the reader to replace the values of $c$ and $d$, to apply the definition of the operation $*$ annd with simple polynomial algebra to get get two equations for $x_1$ and $x_2$ with a unique solution $x = (x_1, x_2)$.
\end{proof}

%  In the rest of the text, the notations $(G^*, *)$ and $(G^*_{(p-1)^2}, *)$ denote the same structure sometimes called the maximal multiplicative subgroupoid, or the maximal multiplicative quasigroup or maximal entropic quasigroup.

We will use the notation $(\mathbb{E}^*_{\nu}, *)$ for the subgroupoids of $(\mathbb{E}_{(p-1)^2}^*, *)$, with $\nu$ elements. That means that if we are given $(\mathbb{E}^*_{\nu}, *)$, then $\mathbb{E}^*_{\nu} \subseteq \mathbb{E}_{(p-1)^2}^*$, $\forall x, y \in \mathbb{E}^*_{\nu}$, $x * y \in \mathbb{E}^*_{\nu}$ and $|\mathbb{E}^*_{\nu}| = \nu$. Using the Etherington terminology, we will say that the quasigroup $(\mathbb{E}_{(p-1)^2}^*, *)$ and all of its subquasigroups $(\mathbb{E}^*_{\nu}, *)$ are \emph{entropic quasigroups}. 
\begin{proposition}\label{Prop:G-nu-subgroupoid}
	If $(\mathbb{E}^*_{\nu}, *)$ is a subgroupoid of $(\mathbb{E}_{(p-1)^2}^*, *)$, then $\nu$ divides $(p-1)^2$ i.e. $\nu | (p-1)^2$ and $(\mathbb{E}^*_{\nu}, *)$ is a subquasigroup of $(\mathbb{E}_{(p-1)^2}^*, *)$. \qed
\end{proposition}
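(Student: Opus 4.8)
The plan is to \emph{linearize} the quasigroup $(\mathbb{E}^*_{(p-1)^2}, *)$ by an explicit change of coordinates and thereby reduce the statement to an elementary fact about subgroups of the abelian group $(\mathbb{Z}_{p-1})^2$. First I would simplify the operation. Writing $u = x_1 + \tfrac{a_3}{a_8}$ and $v = x_2 + \tfrac{b_2}{b_7}$, so that by Definition~\ref{Def:Multiplicative-subgroupoid} one has $u,v \in \mathbb{F}_p^{*}$ precisely on $\mathbb{E}^*_{(p-1)^2}$, a direct expansion of Definition~\ref{Def:OperationStar} (the two coordinates factor as products of shifted variables) shows that $u_{x*y} = a_8\, v_x u_y$ and $v_{x*y} = b_7\, u_x v_y$. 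Absorbing the constants by the rescaling $U = b_7 u$, $V = a_8 v$ turns this into the bilinear rule $(U_1,V_1)\star(U_2,V_2) = (V_1 U_2,\, U_1 V_2)$ on $(\mathbb{F}_p^{*})^2$, so $x \mapsto (U,V)$ is an isomorphism of $(\mathbb{E}^*_{(p-1)^2}, *)$ onto $((\mathbb{F}_p^{*})^2, \star)$. Fixing a primitive root of $\mathbb{F}_p^{*}$ and taking componentwise discrete logarithms gives a further isomorphism onto $(A,\circ)$, where $A = (\mathbb{Z}_{p-1})^2$ and $(r_1,s_1)\circ(r_2,s_2) = (s_1+r_2,\, r_1+s_2) = \sigma(r_1,s_1) + (r_2,s_2)$, with $\sigma$ the coordinate swap. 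As all these maps are order-preserving bijections, it suffices to prove the proposition for subgroupoids of $(A,\circ)$.

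So let $S \subseteq A$ be nonempty and closed under $\circ$, with $|S| = \nu$. I would pick $e \in S$ and set $T = S - e$ (ordinary group translation), so that $0 \in T$; the key step is to show $T$ is a subgroup of $A$. Closure of $S$ gives, for all $\alpha,\beta \in T$, that $\sigma(\alpha) + \sigma(e) + \beta \in T$. Taking $\alpha=\beta=0$ yields $h := \sigma(e) \in T$, and the special cases $\alpha=0$ and $\beta=0$ show that $T$ is invariant under $+h$ and under $\sigma$ (using finiteness to upgrade the inclusions $h + T \subseteq T$ and $\sigma(T)\subseteq T$ to equalities). Feeding these back, the closure relation collapses to $\sigma(\alpha) + \beta \in T$ for all $\alpha,\beta \in T$; since $\sigma(T)=T$, this is exactly closure of $T$ under $+$. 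A nonempty finite subset of an abelian group containing $0$ and closed under addition is a subgroup, so $T \le A$. Therefore $\nu = |S| = |T|$ divides $|A| = (p-1)^2$ by Lagrange, and $S = e + T$ is a coset of the $\sigma$-invariant subgroup $T$.

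It remains to see that $(\mathbb{E}^*_\nu, *)$ is a subquasigroup, which I would derive from the general fact that a finite subset of a quasigroup closed under the operation is itself a quasigroup. Concretely, by the preceding Lemma $(\mathbb{E}^*_{(p-1)^2}, *)$ is a quasigroup, hence cancellative, and this transports to $(A,\circ)$. Thus for each $a \in S$ the left translation $x \mapsto a \circ x$ is an injection of the finite set $S$ into itself, hence a bijection, and likewise for right translations; this is precisely the solvability of $a \circ x = d$ and $x \circ a = d$ inside $S$. Transporting both conclusions back through the isomorphisms of the first paragraph establishes the proposition.

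The hard part will be the middle paragraph: recognizing that closure under the \emph{twisted} operation $\circ$ forces the additive translate $T = S - e$ to be an honest subgroup. Everything hinges on the operation being, after the change of coordinates, an affine map of an abelian group (a medial/Toyoda-type structure), so the real effort lies in setting up the linearizing isomorphism correctly and then carefully using finiteness to turn the one-sided inclusions $h + T \subseteq T$ and $\sigma(T) \subseteq T$ into equalities before collapsing the closure identity to ordinary additive closure.
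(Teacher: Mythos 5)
Your proof is correct, and it is worth noting that the paper itself offers no argument for this proposition --- it is stated with a closing \qed{} and no proof --- so there is nothing to compare against line by line. I checked the linearization: with $u=x_1+\tfrac{a_3}{a_8}$, $v=x_2+\tfrac{b_2}{b_7}$ one indeed gets $u_{x*y}=a_8\,v_x u_y$ and $v_{x*y}=b_7\,u_x v_y$ from Definition~\ref{Def:OperationStar} (the constant terms match exactly), and the rescaling $U=b_7u$, $V=a_8v$ gives the bilinear rule $(V_1U_2,\,U_1V_2)$, which is precisely the degenerate instance $(x_1,x_2)*(y_1,y_2)=(x_2y_1,\,x_1y_2)$ that the paper itself uses in the proof of Proposition~\ref{Prop:DLP-reduces-to-DELP}; your map is an honest isomorphism onto $((\mathbb{F}_p^*)^2,\star)$ because the affine coordinate changes send the excluded row and column of $\mathbf{0}_*$ to the zero coordinates. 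The middle step is the real content and it holds up: from $\sigma(\alpha)+\sigma(e)+\beta\in T$ you correctly extract $h=\sigma(e)\in T$, then $h+T=T$ and $\sigma(T)=T$ by finiteness, and these two equalities do collapse the closure condition to additive closure of $T$, so $T\le(\mathbb{Z}_{p-1})^2$ and Lagrange gives $\nu\mid(p-1)^2$. The quasigroup claim via injectivity of translations on a finite closed subset of a cancellative groupoid is the standard argument and matches the style of the paper's preceding Lemma. In effect you have supplied the Toyoda/Murdoch-style linear representation of this medial quasigroup, which also explains structurally why subgroupoids are cosets of $\sigma$-invariant subgroups --- a sharper statement than the proposition itself and one the paper never makes explicit.
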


The following Proposition is a connection between the subgroups of the multiplicative group of a finite field, and the subgroupoids in the finite entropoid. It is a consequence of Proposition \ref{Prop:G-nu-subgroupoid} and the Lagrange's theorem for groups:
\begin{proposition}\label{Prop:Fields-Entropoid-connection-via-subgroups-in-F}
	Let $\mathbb{F}_p$ be the finite field over which an entropoid $\mathbb{E}_{p^2}$ is defined. Let $\Gamma \subseteq \mathbb{F}_p^*$ be a cyclic subgroup of order $|\Gamma|$ of the multiplicative group $\mathbb{F}_p^*$ and let $\gamma \neq  -\frac{a_3}{a_8}$ and $\gamma \neq -\frac{b_2}{b_7}$ is its generator, then $g = (\gamma, \gamma)$ is a generator of a subgroupoid $(\mathbb{E}^*_{\nu}, *)$ and $\nu$ divides $|\Gamma|^2$.
\end{proposition}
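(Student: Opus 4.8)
The plan is to strip the entropoid multiplication down to a transparent ``crossed product'' on $\mathbb{F}_p^* \times \mathbb{F}_p^*$ by an explicit change of coordinates, read off the order of $\langle g \rangle$ from an ordinary subgroup of $\mathbb{F}_p^*$, and then close with the Lagrange-type divisibility already recorded in Proposition \ref{Prop:G-nu-subgroupoid}. First I would record that $\langle g \rangle$ is genuinely a subgroupoid: since $g^{\mathbf A} * g^{\mathbf B} = g^{\mathbf A + \mathbf B}$ by the definition of the sum of power indices, the set of all powers of $g$ is closed under $*$, and $g \in \mathbb{E}^*_{(p-1)^2}$ precisely because the hypotheses $\gamma \neq -a_3/a_8$ and $\gamma \neq -b_2/b_7$ exclude the two forbidden coordinates of $\mathbf{0}_*$. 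Thus $\langle g \rangle = (\mathbb{E}^*_\nu,*)$ is a subgroupoid of the quasigroup $(\mathbb{E}^*_{(p-1)^2},*)$, and by Proposition \ref{Prop:G-nu-subgroupoid} it is already a subquasigroup with $\nu \mid (p-1)^2$; the task is to sharpen this to $\nu \mid |\Gamma|^2$.

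The engine of the argument is the coordinate map
\[
\Phi(x_1,x_2) = \Big( b_7\big(x_1 + \tfrac{a_3}{a_8}\big),\; a_8\big(x_2 + \tfrac{b_2}{b_7}\big) \Big),
\]
which sends $\mathbf{0}_* \mapsto (0,0)$, $\mathbf{1}_* \mapsto (1,1)$, and maps $\mathbb{E}^*_{(p-1)^2}$ bijectively onto $\mathbb{F}_p^* \times \mathbb{F}_p^*$. Factorising the two components of Definition \ref{Def:OperationStar} (grouping the $y_1$-terms of $P_1$ and the $y_2$-terms of $P_2$) yields $\Phi(x * y) = (u_2 v_1,\, u_1 v_2)$, where $\Phi(x)=(u_1,u_2)$ and $\Phi(y)=(v_1,v_2)$; that is, $\Phi$ is a groupoid isomorphism onto $\big((\mathbb{F}_p^*)^2,\odot\big)$ with $(u_1,u_2)\odot(v_1,v_2)=(u_2 v_1, u_1 v_2)$. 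In this model any product $H \times H$ with $H \le \mathbb{F}_p^*$ is visibly closed under $\odot$, hence a subgroupoid of order $|H|^2$, and therefore a subquasigroup by Proposition \ref{Prop:G-nu-subgroupoid}.

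It then remains to locate $\Phi(g)$ inside such a product. Writing $\Phi(g)=(s,t)$ and $H=\langle s,t\rangle \le \mathbb{F}_p^*$, I would show by induction on the bracketing shape that every power satisfies $\Phi(g^{\mathbf A}) = (s^a t^b,\, s^b t^a)$ for integers $a,b$: the base case is $\Phi(g)=(s^1t^0,s^0t^1)$, and the inductive step uses that any shape with at least two leaves splits as $g^{\mathbf A}=g^{\mathbf B}*g^{\mathbf C}$ together with the identity $(s^a t^b, s^b t^a)\odot(s^{a'}t^{b'}, s^{b'}t^{a'}) = (s^{b+a'}t^{a+b'},\, s^{a+b'}t^{b+a'})$, which keeps the image inside the stated family. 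Hence $\Phi(\langle g\rangle) \subseteq H\times H$, and applying the Lagrange-type divisibility of Proposition \ref{Prop:G-nu-subgroupoid} to the subgroupoid $\Phi(\langle g\rangle)$ of the subquasigroup $H\times H$ gives $\nu \mid |H|^2$, hence $\nu \mid |\Gamma|^2$ as soon as $|H|$ divides $|\Gamma|$.

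The hard part will be exactly this final identification. The coordinates $s=b_7(\gamma + a_3/a_8)$ and $t=a_8(\gamma+b_2/b_7)$ are \emph{affine}, not multiplicative, functions of $\gamma$, so a priori $s,t$ need not lie in $\Gamma=\langle\gamma\rangle$, and controlling $|\langle s,t\rangle|$ by $|\Gamma|$ is the genuine obstacle. The cleanest resolution is to read the proposition through the isomorphism $\Phi$: in the normalised model $a_3=b_2=0,\ a_8=b_7=1$ one has $\Phi=\mathrm{id}$, so $s=t=\gamma$, $H=\Gamma$, and the bound $\nu \mid |\Gamma|^2$ is immediate. I would therefore either (i) state the generator via its transformed coordinates so that $\Phi(g)\in \Gamma\times\Gamma$, or (ii) replace $|\Gamma|^2$ by $|\langle s,t\rangle|^2$ with $\langle s,t\rangle \le \mathbb{F}_p^*$; in both readings the divisibility follows cleanly from Proposition \ref{Prop:G-nu-subgroupoid} and Lagrange's theorem, exactly as the statement claims.
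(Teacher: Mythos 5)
The paper offers no proof of this proposition at all: it is stated bare, preceded only by the remark that it ``is a consequence of Proposition \ref{Prop:G-nu-subgroupoid} and the Lagrange's theorem for groups,'' so there is no argument of the author's to compare yours against line by line. Your change of coordinates is correct and is genuinely more informative than anything in the paper: factoring Definition \ref{Def:OperationStar} really does give
\[
x*y=\Big(a_8\big(x_2+\tfrac{b_2}{b_7}\big)\big(y_1+\tfrac{a_3}{a_8}\big)-\tfrac{a_3}{a_8},\ \ b_7\big(x_1+\tfrac{a_3}{a_8}\big)\big(y_2+\tfrac{b_2}{b_7}\big)-\tfrac{b_2}{b_7}\Big),
\]
so your $\Phi$ carries $*$ to $(u_1,u_2)\odot(v_1,v_2)=(u_2v_1,u_1v_2)$ on $(\mathbb{F}_p^*)^2$, and the induction giving $\Phi(g^{\mathbf A})=(s^at^b,\,s^bt^a)$ is sound.

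However, the obstacle you flag at the end is not a presentational wrinkle to be smoothed over: it is fatal to the proposition as literally stated, and your refusal to close the argument without amending the statement is the correct conclusion. Concretely, in the paper's own Example \ref{Ex:GF(7)}, $\mathbb{E}_{7^2}(6,3,3,4)$, take $\Gamma=\{1,2,4\}\leq\mathbb{F}_7^*$ of order $3$ with generator $\gamma=2$, which avoids $-a_3/a_8=5$ and $-b_2/b_7=1$. Then $\Phi(2,2)=(2,3)$, so $s=2$ has order $3$ but $t=3$ is a primitive root modulo $7$; writing $2=3^2$, the powers of $g=(2,2)$ are exactly the preimages under $\Phi$ of the pairs $(3^c,3^d)$ with $c+d\equiv 0\pmod 3$, of which there are $12$. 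Hence $\nu=12$, which does not divide $|\Gamma|^2=9$, so no proof of the statement as written can exist; one of your two repairs (specify the generator through $\Phi$, i.e.\ require $\Phi(g)\in\Gamma\times\Gamma$, or replace $|\Gamma|$ by $|\langle s,t\rangle|$) is genuinely necessary, not optional. It is worth adding that the only place the paper invokes this proposition, the reduction in Proposition \ref{Prop:DLP-reduces-to-DELP}, works with $a_3=b_2=0$ and $a_8=b_7=1$, where $\Phi$ is the identity and $s=t=\gamma$; in that normalized case your argument does close and yields $\nu\mid|\Gamma|^2$ via Proposition \ref{Prop:G-nu-subgroupoid}, which is presumably the situation the author had in mind.
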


% $\mathbf{0}_* = \left( -\frac{a_3}{a_8}, -\frac{b_2}{b_7} \right) $

Next, for our subquasigroups $(\mathbb{E}^*_{\nu}, *)$ we will partially use the Smith terminology in his proposed Sylow theory for quasigroups \cite{smith2015sylow}. 

\begin{definition}
	Let $(\mathbb{E}^*_{\nu}, *)$ be a subquasigroup of $(\mathbb{E}^*_{(p-1)^2}, *)$ and let $(p-1)^2$ be represented as product of the powers of its prime factors i.e.$(p-1)^2 = 2^{e_1}q_1^{e_2}\ldots q_{k}^{e_k}$. We say that $(\mathbb{E}^*_{\nu}, *)$ is Sylow $q_i$-subquasigroup if $\nu = q_i^{e_i}$ for $i \in \{2, \ldots, k\}$. 
\end{definition}

Before we continue, let us give one example.
\begin{example}\label{Ex:GF(7)}
Let the finite entropoid be defined as $\mathbb{E}_{7^2}( a_3=6, a_8=3, b_2=3, b_7=4)$. In that case the operation $*$ becomes: $$x * y = (x_1, x_2)* (y_1, y_2) = (6 + 6 x_2 + 4 y_1 + 3 x_2 y_1, \ \ \ y_2 + 3 x_1 + 4 x_1 y_2). $$ All elements from $\mathbb{E}_{7^2}$ are presented in Table \ref{Table:Example01}.

From Corollary \ref{Cor:Basic-properties-of-multiplication} we get $\mathbf{0}_* = (5, 1)$ and $\mathbf{1}_* = (0, 6)$. The elements that are highlighted in Table \ref{Table:Example01} (elements in the row and column where the zero element $\mathbf{0}_*$ is positioned) are excluded from the multiplicative subgroupoid $\mathbb{E}_{6^2}^*$.

%$\begin{pNiceMatrix}% don't forget the %
%	[first-row,
%	first-col,
%	code-for-first-row = \mathbf{\arabic{jCol}} ,
%	code-for-first-col = \mathbf{\arabic{iRow}} ]
%	&   &    &    &   \\
%	& 1 & 2  & 3  & 4 \\
%	& 5 & 6  & 7  & 8 \\
%	& 9 & 10 & 11 & 12
%\end{pNiceMatrix}$

\begin{table}[h]
	\centering
	\includegraphics{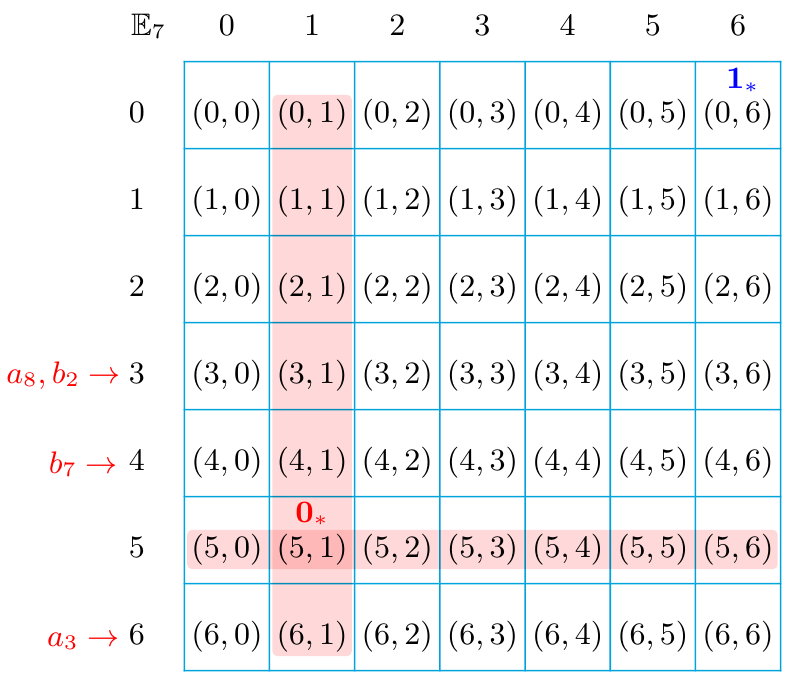}
	\caption{Representation of all 49 elements of $\mathbb{E}_{7^2}$.}\label{Table:Example01}
\end{table}

\begin{table}[h]
	\centering
	\includegraphics{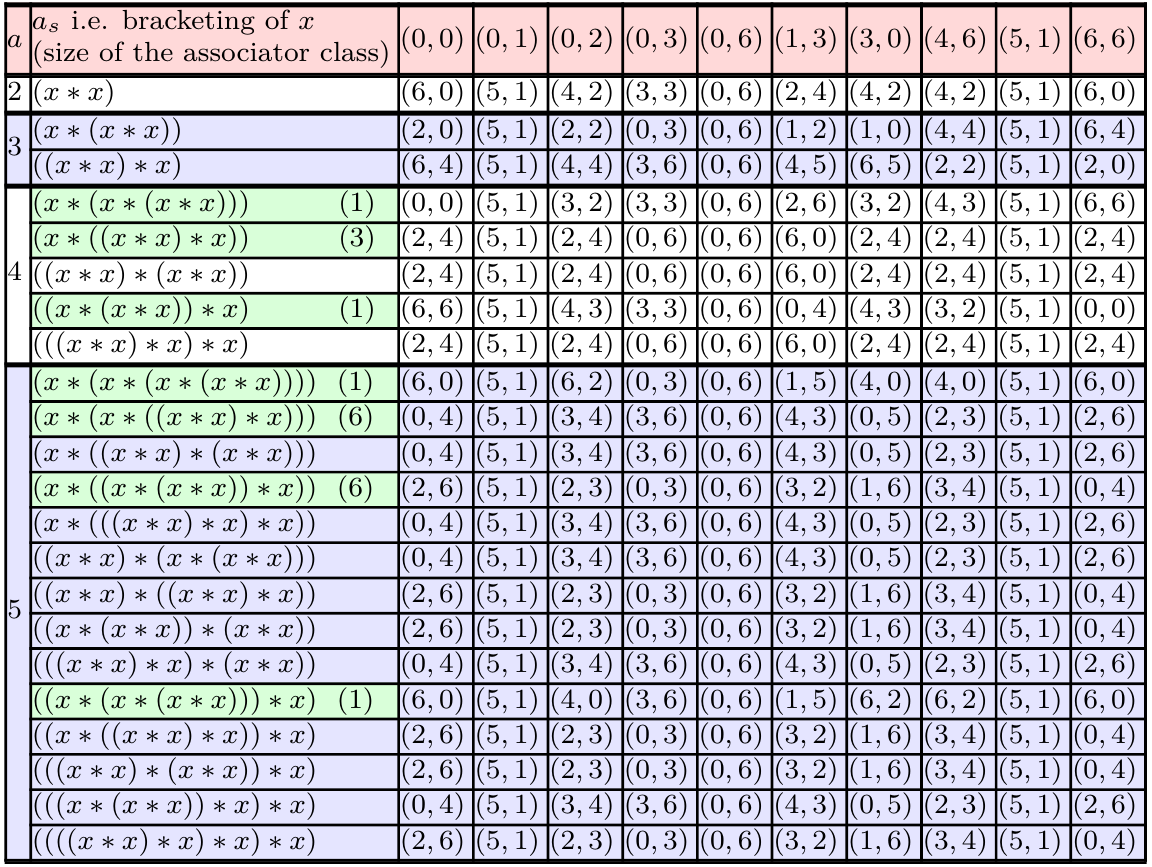}
	\caption{Results of rising different values $x$ to the powers from 2 to 5 with different bracketing patterns $a_s$. The green highlighted cells are for the representatives of a class of expressions that give the same result, and in the brackets is the size of that class.}    \label{Table:Bracketing-2-5}
\end{table}

\shorten{

} 

\begin{table}[ht]
	\centering
	\includegraphics{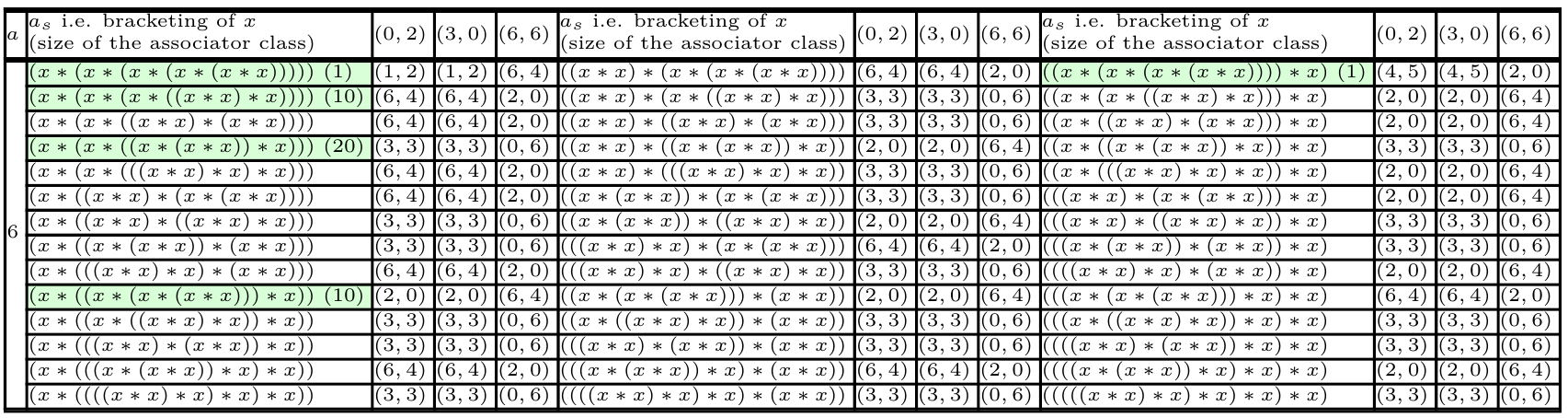}
	\caption{Results of rising different values $x$ to the power of 6 with different bracketing patterns $a_s$.}\label{Table:Bracketing-6}
\end{table}

%\begin{NiceTabular}{cc}
%	\toprule
%	Writer       &\Block[l]{}{year\\ of birth}\\
%	\midrule
%	Hugo         & 1802 \\
%	Balzac       & 1799 \\
%	\bottomrule
%\end{NiceTabular}

We can make several observations from the values in Table \ref{Table:Bracketing-2-5} and Table \ref{Table:Bracketing-6}:
\begin{itemize}
	\item As we would expect, $\mathbf{0}_*^a = \mathbf{0}_*$ and $\mathbf{1}_*^a = \mathbf{1}_*$ for all $a$ and for all bracketing patterns $a_s$.
	\item Some values of $x$, raised to different patterns $a_s$ for the same value $a$ tend to produce more different values, while some generate less different values. For example, for $a=4$, for $x = (0, 2)$ there are three different values $(3,2)$, $(2, 4)$ and $(4, 3)$, while for $x = (0, 3)$ there are two different values $(3, 3)$ and $(0, 6)$.
	\item For any $x$, for a given fixed $a$, the number of different values generated with all patterns $a_s$ do not exceed $a-1$.
	\item When the number of different values is $a-1$, the distribution of the patterns over those $a-1$ values follows the distribution of Narayana numbers $N(n,k)$\cite[Sequence A001263]{sloane2007line}: 
	\begin{description}
		\item[$a = 4$:] 1, 3, 1
		\item[$a = 5$:] 1, 6, 6, 1
		\item[$a = 6$:] 1, 10, 20, 10, 1
	\end{description}
\end{itemize}

%With a simple Sage (Python) code, 
One can raise to different powers with different patterns all elements of $\mathbb{E}_{7^2}$ and can count the total number of generated elements. In Table \ref{Table:GeneratedSetsSize} we show the size of the generated sets with all elements of $\mathbb{E}_{7^2}$. The green highlighted elements with the value 36 generate the maximal multiplicative subgroupoid $(\mathbb{E}^*_{36},*)$. For example, $(0, 2)$ is the generator of $(\mathbb{E}^*_{36},*)$.

The yellow highlighted cells in Table \ref{Table:GeneratedSetsSize} are for the elements that do not belong to the multiplicative subgroupoid $((\mathbb{E}^*_{36},*)$.

Notice that the prime number $p = 7$ can be represented as $ p = 2 q + 1 = 2 * 3 + 1$, and that all elements of $\mathbb{E}_{7^2}$ can belong to different classes with cardinalities that are divisors of $(p - 1)^2 = 2^2 q^2 = 2^2 3^2$, i.e. the cardinalities are $\{1, 2, 3, 4, 6, 9, 18, 36\}$.
\begin{table}[h!]
	\centering
	\includegraphics{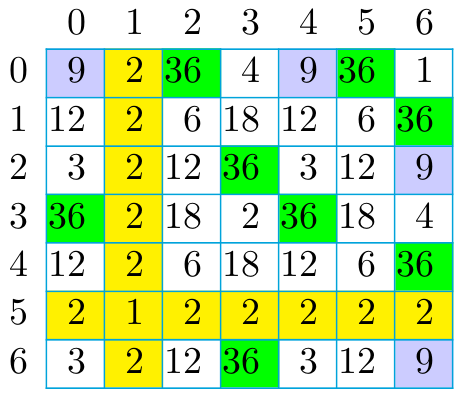}
	\caption{The size of the sets $\langle (x_1, x_2) \rangle$ for $x = (x_1, x_2) \in \mathbb{E}_{7^2}$ where $x_1$ represents the row number and $x_2$ represents the column number.}\label{Table:GeneratedSetsSize}
\end{table}
\qed
\end{example}

\forestset{
	dot tree/.style={
		/tikz/>=Latex,
		for tree={
			anchor=center,
			base=bottom,
			inner sep=1pt,
			fill=yellow,
			draw,
			circle,
			%grow=60,
			%calign angle=45,
			calign=fixed edge angles,
		},
		baseline,
		before computing xy={
			where n children>=4{
				tempcounta/.option=n children,
				tempdima/.option=!1.s,
				tempdimb/.option=!l.s,
				tempdimb-/.register=tempdima,
				tempdimc/.process={RRw2+P {tempcounta}{tempdimb}{##2/(##1-1)}},
				for children={
					if={>On>OR<&{n}{1}{n}{tempcounta}}{
						s/.register=tempdima,
						s+/.process={ORw2+P  {n} {tempdimc} {(##1-1)*##2} }
					}{},
				},
			}{},
		},
	},
	dot tree spread/.style={
		dot tree,
		for tree={fit=rectangle},
	},
	add arrow/.style={
		tikz+={
			\draw [thick, blue!15!gray]  (current bounding box.east) ++(2.5mm,0) edge [->] ++(10mm,0) ++(2.5mm,0) coordinate (o);
		}
	}
}
\begin{table}[h!]
	\centering
	\small
	\begin{tabular}{|c||c|c|c|}
		\hline
		$a$ & $i=1$, $N(3, 1) = 1$  & $i=2$, $N(3, 2) = 3$ & $i=3$, $N(3, 3) = 1$ \\
		\hline
		\multirow{5}{1em}{ } &
		\scalebox{0.5}{
			\begin{forest}
				dot tree,
				where n children=0{
					before computing xy={l*=1.0, s*=1.0},
					edge+={densely dashed},
				}{},
				[[$x$][[$x$][[$x$][$x$]]]]
		\end{forest} }
		&
		\scalebox{0.5}{\begin{forest}
				dot tree,
				where n children=0{
					before computing xy={l*=1.0, s*=1.0},
					edge+={densely dashed},
				}{},
				[[$x$][[[$x$][$x$]][$x$]]]
		\end{forest}}
		&
		\scalebox{0.5}{\begin{forest}
				dot tree,
				where n children=0{
					before computing xy={l*=1.0, s*=1.0},
					edge+={densely dashed},
				}{},
				[[[$x$][[$x$][$x$]]][$x$]]	
		\end{forest}} \\
		& $(x*(x*(x*x)))$  &  $(x*((x*x)*x))$ &  $((x*(x*x))*x)$ \\
		& &
		\scalebox{0.5}{\begin{forest}
				dot tree,
				where n children=0{
					before computing xy={l*=1.0, s*=1.0},
					edge+={densely dashed},
				}{},
				[[[$x$][$x$]][[$x$][$x$]]]
		\end{forest}}
		& \\
		4 & &  $((x*x)*(x*x))$ & \\
		& &
		\scalebox{0.5}{\begin{forest}
				dot tree,
				where n children=0{
					before computing xy={l*=1.0, s*=1.0},
					edge+={densely dashed},
				}{},
				[[[[$x$][$x$]][$x$]][$x$]]
		\end{forest}}
		& 	\\ 
		& &  $(((x*x)*x)*x)$ & \\
		\hline
	\end{tabular}
	\caption{Representation of all possible non-associative powers of $a=4$ as planar binary trees, and the corresponding clastering in subsets with Narayana numbers of elements. }\label{Table:PBT(4)}
\end{table}

\begin{definition}\label{Def:AssociativeClassRepresentatives}
	Let $2 \leq a$ be an integer, and let $x$ be an element in $G$. An ordered list $R_a(x)$ of $a-1$ bracketing shapes is called the list of associative class representatives and is defined as: 
	\begin{align}\label{Eq:Representatives}
		R_a(x) = [ R_a(x)[0], R_a(x)[1], \ldots, & R_a(x)[a-3], R_a(x)[a-2] ], \text{where} \nonumber \\
		R_a(x)[0]   & = x_{(x, a-1)\texttt{to-left}}, \nonumber  \\
		 R_a(x)[1]	& = (x_{(x, 1)\texttt{to-left}} * x)_{(x, a - 3)\texttt{to-left}}, \nonumber  \\
		 R_a(x)[2]	& = (x_{(x, 2)\texttt{to-left}} * x)_{(x, a - 4)\texttt{to-left}}, \nonumber  \\
					& \ldots, \\
		 R_a(x)[a-3]& = (x_{(x, a-3)\texttt{to-left}} * x)_{(x, 0)\texttt{to-left}}, \nonumber  \\
		 R_a(x)[a-2]& = (x_{(x, a-2)\texttt{to-left}} * x), \nonumber
	\end{align}
\end{definition}
\textbf{Note:} We use the zero-based indexing style for the list members.

\begin{lemma}\label{Lemma:Recurrent-Relations-For-Representatives}
	The bracketing shapes for $R_a(x)$ and $R_{a+1}(x)$ are related with the following recurrent relations:
	\begin{align}\label{Eq:Recurrent-Relations-For-Representatives}
		R_{a+1}(x)[0]   & = (x*R_{a}(x)[0]), \nonumber  \\
		R_{a+1}(x)[1]	& = (x*R_{a}(x)[1]), \nonumber  \\
		R_{a+1}(x)[2]	& = (x*R_{a}(x)[2]), \nonumber  \\
		& \ldots, \\
		R_{a+1}(x)[a-2]& = (x*R_{a}(x)[a-2]), \nonumber  \\
		R_{a+1}(x)[a-1]  & = (R_{a}(x)[0]*x), \nonumber		
	\end{align}
\end{lemma}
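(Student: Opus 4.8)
The plan is to regard every representative as a planar binary tree produced by the \texttt{to-left} operator and to reduce each claimed equality to a single structural identity, after which the statement becomes pure index bookkeeping. The one fact I will use is immediate from Definition~\ref{Def:left-to-right-and-vice-cersa}: adjoining one further left multiplication by $x$ wraps the current shape, so that for any shape $s$ and any integer $k\ge 0$,
\begin{equation}\label{Eq:wrap-identity}
	s_{(x,\,k+1)\texttt{to-left}} = \bigl(x * s_{(x,\,k)\texttt{to-left}}\bigr), \qquad s_{(x,\,0)\texttt{to-left}} = s .
\end{equation}
Specializing to $s=x$ gives $x_{(x,\,k+1)\texttt{to-left}} = \bigl(x * x_{(x,\,k)\texttt{to-left}}\bigr)$.

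First I would rewrite the entries of $R_a(x)$ from Definition~\ref{Def:AssociativeClassRepresentatives} in a single uniform form. Each $R_a(x)[i]$ is a fixed core $\bigl(x_{(x,\,i)\texttt{to-left}} * x\bigr)$ extended by some number $m$ of outer left multiplications by $x$; the core has $i+2$ leaves (since $x_{(x,\,i)\texttt{to-left}}$ has $i+1$), so the requirement that the whole shape have degree $a$ forces $m=a-2-i$. Hence
\begin{equation}\label{Eq:R-uniform}
	R_a(x)[i] = \bigl(x_{(x,\,i)\texttt{to-left}} * x\bigr)_{(x,\,a-2-i)\texttt{to-left}}, \qquad 0\le i\le a-2 .
\end{equation}
This matches the two boundary entries given in Definition~\ref{Def:AssociativeClassRepresentatives}: at $i=a-2$ it reduces to $\bigl(x_{(x,\,a-2)\texttt{to-left}} * x\bigr)$ with no outer extension, and at $i=0$ the right-hand side $\bigl(x*x\bigr)_{(x,\,a-2)\texttt{to-left}}$ is the fully right-nested $a$-leaf tree $x_{(x,\,a-1)\texttt{to-left}} = R_a(x)[0]$, with the interior indices pinned down by the degree constraint.

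With (\ref{Eq:R-uniform}) in hand the statement splits exactly as written. For the $a-1$ relations with $0\le i\le a-2$, formula (\ref{Eq:R-uniform}) gives $R_{a+1}(x)[i] = \bigl(x_{(x,\,i)\texttt{to-left}} * x\bigr)_{(x,\,a-1-i)\texttt{to-left}}$ whereas $R_a(x)[i] = \bigl(x_{(x,\,i)\texttt{to-left}} * x\bigr)_{(x,\,a-2-i)\texttt{to-left}}$; these differ by exactly one outer left multiplication, so applying (\ref{Eq:wrap-identity}) with $s = \bigl(x_{(x,\,i)\texttt{to-left}} * x\bigr)$ and $k=a-2-i\ge 0$ yields $R_{a+1}(x)[i] = \bigl(x * R_a(x)[i]\bigr)$. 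For the last, genuinely new entry I would substitute $i=a-1$ into (\ref{Eq:R-uniform}) for $R_{a+1}(x)$: the outer index collapses to $(a+1)-2-(a-1)=0$, so $R_{a+1}(x)[a-1] = \bigl(x_{(x,\,a-1)\texttt{to-left}} * x\bigr)$, and since $R_a(x)[0] = x_{(x,\,a-1)\texttt{to-left}}$ this is precisely $\bigl(R_a(x)[0] * x\bigr)$.

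The algebra is routine; the care goes entirely into the index bookkeeping, and the main obstacle is fixing the right structural picture: passing from $R_a(x)$ to $R_{a+1}(x)$ increments every outer \texttt{to-left} index by one (equivalently, prepends a left factor $x$ to each old representative) and then adjoins a single new representative, the degenerate $i=a-1$ case whose outer extension is empty and therefore surfaces as the right multiplication $R_a(x)[0] * x$ rather than a left one. Once (\ref{Eq:R-uniform}) is established, each line of the recurrence is a one-step application of the wrapping identity (\ref{Eq:wrap-identity}), and the base case $a=2$ (where $R_2(x)[0]=x*x$ generates the two shapes of $R_3(x)$) confirms the indexing.
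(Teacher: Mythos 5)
Your proof is correct and takes essentially the same route as the paper's: both arguments come down to unfolding Definition~\ref{Def:AssociativeClassRepresentatives} and observing that the $i$-th entry of $R_{a+1}(x)$ carries exactly one more outer left multiplication by $x$ than the corresponding entry of $R_a(x)$, with the final entry $R_{a+1}(x)[a-1]=(R_a(x)[0]*x)$ arising as the degenerate case with empty outer extension. The only cosmetic difference is that the paper dresses this direct verification as an induction on $a$ (whose hypothesis is never actually invoked), whereas you make it explicit through a uniform closed form $R_a(x)[i]=\bigl(x_{(x,\,i)\texttt{to-left}}*x\bigr)_{(x,\,a-2-i)\texttt{to-left}}$ plus a single wrapping identity.
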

\begin{proof}
	We will prove the lemma with induction by $a$. Let us first note that $R_2(x) = [ R_2(x)[0] \equiv (x*x) ]$. For $a=3$ we have $R_3(x) = [ R_3(x)[0], R_3(x)[1] ]$, where $R_3(x)[0] = x_{(x, 3-1)\texttt{to-left}}  = (x*(x*x)) = (x*R_2(x))$, and $R_3(x)[1] = (x_{(x, 3-2)\texttt{to-left}} * x) = ((x*x)*x) =  (R_{2}(x)[0]*x)$.
	
	If we suppose that the recurrent relations (\ref{Eq:Recurrent-Relations-For-Representatives}) are true for $a$, then for $a+1$ we have 
	\begin{itemize}
		\item $R_{a+1}(x)[0] \stackrel{def}{=} x_{(x, a)\texttt{to-left}} = (x*x_{(x, a-1)\texttt{to-left}}) = (x*R_{a}(x)[0])$,
		\item $R_{a+1}(x)[1] \stackrel{def}{=} (x_{(x, 1)\texttt{to-left}} * x)_{(x, a + 1 - 3)\texttt{to-left}} = (x*x_{(x, a-2)\texttt{to-left}}) = (x*R_{a}(x)[1])$,
		\item $\ldots$,
		\item $R_{a+1}(x)[a-2] \stackrel{def}{=} (x_{(x, a-2)\texttt{to-left}} * x)_{(x, 0)\texttt{to-left}} = (x*R_{a}(x)[a-2]) $,
		\item $R_{a+1}(x)[a-1] \stackrel{def}{=} (x_{(x, a-1)\texttt{to-left}} * x) = (R_{a}(x)[0]*x)$.
	\end{itemize}
\end{proof}

\begin{example}\label{Ex:PBTrees}
	Let us present the bracketing shapes highlighted in green in Table \ref{Table:Bracketing-2-5} and in Table \ref{Table:Bracketing-6}, in Example \ref{Ex:GF(7)} with the notation introduced in Definition \ref{Def:AssociativeClassRepresentatives}.
	
	\begin{description}
		\item[$a = 4$ : ] Out of 5 bracketing shapes, the following 3 are highlighted:
			\begin{description}
				\item[$i = 1$:] $(x*(x*(x*x))) = x_{(x, 3)\texttt{to-left}}$
				\item[$i = 2$:] $(x*((x*x)*x)) = (x_{(x, 1)\texttt{to-left}} * x)_{(x, 0)\texttt{to-left}}$
				\item[$i = 3$:] $((x*(x*x))*x) = (x_{(x, 2)\texttt{to-left}} * x)$
			\end{description}
		\item[$a = 5$ : ] Out of 14 bracketing shapes, the following 4 are highlighted:
			\begin{description}
				\item[$i = 1$:] $(x*(x*(x*(x*x)))) = x_{(x, 4)\texttt{to-left}}$
				\item[$i = 2$:] $(x*(x*((x*x)*x))) = (x_{(x, 1)\texttt{to-left}} * x)_{(x, 2)\texttt{to-left}}$
				\item[$i = 3$:] $(x*((x*(x*x))*x)) = (x_{(x, 2)\texttt{to-left}} * x)_{(x, 1)\texttt{to-left}}$
				\item[$i = 4$:] $((x*(x*(x*x)))*x) = (x_{(x, 3)\texttt{to-left}} * x)$
			\end{description}
		\item[$a = 6$ : ] Out of 42 bracketing shapes, the following 5 are highlighted:
			\begin{description}
				\item[$i = 1$:] $(x*(x*(x*(x*(x*x))))) = x_{(x, 5)\texttt{to-left}}$
				\item[$i = 2$:] $(x*(x*(x*((x*x)*x)))) = (x_{(x, 1)\texttt{to-left}} * x)_{(x, 3)\texttt{to-left}}$
				\item[$i = 3$:] $(x*(x*((x*(x*x))*x))) = (x_{(x, 2)\texttt{to-left}} * x)_{(x, 2)\texttt{to-left}}$
				\item[$i = 4$:] $(x*((x*(x*(x*x)))*x)) = (x_{(x, 3)\texttt{to-left}} * x)_{(x, 1)\texttt{to-left}}$
				\item[$i = 5$:] $((x*(x*(x*(x*x))))*x) = (x_{(x, 4)\texttt{to-left}} * x)$
			\end{description}	
	\end{description} 

	Let us also present in Table \ref{Table:PBT(4)} and Table \ref{Table:PBT(5)} the bracketing shapes for $a=4$ and $a=5$ as planar binary trees (for $a=6$ it would be an impractically tall table with 20 trees in one column). We see that the first rows in the tables are exactly the highlighted green shapes from Table \ref{Table:Bracketing-2-5}.

\begin{table}[h]
	\centering
	\small
	\begin{tabular}{|c||c|c|c|c|}
		\hline
		$a$ & $i=1$, $N(4, 1) = 1$  & $i=2$, $N(4, 2) = 6$ & $i=3$, $N(4, 3) = 6$ & $i=4$, $N(4, 4) = 1$  \\
		\hline
		\multirow{5}{1em}{ } &
		\scalebox{0.5}{
			\begin{forest}
				dot tree,
				where n children=0{
					before computing xy={l*=1.0, s*=1.0},
					edge+={densely dashed},
				}{},
				[[$x$][[$x$][[$x$][[$x$][$x$]]]]]
		\end{forest} }
		&
		\scalebox{0.5}{\begin{forest}
				dot tree,
				where n children=0{
					before computing xy={l*=1.0, s*=1.0},
					edge+={densely dashed},
				}{},
				[[$x$][[$x$][[[$x$][$x$]][$x$]]]]
		\end{forest}}
		&
		\scalebox{0.5}{\begin{forest}
				dot tree,
				where n children=0{
					before computing xy={l*=1.0, s*=1.0},
					edge+={densely dashed},
				}{},
				[[$x$][[[$x$][[$x$][$x$]]][$x$]]]	
		\end{forest}} 
		&
		\scalebox{0.5}{\begin{forest}
				dot tree,
				where n children=0{
					before computing xy={l*=1.0, s*=1.0},
					edge+={densely dashed},
				}{},
				[[[$x$][[$x$][[$x$][$x$]]]][$x$]]	
		\end{forest}} 
		
		\\
		& {\scriptsize $(x*(x*(x*(x*x))))$}  &  {\scriptsize $(x*(x*((x*x)*x)))$} &  {\scriptsize $(x*((x*(x*x))*x))$} & {\scriptsize $((x*(x*(x*x)))*x)$}\\
		& &
		\scalebox{0.5}{\begin{forest}
				dot tree,
				where n children=0{
					before computing xy={l*=1.0, s*=1.0},
					edge+={densely dashed},
				}{},
				[[$x$][[[$x$][$x$]][[$x$][$x$]]]]
		\end{forest}}
		& 
		\scalebox{0.5}{\begin{forest}
				dot tree,
				where n children=0{
					before computing xy={l*=1.0, s*=1.0},
					edge+={densely dashed},
				}{},
				[[[$x$][$x$]][[[$x$][$x$]][$x$]]]
		\end{forest}}	    
		& \\
		& &  {\scriptsize $(x*((x*x)*(x*x)))$} & {\scriptsize $((x*x)*((x*x)*x))$}    & \\
		& &
		\scalebox{0.5}{\begin{forest}
				dot tree,
				where n children=0{
					before computing xy={l*=1.0, s*=1.0},
					edge+={densely dashed},
				}{},
				[[$x$][[[[$x$][$x$]][$x$]][$x$]]]
		\end{forest}}
		& 
		\scalebox{0.5}{\begin{forest}
				dot tree,
				where n children=0{
					before computing xy={l*=1.0, s*=1.0},
					edge+={densely dashed},
				}{},
				[[[$x$][[$x$][$x$]]][[$x$][$x$]]]
		\end{forest}}
		&	\\ 
		5 & &  {\scriptsize $(x*(((x*x)*x)*x))$} & {\scriptsize $((x*(x*x))*(x*x))$} & \\
		& &
		\scalebox{0.5}{\begin{forest}
				dot tree,
				where n children=0{
					before computing xy={l*=1.0, s*=1.0},
					edge+={densely dashed},
				}{},
				[[[$x$][$x$]][[$x$][[$x$][$x$]]]]
		\end{forest}}
		& 
		\scalebox{0.5}{\begin{forest}
				dot tree,
				where n children=0{
					before computing xy={l*=1.0, s*=1.0},
					edge+={densely dashed},
				}{},
				[[[$x$][[[$x$][$x$]][$x$]]][$x$]]
		\end{forest}}
		&	\\ 
		& &  {\scriptsize $((x*x)*(x*(x*x)))$} & {\scriptsize $((x*((x*x)*x))*x)$} & \\
		& &
		\scalebox{0.5}{\begin{forest}
				dot tree,
				where n children=0{
					before computing xy={l*=1.0, s*=1.0},
					edge+={densely dashed},
				}{},
				[[[[$x$][$x$]][$x$]][[$x$][$x$]]]
		\end{forest}}
		& 
		\scalebox{0.5}{\begin{forest}
				dot tree,
				where n children=0{
					before computing xy={l*=1.0, s*=1.0},
					edge+={densely dashed},
				}{},
				[[[[$x$][$x$]][[$x$][$x$]]][$x$]]
		\end{forest}}
		&	\\ 
		& &  {\scriptsize $(((x*x)*x)*(x*x))$} & {\scriptsize $(((x*x)*(x*x))*x)$} & \\
		& &
		\scalebox{0.5}{\begin{forest}
				dot tree,
				where n children=0{
					before computing xy={l*=1.0, s*=1.0},
					edge+={densely dashed},
				}{},
				[[[[$x$][[$x$][$x$]]][$x$]][$x$]]
		\end{forest}}
		& 
		\scalebox{0.5}{\begin{forest}
				dot tree,
				where n children=0{
					before computing xy={l*=1.0, s*=1.0},
					edge+={densely dashed},
				}{},
				[[[[[$x$][$x$]][$x$]][$x$]][$x$]]
		\end{forest}}
		&	\\ 
		& &  {\scriptsize $(((x*(x*x))*x)*x)$} & {\scriptsize $((((x*x)*x)*x)*x)$} & \\
		\hline
	\end{tabular}
	\caption{Representation of all possible non-associative powers of $a=5$ as planar binary trees, and the corresponding clastering in subsets with Narayana number of elements. }\label{Table:PBT(5)}
\end{table}

	\qed
\end{example}

Since for any fixed integer $a$, the set $S_a(x)$ of all bracketing shapes $\mathbf{A}= (a, a_s)$ have $C_{a-1} = \frac{1}{a}\binom{2a - 2}{a-1}$ elements (equation (\ref{Eq:Catalan})), for a generic non-commutative and non-associative groupoid $(G, *)$, with a multiplicative operation $*$ we would expect that for some elements $x\in G$ there would be up to $C_{a-1}$ different power values in the set $\{x ^\mathbf{A} \}$. However, for entropic groupoids $(\mathbb{E}^*_{p^2}$ defined by Definition \ref{Def:Design-Criteria-for-the-operation}  and by the equation (\ref{Eq:OperationStar}) that is not the case, and the size of the set $\{x ^\mathbf{A} \}$ is limited to $a-1$ as we see it in Example \ref{Ex:GF(7)} and Example \ref{Ex:PBTrees}. 

\begin{definition}
    Let  $\mathbb{E}_{p^2}$ be a given finite entropoid. We define an integer $b_{max}$ by the following expression:
    \begin{equation}
        b_{max} = \min \left\lbrace  b \ | \ \frac{1}{b}\binom{2b - 2}{b-1} > (p - 1)^2 \right\rbrace .
    \end{equation}
\end{definition}

The value $b_{max}$ is the smallest integer for which the Catalan number $C_{b-1} = \frac{1}{b}\binom{2b - 2}{b-1}$ surpasses the number of elements in the maximal multiplicative subgroupoid $(G^*, g)$. We will need it in the proof of the next theorem.

\begin{theorem}[Equivalent classes and their representatives]\label{Thm:Equivalent-Classes}
	Let  $\mathbb{E}_{p^2}$ be given, and let $g$ is a generator of its maximal multiplicative subgroupoid $(\mathbb{E}^*_{(p-1)^2}, *)$, where $*$ is defined by Definition \ref{Def:Design-Criteria-for-the-operation} and by the equation (\ref{Eq:OperationStar}). For every $3 \leq a < b_{max}$, evaluating the bracketing shapes of the set $S_a(g)$ gives a partitioning in $a-1$ equivalent classes $S_{1,a(g)}$, $S_{2,a(g)}$, $\ldots$, $S_{a-2,a(g)}$, $S_{a-1,a(g)}$, whose corresponding representatives are given by the corresponding elements of $R_a(g)$ defined by the relation (\ref{Eq:Representatives}) and the cardinality  of the sets $S_{i,a(g)}$ for $i = 1, \ldots, a-1$ is given by the following expression 
	\begin{equation}\label{Eq:Narayana_S_a(g)}
		|S_{i,a(g)}| = N(a-1, i) = \frac{1}{a-1}\binom{a-1}{i}\binom{a-1}{i-1},
	\end{equation}
	where $N(a-1, i)$ are the Narayana numbers.
\end{theorem}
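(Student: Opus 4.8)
The plan is to linearise $*$ on the maximal subquasigroup so that the value of a bracketing shape becomes an explicit monomial in two fixed field elements, to read off from this monomial a single integer invariant of the shape, and then to reduce the cardinality claim to counting binary trees by that invariant.

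\emph{Step 1 (a swap--multiplication model).} Put $\kappa=a_8/b_7$ and define, for $x=(x_1,x_2)$,
\begin{equation*}
  P(x)=\kappa^{-1}\,(a_8 x_1+a_3),\qquad Q(x)=\kappa\,(b_7 x_2+b_2).
\end{equation*}
Substituting the two components of Definition~\ref{Def:OperationStar} into these expressions (a routine simplification) yields
\begin{equation*}
  (P_1,Q_1)*(P_2,Q_2)=(Q_1P_2,\;P_1Q_2).
\end{equation*}
On $\mathbb{E}^*_{(p-1)^2}$ one has $P(x),Q(x)\in\mathbb{F}_p^{*}$ and $x\mapsto(P(x),Q(x))$ is a bijection onto $(\mathbb{F}_p^{*})^2$ carrying $\mathbf{1}_*$ to $(1,1)$, so I may compute entirely in this model. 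Writing $g\mapsto(P_0,Q_0)$, an easy induction over the bracketing tree (using the product rule above) shows that a shape with $a$ copies of $g$ evaluates to
\begin{equation*}
  g^{a_s}=\bigl(P_0^{\,i}\,Q_0^{\,a-i},\ P_0^{\,a-i}\,Q_0^{\,i}\bigr),
\end{equation*}
where the integer $i=i(a_s)$ is given by $i(\text{leaf})=1$ and $i(T_L*T_R)=i(T_R)+\bigl(a(T_L)-i(T_L)\bigr)$ (equivalently, $i$ counts the leaves that ultimately feed the first coordinate). Thus $g^{a_s}$ depends on the shape only through $i$, and an immediate induction gives $1\le i\le a-1$.

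\emph{Step 2 (representatives and the exact class count).} Since left multiplication by a single leaf $x$ leaves $i$ unchanged (there $a(T_L)-i(T_L)=0$), the recurrences of Lemma~\ref{Lemma:Recurrent-Relations-For-Representatives} give by induction $i\bigl(R_a(g)[j]\bigr)=j+1$ for $j=0,\dots,a-2$: indeed $R_{a+1}[k]=(x*R_a[k])$ preserves $i$, while $R_{a+1}[a-1]=(R_a[0]*x)$ raises it from $1$ to $a$. Hence the $a-1$ representatives realise all $a-1$ admissible values $i=1,\dots,a-1$, one each. To see that these are genuinely distinct elements, note that two shapes of common size $a$ collide iff $(P_0Q_0^{-1})^{\,i-i'}=1$. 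Because $g$ generates all $(p-1)^2$ elements, one checks that $\mathrm{ord}(P_0Q_0^{-1})=p-1$; and $a<b_{max}$ forces $C_{a-1}\le(p-1)^2$, whence (the Catalan numbers growing exponentially) $|i-i'|\le a-2<p-1$, so no wrap-around can occur. This is exactly where the bound $b_{max}$ is used, and it yields precisely $a-1$ classes with the $R_a(g)[i-1]$ as representatives.

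\emph{Step 3 (the Narayana enumeration).} By Steps~1--2, $|S_{i,a(g)}|$ equals the number of binary trees with $a$ leaves and $i(T)=i$. Propagating the ``which coordinate does this leaf feed'' label down the tree shows that at each internal node the right child inherits the parent's label and the left child receives the opposite one; tracking this with two generating functions $A,B$ (for root feeding the first, resp.\ the second coordinate, with $u$ marking the number of leaves and $t$ the statistic $i$) gives
\begin{equation*}
  A=ut+AB,\qquad B=u+AB,
\end{equation*}
whence $A-B=u(t-1)$ and $A=\tfrac12\bigl(1+u(t-1)-\sqrt{(1-u(t-1))^2-4u}\bigr)$. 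This is the Narayana generating function, so $[u^{a}t^{i}]A=N(a-1,i)$, which is the claimed cardinality. I expect this last enumeration to be the main obstacle: Steps~1--2 are essentially forced once the coordinates are found, whereas identifying the distribution of the shape invariant $i$ with the Narayana numbers is the genuine combinatorial content (the alternative being an explicit bijection from these trees to Dyck paths of semilength $a-1$ having $i$ peaks).
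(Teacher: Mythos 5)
Your proof is correct, but it follows a genuinely different route from the paper's. The paper proceeds by induction on $a$: it verifies $a=4$ by direct symbolic computation with the operation (\ref{Eq:OperationStar}), then invokes the recurrences of Lemma~\ref{Lemma:Recurrent-Relations-For-Representatives} together with $C_a=\sum_i N(a,i)$ to pass from $a$ to $a+1$; the inductive step is asserted rather than carried out, and the pairwise distinctness of the $a-1$ representatives is only claimed to follow from $g$ being a generator. You instead linearise the operation: the change of variables $x\mapsto(P(x),Q(x))$ really does turn (\ref{Eq:OperationStar}) into the swap--multiplication $(P_1,Q_1)*(P_2,Q_2)=(Q_1P_2,\,P_1Q_2)$ (one checks $P(x*y)=Q(x)P(y)$ and $Q(x*y)=P(x)Q(y)$), so every shape evaluates to $\bigl(P_0^{\,i}Q_0^{\,a-i},P_0^{\,a-i}Q_0^{\,i}\bigr)$ with $i$ your explicit tree invariant; the equivalence classes are exactly the level sets of $i$; distinctness reduces to $\mathrm{ord}(P_0Q_0^{-1})=p-1$, which does follow from $g$ generating all $(p-1)^2$ elements via the homomorphism $(s,t)\mapsto st^{-1}$; and the cardinality claim becomes a generating-function computation whose functional equations $A=ut+AB$, $B=u+AB$ indeed reproduce $1,3,1$ and $1,6,6,1$ at $a=4,5$. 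What your approach buys is a complete, checkable argument that actually \emph{proves} the Narayana distribution (the paper's inductive step essentially presupposes it) and that makes transparent why the count is independent of the particular entropoid; what the paper's approach buys is brevity and no need to discover the linearising coordinates. Two places you should tighten, neither a gap: the claim that $a<b_{max}$ forces $a-2<p-1$ should be backed by the explicit chain $2^{a-2}\le C_{a-1}\le (p-1)^2$, and the identification of $[u^at^i]A$ with $N(a-1,i)$ should either cite the standard Narayana expansion or extract the coefficient by Lagrange inversion.
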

\begin{proof}
	%Sketch (some replacement steps are omitted): 
	We will prove the theorem by induction on $a$ for the range $3 \leq a < b_{max}$. For $a \geq b_{max}$ there are not enough elements to be classified in classes which total number surpass  $(p-1)^2$ (as it is the number of elements of $(\mathbb{E}^*_{(p-1)^2}, *)$).
    
     For $a=3$ the theorem is trivially true since there are just $a-1 = 2$ associative shapes. The first non-trivial case is for $a=4$. Taking $g = (g_1, g_2)$, and using the definition for the operation $*$ given in equation (\ref{Eq:OperationStar}), by basic symbolic and algebraic expression replacements we get that for all three shapes the end result is: 
	\begin{eqnarray*}
		 & & R_4(g)[1] \equiv (g*((g*g)*g)) = ((g*g)*(g*g)) = (((g*g)*g)*g)= \\ 
		 & = \Bigg(& \frac{a_8^3 g_1^2 \left(b_7 g_2+b_2\right)^2+2 a_3 a_8^2 g_1 \left(b_7 g_2+b_2\right)^2+a_3
		 	\left(2 a_3 a_8 b_7 b_2 g_2+a_3 a_8 b_7^2 g_2^2+a_3 a_8 b_2^2-b_7\right)}{a_8 b_7},\\
	 	& & \frac{a_3^2 b_7^3
		 	g_2^2+2 a_3^2 b_2 b_7^2 g_2+a_8^2 b_7 g_1^2 \left(b_7 g_2+b_2\right)^2+2 a_3 a_8 b_7 g_1 \left(b_7
		 	g_2+b_2\right)^2+b_2 \left(a_3^2 b_2 b_7-a_8\right)}{a_8 b_7}\Bigg) 
	\end{eqnarray*}
	We can also check that $R_4(g)[0] \equiv (g*(g*(g*g))) \neq (g*((g*g)*g)) \equiv R_4(g)[1]$, $R_4(g)[2] \equiv ((g*(g*g))*g) \neq (g*((g*g)*g)) \equiv R_4(g)[1]$, and $R_4(g)[0] \equiv (g*(g*(g*g))) \neq ((g*(g*g))*g) \equiv R_4(g)[2]$. Thus for the case $a=4$ we have $S_{1,4(g)} = \{R_4(g)[0] \}$, $S_{2,4(g)} = \{R_4(g)[1] \equiv (g*((g*g)*g)), ((g*g)*(g*g)), (((g*g)*g)*g)  \}$ and $S_{3,4(g)} = \{R_4(g)[2] \}$.
	
	Let us now suppose that the claims of the theorem are true for $a$, i.e. that the expressions (\ref{Eq:Partitioning-S_a(g)}) and (\ref{Eq:Narayana_S_a(g)}) hold. Then, for $a+1$ we have the set $S_{a+1}(g)$ that has $C_{a} = \frac{1}{a+1}\binom{2a}{a}$ elements, where $C_{a}$ is the $(a)$-th Catalan number. We can now partition the set $S_{a+1}(g)$ as follows:
	 \begin{equation}\label{Eq:Partitioning-S_a(g)}
		S_{a+1}(g) = S_{1,a(g)} \bigcup S_{2,a(g)} \bigcup \ldots \bigcup S_{a-1,a(g)} \bigcup S_{a,a(g)},
	\end{equation}
	 where $x_{(x, a-1)\texttt{to-left}} \in S_{1,a(g)}$,  $(x_{(x, 1)\texttt{to-left}} * x)_{(x, a - 3)\texttt{to-left}} \in S_{2,a(g)}$, $\ldots$, $(x_{(x, a-3)\texttt{to-left}} * x)_{(x, 1)\texttt{to-left}} \in S_{a-2,a(g)}$ and $((x_{(x, a-2)\texttt{to-left}} * x) \in S_{a-1,a(g)}$. 	
	
	 The conditions that $g$ is a generator of the $(\mathbb{E}^*_{(p-1)^2}, *)$, and that $*$ is non-commutative and non-associative operation are essential for obtaining that the element $R_{a+1}(x)[a-1] = (R_{a}(x)[0]*x)$ of the subset $S_{a,a(g)}$ is different from the element $R_{a+1}(x)[0] = (x*R_{a}(x)[0])$ of the subset $S_{1,a(g)}$.
	 
	 Then,if we represent the Catalan number $ C_a$ as a sum of Narayana numbers $N(a, i)$:
	 $$ C_a= N(a, 1) + N(a, 2) + \ldots + N(a, a-1) + N(a, a), $$
	 and if we use the recurrent relations for the representatives of the equivalent classes given in Lemma \ref{Lemma:Recurrent-Relations-For-Representatives} to get expressions for the representatives of the partitions of $S_{a+1}(g)$, we obtain that the relations (\ref{Eq:Partitioning-S_a(g)}) and (\ref{Eq:Narayana_S_a(g)}) hold also for $a+1$. 
\end{proof}

\begin{openroblem}
	In the proof of the Theorem \ref{Thm:Equivalent-Classes}, as a jump-start for the induction we used the concrete instance for the entropic operation $*$ defined by the equation (\ref{Eq:OperationStar}). We do not know does the same partitioning in equivalent classes hold for general entropic operations $*$, and it would be an interesting mathematical problem to further investigate.
\end{openroblem}

\subsection{Succinct Notation for Exponentially Large Bracketing Shapes}
\textbf{Notation:} In the rest of the text we assume that we work in finite entropoid $\mathbb{E}_{p^2}$ and $g\in \mathbb{E}_{p^2}$ is a generator of some multiplicative subgroupoid $(\mathbb{E}^*_{\nu}, *)$, where $*$ is defined by Definition \ref{Def:Design-Criteria-for-the-operation} and by the equation (\ref{Eq:OperationStar}). In Definition \ref{Def:AssociativeClassRepresentatives}, Lemma \ref{Lemma:Recurrent-Relations-For-Representatives} and Theorem \ref{Thm:Equivalent-Classes} we use the symbol $a$ to represent any number (possibly exponentially large) of multiplicative factors in the process of rising to the power of $a$. In our pursuit for succinct representation of exponentially large indices, we will keep the symbol $a$, but for smaller bracketing patterns of size $\mathfrak{b}$ the role of the symbol $a$ in Definition \ref{Def:AssociativeClassRepresentatives}, Lemma \ref{Lemma:Recurrent-Relations-For-Representatives} and Theorem \ref{Thm:Equivalent-Classes} will be replaced by $\mathfrak{b} \geq 2$. Further, we will denote the indices either as pairs $(\mathbf{A}, \mathfrak{b})$ or as triplets $(a, a_s, \mathfrak{b})$ meaning $(\mathbf{A}, \mathfrak{b}) = (a, a_s, \mathfrak{b}) $. The set of all power indices will be denoted by $\mathbb{L}$ in resemblance to the logarithmetic of power indices $(L(G), \mathbf{+}, \mathbf{\times})$ $)$. We will write $(\mathbf{A}, \mathfrak{b})  \xleftarrow{\mathbf{r}} \mathbb{L}$ to denote that $(\mathbf{A}, \mathfrak{b})$ was chosen uniformly at random from the set $\mathbb{L}$ if $\mathfrak{b} \xleftarrow{\mathbf{r}} \mathbb{Z}_{>1}$, $a \xleftarrow{\mathbf{r}} \mathbb{Z}_p^*$, and $a_s  \xleftarrow{\mathbf{r}} \mathbb{L}[a, \mathfrak{b}]$ where $\mathbb{L}[a, \mathfrak{b}]$ is the set of all bracketing shapes defined in Definition \ref{Def:Power-indices}.

\begin{definition}[Succinct power indices]\label{Def:Power-indices}
	Let integer $\mathfrak{b} \geq 2$ be a base. The power index $ (\mathbf{A}, \mathfrak{b}) = (a, a_s, \mathfrak{b})$ is defined as a triplet consisting of two lists and an integer $\mathfrak{b}$. Let $a \in \mathbb{Z}^+$ be represented in base $\mathfrak{b}$ as $a = A_0 + A_1 \mathfrak{b} + \ldots + A_k \mathfrak{b}^k$, or in little-endian notation with the list of digits $0 \leq A_i \leq \mathfrak{b}-1$, as $a = [A_0, A_1, \ldots, A_k]_\mathfrak{b}$. Let the bracketing pattern $a_s$ be represented with a list of digits $a_s = [P_0, P_1, \ldots, P_k]_{\mathfrak{b}-1}$ where % $0\leq P_0 \leq 1$ and 
    $0 \leq P_i \leq \mathfrak{b}-2$ for $i = 0, \ldots, k$. For base $\mathfrak{b}$ and $a = [A_0, A_1, \ldots, A_k]_\mathfrak{b} \in \mathbb{Z}^+$ we denote the set of all possible patterns $a_s$ by
    \begin{equation}
    	\mathbb{L}[a, \mathfrak{b}] = \{ a_s \ | a_s = [P_0, \ldots, P_k]_{\mathfrak{b}-1}, \text{ for all } P_j \in \mathbb{Z}_{\mathfrak{b}-1}, j = 0, \ldots, k  \}.
    \end{equation}
\end{definition}
    
\begin{definition}[Non-associative and non-commutative exponentiation]\label{Def:Raising-to-power}
	For any $x \in \mathbb{E}_{p}$ we define $x^{(\mathbf{A}, \mathfrak{b})}$ as follows:
	\begin{align}
		w_0 = & x, \nonumber \\
		\label{Eq:Raising-to-power-v1}
		w_i = & R_b(w_{i-1})[P_i], \text{\ for \ } i = 1, \ldots, k, \\
		j = & \text{index of the first nonzero digit } A_j \nonumber \\
		x_j = & \left\{
		\begin{array}{ll}
			w_j & \text{, if } A_j = 1,\\
			\label{Eq:Raising-to-power-v2}
			R_{A_j}(w_j)[ P_j \mod (A_j - 1) ] & \text{, if } A_j > 1.
		\end{array} \right.\\
		\text{for } i=j+1, \ldots, k & \nonumber \\
		\label{Eq:Raising-to-power-v3}
		x_i = & \left\{
		\begin{array}{ll}
			x_{i-1} & \text{, if } A_i = 0,\\
			w_i * x_{i-1} & \text{, if } A_i = 1 \text{ and } P_{i-1} \text{ is even,}\\
			x_{i-1} * w_i & \text{, if } A_i = 1 \text{ and } P_{i-1} \text{ is odd,}\\
			R_{A_i}(w_i)[ P_i \mod (A_i - 1) ] * x_{i-1} & \text{, if } A_i > 1  \text{ and } P_{i-1}  \text{ is even,}\\
			x_{i-1} * R_{A_i}(w_i)[ P_i \mod (A_i - 1) ] & \text{, if } A_i > 1  \text{ and } P_{i-1}  \text{ is odd,}		
		\end{array} \right.\\
		\label{Eq:Raising-to-power-v4}
		x^{(\mathbf{A}, \mathfrak{b})} = & x_k.
	\end{align}	
\end{definition}

% Note that the value of $P_0$ is either 0 or 1, since it influences the computation only in equation (\ref{Eq:Raising-to-power-v3}). 

\begin{proposition}\label{Prop:Well-defiined-powers}
	For every base $\mathfrak{b} \geq 2$, every $x \in \mathbb{E}_{p^2}$, every $a \in \mathbb{Z}^+$ and every bracketing pattern $a_s = [P_0, P_1, \ldots, P_k]_{\mathfrak{b}-1}$, the value $x^{(\mathbf{A}, \mathfrak{b})}$ is a product of $a$ multiplications of $x$ 
	$$x^{(\mathbf{A}, \mathfrak{b})} = x^{(a, a_s, \mathfrak{b})} = \underbrace{(x * \ldots (x*x) \ldots )}_{a \text{\ copies of } x}. $$ If $O^*_{\mathfrak{b}}$ denotes the number of operations $*$ used to compute the result $x^{(\mathbf{A}, \mathfrak{b})}$, then its value is given by the following expression 
	\begin{equation}\label{Eq:Upper-bound-number-of-multiplications-in-power}
		O^*_{\mathfrak{b}} = k (\mathfrak{b}-1) - 1 + \sum_{A_i \neq 0}A_i  \ \ .
	\end{equation}
\end{proposition}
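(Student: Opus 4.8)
The plan is to establish the two assertions separately: first that $x^{(\mathbf{A},\mathfrak{b})}$ is a genuine product of exactly $a$ copies of $x$ (so that it is a well-defined power index of degree $a$), and then that the multiplication count is exactly the claimed value. Both rest on a single structural fact read off from Definition \ref{Def:AssociativeClassRepresentatives}: for any $m \geq 2$ and any $y$, every entry $R_m(y)[\,\cdot\,]$ is a bracketing shape built from \emph{exactly} $m$ copies of $y$, and evaluating it consumes \emph{exactly} $m-1$ applications of $*$ (a binary tree with $m$ leaves has $m-1$ internal nodes). I would first record this as a one-line observation, together with the remark that the index $P_j \bmod (A_j-1)$ (resp.\ $P_i \bmod (A_i-1)$) always lies in $\{0,\ldots,A_i-2\}$, so every reference $R_{A_i}(\cdot)[\,\cdot\,]$ occurring in Definition \ref{Def:Raising-to-power} is legal.

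For the degree claim I would argue by induction on $i$, tracking the number of leaves (copies of $x$) in each intermediate quantity. Writing $c(\cdot)$ for that count, $c(w_0)=1$, and since $w_i=R_{\mathfrak{b}}(w_{i-1})[P_i]$ packs $\mathfrak{b}$ copies of $w_{i-1}$, one gets $c(w_i)=\mathfrak{b}\,c(w_{i-1})$, hence $c(w_i)=\mathfrak{b}^i$ for $i=0,\ldots,k$. In the accumulation phase the base case gives $c(x_j)=A_j\mathfrak{b}^j$ (a bracketing of $A_j$ copies of $w_j$), and each update step appends either $w_i$ (when $A_i=1$) or $R_{A_i}(w_i)[\,\cdot\,]$ (when $A_i>1$)---that is, a block of $A_i\mathfrak{b}^i$ leaves---to $x_{i-1}$ by a single $*$, so $c(x_i)=c(x_{i-1})+A_i\mathfrak{b}^i$; the $A_i=0$ step leaves the count unchanged. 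Hence $c(x_k)=\sum_{l=j}^{k}A_l\mathfrak{b}^l=\sum_{l=0}^{k}A_l\mathfrak{b}^l=a$, since $A_l=0$ for $l<j$. Because every step combines two already-formed expressions by one binary operation, the final object is a legitimate bracketing shape on $a$ leaves, which is exactly the assertion.

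For the operation count I would tally the two phases. Forming $w_1,\ldots,w_k$ costs $\mathfrak{b}-1$ multiplications each (treating $w_{i-1}$ as already computed and reused), contributing $k(\mathfrak{b}-1)$. In the accumulation phase, forming $x_j$ costs $A_j-1$ (it is $R_{A_j}(w_j)[\,\cdot\,]$, or a bare $w_j$ when $A_j=1$). For each $i>j$ the cost is $0$ when $A_i=0$; when $A_i=1$ it is a single attaching multiplication, i.e.\ $A_i$; and when $A_i>1$ it is $A_i-1$ for the internal bracketing plus one attaching multiplication, again $A_i$. Thus every $i>j$ with $A_i\neq 0$ contributes exactly $A_i$, giving
\begin{equation*}
O^*_{\mathfrak{b}} = k(\mathfrak{b}-1) + (A_j-1) + \sum_{\substack{i>j\\ A_i\neq 0}} A_i = k(\mathfrak{b}-1) - 1 + \sum_{A_i\neq 0}A_i,
\end{equation*}
where the last equality uses that $A_j$ is the unique nonzero digit of index $\le j$, so $A_j+\sum_{i>j,\,A_i\neq0}A_i=\sum_{A_i\neq0}A_i$.

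The bookkeeping is routine once the right invariants are fixed; the only genuinely delicate point is the uniform treatment of the accumulation step in (\ref{Eq:Raising-to-power-v3}). One must check that the case split on $A_i$ and on the parity of $P_{i-1}$ never changes the leaf count or the multiplication count---the parity only selects on which side $w_i$ (or its bracketing) is attached---and that the lone ``$-1$'' in the formula is precisely the deficit $A_j-1$ of the seed term $x_j$, which carries no attaching multiplication because there is nothing yet to attach to. Isolating the first nonzero digit $j$ cleanly is what makes the final summation collapse to $\sum_{A_i\neq0}A_i-1$, and I expect this to be the main obstacle to a fully rigorous write-up.
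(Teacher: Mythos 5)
Your proposal is correct and follows essentially the same route as the paper's proof: you track the number of copies of $x$ in each intermediate value (your leaf count $c(\cdot)$ is the paper's counter $M(i)$, with $c(w_i)=\mathfrak{b}^i$ and the accumulation $c(x_i)=c(x_{i-1})+A_i\mathfrak{b}^i$), and then tally the $*$ operations phase by phase to obtain $k(\mathfrak{b}-1)+(A_j-1)+\sum_{i>j,\,A_i\neq 0}A_i$. Your write-up is in fact somewhat more careful than the paper's (e.g.\ checking that the indices $P_i \bmod (A_i-1)$ are legal and that the parity case split does not affect either count), but there is no substantive difference in approach.
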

\begin{proof}
	For the first part, let us define a counter $M(i)$ for $i = 0, 1, \ldots, k$, that counts the number of times $x$ was multiplied in the procedure described in equations (\ref{Eq:Raising-to-power-v1}) - (\ref{Eq:Raising-to-power-v4}) for a number $m_i = [A_0, A_1, \ldots, A_i]_\mathfrak{b}$ consisting of the first $i$ digits of $a$. 
	
	Let us first notice that every value $w_i$, for $i = 0, 1, \ldots, k$ in (\ref{Eq:Raising-to-power-v1}) is a result of $\mathfrak{b}^k$ multiplications of $x$. Now, our counting should start at (\ref{Eq:Raising-to-power-v2}) with the value $x_j$ where  $j$ is the index of the first nonzero digit $A_j$. This means that if $j=0$, $M(0) = A_0 = m_0$, and if $j>0$, $M(j) = A_j \mathfrak{b}^j = m_j$.  Then, for every next digit $A_i$ for $i=j+1, \ldots, k$ we have that $M(j) = M(j-1) + A_j \mathfrak{b}^j$, where the factor $\mathfrak{b}^j$ comes from the fact that $w_j$ is used either as $1 \mathfrak{b}^j$ or as $A_j \mathfrak{b}^j$ if $A_j > 1$ in the part $ [ P_i \mod (A_i - 1) ] $ of (\ref{Eq:Raising-to-power-v3}). The final result is that $M(k) = \sum_{i=0}^{k}A_j \mathfrak{b}^j = a$.
	
	The second part can be proved by counting the number of multiplications performed in parts (\ref{Eq:Raising-to-power-v1}) and in parts (\ref{Eq:Raising-to-power-v2})-(\ref{Eq:Raising-to-power-v3}). In the expression (\ref{Eq:Raising-to-power-v1}) we have a fixed value $\mathfrak{b}$ - so total number of multiplications is exactly $k (\mathfrak{b}-1)$. Next, in (\ref{Eq:Raising-to-power-v2}) we have either $0$ multiplications (if $A_j = 0$), or $A_j - 1$ multiplications (if $A_j > 1$). Finally, in (\ref{Eq:Raising-to-power-v3}), for $i=j+1, \ldots, k$ we apply $1$ multiplication with $x_{i-1}$ and $A_i - 1$ multiplications in the expression $R_{A_i}(w_i)[ P_i \mod (A_i - 1) ]$, which proves the equation (\ref{Eq:Upper-bound-number-of-multiplications-in-power}).
\end{proof}

Proposition \ref{Prop:Well-defiined-powers} ensures that Definition \ref{Def:Raising-to-power} defines a succinct notation for exponentially large power indices with their bracketing shapes since:
\begin{enumerate}
    \item It is consistent procedure of rising to a power where the number of times that $x$ is multiplied is exactly $a$, and
    \item An efficient procedure where the number of performed operations $*$ is $O(\mathfrak{b} \log_\mathfrak{b} a)$.
\end{enumerate}

The contribution of the non-associativity of the operation $*$ to the final result comes from the expressions $w_i = R_b(w_{i-1})[P_i]$ of (\ref{Eq:Raising-to-power-v1}), and from $R_{A_j}(w_j)[ P_j \mod (A_j - 1) ] $ of (\ref{Eq:Raising-to-power-v2}) and $R_{A_i}(w_i)[ P_i \mod (A_i - 1) ]$ of (\ref{Eq:Raising-to-power-v3}). On the other hand, the contribution of the non-commutativity of the operation $*$ comes from the multiplication by $x_{i-1}$ from left or from right in (\ref{Eq:Raising-to-power-v3}).

The next Theorem is a direct consequence from Theorem \ref{Thm:Basic-theorem-for-powers-in-entropic-groupoids} and the Proposition \ref{Prop:Well-defiined-powers}.
\begin{theorem}[Basic theorem for exponentiation]\label{Thm:Basic-theorem-for-risinng-to-power}
	Let $x, y \in \mathbb{E}_{(p-1)^2}^*$. For every $(\mathbf{A},{\mathfrak{b}_1}), (\mathbf{B},{\mathfrak{b}_2}) \xleftarrow{\mathbf{r}} \mathbb{L}$
	\begin{equation}\label{Eq:Power-of-product-of-x-and-y-mixed-bases}
		(x*y)^{(\mathbf{A},{\mathfrak{b}_1})} = x^{(\mathbf{A},{\mathfrak{b}_1})} * y^{(\mathbf{A},{\mathfrak{b}_1})},
	\end{equation}
	and 
	\begin{equation}\label{Eq:Power-of-power-of-x-succinct}
		x^{(\mathbf{A},{\mathfrak{b}_1}) (\mathbf{B},{\mathfrak{b}_2})} = \big(x^{(\mathbf{A},{\mathfrak{b}_1})}\big)^{(\mathbf{B},{\mathfrak{b}_2})} = \big(x^{(\mathbf{B},{\mathfrak{b}_2})}\big)^{(\mathbf{A},{\mathfrak{b}_1})} = x^{(\mathbf{B},{\mathfrak{b}_2}) (\mathbf{A},{\mathfrak{b}_1})}.
	\end{equation}
	\qed	
\end{theorem}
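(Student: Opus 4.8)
The plan is to observe that the succinct notation of Definition \ref{Def:Raising-to-power} is merely an efficient \emph{encoding} of an ordinary bracketing shape in the sense of Definition \ref{Def:General-Power-Index}, and then to invoke the Murdoch--Etherington Theorem \ref{Thm:Basic-theorem-for-powers-in-entropic-groupoids} verbatim. Concretely, I would first argue that the pair $(\mathbf{A}, \mathfrak{b}_1)$ determines a single ordinary power index $\mathbf{A}' = (a, a'_s)$ that is independent of the base element. By Proposition \ref{Prop:Well-defiined-powers}, for every $x$ the value $x^{(\mathbf{A},\mathfrak{b}_1)}$ is a product of exactly $a$ copies of $x$ under some bracketing; the crucial point is that the recipe producing this bracketing --- the choices $w_i = R_b(w_{i-1})[P_i]$ and the left/right placements in (\ref{Eq:Raising-to-power-v1})--(\ref{Eq:Raising-to-power-v3}) --- branches only on the digits $A_i, P_i$ and never on the field value of the intermediate $w_i$. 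Hence the composed shape $a'_s$ is the same whether we start from $x$, from $y$, or from $x*y$, and we may write $x^{(\mathbf{A},\mathfrak{b}_1)} = x^{\mathbf{A}'}$ for all $x \in \mathbb{E}^*_{(p-1)^2}$, with an analogous $\mathbf{B}'$ for $(\mathbf{B},\mathfrak{b}_2)$.

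With this reduction in hand, the identity (\ref{Eq:Power-of-product-of-x-and-y-mixed-bases}) follows in three steps: rewrite $(x*y)^{(\mathbf{A},\mathfrak{b}_1)} = (x*y)^{\mathbf{A}'}$ using the encoding, apply the entropic identity $(x*y)^{\mathbf{A}'} = x^{\mathbf{A}'} * y^{\mathbf{A}'}$ of (\ref{Eq:Power-of-product-of-x-and-y}), and re-encode each factor as $x^{\mathbf{A}'} * y^{\mathbf{A}'} = x^{(\mathbf{A},\mathfrak{b}_1)} * y^{(\mathbf{A},\mathfrak{b}_1)}$. Here I would note that the hypothesis $x, y \in \mathbb{E}^*_{(p-1)^2}$ is used only to guarantee that $x*y$ again lies in the quasigroup, so that all three powers are well defined.

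For the chain (\ref{Eq:Power-of-power-of-x-succinct}) the first equality is the definition of the product of power indices (\ref{Eq:ProductOfEndomorphisms}), namely $x^{(\mathbf{A},\mathfrak{b}_1)(\mathbf{B},\mathfrak{b}_2)} = \big(x^{(\mathbf{A},\mathfrak{b}_1)}\big)^{(\mathbf{B},\mathfrak{b}_2)}$. Passing to the ordinary indices gives $(x^{\mathbf{A}'})^{\mathbf{B}'}$, to which I apply the palintropic chain (\ref{Eq:Power-of-power-of-x-general}) of Theorem \ref{Thm:Basic-theorem-for-powers-in-entropic-groupoids}, obtaining $(x^{\mathbf{A}'})^{\mathbf{B}'} = x^{\mathbf{A}'\mathbf{B}'} = (x^{\mathbf{B}'})^{\mathbf{A}'} = x^{\mathbf{B}'\mathbf{A}'}$. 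Re-encoding the outer terms then yields the required equalities with $\big(x^{(\mathbf{B},\mathfrak{b}_2)}\big)^{(\mathbf{A},\mathfrak{b}_1)}$ and with $x^{(\mathbf{B},\mathfrak{b}_2)(\mathbf{A},\mathfrak{b}_1)}$. Closure of the quasigroup ensures $x^{\mathbf{A}'} \in \mathbb{E}^*_{(p-1)^2}$, so the outer exponentiation by $\mathbf{B}'$ stays in the domain where the theorem applies.

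The routine computations being trivial, I expect the only genuine obstacle to be the content of the first paragraph: rigorously justifying that the succinct procedure defines a base-element-independent bracketing shape, i.e. that $(\mathbf{A},\mathfrak{b}_1)$ really collapses to a single ordinary power index $\mathbf{A}'$. This is intuitively clear from Proposition \ref{Prop:Well-defiined-powers}, but to be airtight one should record an explicit induction on the number of digits $k$ showing that the shape emitted by (\ref{Eq:Raising-to-power-v1})--(\ref{Eq:Raising-to-power-v4}) is a function of $(a, a_s, \mathfrak{b})$ alone. Once that is granted, both identities are immediate corollaries of the Murdoch--Etherington theorem, and no further use of the specific operation (\ref{Eq:OperationStar}) is needed.
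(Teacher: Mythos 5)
Your proposal is correct and follows exactly the route the paper intends: the paper states this theorem as "a direct consequence from Theorem \ref{Thm:Basic-theorem-for-powers-in-entropic-groupoids} and Proposition \ref{Prop:Well-defiined-powers}" (with the added remark that the bases $\mathfrak{b}_1,\mathfrak{b}_2$ are irrelevant since the Murdoch--Etherington result holds for arbitrary bracketing shapes), which is precisely your reduction of each succinct index to an ordinary, base-element-independent power index followed by an appeal to that theorem. Your first paragraph merely makes explicit the step the paper leaves implicit.
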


Note that different values of bases $\mathfrak{b}_1$ and $\mathfrak{b}_2$ in Theorem \ref{Thm:Basic-theorem-for-risinng-to-power} do not affect the application of Theorem \ref{Thm:Basic-theorem-for-powers-in-entropic-groupoids}, since Theorem \ref{Thm:Basic-theorem-for-powers-in-entropic-groupoids} holds true for any bracketing shapes $\mathbf{A}$ and $\mathbf{B}$.

The arithmetic (Logarithmetic) for indices $(\mathbf{A},{\mathfrak{b}_1})$ and $(\mathbf{B},{\mathfrak{b}_2})$ is thus implicitly defined with Theorem \ref{Thm:Basic-theorem-for-powers-in-entropic-groupoids} and Definition  \ref{Def:Addition-and-Multiplication-of-Endomorphisms}. Namely, $(\mathbf{C},{\mathfrak{b}_3}) = (\mathbf{A},{\mathfrak{b}_1}) + (\mathbf{B},{\mathfrak{b}_2})$ iff $x^{(\mathbf{C},{\mathfrak{b}_3})} = x^{(\mathbf{A},{\mathfrak{b}_1})} * 
x^{(\mathbf{B},{\mathfrak{b}_2})}$, and $(\mathbf{D},{\mathfrak{b}_4}) = (\mathbf{A},{\mathfrak{b}_1}) (\mathbf{B},{\mathfrak{b}_2})$ iff $x^{(\mathbf{D},{\mathfrak{b}_4})} = (x^{(\mathbf{A},{\mathfrak{b}_1})})^{(\mathbf{B},{\mathfrak{b}_2})}$ for all $x \in \mathbb{E}_{p^2}$. 

\begin{proposition}\label{Prop:Addition-and-multiplication-of-integer-parts}
	Let $(\mathbf{A},{\mathfrak{b}_1}) = (a, a_s, \mathfrak{b}_1)$ and $(\mathbf{B},{\mathfrak{b}_2}) = (b, b_s, \mathfrak{b}_2)$ be two power indices where $a, b \in \mathbb{Z}^+$, and let $(c, c_s, \mathfrak{b}_3) = (\mathbf{C},{\mathfrak{b}_3}) = (\mathbf{A},{\mathfrak{b}_1}) + (\mathbf{B},{\mathfrak{b}_2})$ and $(d, d_s, \mathfrak{b}_4) = (\mathbf{D},{\mathfrak{b}_3}) = (\mathbf{A},{\mathfrak{b}_1}) (\mathbf{B},{\mathfrak{b}_2})$. Then, $c = a + b$ and $d = a b$.
\end{proposition}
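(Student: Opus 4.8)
The plan is to reduce the entire statement to a single bookkeeping invariant: the number of copies of $x$ that appear in an evaluated power, i.e. the degree (leaf count) of the underlying bracketing shape. Proposition \ref{Prop:Well-defiined-powers} already identifies this invariant with the integer part of a succinct index, since for any $(\mathbf{A},\mathfrak{b})=(a,a_s,\mathfrak{b})$ the value $x^{(\mathbf{A},\mathfrak{b})}$ is a product of exactly $a$ factors $x$. Hence it suffices to show that this factor count is additive under the operation defining $+$ and multiplicative under the operation defining $\times$, and then to read off $c$ and $d$ by applying Proposition \ref{Prop:Well-defiined-powers} to the composite indices $(\mathbf{C},\mathfrak{b}_3)$ and $(\mathbf{D},\mathfrak{b}_4)$.

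For the additive part, I would first recall from Definition \ref{Def:Addition-and-Multiplication-of-Endomorphisms} (transported to the succinct notation by the remarks following Theorem \ref{Thm:Basic-theorem-for-risinng-to-power}) that $(\mathbf{C},\mathfrak{b}_3)$ is the index with $x^{(\mathbf{C},\mathfrak{b}_3)} = x^{(\mathbf{A},\mathfrak{b}_1)} * x^{(\mathbf{B},\mathfrak{b}_2)}$. The right-hand side is obtained by joining, at one new top-level application of $*$, a product of $a$ copies of $x$ with a product of $b$ copies of $x$; the joining introduces no additional factors, so the whole expression is a product of $a+b$ copies of $x$. Applying Proposition \ref{Prop:Well-defiined-powers} to the left-hand side shows that it is a product of $c$ copies of $x$, and since the factor count is a property of the evaluated product, $c=a+b$.

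For the multiplicative part, I would use the substitution description of $\times$ from Definition \ref{Def:Addition-and-Multiplication-of-Endomorphisms}: the shape of $(\mathbf{D},\mathfrak{b}_4)$ is obtained from the outer shape $b_s$ (which has $b$ leaves) by replacing each of its $b$ factors by a complete copy of the inner shape $a_s$ (which has $a$ leaves). Counting leaves of the resulting tree gives $a\cdot b$, because leaf count is multiplicative under this tree substitution. Equivalently, one can argue from $x^{(\mathbf{D},\mathfrak{b}_4)} = (x^{(\mathbf{A},\mathfrak{b}_1)})^{(\mathbf{B},\mathfrak{b}_2)}$: the inner exponentiation turns $x$ into a product of $a$ copies, and the outer exponentiation multiplies $b$ such blocks together for $ab$ copies in total; Proposition \ref{Prop:Well-defiined-powers} then identifies this count with $d$, so $d=ab$.

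The step I expect to require the most care is the multiplicative one. For addition the invariant is transparent (one extra $*$ at the root), but for multiplication I must justify that the composite succinct index $(\mathbf{D},\mathfrak{b}_4)$ genuinely realizes the tree substitution of $a_s$ into $b_s$ regardless of the chosen bases $\mathfrak{b}_1,\mathfrak{b}_2,\mathfrak{b}_4$, and that Proposition \ref{Prop:Well-defiined-powers} legitimately applies to it so that its integer part equals the leaf count. This amounts to checking that the factor count is a grading homomorphism on the logarithmetic, additive under $+$ and multiplicative under $\times$, so that its value on the composite is unambiguous; once that grading is in place, both identities $c=a+b$ and $d=ab$ follow at once.
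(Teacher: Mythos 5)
Your argument is correct, and it is a genuinely different (and more substantive) route than the paper's. The paper disposes of this proposition in one line by citing Theorem~\ref{Thm:Basic-theorem-for-powers-in-entropic-groupoids}, which leaves the actual counting implicit; you instead make the counting explicit by treating the integer part of a power index as the leaf count of its bracketing shape, using Proposition~\ref{Prop:Well-defiined-powers} to identify that count with $a$, and then observing that the defining operations of Definition~\ref{Def:Addition-and-Multiplication-of-Endomorphisms} (one new root node for $+$, substitution of $a_s$ into every leaf of $b_s$ for $\times$) make the leaf count a grading homomorphism onto $(\mathbb{Z}^+,+,\times)$. This is essentially Etherington's "degree" argument and is the honest content of the statement; what the paper's citation buys is brevity, while your version buys an actual proof. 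One small imprecision worth fixing: your phrase "the factor count is a property of the evaluated product" is not literally true --- the evaluated product is just an element of $G$, and distinct degrees can evaluate to the same element (e.g.\ $\mathbf{1}_*^a=\mathbf{1}_*$ for all $a$), so under the equivalence of Definition~\ref{Def:Equated-Indices} the degree is not an invariant of the equivalence class. The count is a property of the \emph{formal expression} defining the composite index, which is all you need, since $(\mathbf{C},\mathfrak{b}_3)$ and $(\mathbf{D},\mathfrak{b}_4)$ are specified precisely by those formal expressions; stated that way, your grading argument closes cleanly, including the base-independence worry you raise at the end.
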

\begin{proof}
	The proof is by direct application of Theorem \ref{Thm:Basic-theorem-for-powers-in-entropic-groupoids} for powers in entropic groupoids.
\end{proof}

While we know the values of $c$ and $d$ in $(\mathbf{C},{\mathfrak{b}_3})$ and $(\mathbf{D},{\mathfrak{b}_4})$, we do not know explicitly the shapes $c_s$ and $d_s$.
\begin{openroblem}\label{Open-problem:FindExplixitRulesForLogarithmetic}
	For a given entropoid $\mathbb{E}_{p^2}$ and two power indices $(\mathbf{A},{\mathfrak{b}_1})$ and $(\mathbf{B},{\mathfrak{b}_2})$ as in Definition \ref{Def:Power-indices} find the explicit forms for  $(\mathbf{C},{\mathfrak{b}_1})$ and $(\mathbf{D},{\mathfrak{b}_2})$ such that $(\mathbf{C},{\mathfrak{b}_3}) = (\mathbf{A},{\mathfrak{b}_1}) + (\mathbf{B},{\mathfrak{b}_2})$ and $(\mathbf{D},{\mathfrak{b}_3}) = (\mathbf{A},{\mathfrak{b}_1}) (\mathbf{B},{\mathfrak{b}_2})$.
\end{openroblem}

Despite the efficiency of the procedure for non-associative and non-commutative exponentiation given in Definition \ref{Def:Raising-to-power}, we have to notice that for a fixed base $\mathfrak{b}$ there are many bracketing shapes that are omitted and can not be produced with the expressions  (\ref{Eq:Raising-to-power-v1}) - (\ref{Eq:Raising-to-power-v4}). Apparently, for $\mathfrak{b}=2$ the pattern $a_s = [P_0, P_1, \ldots, P_k]_{\mathfrak{b}-1}$ where $0 \leq P_i \leq \mathfrak{b}-2$ for $i = 0, \ldots, k$ becomes a constant pattern $a_s = [0, 0, \ldots, 0]$, (that we denote shortly by $\mathbf{[0]}$) and only the conditions for even $P_{i-1}$ apply in (\ref{Eq:Raising-to-power-v3}). 

Still, we can use the limitations of Definition \ref{Def:Raising-to-power} for good: the fixed pattern related to the base $\mathfrak{b}=2$ can help us propose heuristics for efficient finding of generators for $(\mathbb{E}^*_{(p-1)^2}, *)$. For that purpose, let us first give several definitions and propositions about the subgroupoids generated by elements of $\mathbb{E}_{p^2}$ with ${(\mathbf{A}, 2)} = (a, a_s, 2) = (a, \mathbf{[0]}, 2)$ indices and with general indices $\mathbf{A}$.

\begin{definition}\label{Def:Generated-sets-and-}
	For every $x = (x_1, x_2) \in \mathbb{E}_{p^2}$, we define the set $\langle x \rangle_2$ as 
	\begin{equation}
		\langle x \rangle_2 = \{ x^{(\mathbf{A}, 2)}\ | \ (\mathbf{A}, 2) = (a, \mathbf{[0]}, 2), a \in \mathbb{Z}^+ \},
	\end{equation}
	and the set $\langle x \rangle$ as
	\begin{equation}
		\langle x \rangle = \{ x^{\mathbf{A}}\ | \ \mathbf{A} = (a, a_s), a \in \mathbb{Z}^+, a_s \text{\ a bracketing shape} \}.
	\end{equation}
	We say the subgroupoid $(\langle x \rangle_2, *) = (\mathbb{E}^*_{|\langle x \rangle_2|}, *)$ is a multiplicative cyclic subgroupoid of $\mathbb{E}_{p^2}$ of order $|\langle x \rangle_2|$, and $(\langle x \rangle, *) = (\mathbb{E}^*_{|\langle x \rangle|}, *)$ is a multiplicative subgroupoid of $\mathbb{E}_{p^2}$ of order $|\langle x \rangle|$.	
\end{definition}

Without a proof which is left as an exercise, we give the following two propositions

\begin{proposition}\label{Prop:Cyclic-subgroupoid}
	If $s(x)_2 = | \langle x \rangle_2 |$ is the size of the set $\langle x \rangle_2$, then 
	\begin{equation}
		s(x)_2\ |\ 2(p - 1),
	\end{equation}
	and
	\begin{equation}
		s_{max}(x)_{2} = \max_{x \in \mathbb{E}_{p^2}} (s(x)_2) = 2 (p - 1). 
	\end{equation}\qed
\end{proposition}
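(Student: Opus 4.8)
The plan is to linearize the operation $*$ on the maximal multiplicative subgroupoid by a change of coordinates, after which the base-$2$ powers with the constant pattern $\mathbf{[0]}$ become transparent. Concretely, for $x = (x_1, x_2) \in \mathbb{E}^*_{(p-1)^2}$ I would set $u = b_7\!\left(x_1 + \tfrac{a_3}{a_8}\right)$ and $v = a_8\!\left(x_2 + \tfrac{b_2}{b_7}\right)$; since $x$ avoids both coordinates of $\mathbf{0}_*$, we have $u, v \in \mathbb{F}_p^*$. A direct expansion of the defining formula \eqref{Eq:OperationStar} shows that in these coordinates the product reads
\begin{equation*}
	(u_1, v_1) * (u_2, v_2) = (v_1 u_2,\ u_1 v_2),
\end{equation*}
and that $\mathbf{1}_*$ corresponds to $(1,1)$. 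This exhibits the entropic quasigroup as an isotope of $(\mathbb{F}_p^* \times \mathbb{F}_p^*, \cdot)$, and verifying it is the one genuinely computational step.

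Next I would evaluate the map $a \mapsto x^{(a, \mathbf{[0]}, 2)}$ of Definition \ref{Def:Raising-to-power} in these coordinates. Squaring gives $(u,v)*(u,v) = (uv, uv)$, so every repeated square $w_i = R_2(w_{i-1})[0]$ with $i \ge 1$ is diagonal, equal to $\big((uv)^{2^{i-1}}, (uv)^{2^{i-1}}\big)$. The pattern $\mathbf{[0]}$ forces the left-multiplication branches of \eqref{Eq:Raising-to-power-v3} throughout, and left multiplication by a diagonal element $(t,t)$ sends $(s_1,s_2)$ to $(ts_1, ts_2)$, i.e. it scales both coordinates equally and preserves their ratio. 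A short induction on the number of bits of $a$ then yields the closed forms
\begin{equation*}
	x^{(2m,\ \mathbf{[0]},\ 2)} = \big((uv)^m, (uv)^m\big), \qquad x^{(2m+1,\ \mathbf{[0]},\ 2)} = \big(u(uv)^m,\ v(uv)^m\big).
\end{equation*}
(The generator $g=(0,2)$ of Example \ref{Ex:GF(7)} transforms to $(u,v)=(1,3)$, a convenient sanity check.)

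I would then count $\langle x \rangle_2$ by letting $d = \operatorname{ord}(uv)$ be the multiplicative order of $uv$ in $\mathbb{F}_p^*$. The even powers sweep out the $d$ diagonal elements $\{((uv)^m,(uv)^m)\}$, while the odd powers sweep out $\{(u(uv)^m, v(uv)^m)\}$, again $d$ in number. An odd-power element is diagonal iff $u = v$, so the two families are disjoint when $u \neq v$, giving $s(x)_2 = 2d$, whereas when $u = v$ they merge into $\{(u^n,u^n)\}$ and $s(x)_2 = \operatorname{ord}(u)$. Since $d \mid (p-1)$ and $\operatorname{ord}(u) \mid (p-1)$, both cases give $s(x)_2 \mid 2(p-1)$, proving the first claim. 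For the second, $s(x)_2 = 2d \le 2(p-1)$ with equality exactly when $uv$ is a primitive root and $u\ne v$; for odd $p$ a primitive root is a quadratic non-residue, so every factorization $uv = g$ of a primitive root $g$ automatically has $u \neq v$, and any such $x$ attains $s_{max}(x)_2 = 2(p-1)$ (the diagonal case contributes only $\operatorname{ord}(u) \le p-1$ and so is never maximal).

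The main obstacle is the first step: guessing and verifying the linearizing substitution that collapses $*$ to $(u_1,v_1)*(u_2,v_2) = (v_1u_2, u_1v_2)$. Once this normal form is in hand the remaining arguments are bookkeeping --- the diagonalization under squaring, the disjointness of the odd and even orbits away from the diagonal locus $u=v$, and two applications of Lagrange's theorem in $\mathbb{F}_p^*$. A secondary point to treat with care is precisely that locus $u=v$, which is the only place the count drops from $2d$ to $\operatorname{ord}(u)$ and which must be ruled out when arguing that the maximum $2(p-1)$ is actually reached.
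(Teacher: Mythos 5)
The paper states this proposition without proof (``Without a proof which is left as an exercise, we give the following two propositions''), so there is no argument of record to compare against; your proposal supplies the missing proof, and it is correct. I checked the step you rightly identify as the crux: with $u = b_7\!\left(x_1+\tfrac{a_3}{a_8}\right)$ and $v = a_8\!\left(x_2+\tfrac{b_2}{b_7}\right)$, the first coordinate of \eqref{Eq:OperationStar} rearranges to $P_1+\tfrac{a_3}{a_8}=a_8\!\left(y_1+\tfrac{a_3}{a_8}\right)\!\left(x_2+\tfrac{b_2}{b_7}\right)$ and the second to $P_2+\tfrac{b_2}{b_7}=b_7\!\left(x_1+\tfrac{a_3}{a_8}\right)\!\left(y_2+\tfrac{b_2}{b_7}\right)$, which is exactly your normal form $(u_1,v_1)*(u_2,v_2)=(v_1u_2,\,u_1v_2)$ with $\mathbf{1}_*\mapsto(1,1)$. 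From there the closed forms for $x^{(2m,\mathbf{[0]},2)}$ and $x^{(2m+1,\mathbf{[0]},2)}$ do follow, because the pattern $\mathbf{[0]}$ only ever left-multiplies by the diagonal repeated squares $w_i=((uv)^{2^{i-1}},(uv)^{2^{i-1}})$, and your count $s(x)_2=2\,\mathrm{ord}(uv)$ when $u\neq v$ versus $s(x)_2=\mathrm{ord}(u)$ on the diagonal, plus the quadratic-non-residue observation guaranteeing $u\neq v$ whenever $uv$ is a primitive root, correctly yields both the divisibility and the attained maximum $2(p-1)$. Two small points to tidy up: the proposition quantifies the maximum over all of $\mathbb{E}_{p^2}$, while you work in $\mathbb{E}^*_{(p-1)^2}$, so you should add the degenerate cases $u=0$ or $v=0$ (there $x*x=\mathbf{0}_*$, so $s(x)_2\in\{1,2\}$, which still divides $2(p-1)$ and never competes for the maximum); and it is worth stating explicitly that the maximum is realized by a legitimate element, e.g.\ $(u,v)=(1,\gamma)$ with $\gamma$ a primitive root, which lies in $\mathbb{E}^*_{(p-1)^2}$ since both coordinates are nonzero.
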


\begin{proposition}\label{Prop:All-generated-subgroupoids}
	If $s(x) =\ |\langle x \rangle |$ is the size of the set $\langle x \rangle$, then 
	\begin{equation}
		s(x)\ |\ (p - 1)^2
	\end{equation}
	and
	\begin{equation}
		s_{max}(x) = \max_{x \in \mathbb{E}_p} s(x) = (p - 1)^2. 
	\end{equation}\qed
\end{proposition}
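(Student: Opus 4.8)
The plan is to trivialise the multiplicative structure of the maximal quasigroup $(\mathbb{E}^*_{(p-1)^2},*)$ by an explicit change of coordinates, after which $\langle x\rangle$ becomes an ordinary additive subgroup of $\mathbb{Z}_{p-1}\times\mathbb{Z}_{p-1}$ and both assertions reduce to Lagrange's theorem together with the exhibition of a single full-size generator. First I would substitute $X_1=a_8x_1+a_3$ and $X_2=b_7x_2+b_2$, so that $x\in\mathbb{E}^*_{(p-1)^2}$ (Definition \ref{Def:Multiplicative-subgroupoid}) corresponds exactly to $(X_1,X_2)\in\mathbb{F}_p^*\times\mathbb{F}_p^*$, the excluded coordinates of $\mathbf 0_*$ being precisely $X_1=0$ and $X_2=0$. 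Expanding the operation of Definition \ref{Def:OperationStar} in these coordinates collapses it to the multiplicative form
\begin{equation*}
(X_1,X_2)*(Y_1,Y_2)=\bigl(\kappa\,X_2Y_1,\ \kappa^{-1}X_1Y_2\bigr),\qquad \kappa=a_8/b_7 .
\end{equation*}
Taking discrete logarithms to a primitive root $g_0$ of $\mathbb{F}_p^*$, writing $u_i=\log_{g_0}X_i$, and absorbing $\log_{g_0}\kappa$ by the shift $u_1\mapsto u_1-\log_{g_0}\kappa$, $u_2\mapsto u_2+\log_{g_0}\kappa$, turns $*$ into the clean operation $(u_1,u_2)\circ(v_1,v_2)=(u_2+v_1,\,u_1+v_2)$ on $\mathbb{Z}_{p-1}\times\mathbb{Z}_{p-1}$; this yields a groupoid isomorphism $\Phi:(\mathbb{E}^*_{(p-1)^2},*)\to(\mathbb{Z}_{p-1}^2,\circ)$. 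In additive notation $\circ$ reads $\mathbf u\circ\mathbf v=S\mathbf u+\mathbf v$ with $S=\bigl(\begin{smallmatrix}0&1\\1&0\end{smallmatrix}\bigr)$.

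Next I would compute powers in this model. Since every leaf of a bracketing shape carries the same $\mathbf u=\Phi(x)$, an induction on the tree using $x^{T}=S\,x^{T_L}+x^{T_R}$ shows that every power equals $M_{\mathbf A}\mathbf u$ for a matrix $M_{\mathbf A}=\alpha I+\beta S$ obeying the recursion $M_T=SM_{T_L}+M_{T_R}$; because $S^2=I$ this family is closed, and the exponent sum satisfies $\alpha+\beta=a$. A short induction on $a$ (the cases $a\le 3$ are immediate, and splitting a tree as a leaf times a subtree, or a subtree times a leaf, produces every admissible value) identifies the set of realisable pairs as $\{(1,0)\}\cup\{(\alpha,\beta):\alpha,\beta\ge 1\}$; this incidentally re-derives the $a-1$ distinct values of Theorem \ref{Thm:Equivalent-Classes}. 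After reduction modulo $p-1$ the constraint $\alpha,\beta\ge 1$ is harmless, so $\Phi(\langle x\rangle)$ equals the full subgroup $H_x=\{\alpha\mathbf u+\beta S\mathbf u\bmod(p-1):\alpha,\beta\in\mathbb{Z}\}$ generated by $\mathbf u$ and $S\mathbf u$ in $(\mathbb{Z}_{p-1}^2,+)$.

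With $\langle x\rangle$ identified as a subgroup, both claims follow. For divisibility, Lagrange's theorem gives $s(x)=|H_x|\mid(p-1)^2$ for $x\in\mathbb{E}^*_{(p-1)^2}$; the remaining $x$ lie on an excluded row or column, where one of $X_1,X_2$ vanishes and hence $x*x=\mathbf 0_*$, so $\langle x\rangle\subseteq\{x,\mathbf 0_*\}$ has size $1$ or $2$, again dividing $(p-1)^2$. (Alternatively, for $x\in\mathbb{E}^*_{(p-1)^2}$ divisibility is immediate from Proposition \ref{Prop:G-nu-subgroupoid}, since $\langle x\rangle$ is closed under $*$ by $x^{\mathbf A}*x^{\mathbf B}=x^{\mathbf A+\mathbf B}$ and is therefore a subgroupoid.) For the maximum, $s(x)\le(p-1)^2$ always, because $\langle x\rangle\subseteq\mathbb{E}^*_{(p-1)^2}$ in the generic case and has size at most $2$ otherwise; and the value $(p-1)^2$ is attained at the $x$ with $\Phi(x)=(1,0)$, for which $(1,0)$ and $S(1,0)=(0,1)$ generate all of $\mathbb{Z}_{p-1}^2$. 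Pulling $(1,0)$ back through $\Phi$ yields an explicit generator $g$ of $(\mathbb{E}^*_{(p-1)^2},*)$ (for instance $(0,2)$ in Example \ref{Ex:GF(7)}).

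The main obstacle is the first step: recognising and verifying the linearising coordinate change, i.e.\ that the seemingly generic quadratic map of Definition \ref{Def:OperationStar} is conjugate, via an affine change in each coordinate followed by a discrete logarithm, to the transparent operation $(u_2+v_1,\,u_1+v_2)$ on $\mathbb{Z}_{p-1}^2$. Once this normal form is in hand the remainder is bookkeeping; the only other point requiring care is the covering argument showing that the nonnegative, shape-constrained exponent pairs $(\alpha,\beta)$ exhaust the whole subgroup $H_x$ after reduction modulo $p-1$.
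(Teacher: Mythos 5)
The paper offers no proof of this proposition at all --- it is explicitly ``left as an exercise'' --- so there is nothing to compare your argument against except the statement itself. Your proof is correct, and the key step checks out: substituting $X_1=a_8x_1+a_3$, $X_2=b_7x_2+b_2$ into the operation of Definition~\ref{Def:OperationStar} does collapse it to $(X_1,X_2)*(Y_1,Y_2)=\bigl(\tfrac{a_8}{b_7}X_2Y_1,\ \tfrac{b_7}{a_8}X_1Y_2\bigr)$ (the constant terms combine exactly to $\tfrac{a_3a_8b_2}{b_7}$ and $\tfrac{a_3b_2b_7}{a_8}$ as required), after which the discrete-logarithm coordinates and the shift by $\pm\log_{g_0}\kappa$ give the isomorphism onto $(\mathbb{Z}_{p-1}^2,\circ)$ with $\mathbf u\circ\mathbf v=S\mathbf u+\mathbf v$. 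The remaining steps --- the closure of $\{\alpha I+\beta S\}$ under $M_T=SM_{T_L}+M_{T_R}$, the induction showing the realisable exponent pairs are $\{(1,0)\}\cup\{(\alpha,\beta):\alpha,\beta\ge1\}$, the absorption of the positivity constraint modulo the orders of $\mathbf u$ and $S\mathbf u$, and the Lagrange/attainment argument --- are all sound, as is the separate treatment of the excluded row and column where $x*x=\mathbf 0_*$. Your approach in fact delivers considerably more than the proposition asks for: it identifies every $\langle x\rangle$ with the subgroup $\langle\mathbf u,S\mathbf u\rangle\le\mathbb{Z}_{p-1}^2$, re-derives the $a-1$ equivalence classes of Theorem~\ref{Thm:Equivalent-Classes}, and makes Proposition~\ref{Prop:Cyclic-subgroupoid} and Proposition~\ref{Prop:Fields-Entropoid-connection-via-subgroups-in-F} immediate. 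The only point I would insist you spell out in a final write-up is the explicit verification of the normal form (a three-line polynomial identity in each coordinate), since the whole argument rests on it; everything downstream is, as you say, bookkeeping.
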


The problem of finding an explicit analytical expression for the group generators is still an open problem. One version of that problem is the famous Artin's conjecture on primitive roots \cite{ARTINS-CONJECTURE-FOR-PRIMITIVE-ROOTS}. However, there are efficient heuristic algorithms that find generators of the multiplicative group of a finite field (for example, see \cite{menezes2018handbook}[Alg. 4.84, Note 4.82, Alg. 4.86]). We especially point out the efficient algorithm Alg 4.86 in \cite{menezes2018handbook} for finding a generator of $\mathbb{F}^*_p$ where $p$ is a safe prime. Inspired by that algorithm, we propose here an efficient heuristic algorithm for finding a generator $g$ of the maximal multiplicative quasigroup $(\mathbb{E}^*_{(p-1)^2}, *)$.

\begin{definition}
	A safe prime $p$ is a prime of the form $p = 2 q + 1$ where $q$ is also a prime.
\end{definition}

\begin{conjecture}\label{Conj:Generator}
	Let $\mathbb{E}_{p^2}$ be an entropoid defined with a $\lambda$ bit safe prime $p$, and let $g \in \mathbb{E}^*_{(p-1)^2}$. If the following conditions are true
	\begin{align}
		g & \neq g^{(p, \mathbf{[0]}, 2)}\\
		g*g & \neq g^{(p-1, \mathbf{[0]}, 2)}\\
		(g*(g*g)) & \neq g^{(p-2, \mathbf{[0]}, 2)}\\
		(g*(g*g)) & \neq ((g*g)*g) \\
		(g*(g*(g*g))) & \neq ((g*(g*g))*g),
	\end{align}
	then the probability that $g$ is the generator of $\mathbb{E}^*_{(p-1)^2}$ is 
	\begin{align}
		Pr[g \text{ is generator of } G^*] & > 1 - \epsilon, \ \text{where }	\epsilon < \frac{1}{2^\lambda}.
	\end{align}
\end{conjecture}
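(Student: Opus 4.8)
The plan is to show that the five displayed inequalities force the generated subgroupoid $\langle g\rangle$ to attain its maximal size $(p-1)^2$, and then to account for the residual, non-excluded configurations by a density estimate that yields the stated bound. Since $p=2q+1$ is a safe prime, $(p-1)^2=(2q)^2=2^2q^2$, so by Proposition \ref{Prop:All-generated-subgroupoids} the order $s(g)=|\langle g\rangle|$ divides $2^2q^2$ and lies in the divisor set $\{1,2,4,q,2q,4q,q^2,2q^2,4q^2\}$. Thus $g$ fails to be a generator precisely when it lies in a maximal proper subgroupoid, of order $2q^2$ (index $2$) or $4q$ (index $q$), and the task reduces to excluding both.

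First I would treat the degree direction using the three base-$2$ conditions. Reading $g^{(a,\mathbf{[0]},2)}$ through Definition \ref{Def:Raising-to-power} with $\mathfrak{b}=2$ collapses the pattern to the constant $\mathbf{[0]}$, so the family $\{g^{(a,\mathbf{[0]},2)}\}_a$ is exactly the cyclic subgroupoid $\langle g\rangle_2$, whose order $m=s(g)_2$ divides $2(p-1)=4q$ by Proposition \ref{Prop:Cyclic-subgroupoid}. In this sign-extended cyclic setting $g^{(a,\mathbf{[0]},2)}$ behaves like an $a$-th power of order $m$, and the three inequalities $g\ne g^{(p,\mathbf{[0]},2)}$, $g*g\ne g^{(p-1,\mathbf{[0]},2)}$, $(g*(g*g))\ne g^{(p-2,\mathbf{[0]},2)}$ translate respectively into $m\nmid(p-1)$, $m\nmid(p-3)$, $m\nmid(p-5)$. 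A short case analysis over the divisors of $4q$, using that $q$ is an odd prime, eliminates every candidate except $m=4q$; hence $s(g)_2=2(p-1)$ is maximal.

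Next I would combine this with the two non-associativity conditions. If the sub-subgroupoid divisibility $s(g)_2\mid s(g)$ holds, then $4q\mid s(g)\mid 4q^2$ forces $s(g)\in\{4q,4q^2\}$, so the only surviving obstruction is the collapse $\langle g\rangle=\langle g\rangle_2$ of order $4q$, i.e. the bracketing direction contributing nothing new. By Theorem \ref{Thm:Equivalent-Classes} the $C_{a-1}$ bracketing shapes of $g$ reduce to exactly $a-1$ values with representatives $R_a(g)$; the inequalities $(g*(g*g))\ne((g*g)*g)$ and $(g*(g*(g*g)))\ne((g*(g*g))*g)$ state that $R_3(g)[0]\ne R_3(g)[1]$ and $R_4(g)[0]\ne R_4(g)[2]$, i.e. that the equivalence classes are non-degenerate at the first two nontrivial levels. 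These are exactly the conditions that (via the correspondence of Proposition \ref{Prop:Fields-Entropoid-connection-via-subgroups-in-F} between subgroupoids and squares $|\Gamma|^2$ of subgroups $\Gamma\subseteq\mathbb{F}_p^*$) prevent $g$ from sitting in a subgroupoid attached to a proper $\Gamma$, and hence rule out $s(g)=4q$, leaving $s(g)=(p-1)^2$.

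The \textbf{main obstacle} is twofold and is exactly why the statement is posed as a conjecture rather than a theorem. First, quasigroups do not satisfy Lagrange's theorem in general, so the divisibility $s(g)_2\mid s(g)$ invoked above is not automatic and would itself need an entropoid-specific argument in the spirit of Proposition \ref{Prop:G-nu-subgroupoid}. Second, the step ``two non-degenerate bracketing levels imply full bracketing rank'' does not follow deterministically from the five checks: the inequalities provably exclude a short, explicit list of degeneracies but do not by themselves guarantee $s(g)=(p-1)^2$ with certainty. I would therefore close not with a deterministic implication but with a density estimate, bounding the number of elements that pass all five tests yet still lie in a proper subgroupoid and arguing---by analogy with the exceptional-set counts behind Artin's conjecture on primitive roots \cite{ARTINS-CONJECTURE-FOR-PRIMITIVE-ROOTS} and the safe-prime generator test of \cite{menezes2018handbook}---that this exceptional density falls below $2^{-\lambda}$ for a $\lambda$-bit safe prime. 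Making that final bound rigorous, rather than heuristic, is the part I expect to resist a complete proof.
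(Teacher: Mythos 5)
You should first note that the paper contains no proof of this statement: it is posed explicitly as Conjecture~\ref{Conj:Generator}, supported only by the numerical demonstrations for $\mathbb{E}_{11^2}$, $\mathbb{E}_{13^2}$, $\mathbb{E}_{19^2}$ and $\mathbb{E}_{23^2}$ in Appendix~\ref{App:Examples-for-conjecture}, and the author immediately follows it with an open problem asking the reader to prove it or replace it. So there is nothing in the paper to compare your argument against line by line; the honest conclusion of your proposal --- that the final density bound resists a rigorous derivation --- coincides with the paper's own position.

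As a proof plan, your sketch is reasonable and your self-diagnosis of the obstacles is accurate, but two points deserve sharpening. First, your divisor case analysis rests on reading $g^{(a,\mathbf{[0]},2)}$ as an $a$-th power in a genuinely cyclic structure of order $m\mid 4q$, so that equality of two such powers is equivalent to a congruence on the exponents; Proposition~\ref{Prop:Cyclic-subgroupoid} only asserts $|\langle g\rangle_2|$ divides $2(p-1)$ and is itself stated without proof, so the step ``$g^{a}=g^{b}$ iff $m\mid(a-b)$'' is an additional unproven assumption in a non-associative setting. Moreover, under Definition~\ref{Def:Raising-to-power} with $\mathfrak{b}=2$ the value $g^{(3,\mathbf{[0]},2)}$ evaluates to $((g*g)*g)$, not $(g*(g*g))$, so matching the conjecture's left-hand sides $(g*(g*g))$ and $(g*(g*(g*g)))$ to specific exponents requires tracking which of the $a-1$ equivalence classes of Theorem~\ref{Thm:Equivalent-Classes} the base-2 procedure lands in; your translation into $m\nmid(p-5)$ is therefore not automatic. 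Second, the claim that the two non-associativity inequalities ``prevent $g$ from sitting in a subgroupoid attached to a proper $\Gamma$'' via Proposition~\ref{Prop:Fields-Entropoid-connection-via-subgroups-in-F} is asserted rather than argued; that proposition only concerns diagonal elements $(\gamma,\gamma)$ and gives no converse characterization of proper subgroupoids. Since both the Lagrange-type divisibility $s(g)_2\mid s(g)$ and the final exceptional-set count below $2^{-\lambda}$ remain heuristic, your text is best read as a plausibility argument for the conjecture --- which is exactly the status the paper assigns it --- and not as a proof.
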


The corresponding algorithm coming from Conjecture \ref{Conj:Generator} is Algorithm \ref{Alg:Generator}.
\begin{algorithm}
	\small
	\caption{$\mathtt{Gen}$: Find a generator $g$ for the multiplicative quasigroup $(\mathbb{E}^*_{(p-1)^2}, *)$.
		\newline
		\textbf{Input:} $\lambda$ bit safe prime $p$ and $a_3, a_8, b_2, b_7 \in \mathbb{F}_{p}$ that define $\mathbb{E}_{p^2}$;
		\newline
		\textbf{Output:} Generator $g$ for $(\mathbb{E}^*_{(p-1)^2}, *)$.}
	\begin{algorithmic}[1]
		\Repeat
		\State Set $Success \leftarrow \mathtt{True}$;
		\State Choose random element $g \in G^*$;
		\State Set $Success \leftarrow \big(Success \mathtt{\ and\ } (g \neq g^{(p, \mathbf{[0]}, 2)} ) \big)$
		\State Set $Success \leftarrow \big(Success \mathtt{\ and\ } (g*g \neq g^{(p-1, \mathbf{[0]}, 2)} ) \big)$
		\State Set $Success \leftarrow \big(Success \mathtt{\ and\ } (\ (g*(g*g)) \neq g^{(p-2, \mathbf{[0]}, 2)} ) \big)$
		\State Set $Success \leftarrow \big(Success \mathtt{\ and\ } (\ (g*(g*g)) \neq ((g*g)*g) ) \big)$
		\State Set $Success \leftarrow \big(Success \mathtt{\ and\ } (\ (g*(g*(g*g))) \neq ((g*(g*g))*g) ) \big)$
		\Until{$Success$}
		\State Return $g$.
		%\EndProcedure
	\end{algorithmic}\label{Alg:Generator}
\end{algorithm}

For a detailed demonstration of the Proposition \ref{Prop:Cyclic-subgroupoid}, Proposition \ref{Prop:All-generated-subgroupoids} and Conjecture \ref{Conj:Generator} see the Appendix \ref{App:Examples-for-conjecture}. 

\begin{openroblem}
	For a given entropoid $\mathbb{E}_{p^2}$ prove the Conjecture \ref{Conj:Generator} or construct another exact or probabilistic efficient algorithm for finding generator of $\mathbb{E}^*_{(p-1)^2}$.
\end{openroblem}	

A direct consequence from Proposition \ref{Prop:Cyclic-subgroupoid} and Proposition \ref{Prop:All-generated-subgroupoids} is the following corollary.
\begin{corollary}\label{Cor:Probability-To-Be-Power-Of-2}
	Let $g$ obtained with the Algorithm \ref{Alg:Generator} be a generator of $\mathbb{E}^*_{(p-1)^2}$, where $p = 2q + 1$ is a safe prime. Let $y  \xleftarrow{\mathbf{r}} \mathbb{E}^*_{(p-1)^2}$ be a randomly chosen element from $\mathbb{E}^*_{(p-1)^2}$. Then the probability that there exists a power index $(\mathbf{A},2)$ such that $y = g^{(\mathbf{A},2)}$ is
	\begin{equation}
		Pr(\{\exists (\mathbf{A},2) \text{ and } y = g^{(\mathbf{A},2)} \}) = \frac{1}{q}.
	\end{equation}
\end{corollary}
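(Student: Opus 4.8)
The plan is to reduce the stated probability to a ratio of set sizes and then pin down the numerator using the first two checks performed by Algorithm~\ref{Alg:Generator}. First I would observe that for base $\mathfrak{b}=2$ the pattern list $a_s=[P_0,\ldots,P_k]_{\mathfrak{b}-1}$ of Definition~\ref{Def:Power-indices} is forced to be the constant $\mathbf{[0]}$, since every digit obeys $0\le P_i\le\mathfrak{b}-2=0$. Hence every base-$2$ index has the form $(a,\mathbf{[0]},2)$, and the event $\{\exists (\mathbf{A},2):\ y=g^{(\mathbf{A},2)}\}$ is literally $y\in\langle g\rangle_2$, with $\langle g\rangle_2$ as in Definition~\ref{Def:Generated-sets-and-}. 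Since $y\xleftarrow{\mathbf{r}}\mathbb{E}^*_{(p-1)^2}$ is uniform and $|\mathbb{E}^*_{(p-1)^2}|=(p-1)^2$ by Proposition~\ref{Prop:All-generated-subgroupoids}, this yields
\[
\Pr\big(\{\exists (\mathbf{A},2):\ y=g^{(\mathbf{A},2)}\}\big)=\frac{|\langle g\rangle_2|}{(p-1)^2}=\frac{s(g)_2}{(p-1)^2}.
\]

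The heart of the argument is to show $s(g)_2=2(p-1)$. Writing $p=2q+1$ we have $2(p-1)=4q=2^2q$, whose divisors are $\{1,2,4,q,2q,4q\}$; by Proposition~\ref{Prop:Cyclic-subgroupoid} we know $s(g)_2\mid 2(p-1)$, so $s(g)_2$ is one of these. I would then use the cyclic behaviour of the base-$2$ power sequence underlying Proposition~\ref{Prop:Cyclic-subgroupoid}, namely that $g^{(a,\mathbf{[0]},2)}=g^{(b,\mathbf{[0]},2)}$ iff $a\equiv b\pmod{s(g)_2}$, together with $g=g^{(1,\mathbf{[0]},2)}$. The first check of the algorithm, $g\neq g^{(p,\mathbf{[0]},2)}$, then reads $p\not\equiv 1\pmod{s(g)_2}$, i.e. $s(g)_2\nmid(p-1)=2q$, which among the divisors above leaves only $s(g)_2\in\{4,4q\}$. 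The second check, $g*g\neq g^{(p-1,\mathbf{[0]},2)}$, reads $2\not\equiv p-1\pmod{s(g)_2}$, i.e. $s(g)_2\nmid(p-3)=2(q-1)$. For an odd prime $q$ one has $4\mid 2(q-1)$, so $s(g)_2=4$ is impossible and $s(g)_2=4q=2(p-1)$ is forced. (In the degenerate case $q=2$, $p=5$, the divisors of $4q=8$ that do not divide $2q=4$ are just $\{8\}$, so the first check alone already gives $s(g)_2=8=2(p-1)$.) Substituting back,
\[
\Pr=\frac{2(p-1)}{(p-1)^2}=\frac{2}{p-1}=\frac{2}{2q}=\frac{1}{q},
\]
which is the claim.

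The hard part will be the one place where I translated the two inequality checks into congruences: I must know that the base-$2$ powers $a\mapsto g^{(a,\mathbf{[0]},2)}$ form a \emph{purely periodic} cycle of length exactly $s(g)_2$ (with no pre-period), so that $g^{(a,\mathbf{[0]},2)}=g^{(b,\mathbf{[0]},2)}$ is equivalent to $a\equiv b\pmod{s(g)_2}$. This is precisely the cyclic-group content behind Proposition~\ref{Prop:Cyclic-subgroupoid}, which is stated there without proof, and it is not immediate because the canonical $\mathbf{[0]}$-bracketing of Definition~\ref{Def:Raising-to-power} changes with $a$, so there is no obvious single map sending $g^{(a)}$ to $g^{(a+1)}$.

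The clean way I would secure this lemma is through a Toyoda/Bruck-type isotopy: an entropic quasigroup is isotopic to an abelian group, and after normalizing by the left unit $\mathbf{1}_*$ the operation takes the form $x*y=\phi(x)\cdot y$ for a fixed automorphism $\phi$ of $(\mathbb{F}_p^*)^2$. In that model a direct induction (Pascal's rule) gives $g^{(2^k,\mathbf{[0]},2)}=\prod_{j=0}^{k}\phi^j(g)^{\binom{k}{j}}$, so $\langle g\rangle_2$ sits inside a genuine cyclic subgroup of $(\mathbb{F}_p^*)^2$; the purely periodic behaviour and the divisibility $s(g)_2\mid 2(p-1)$ then follow from ordinary finite-group theory. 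Once that structural fact is established, everything else in the proof is the elementary divisor bookkeeping carried out above.
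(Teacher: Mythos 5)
Your proposal is correct and its computational core coincides with the paper's: the paper's entire proof is the single ratio
\[
\frac{s_{max}(x)_2}{s_{max}(x)}=\frac{2(p-1)}{(p-1)^2}=\frac{1}{q},
\]
citing the unproved Propositions \ref{Prop:Cyclic-subgroupoid} and \ref{Prop:All-generated-subgroupoids} for the two cardinalities. Where you genuinely go beyond the paper is in justifying that the particular $g$ returned by Algorithm \ref{Alg:Generator} actually attains the maximal cyclic order $|\langle g\rangle_2|=2(p-1)$ rather than a proper divisor: the paper silently replaces $|\langle g\rangle_2|$ by the maximum $s_{max}(x)_2$, whereas you derive $s(g)_2=4q$ from the divisor lattice of $4q$ combined with the first two checks of the algorithm (ruling out all divisors of $2q$, then ruling out $4$ because $4\mid 2(q-1)$ for odd $q$). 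That bookkeeping is sound, and your explicit flagging of the load-bearing assumption — that the base-$2$ power sequence is purely periodic, so that $g^{(a,\mathbf{[0]},2)}=g^{(b,\mathbf{[0]},2)}$ iff $a\equiv b\pmod{s(g)_2}$ — isolates precisely the content the paper delegates to Proposition \ref{Prop:Cyclic-subgroupoid} without proof; your isotopy-based sketch is a plausible way to discharge it. What each approach buys: the paper's version is a one-liner but leaves a real gap between ``$g$ is a generator of $\mathbb{E}^*_{(p-1)^2}$'' and ``$|\langle g\rangle_2|$ is maximal,'' while yours closes that gap using the algorithm's own filter conditions at the cost of resting on the same unproved periodicity/divisibility facts, so neither argument is fully self-contained relative to the paper as written.
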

\begin{proof}
	The proof of this Corollary is just a direct ratio between the size of the cyclic subgroupoid and the size of the maximal entropic quasigroup: $$\frac{s_{max}(x)_2}{s_{max}(x)} = \frac{2(p-1)}{(p-1)^2} = \frac{1}{q}.$$
\end{proof}

Having a heuristics for finding generators of the maximal quasigroup $\mathbb{E}^*_{(p-1)^2}$, where $p = 2 q + 1$ is a safe prime, it would be also very beneficial if we can find generators for the Sylow $q$-subquasigroup $\mathbb{E}^*_{q^2}$.

\begin{conjecture}\label{Conj:Generator-Sylow-q}
	Let $g$ be a generator of $\mathbb{E}^*_{(p-1)^2}$, where $p \geq 11$. Then 
	\begin{equation}
		g_q = (g*(g*(g*((g*g)*g)))),
	\end{equation} is the generator of the Sylow $q$-subquasigroup $\mathbb{E}^*_{q^2}$.
\end{conjecture}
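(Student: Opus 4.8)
The plan is to linearize the entire structure by an explicit change of coordinates and then reduce the statement to a computation in the finite abelian group $\mathbb{Z}_{p-1}^2$. First I would translate by the multiplicative zero $\mathbf{0}_* = \left(-\tfrac{a_3}{a_8}, -\tfrac{b_2}{b_7}\right)$ of Corollary \ref{Cor:Basic-properties-of-multiplication} and rescale each coordinate, verifying by direct substitution into (\ref{Eq:OperationStar}) that in the new coordinates $(t_1,t_2)$ the operation collapses to the normal form
\[
(t_1,t_2)*(r_1,r_2) = (t_2 r_1,\; t_1 r_2), \qquad t_1,t_2,r_1,r_2 \in \mathbb{F}_p^*,
\]
carrying $\mathbb{E}^*_{(p-1)^2}$ bijectively onto $(\mathbb{F}_p^*)^2$. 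Fixing a primitive root $\gamma$ of $\mathbb{F}_p^*$ and writing $t_i = \gamma^{h_i}$ then identifies $\mathbb{E}^*_{(p-1)^2}$ with $\mathbb{Z}_{p-1}^2$ in such a way that $*$ becomes the additive rule $(h_1,h_2)*(k_1,k_2) = (h_2+k_1,\; h_1+k_2)$.

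The second step is to describe exponentiation in these log-coordinates. Using the base case and the recurrence of Lemma \ref{Lemma:Recurrent-Relations-For-Representatives}, I would prove by induction that for any base $x$ with log-coordinates $\vec h$ and any power index $\mathbf A=(a,a_s)$ one has $x^{\mathbf A} \leftrightarrow M_{\mathbf A}\vec h$, where $M_{\mathbf A} = \begin{pmatrix} m_1 & m_2 \\ m_2 & m_1\end{pmatrix}$ is a symmetric integer matrix with $m_1+m_2=a$; the symmetry and the row-sum identity both persist because the operation $\mu\mapsto \mathrm{swap}(\mu_{\text{left}})+\mu_{\text{right}}$ that governs a single bracketing step preserves them. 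Theorem \ref{Thm:Equivalent-Classes} then identifies, for each $a$, the reachable pairs as $(m_1,m_2)=(i,\,a-i)$ with $i=1,\dots,a-1$; letting $a$ grow and reducing modulo $p-1$ shows that every residue pair of $\mathbb{Z}_{p-1}^2$ is reachable. A short direct evaluation gives the exponent matrix of $g_q=(g*(g*(g*((g*g)*g))))$ as $C=\begin{pmatrix}2&4\\4&2\end{pmatrix}$, i.e. $a=6$ and $(m_1,m_2)=(2,4)$.

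With this dictionary the claim becomes pure linear algebra over $\mathbb{Z}_{2q}$. Writing $\vec\ell=(\ell_1,\ell_2)$ for the log-coordinates of $g$, the element $g_q$ has log-coordinates $C\vec\ell$, and by the previous step $\langle g_q\rangle$ maps onto the subgroup $H\le\mathbb{Z}_{2q}^2$ generated by $C\vec\ell$ and its coordinate swap. The hypothesis that $g$ generates $\mathbb{E}^*_{(p-1)^2}$ translates, via invertibility of $\begin{pmatrix}\ell_1&\ell_2\\\ell_2&\ell_1\end{pmatrix}$, into $\gcd\big((\ell_1-\ell_2)(\ell_1+\ell_2),\,2q\big)=1$. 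The determinant of the matrix with columns $C\vec\ell$ and its swap equals $-12(\ell_1-\ell_2)(\ell_1+\ell_2)$, and both columns have even entries, so the Smith invariants are $s_1=2e$ and $s_2=2\cdot 3\,|(\ell_1-\ell_2)(\ell_1+\ell_2)|/e$ with $e=\gcd(\ell_1+2\ell_2,\,2\ell_1+\ell_2)$ odd and prime to $q$. Because $p\ge 11$ forces $q\ge 5$, the prime $3$ is prime to $2q$; together with the coprimality above, each $s_i$ retains exactly one factor $2$ relevant to $2q$, whence $\gcd(2q,s_1)=\gcd(2q,s_2)=2$ and $|H|=(2q)^2/4=q^2$. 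Finally, both entries of $C\vec\ell$ being even places $H$ inside the unique Sylow $q$-subgroup $(2\mathbb{Z}_{2q})^2$ of $\mathbb{Z}_{2q}^2$, which has the same order $q^2$; hence $H=(2\mathbb{Z}_{2q})^2$, and translating back yields $\langle g_q\rangle=\mathbb{E}^*_{q^2}$ with $g_q$ as its generator.

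The main obstacle is the first step: establishing rigorously that the opaque operation (\ref{Eq:OperationStar}) really linearizes to $(t_2 r_1,\,t_1 r_2)$, and that the induced exponentiation acts by the symmetric matrices $M_{\mathbf A}$ for \emph{every} bracketing shape rather than only for the class representatives of Theorem \ref{Thm:Equivalent-Classes}; once this normal form is secured, the remainder is a routine finite-abelian-group computation. A secondary point requiring care is the passage from Theorem \ref{Thm:Equivalent-Classes}, stated only for $a<b_{max}$, to full reachability of all residue matrices: the correspondence $x^{\mathbf A}\leftrightarrow M_{\mathbf A}\vec h$ itself is a combinatorial identity valid for all $a$, so reachability must be argued modulo $p-1$ rather than through exact equality of class counts.
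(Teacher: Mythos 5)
The statement you are proving is left as a \emph{conjecture} in the paper: the author supplies no proof at all, only the numerical evidence tabulated in Appendix~\ref{App:Examples-for-conjecture} (the blue-highlighted cells for $\mathbb{E}_{11^2}$ and $\mathbb{E}_{23^2}$). Your proposal is therefore not an alternative route but the only actual argument on the table, and its skeleton checks out. The linearization in your first step is real and not an obstacle: completing the square in~(\ref{Eq:OperationStar}) gives
\begin{equation*}
\Bigl(x*y\Bigr)_1+\tfrac{a_3}{a_8}=a_8\Bigl(x_2+\tfrac{b_2}{b_7}\Bigr)\Bigl(y_1+\tfrac{a_3}{a_8}\Bigr),\qquad
\Bigl(x*y\Bigr)_2+\tfrac{b_2}{b_7}=b_7\Bigl(x_1+\tfrac{a_3}{a_8}\Bigr)\Bigl(y_2+\tfrac{b_2}{b_7}\Bigr),
\end{equation*}
so $T_1=b_7\bigl(x_1+\tfrac{a_3}{a_8}\bigr)$, $T_2=a_8\bigl(x_2+\tfrac{b_2}{b_7}\bigr)$ carries $*$ exactly onto $(T_1,T_2)*(R_1,R_2)=(T_2R_1,\,T_1R_2)$, the normal form the paper itself uses in the proof of Proposition~\ref{Prop:DLP-reduces-to-DELP}. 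In log coordinates the exponent calculus is then forced: if $u,v$ have matrices $u_1I+u_2\sigma$ and $v_1I+v_2\sigma$ ($\sigma$ the swap), then $u*v$ has $(u_2+v_1)I+(u_1+v_2)\sigma$, which preserves your symmetric form and the row sum, and shows by induction that every bracketing of $a\geq2$ factors yields $m,n\geq1$ with $m+n=a$; conversely $(x_{(x,k)\texttt{to-left}}*x)_{(x,j)\texttt{to-left}}$ realizes $(k{+}1,j{+}1)$, so all such pairs occur and reachability modulo $p-1$ needs neither Theorem~\ref{Thm:Equivalent-Classes} nor its restriction $a<b_{max}$. Your matrix for $g_q$ is correct ($(1,0)\to(1,1)\to(2,1)\to(2,2)\to(2,3)\to(2,4)$), as is the determinant $-12(\ell_1-\ell_2)(\ell_1+\ell_2)$.

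Two points deserve tightening. First, the Smith-invariant bookkeeping over $\mathbb{Z}$ is awkward because $\ell_1,\ell_2$ are only residues modulo $2q$; it is cleaner to observe that both generators $C\vec\ell$ and $\sigma C\vec\ell$ lie in $(2\mathbb{Z}_{2q})^2\cong\mathbb{Z}_q^2$ and that their halves have determinant $-3(\ell_1-\ell_2)(\ell_1+\ell_2)$, a unit mod $q$ since $\gcd\bigl((\ell_1-\ell_2)(\ell_1+\ell_2),2q\bigr)=1$ and $q\geq5$ excludes the factor $3$; this gives $|H|=q^2$ at once and makes transparent where the hypothesis $p\geq11$ enters. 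Second, the conjecture speaks of \emph{the} Sylow $q$-subquasigroup; your argument exhibits $\langle g_q\rangle$ as \emph{a} subquasigroup of order $q^2$, namely the image of $(2\mathbb{Z}_{2q})^2$, and a sentence on uniqueness (or a remark that this is the subquasigroup the paper denotes $\mathbb{E}^*_{q^2}$) would close the statement as literally posed.
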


The corresponding algorithm coming from Conjecture \ref{Conj:Generator-Sylow-q} is Algorithm \ref{Alg:Generator-Sylow-q}.
\begin{algorithm}
	\small
	\caption{$\mathtt{GenQ}$: Find a generator $g_q$ for the Sylow $q$-subquasigroup $\mathbb{E}^*_{q^2}$.
		\newline
		\textbf{Input:} $\lambda$ bit safe prime $p$ and $a_3, a_8, b_2, b_7 \in \mathbb{F}_{p}$ that define $\mathbb{E}_{p^2}$;
		\newline
		\textbf{Output:} Generator $g_q$ for $\mathbb{E}^*_{q^2}$.}
	\begin{algorithmic}[1]
		\State Set $g \leftarrow \mathtt{Gen}(\lambda, p, a_3, a_8, b_2, b_7)$;
		\State Set $g_q \leftarrow (g*(g*(g*((g*g)*g)))) $;
		\State Return $g_q$.
		%\EndProcedure
	\end{algorithmic}\label{Alg:Generator-Sylow-q}
\end{algorithm}

\begin{proposition}\label{Prop:Even-vs-Odd-powers-Distinguisher}
	Let $g$ be a generator of $\mathbb{E}^*_{(p-1)^2}$. Then for every bracketing shape $(\mathbf{A},\mathfrak{b}) = (a, a_s, \mathfrak{b}) \in \mathbb{L}$ the following relation hold:
	\begin{equation}\label{Eq:Even-vs-Odd-powers-Distinguisher}
		\bigg(g^{(\mathbf{A},\mathfrak{b})}\bigg)^{(p-1, \mathbf{[0]}, 2)} = \left\{
		\begin{array}{rl}
			\mathbf{1}_*, & \text{ if } a \text{ is even}, \\
			\boxminus \mathbf{1}_*, & \text{ if } a \text{ is odd}.
		\end{array} \right.\\
	\end{equation}
\end{proposition}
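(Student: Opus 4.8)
The plan is to collapse the outer exponentiation with palintropy and then evaluate the inner one in a linearizing coordinate system. By the palintropic identity of Theorem~\ref{Thm:Basic-theorem-for-risinng-to-power},
\[
\Big(g^{(\mathbf{A},\mathfrak{b})}\Big)^{(p-1,\mathbf{[0]},2)}=\Big(g^{(p-1,\mathbf{[0]},2)}\Big)^{(\mathbf{A},\mathfrak{b})},
\]
so it is enough to prove (i) $h:=g^{(p-1,\mathbf{[0]},2)}=\boxminus\mathbf{1}_*$, and (ii) that raising $\boxminus\mathbf{1}_*$ to any index $(\mathbf{A},\mathfrak{b})=(a,a_s,\mathfrak{b})$ returns $\mathbf{1}_*$ for even $a$ and $\boxminus\mathbf{1}_*$ for odd $a$.

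Both parts become transparent after linearizing the quasigroup. Setting $U=b_7\bigl(x_1+\tfrac{a_3}{a_8}\bigr)$ and $V=a_8\bigl(x_2+\tfrac{b_2}{b_7}\bigr)$, a direct substitution into Definition~\ref{Def:OperationStar} rewrites the product as $(U_x,V_x)*(U_y,V_y)=(V_xU_y,\,U_xV_y)$, with $\mathbf{1}_*\mapsto(1,1)$, $\mathbf{0}_*\mapsto(0,0)$ and $\boxminus\mathbf{1}_*\mapsto(-1,-1)$. Fixing a generator $\omega$ of $\mathbb{F}_p^*$ and writing $U=\omega^{s}$, $V=\omega^{t}$ then identifies $\mathbb{E}^*_{(p-1)^2}$ with $\mathbb{Z}_{p-1}^2$ under $(s_x,t_x)*(s_y,t_y)=(t_x+s_y,\,s_x+t_y)$. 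Since each multiplication is additive in these exponents, every bracketed product of $a$ copies of $x$ acts on $(s,t)^{\!\top}$ by an integer matrix $M$; the column sums of $M$ add under the combination rule while the two row sums swap-and-add, and an easy induction on the bracketing tree shows that $M=\left(\begin{smallmatrix}\alpha & a-\alpha\\ a-\alpha & \alpha\end{smallmatrix}\right)$ is symmetric with every line sum equal to $a$.

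For part (i), the shape $(p-1,\mathbf{[0]},2)$ is repeated squaring followed by left multiplications; each squared block contributes a multiple of $J=\left(\begin{smallmatrix}1&1\\1&1\end{smallmatrix}\right)$, and combining multiples of $J$ keeps the matrix a multiple of $J$, so the line-sum constraint forces it to be $qJ$ with $q=\tfrac{p-1}{2}$. Hence $h\mapsto\bigl(\omega^{q\sigma_g},\omega^{q\sigma_g}\bigr)$ with $\sigma_g:=s_g+t_g$, and as $\omega^{q}=-1$ this equals $\bigl((-1)^{\sigma_g},(-1)^{\sigma_g}\bigr)$. The generator hypothesis now supplies the missing parity: the map $\pi(x)=(s_x+t_x)\bmod 2$ is a groupoid homomorphism onto $(\mathbb{Z}_2,+)$ because $\sigma_{x*y}=\sigma_x+\sigma_y$, so the image $\pi(\langle g\rangle)=\langle\pi(g)\rangle$ can exhaust $\mathbb{Z}_2$ only if $\pi(g)=1$; thus $\sigma_g$ is odd and $h=(-1,-1)=\boxminus\mathbf{1}_*$. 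Part (ii) is the same matrix fact read off the coordinates $(q,q)$ of $\boxminus\mathbf{1}_*$: applying the symmetric, line-sum-$a$ matrix $M$ gives $(aq,aq)$, i.e. $\bigl(\omega^{aq},\omega^{aq}\bigr)=\bigl((-1)^{a},(-1)^{a}\bigr)$, which is $\mathbf{1}_*$ for even $a$ and $\boxminus\mathbf{1}_*$ for odd $a$; and by Proposition~\ref{Prop:Well-defiined-powers} the index $(\mathbf{A},\mathfrak{b})$ really is a product of exactly $a$ factors, so this computation is legitimate. Combining (i) and (ii) proves the relation.

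The main obstacle is part (i): turning the branching exponentiation of Definition~\ref{Def:Raising-to-power} into the single clean matrix $qJ$, and then extracting from ``$g$ is a generator'' the one bit $\sigma_g\bmod 2$ that decides between $\mathbf{1}_*$ and $\boxminus\mathbf{1}_*$. The linearizing coordinates are the device that makes both manageable — without them the bracketing bookkeeping over an exponentially large index is unwieldy — and the only routine loose end is verifying by direct substitution that the coordinate change really sends $*$ to $(V_xU_y,U_xV_y)$ and $\boxminus\mathbf{1}_*$ to $(-1,-1)$.
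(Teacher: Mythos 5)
Your proof is correct, but it takes a genuinely different route from the paper's. The paper argues order-theoretically: from Theorem~\ref{Thm:Basic-theorem-for-risinng-to-power}, Proposition~\ref{Prop:Addition-and-multiplication-of-integer-parts} and Proposition~\ref{Prop:Cyclic-subgroupoid} it asserts that $x=g^{(\mathbf{A},\mathfrak{b})}$ has base-$2$ cyclic order $2(p-1)$, concludes that $y=x^{(p-1,\mathbf{[0]},2)}$ squares to $\mathbf{1}_*$ and can only be $\mathbf{1}_*$ or $\boxminus\mathbf{1}_*$ (otherwise the order of $x$ would exceed $2(p-1)$), and then reads off the answer from the parity of $a$ via $b=a(p-1)$. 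You instead linearize: the change of variables $U=b_7(x_1+\tfrac{a_3}{a_8})$, $V=a_8(x_2+\tfrac{b_2}{b_7})$ turns $*$ into $(V_xU_y,U_xV_y)$ (the same normal form the paper itself uses in the proof of Proposition~\ref{Prop:DLP-reduces-to-DELP}), and then discrete-log coordinates reduce every bracketed $a$-fold product to a symmetric integer matrix with all line sums equal to $a$. What your approach buys is precision on exactly the points the paper's proof leaves soft: it identifies \emph{which} of the two square roots occurs for \emph{which} parity, it derives the needed bit $\sigma_g\bmod 2$ honestly from the generator hypothesis via the parity homomorphism, and it does not rely on the claim that every $g^{(\mathbf{A},\mathfrak{b})}$ attains the maximal base-$2$ order $2(p-1)$ — a claim that, taken literally for all $\mathbf{A}$, is contradicted by the paper's own tables in Appendix~\ref{App:Examples-for-conjecture}. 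The paper's argument is shorter; yours is more self-contained and verifiable. One small caveat: your identification $\boxminus\mathbf{1}_*\mapsto(-1,-1)$ corresponds to the genuine $\boxplus$-inverse of $\mathbf{1}_*$ (the element satisfying $\boxminus\mathbf{1}_* * \boxminus\mathbf{1}_* = \mathbf{1}_*$ as the paper asserts), not to the literal — and internally inconsistent — display formula for the unary $\boxminus$ in Definition~\ref{Def:Addition-in-Ringoid}; that is the paper's typo, not a gap in your argument, but it is worth a remark when you verify the coordinate change.
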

\begin{proof}
	We use the Theorem \ref{Thm:Basic-theorem-for-risinng-to-power} and Proposition \ref{Prop:Addition-and-multiplication-of-integer-parts} to deduce that when working with base $b=2$ we work with a cyclic groupoid. Then from Proposition \ref{Prop:Cyclic-subgroupoid} we know that the order of $x = g^{(\mathbf{A},\mathfrak{b})}$ is $2(p - 1)$. That means that the order of $y = x^{(p-1, \mathbf{[0]}, 2)}$ is 2 i.e. $y^2 = y * y = \mathbf{1}_*$. Thus $y$ belongs to the set of square roots of the left unit $\mathbb{S}(p)$. Now, if $y$ takes any value other than $\mathbf{1}_*$ or $\boxminus \mathbf{1}_*$ it will lead to a result that the order of $x$ is greater than $2(p - 1)$ which is a contradiction. This leaves only two possibilities: $y = \mathbf{1}_*$ or $y = \boxminus \mathbf{1}_*$.
	Again from Theorem \ref{Thm:Basic-theorem-for-risinng-to-power} and Proposition \ref{Prop:Addition-and-multiplication-of-integer-parts} we get that there must exist an index $(\mathbf{B},2) = (b, \mathbf{[0]}, 2)$ such that $(\mathbf{B},2) = (\mathbf{A},\mathfrak{b}) (p-1, \mathbf{[0]}, 2) $. Thus $b = a (p - 1)$, and the result $y$ will depend on the parity of $a$.
\end{proof}

\begin{proposition}\label{Prop:Non-distinguisher-in-Sylow-subquasigroups}
	Let $g_q$ be a generator of the Sylow $q$ quasigroup $\mathbb{E}^*_{q^2}$. Then for every bracketing shape $(\mathbf{A},\mathfrak{b}) = (a, a_s, \mathfrak{b}) \in \mathbb{L}$ the following relations hold:
	\begin{align}
		\label{Eq:Non-distinguisher-in-Sylow-subquasigroups}
		\bigg(g_q^{(\mathbf{A},\mathfrak{b})}\bigg)^{(p-1, \mathbf{[0]}, 2)} & = \mathbf{1}_*\\
		\bigg(g_q^{(\mathbf{A},\mathfrak{b})}\bigg)^{(q, \mathbf{[0]}, 2)} & = y, 
	\end{align}
	where $y$ belongs in a subset of the set of square roots of the left unit i.e. $y \in \mathbb{S}_q(p) \subseteq \mathbb{S}(p)$, such that $|\mathbb{S}_q(p)| = q$.
\end{proposition}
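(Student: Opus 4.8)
The plan is to linearize the operation $*$ completely and then read off both identities from a closed form for the base-$2$, pattern-$\mathbf{[0]}$ powers. First I would put $*$ into a normal form: translating coordinates by the zero $\mathbf{0}_* = (-a_3/a_8,\,-b_2/b_7)$ and rescaling the first and second components by $b_7$ and $a_8$ turns the rule of Definition \ref{Def:OperationStar} into $x * y = (x_2 y_1,\, x_1 y_2)$ on the maximal quasigroup $(\mathbb{F}_p^*)^2$. Fixing a primitive root $r$ of $\mathbb{F}_p^*$ and taking componentwise discrete logarithms then gives a bijection of $\mathbb{E}^*_{(p-1)^2}$ with $\mathbb{Z}_{p-1}^2 = \mathbb{Z}_{2q}^2$ under which
\[ x * y = Mx + y, \qquad M = \begin{pmatrix} 0 & 1 \\ 1 & 0 \end{pmatrix}, \qquad M^2 = I . \]
In these coordinates $\mathbf{1}_* \leftrightarrow (0,0)$, the Sylow subquasigroup $\mathbb{E}^*_{q^2}$ is the even sublattice $(2\mathbb{Z}_{2q})^2$, and the square roots of the left unit form $\mathbb{S}(p) \leftrightarrow \{(u,-u) : u \in \mathbb{Z}_{2q}\}$, recovering $|\mathbb{S}(p)| = p-1$ (the two diagonal roots are $(0,0)=\mathbf{1}_*$ and $(q,q)=\boxminus\mathbf{1}_*$, consistent with Proposition \ref{Prop:Even-vs-Odd-powers-Distinguisher}). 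Since $M \neq I$, non-associativity is faithfully encoded, so this is a genuine groupoid isomorphism and it carries a power index $\mathbf{A}=(a,a_s)$ to the action $v \mapsto (\alpha I + \beta M)v$ with $\alpha + \beta = a$.

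Second, I would establish a closed form for the powers occurring in the statement. Writing the log-coordinates of $x$ as $(v_1,v_2)$ and $\sigma = v_1+v_2$, an induction on $n$ following Definition \ref{Def:Raising-to-power} (all digits $P_i=0$, so every step is a left multiplication) gives
\[ x^{(2m,\mathbf{[0]},2)} = (m\sigma,\, m\sigma), \qquad x^{(2m+1,\mathbf{[0]},2)} = (m\sigma + v_1,\, m\sigma + v_2). \]
This is the technical core: after the first squaring $x^{(2,\mathbf{[0]},2)} = (\sigma,\sigma)$ all intermediate values $w_i$ are diagonal, and left-multiplication by a diagonal element is ordinary vector addition (as $Md=d$ for diagonal $d$), which linearizes the recursion and uses only $M^2=I$.

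Third, both claims follow by specializing $x = g_q^{(\mathbf{A},\mathfrak{b})}$, which lies in $\mathbb{E}^*_{q^2}$ by closure of the Sylow subquasigroup and hence has even log-coordinates, so $\sigma$ is even. For the first identity take $n = p-1 = 2q$, i.e.\ $m=q$: the formula yields $(q\sigma, q\sigma)$, and $\sigma$ even forces $q\sigma \equiv 0 \pmod{2q}$, giving $(0,0) = \mathbf{1}_*$. (Equivalently, $g_q$ is an even-degree word in a generator $g$ of $\mathbb{E}^*_{(p-1)^2}$ — the degree-$6$ element of Conjecture \ref{Conj:Generator-Sylow-q} — so the integer part of the index of $g_q^{(\mathbf{A},\mathfrak{b})}$ is even and Proposition \ref{Prop:Even-vs-Odd-powers-Distinguisher} applied to $g$ already returns $\mathbf{1}_*$.) For the second identity take $n = q = 2\cdot\frac{q-1}{2}+1$: the formula gives $y = \big(\tfrac{q-1}{2}\sigma + v_1,\ \tfrac{q-1}{2}\sigma + v_2\big)$, whence $y_1+y_2 = (q-1)\sigma + \sigma = q\sigma \equiv 0 \pmod{2q}$, so $y \in \mathbb{S}(p)$, while $y \in \mathbb{E}^*_{q^2}$. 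Thus $y$ lies in $\mathbb{S}_q(p) := \mathbb{S}(p)\cap\mathbb{E}^*_{q^2} = \{(2z,-2z) : z \in \mathbb{Z}_q\}$, which has exactly $q$ elements, as claimed.

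The hard part will be the first step: rigorously justifying the normal form and the discrete-logarithm linearization from Definition \ref{Def:OperationStar}, and then proving the $\mathbf{[0]}$-power formula for \emph{all} $n$, handling the carry-and-parity bookkeeping of Definition \ref{Def:Raising-to-power} uniformly rather than only on small tested cases. A secondary subtlety worth flagging is that, unlike in the group setting, the base-$2$ set $\langle x\rangle_2$ has $2q$ elements yet is \emph{not} closed under $*$, so one cannot shortcut via a Lagrange-type divisibility argument inside $\mathbb{E}^*_{q^2}$; the explicit power formula is precisely what makes the parities and the membership $y\in\mathbb{S}_q(p)$ unambiguous.
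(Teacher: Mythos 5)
Your proposal is correct, but it takes a genuinely different route from the paper. The paper's own proof is a short, abstract "order" argument: it asserts (by analogy with Proposition \ref{Prop:Even-vs-Odd-powers-Distinguisher}) that every $x=g_q^{(\mathbf{A},\mathfrak{b})}$ has order $q$ rather than $2q$, concludes $x^{(2q,\mathbf{[0]},2)}=\mathbf{1}_*$, sets $y=x^{(q,\mathbf{[0]},2)}$ so that $y*y=\mathbf{1}_*$ and hence $y\in\mathbb{S}(p)$, and then claims the value set must have cardinality $q$ "otherwise it would generate the whole quasigroup." You instead linearize: translate by $\mathbf{0}_*$, rescale, take componentwise discrete logarithms to turn $*$ into $v\mapsto Mv+w$ on $\mathbb{Z}_{2q}^2$ with $M^2=I$ (this is legitimate and is implicitly the same normal form the paper exploits in Proposition \ref{Prop:DLP-reduces-to-DELP}), derive the closed form $x^{(2m,\mathbf{[0]},2)}=(m\sigma,m\sigma)$, $x^{(2m+1,\mathbf{[0]},2)}=v+(m\sigma,m\sigma)$, and finish by a parity computation. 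I checked the closed form against Definition \ref{Def:Raising-to-power} and it is right (all $w_i$ with $i\ge1$ are diagonal and $M$ fixes diagonals), and the specializations $m=q$ and $m=(q-1)/2$ do give the two identities. What your approach buys is precision where the paper is loose: the "order $q$" language is actually inaccurate (your formula shows $x^{(q,\mathbf{[0]},2)}\neq\mathbf{1}_*$ in general), and you identify $\mathbb{S}_q(p)$ concretely as $\mathbb{S}(p)\cap\mathbb{E}^*_{q^2}$ with exactly $q$ elements rather than inferring its size by contradiction. The cost is the extra bookkeeping you already flag. One step you should make explicit rather than assert: that $g_q$ (hence every $g_q^{(\mathbf{A},\mathfrak{b})}$) has even log-coordinates. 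This follows in one line from the oddness of $q^2$: the subgroup of $(\mathbb{Z}_{2q}^2,+)$ generated by $(s,t)$ and $(t,s)$ equals $\langle g_q\rangle$ and has odd order $q^2$, so it contains no element of even order, which forces both $s$ and $t$ to be even; with that supplied, your argument is complete and, in my view, tighter than the one in the paper.
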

\begin{proof}
	Similar arguments hold for this situation with a distinction that now, the set of all values $x = g_q^{(\mathbf{A},\mathfrak{b})}$ have order $q$ instead of $2q$ that was in the previous case. That, means $x^{(p-1, \mathbf{[0]}, 2)} = x^{(2q, \mathbf{[0]}, 2)} = \mathbf{1}_*$ and again putting $y = x^{(q, \mathbf{[0]}, 2)}$ we have that $y^2 = y * y = \mathbf{1}_*$. So, $y$ must belong to $\mathbb{S}(p)$. If $\mathbb{S}_q(p) \subseteq \mathbb{S}(p)$  is the subset from where $y$ receives its values, its cardinality must be $q$, otherwise it will generate the whole multiplicative quasigroup $\mathbb{E}^*_{(p-1)^2}$. 
\end{proof}

\subsection{Dichotomy between odd and even bases}
There are a few more issues with the bracketing shapes defined in Definition \ref{Def:Raising-to-power} that need to be discussed. As we see in Proposition \ref{Prop:Cyclic-subgroupoid} and Proposition \ref{Prop:All-generated-subgroupoids}, the choice of the base $b$ influence the size of the generated sets when raising to powers. It also influence the probability an element $x \in \mathbb{E}^*_{(p-1)^2}$ to be an image of some generator raised to some power index i.e. $x = g^{(\mathbf{A}, \mathfrak{b})} = g^{(a, a_s, \mathfrak{b})}$. This means, since $g$ is a generator, there is certainly some power index $(\mathbf{A}, \mathfrak{b}) = (a, a_s, \mathfrak{b})$ for which $x = g^{(\mathbf{A}, \mathfrak{b})}$, but when distributed over all patterns $a_s = [P_0, P_1, \ldots, P_k]_{\mathfrak{b}-1}$, for some elements $x$ there will be many patterns that give $x = g^{(a, a_s, \mathfrak{b})}$, while for other elements $x$ there will be a few. To formalize this discussion we adapt the approach that Smith proposed in \cite{smith2001some} about the Shannon entropy of completely partitioned sets, and the relations between Shannon entropy and several instances of R\'{e}nyi entropy (such as collision entropy and min-entropy) studied by Cachin in \cite{cachin1997entropy} (see also the work of Sk{\'o}rski \cite{skorski2015shannon}).

For a given base $2 \leq \mathfrak{b} \leq b_{max}$ and given generator $g \in  \mathbb{E}^*_{(p-1)^2}$, let us investigate how the sequence of sets of shapes
\begin{align}
	\mathbb{L}(i) = & \{ a_s \ | a_s = [P_0, \ldots, P_i]_{\mathfrak{b}-1}, \text{ for all } P_j \in \mathbb{Z}_{\mathfrak{b}-1}, j = 0, \ldots, i  \},
\end{align}
each with $(\mathfrak{b}-1)^i$ elements, are partitioned into $r_i$ sets of mutually exclusive subsets 
\begin{align}
	\xi_i = & \{ C_{i,1}, \ldots, C_{i, r_i}  \}.
\end{align}

The partitioning is done due to the following conditions:
\begin{align}
	\forall a_{s_1}, a_{s_2} \in C_{i,j}, & \ \ \ g^{(\mathfrak{b}^{i-1}, a_{s_1}, \mathfrak{b})} = g^{(\mathfrak{b}^{i-1}, a_{s_2}, \mathfrak{b})} = g_{i,j},\\
	\forall a_{s_1} \in C_{i,j_1}, \text{ and } \forall a_{s_2} \in C_{i,j_2}, & \ \ \ g^{(\mathfrak{b}^{i-1}, a_{s_1}, \mathfrak{b})} \neq g^{(\mathfrak{b}^{i-1}, a_{s_2}, \mathfrak{b})}, \text{ when } j_1 \neq j_2.
\end{align}

As a short notation we write $g^{(\mathfrak{b}^{i-1}, \mathbb{L}(i), \mathfrak{b})}$ the set of all powers of $g$ to $\mathfrak{b}^{i-1}$ with all possible shapes $\mathbb{L}(i)$.

If a shape $a_s$ is sampled uniformly at random from $\mathbb{L}(i)$, i.e. if $a_s \xleftarrow{\mathbf{r}} \mathbb{L}(i)$, then $g^{(\mathfrak{b}^{i-1}, a_s, \mathfrak{b})}$ determines the set $C_{i,j}$ where it belongs. The probability that $a_s$ belongs to $C_{i,j}$ depends on the number of elements $n_{ij} = |C_{i,j}|$ and is calculated as
\begin{align}
	p_{ij} = p(C_{i, j}) = \frac{n_{ij}}{(b-1)^i}.
\end{align}

The Shannon entropy $H_1$ of the partitioned set $\xi_i$ is defined with
\begin{equation}
	H_1(\xi_i) = -\sum_{j=1}^{r_i}p(C_{i, j}) \log p(C_{i, j}).
\end{equation}

Similarly, the R\'{e}nyi entropy of order $\alpha$ for $\xi_i$  is defined as
\begin{equation}
	H_{\alpha}(\xi_i) = \frac{1}{1 - \alpha} \log \sum_{j=1}^{r_i}p(C_{i, j})^{\alpha},
\end{equation}
with the special instance for $\alpha = 2$ which is called the Collision entropy
\begin{equation}
	H_{2}(\xi_i) = - \log \sum_{j=1}^{r_i}p(C_{i, j})^{2}.
\end{equation}
 
Min entropy is defined as
\begin{equation}\label{Eqn:Min-Entropy}
		H_{\infty}(\xi_i) = - \log \max_{C_{i, j} \in \xi_i}p(C_{i, j}) = - \log \frac{\max n_{ij}}{(\mathfrak{b}-1)^i} .
\end{equation}

The ordering relation (see \cite[Lemma 3.2.]{cachin1997entropy}) among different entropies is 
\begin{equation}
	H_{\infty}(\xi_i) \leq H_{2}(\xi_i) \leq H_1(\xi_i)
\end{equation} 
 
Let us point to the fact that the grouping of the shapes is governed by the Narayana numbers given in equation (\ref{Eq:Narayana_S_a(g)}) in Theorem \ref{Thm:Equivalent-Classes}. Thus, for even bases $\mathfrak{b}$ there are $\mathfrak{b}-1$ classes that partition the set of all possible shapes with cardinality expressed by the Narayana numbers $ N(\mathfrak{b}-1, i) = \frac{1}{\mathfrak{b}-1}\binom{\mathfrak{b}-1}{i}\binom{\mathfrak{b}-1}{i-1}$. Since $\mathfrak{b}-1$ in that case is odd, there is one central dominant Narayana number, and there will be one class of shapes that will have a dominant number of members. For example, for $\mathfrak{b}=6$, the five classes have cardinality $\{1, 10, 20, 10, 1\}$, so the central class is a dominant one with 20 elements. For $\mathfrak{b}=8$ the seven classes have cardinality $\{1, 21, 105, 175, 105, 21, 1\}$, so the central class is a dominant one with 175 elements. On the other hand, for odd bases $\mathfrak{b}$, we have grouping in even number of $\mathfrak{b}-1$ classes, the sequences of Narayana numbers are completely symmetrical and there is not one but two dominant classes. For example, for $\mathfrak{b}=7$, the six classes have cardinality $\{1, 15, 50, 50, 15, 1\}$, so two central classes are dominant with 50 elements.

For bigger bases $\mathfrak{b}$ we have observed the same pattern: for even bases $\mathfrak{b} = 2 \mathfrak{b}_1$ the values of $\max n_{ij}$ that determine the min entropy $H_{\infty}$ are increasing with the same exponential speed as the values of $(\mathfrak{b}-1)^i$ increase. Looking at the equation (\ref{Eqn:Min-Entropy}) it makes $\frac{\max n_{ij}}{(\mathfrak{b}-1)^i}$ to trend to 1 i.e. $H_{\infty}$ trends to 0.
	
For odd bases $\mathfrak{b} = 2 \mathfrak{b}_1 + 1$, while there is increase of $\max n_{ij}$ as $i$ increases, the ratio $\frac{\max n_{ij}}{(\mathfrak{b}-1)^i}$ is actually decreasing, which makes  $H_{\infty}$ to increase.

See Appendix \ref{App:Even-Odd-Example} for details about this observed dichotomy between even and odd bases.

We summarize this discussion with the following Conjecture
\begin{conjecture}\label{Conj:Conjecture-even-b}
    Let $g \in \mathbb{E}^*_{(p-1)^2}$ be a generator of the maximal multiplicative subgroupoid of a given entropoid $\mathbb{E}_{p^2}$. For every even base $\mathfrak{b} = 2 b_1 < b_{max}$ the values $n(\mathfrak{b}, i) = \max n_{ij}$ are given by the following relation:
    \begin{equation}\label{Eqn:Conjecture-even-b}
    	n(\mathfrak{b}, i) = (\mathfrak{b} - 1) \bigg( (\mathfrak{b} - 1)^{i-1} - (\mathfrak{b} - 2)^{i-1} \bigg),
    \end{equation}
	which makes the following relation about the min entropy $H_{\infty}$
    \begin{equation}\label{Eqn:Min-entropy-even-b}
		H_{\infty}(\xi_i) = 1 - \left( \frac{\mathfrak{b} - 2}{\mathfrak{b} - 1}\right) ^{i-1}.
\end{equation}
	
\end{conjecture}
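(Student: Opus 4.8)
The plan is to reduce the claim to a purely combinatorial counting problem about the fibres of the iterated $R_{\mathfrak{b}}$-map and then prove that count by induction on $i$. The first observation is that for the power $a=\mathfrak{b}^{i-1}$ the lowest base-$\mathfrak{b}$ digit is $A_0=0$, so in Definition~\ref{Def:Raising-to-power} the digit $P_0$ is never consumed: tracing equations (\ref{Eq:Raising-to-power-v1})--(\ref{Eq:Raising-to-power-v4}) gives $g^{(\mathfrak{b}^{i-1},a_s,\mathfrak{b})}=w_{i-1}$, where $w_0=g$ and $w_j=R_{\mathfrak{b}}(w_{j-1})[P_j]$ for $j=1,\dots,i-1$. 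Hence the value depends only on $(P_1,\dots,P_{i-1})$ while $P_0$ ranges freely over its $\mathfrak{b}-1$ admissible values. Writing $\Phi_{\ell}\colon (P_1,\dots,P_{\ell})\mapsto w_{\ell}$ for the length-$\ell$ chain and $M(\ell)=\max_v|\Phi_{\ell}^{-1}(v)|$ for its largest fibre, this gives at once $\max_j n_{ij}=(\mathfrak{b}-1)\,M(i-1)$, so it suffices to show $M(\ell)=(\mathfrak{b}-1)^{\ell}-(\mathfrak{b}-2)^{\ell}$, which is exactly equivalent to (\ref{Eqn:Conjecture-even-b}).

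I would prove this by induction on $\ell$, establishing the recurrence $M(\ell)=(\mathfrak{b}-2)\,M(\ell-1)+(\mathfrak{b}-1)^{\ell-1}$ with base case $M(1)=1$ (the $\mathfrak{b}-1$ representatives $R_{\mathfrak{b}}(g)[P_1]$ are pairwise distinct by Theorem~\ref{Thm:Equivalent-Classes}, so every level-$1$ fibre is a singleton). The closed form then follows by a one-line telescoping check, and feeding $\max_j n_{ij}=(\mathfrak{b}-1)M(i-1)$ into the definition (\ref{Eqn:Min-Entropy}) of $H_{\infty}$ reproduces the stated min-entropy expression (\ref{Eqn:Min-entropy-even-b}).

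The heart of the argument, and the step I expect to be the main obstacle, is the Key Lemma that justifies the recurrence. The clean way to read it is that there is a single distinguished representative index $c$ — the index of the central class — such that a digit string $(P_1,\dots,P_{\ell})$ lands in the dominant value $v^{*}$ if and only if $P_j=c$ for at least one $j$; the number of such strings is precisely $(\mathfrak{b}-1)^{\ell}-(\mathfrak{b}-2)^{\ell}$, and the two summands of the recurrence correspond to the split ``$P_{\ell}=c$'' versus ``$P_{\ell}\neq c$ but already absorbed''. The existence of a \emph{unique} such central index is exactly where evenness of $\mathfrak{b}$ enters: by Theorem~\ref{Thm:Equivalent-Classes} the level-$\mathfrak{b}$ shapes are grouped by the Narayana numbers $N(\mathfrak{b}-1,k)$, and since $\mathfrak{b}-1$ is odd this sequence has a single central maximum at $k=\mathfrak{b}/2$, whereas for odd $\mathfrak{b}$ the symmetric sequence has two equal central maxima — precisely the dichotomy that breaks the argument. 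What remains genuinely hard is to prove the two algebraic facts underlying the absorbing behaviour of $c$: that applying $R_{\mathfrak{b}}(\cdot)[c]$ collapses the chain into a single dominant orbit, and that this orbit is thereafter preserved by every further $R_{\mathfrak{b}}$ step. I would try to anchor both in the small-order structure already developed in the paper — the set $\mathbb{S}(p)$ of square roots of $\mathbf{1}_*$ and the parity distinguishers of Proposition~\ref{Prop:Even-vs-Odd-powers-Distinguisher} — by showing that the central choice drives $w_j$ into a low-order element whose subsequent $\mathfrak{b}$-power brackets coincide.

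Finally, I would close by verifying that no competing class can match this size: one must check that the $(\mathfrak{b}-2)^{\ell}$ strings avoiding $c$ spread over the remaining values with every fibre strictly smaller than $M(\ell)$, so that the maximum is genuinely attained at $v^{*}$. This ``dominance'' half is subtler than the counting itself, since it requires ruling out accidental coincidences among the non-central representatives as $\ell$ grows; I expect this, together with the absorbing Key Lemma, to be where a fully rigorous proof — as opposed to the experimental evidence on which the conjecture currently rests — would demand the most work.
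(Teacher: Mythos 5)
The statement you are addressing is posed in the paper as a \emph{conjecture}: the paper offers no proof at all, only the numerical evidence of Appendix~\ref{App:Even-Odd-Example} (the tables for bases $\mathfrak{b}=4,\dots,8$ and the collision-entropy experiments) together with the informal observation that for even $\mathfrak{b}$ the Narayana sequence $N(\mathfrak{b}-1,k)$ has a single central maximum. So there is no paper proof to compare against; the relevant question is whether your outline closes the gap, and it does not. Your reduction is sound and matches the data: for $a=\mathfrak{b}^{i-1}$ the procedure of Definition~\ref{Def:Raising-to-power} indeed returns $w_{i-1}$, the digit $P_0$ is never consumed, so $\max_j n_{ij}=(\mathfrak{b}-1)M(i-1)$ where $M(\ell)$ is the largest fibre of $(P_1,\dots,P_\ell)\mapsto w_\ell$, and the closed form $M(\ell)=(\mathfrak{b}-1)^\ell-(\mathfrak{b}-2)^\ell$ is exactly what the tables show (e.g.\ $\mathfrak{b}=4$: $15=3\cdot 5$ at $i=3$, $57=3\cdot 19$ at $i=4$). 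But the two facts you isolate as the ``Key Lemma'' --- that the central representative index $c$ is absorbing (once some $P_j=c$ the chain collapses into a single value regardless of the later digits) and that the strings avoiding $c$ never accidentally collide into a fibre of comparable size --- are precisely the content of the conjecture. Neither is proved, and neither follows from Theorem~\ref{Thm:Equivalent-Classes}, which only controls the partition of $S_{\mathfrak{b}}(g)$ at a single level and says nothing about how the classes compose under iteration of $R_{\mathfrak{b}}$. Your own text concedes this, so the proposal is a correct reduction plus a plan, not a proof; the paper's status for this statement (open, experimentally supported) is unchanged by it.

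One further point worth flagging: if the counting formula (\ref{Eqn:Conjecture-even-b}) holds, then substituting into the definition (\ref{Eqn:Min-Entropy}) gives $H_{\infty}(\xi_i) = -\log\bigl(1-((\mathfrak{b}-2)/(\mathfrak{b}-1))^{i-1}\bigr)$, not the expression $1-((\mathfrak{b}-2)/(\mathfrak{b}-1))^{i-1}$ stated in (\ref{Eqn:Min-entropy-even-b}); the paper appears to have dropped the $-\log$. Your claim that feeding the count into (\ref{Eqn:Min-Entropy}) reproduces the stated min-entropy expression is therefore not literally correct either --- it reproduces the corrected version. Both formulas do tend to $0$ as $i\to\infty$, which is the qualitative point the paper relies on, but you should not assert agreement with an equation that does not follow from the definition you cite.
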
 

The discussion so far was about the entropy of the pattern sets $\mathbb{L}(i)$ partitioned to sets $\xi_i$ induced by rising a generator $g$ to a special forms of powers $g^{(a, a_{s_1}, \mathfrak{b})}$ where $a = \mathfrak{b}^{i-1}$. This basically means that in the procedure for rising to a power we use only the equations (\ref{Eq:Raising-to-power-v1}) and (\ref{Eq:Raising-to-power-v2}), since the numbers $a$ in that case have in the little-endian notation the following forms $a = [0, 0, \ldots, 1]_\mathfrak{b}$. One might hope that the entropy of $\xi$ for even bases will improve significantly if we work with generic numbers $a$ with a lot of non-zero digits $a = [A_0, A_1, \ldots, A_k]_\mathfrak{b}$. However, that is not the case as it is showed in Figure \ref{Figure:EvenBases} in Appendix \ref{App:Even-Odd-Example}. 

The experiments were conducted by generating a random entropoid $\mathbb{E}_{p^2}$ with safe prime number $p$ with $\lambda$ bits. After finding a generator $g$ for the multiplicative quasigroup $\mathbb{E}^*_{(p-1)^2}$, we generated one random number $a \in \mathbb{Z}_{(p-1)^2}$, and then we run the procedure of rising to a power $g^{(a, a_s, \mathfrak{b})}$ for random shapes $a_s = [P_0, \ldots, P_i]_{\mathfrak{b}-1}$ until the first collision. 

As the size of the finite entropoid $\mathbb{E}_{p^2}$ increases, the collision entropy for different even bases remains constant. On the other hand, with the odd bases the situation is completely different. We see in Figure \ref{Figure:OddBases} that as the size of the entropoid increases, the collision entropy increases as well. A loose observation is that for $p$ being $\lambda$ bits long, the collision entropy is $H_2(\xi) \approx \frac{\lambda}{2}$.
\begin{openroblem}
	For odd bases $\mathfrak{b}$ find proofs and find tighter bounds for the collision entropy $H_2(\xi) $.
\end{openroblem}

%\newpage
\section{Hard Problems in Entropoid Based Cryptography}\label{Sec:Hard-problems}

We now have enough mathematical understanding and heuristic evidence to precisely formulate several hard problems in entropoid based cryptography, in a similar fashion as the discrete logarithm problem, and computational and decisional Diffie-Hellman problems are defined within the group theory. 
We will use the notion of negligible function $\mathtt{negl}: \mathbb{N} \mapsto \mathbb{R}$ for the function that for every $c \in \mathbb{N}$ there is an integer $n_c$ such that $\mathtt{negl}(n) \leq n^{-c}$ for all $n\geq n_c$.

\begin{definition}[Discrete Entropoid Logarithm Problem (DELP)]
	An entropoid $\mathbb{E}_{p^2}$ and a generator $g_{q_i}$ of one of its Silow subquasigroups $\mathbb{E}^*_{\nu}$ are publicly known. Given an element $y \in \mathbb{E}^*_{\nu}$ find a power index $(\mathbf{A}, \mathfrak{b})$ such that $y = g_{q_i}^{(\mathbf{A}, \mathfrak{b})}$.
\end{definition}

\begin{definition}[Computational Entropoid Diffie–Hellman Problem (CEDHP)] An entropoid $\mathbb{E}_{p^2}$ and a generator $g_{q_i}$ of one of its Silow subquasigroups $\mathbb{E}^*_{\nu}$ are publicly known. Given $g_{q_i}^{(\mathbf{A}, \mathfrak{b}_1)}$ and $g_{q_i}^{(\mathbf{B}, \mathfrak{b}_2)}$, where $(\mathbf{A},{\mathfrak{b}_1}), (\mathbf{B},{\mathfrak{b}_2}) \xleftarrow{\mathbf{r}} \mathbb{L}$, compute $g_{q_i}^{(\mathbf{A},{\mathfrak{b}_1}) (\mathbf{B},{\mathfrak{b}_2})}$.
\end{definition}

The similar reduction as with DLP and CDH is true for DELP and CEDHP: CEDHP $\leq$ DELP i.e. CEDHP is no harder than DELP. Namely, if an adversary can solve DELP, it can find $(\mathbf{A},{\mathfrak{b}_1})$ and $(\mathbf{B},{\mathfrak{b}_2})$ and compute $g_{q_i}^{(\mathbf{A},{\mathfrak{b}_1}) (\mathbf{B},{\mathfrak{b}_2})}$.

\begin{definition}[Decisional Entropoid Diffie–Hellman Problem (DEDHP)] An entropoid $\mathbb{E}_{p^2}$ and a generator $g_{q_i}$ of one of its Silow subquasigroups $\mathbb{E}^*_{\nu}$ are publicly known. Given $g_{q_i}^{(\mathbf{A}, \mathfrak{b}_1)}$, $g_{q_i}^{(\mathbf{B}, \mathfrak{b}_2)}$ and $g_{q_i}^{(\mathbf{C}, \mathfrak{b}_3)}$, where $(\mathbf{A},{\mathfrak{b}_1}), (\mathbf{B},{\mathfrak{b}_2}), (\mathbf{C}, \mathfrak{b}_3) \xleftarrow{\mathbf{r}} \mathbb{L}$, decide if $(\mathbf{C}, \mathfrak{b}_3) = (\mathbf{A},{\mathfrak{b}_1}) (\mathbf{B},{\mathfrak{b}_2})$ or $ (\mathbf{C}, \mathfrak{b}_3) \xleftarrow{\mathbf{r}} \mathbb{L}$.
\end{definition}

Again, the similar reduction as with classical CDH and DDH, holds here: DEDHP $\leq$ CEDHP i.e. DEDHP is no harder than CEDHP, since if an adversary can solve CEDHP, it will compute $g_{q_i}^{(\mathbf{A},{\mathfrak{b}_1}) (\mathbf{B},{\mathfrak{b}_2})}$ and will compare it with $g_{q_i}^{(\mathbf{C}, \mathfrak{b}_3)}$.

As with the classical DDH for the multiplicative group $\mathbb{F}_p^*$ where DDH is easy problem, but for its quadratic residues subgroup $QR(p)$, DDH is hard, we have a similar situation for DEDHP which is stated in the following Lemma.
\begin{lemma}
	Let $\mathbb{E}_{p^2}$  be an entropoid and $g$ be a generator of its maximal quasigroup $\mathbb{E}^*_{(p-1)^2}$. Then there is an efficient algorithm that solves DEDHP in $\mathbb{E}^*_{(p-1)^2}$.
\end{lemma}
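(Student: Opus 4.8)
The plan is to mimic the classical distinguishing attack that solves DDH in the full multiplicative group $\mathbb{F}_p^*$ by reading off the quadratic character (Legendre symbol) of each challenge element. Here the role of the quadratic character is played by the parity map of Proposition \ref{Prop:Even-vs-Odd-powers-Distinguisher}: for any power index $(\mathbf{A},\mathfrak{b})$, raising $g^{(\mathbf{A},\mathfrak{b})}$ to the cyclic power $(p-1,\mathbf{[0]},2)$ returns $\mathbf{1}_*$ when the integer part $a$ is even and $\boxminus\mathbf{1}_*$ when $a$ is odd. This yields an efficiently computable one-bit parity character $\chi(\cdot) \in \{0,1\}$ on $\mathbb{E}^*_{(p-1)^2}$ (with $0$ for even, $1$ for odd), since the exponent $(p-1,\mathbf{[0]},2)$ costs only $O(\log p)$ applications of $*$ by Proposition \ref{Prop:Well-defiined-powers}.

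First I would define the distinguisher $\mathcal{D}$: on input $\big(g,\, g^{(\mathbf{A},\mathfrak{b}_1)},\, g^{(\mathbf{B},\mathfrak{b}_2)},\, g^{(\mathbf{C},\mathfrak{b}_3)}\big)$ it computes the three parity bits $\chi_A,\chi_B,\chi_C$ by the formula above, and outputs \emph{real} if $\chi_C = \chi_A \wedge \chi_B$ (logical AND) and \emph{random} otherwise. The correctness in the real case is then immediate: if $(\mathbf{C},\mathfrak{b}_3) = (\mathbf{A},\mathfrak{b}_1)(\mathbf{B},\mathfrak{b}_2)$, Proposition \ref{Prop:Addition-and-multiplication-of-integer-parts} gives $c = ab$, so $c$ is odd precisely when both $a$ and $b$ are odd; hence $\chi_C = \chi_A \wedge \chi_B$ always holds and $\mathcal{D}$ answers \emph{real} with probability $1$.

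Next I would bound the random case. When $(\mathbf{C},\mathfrak{b}_3) \xleftarrow{\mathbf{r}} \mathbb{L}$, the integer part $c$ is uniform over $\mathbb{Z}_p^*$; since $p$ is odd, the set $\{1,\dots,p-1\}$ contains equally many odd and even residues, so $\chi_C$ is a uniform bit independent of $\chi_A,\chi_B$. Therefore the consistency condition $\chi_C = \chi_A \wedge \chi_B$ holds with probability exactly $\tfrac12$, and $\mathcal{D}$ answers \emph{real} with probability $\tfrac12$. Combining the two cases yields a distinguishing advantage $1 - \tfrac12 = \tfrac12$, which is non-negligible, so $\mathcal{D}$ solves DEDHP in $\mathbb{E}^*_{(p-1)^2}$ efficiently.

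I do not expect a serious obstacle, as the argument is a direct transcription of the Legendre-symbol attack; the only point requiring care is confirming that the parity character behaves multiplicatively in the required sense, namely that the parity of the product index equals the AND of the two parities — which is exactly what Propositions \ref{Prop:Even-vs-Odd-powers-Distinguisher} and \ref{Prop:Addition-and-multiplication-of-integer-parts} together supply. It is worth remarking, to foreshadow the conjectured hardness in the Sylow $q$-subquasigroup, that this attack fails there precisely because, by Proposition \ref{Prop:Non-distinguisher-in-Sylow-subquasigroups}, the analogous map is constantly $\mathbf{1}_*$ and thus carries no parity information to exploit.
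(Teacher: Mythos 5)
Your proposal is correct and follows essentially the same route as the paper, whose proof consists of a single sentence pointing to the parity distinguisher of Proposition \ref{Prop:Even-vs-Odd-powers-Distinguisher}; you have simply filled in the details the paper leaves implicit (extracting $\chi_A,\chi_B,\chi_C$ via the $(p-1,\mathbf{[0]},2)$ power map, using $c=ab$ from Proposition \ref{Prop:Addition-and-multiplication-of-integer-parts} to get $\chi_C=\chi_A\wedge\chi_B$ in the real case, and the uniform-bit argument giving advantage $\tfrac12$). The analogy with the Legendre-symbol attack and the closing remark about why it fails in the Sylow $q$-subquasigroup match the paper's intent exactly.
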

\begin{proof}
	The distinguishing algorithm can be built based on the distinguishing property described in Proposition \ref{Prop:Even-vs-Odd-powers-Distinguisher}.
\end{proof}

On the other hand, based on Proposition \ref{Prop:Non-distinguisher-in-Sylow-subquasigroups} we can give the following plausible conjecture.
\begin{conjecture}\label{Conj:DEDHP-is-hard-in-Sylow-q}
	Let $\mathbb{E}_{p^2}$ be an entropoid, where $p = 2 q + 1$ is a safe prime and $g_{q}$ is the generator of its Sylow $q$-subquasigroup $\mathbb{E}^*_{q^2}$. Then there is no algorithm $\mathcal{A}$ that solves DEDHP in $\mathbb{E}^*_{q^2}$ with significantly higher advantage over the strategy of uniformly random guesses for making the decisions.
\end{conjecture}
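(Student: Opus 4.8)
The plan is to argue hardness not unconditionally (which would separate complexity classes and is out of reach), but along the two standard cryptographic routes: first ruling out the structural leak that makes DEDHP easy in the full quasigroup, and then establishing a generic lower bound together with the failure of all known DLP-style attacks. The starting observation is that the efficient distinguisher of the preceding Lemma rests entirely on the parity leak of Proposition \ref{Prop:Even-vs-Odd-powers-Distinguisher}: raising $g^{(\mathbf{A},\mathfrak{b})}$ to the power $(p-1,\mathbf{[0]},2)$ returns $\mathbf{1}_*$ or $\boxminus\mathbf{1}_*$ according to the parity of $a$, which lets one test whether $c \equiv ab \pmod 2$. The first step is therefore to invoke Proposition \ref{Prop:Non-distinguisher-in-Sylow-subquasigroups}, which shows that inside $\mathbb{E}^*_{q^2}$ this same map always collapses to $\mathbf{1}_*$ independently of $a$, so the distinguisher carries no information here. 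This is exactly analogous to the classical fact that DDH is easy in $\mathbb{F}_p^*$ but conjecturally hard in the prime-order subgroup $QR(p)$, and it explains why restricting to the Sylow $q$-subquasigroup is the right setting.

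Next I would establish random self-reducibility, so that average-case hardness follows from worst-case hardness. Given a challenge triple $(g_q^{(\mathbf{A},\mathfrak{b}_1)}, g_q^{(\mathbf{B},\mathfrak{b}_2)}, g_q^{(\mathbf{C},\mathfrak{b}_3)})$ one re-randomizes it by sampling fresh indices and applying them; the palintropic identity of Theorem \ref{Thm:Basic-theorem-for-powers-in-entropic-groupoids}, in the succinct form of Theorem \ref{Thm:Basic-theorem-for-risinng-to-power}, guarantees that the exponents commute, and Proposition \ref{Prop:Addition-and-multiplication-of-integer-parts} guarantees that their integer parts transform as $a+b$ and $ab$. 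Tracking these transformations shows that a genuine entropoid Diffie-Hellman triple maps to a uniformly random genuine triple while a random triple maps to a uniformly random triple, which is precisely random self-reducibility and reduces the task to bounding the advantage for a single average instance.

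The core of the argument is a generic lower bound in the style of Shoup. One models an adversary that accesses $\mathbb{E}^*_{q^2}$ only through the operations $*$ and $\boxplus$ together with an equality test, with elements presented under a random encoding. By Theorem \ref{Thm:Basic-theorem-for-risinng-to-power} and Proposition \ref{Prop:Addition-and-multiplication-of-integer-parts}, every element the adversary can form is of the shape $g_q^{\mathbf{F}}$ for an index whose integer part is a fixed bivariate polynomial in the unknown integer parts of $\mathbf{A}, \mathbf{B}, \mathbf{C}$; distinguishing the real tuple from the random one then requires two such formal polynomials to collide as functions while differing as polynomials, an event bounded by $O(t^2/q)$ after $t$ oracle queries via a Schwartz-Zippel count. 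Because $q$ is prime and $|\mathbb{E}^*_{q^2}| = q^2$ has no small factors, a Pohlig-Hellman decomposition gives no leverage; Baby-step giant-step, Pollard rho and kangaroo are defeated because the index is the pair $(a, a_s)$ whose shape component ranges over the exponentially large set $\mathbb{L}[a,\mathfrak{b}]$, leaving no small search space to walk; and Shor's algorithm does not apply, since the hidden structure is non-abelian and non-associative rather than an abelian hidden-subgroup instance.

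The hard part will be that this generic bound is only as trustworthy as our understanding of the Logarithmetic $(L(G),\mathbf{+},\mathbf{\times})$: Open Problem \ref{Open-problem:FindExplixitRulesForLogarithmetic} records that we still lack explicit rules for the shapes $c_s$ and $d_s$ produced by $\mathbf{+}$ and $\mathbf{\times}$, so one cannot yet certify that the bracketing component contributes no extractable structure beyond the integer part. A fully rigorous proof would require either a complete description of the index algebra showing that the reachable exponents behave like a copy of $\mathbb{Z}_q \times \mathbb{Z}_q$, with the entropy analysis of the preceding subsection confirming that no side information leaks from the shape, or else a reduction from an independently accepted hard problem. Until such a characterization is available the statement must remain Conjecture \ref{Conj:DEDHP-is-hard-in-Sylow-q}, with Proposition \ref{Prop:Non-distinguisher-in-Sylow-subquasigroups} and the self-reducibility above as the strongest evidence presently within reach.
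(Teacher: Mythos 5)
The statement you are asked to prove is stated in the paper as a \emph{conjecture}, and the paper supplies no proof of it: the only justification given is the single sentence that the conjecture is ``plausible'' in light of Proposition~\ref{Prop:Non-distinguisher-in-Sylow-subquasigroups}, i.e.\ that the parity distinguisher of Proposition~\ref{Prop:Even-vs-Odd-powers-Distinguisher}, which breaks DEDHP in the full quasigroup $\mathbb{E}^*_{(p-1)^2}$, collapses to the constant $\mathbf{1}_*$ inside the Sylow $q$-subquasigroup and therefore yields no information there. You correctly identified that this is the entire evidentiary basis in the paper, and your first paragraph reproduces it faithfully, including the analogy with DDH in $QR(p)$ versus $\mathbb{F}_p^*$.

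Everything beyond that first paragraph goes further than the paper does, and you are right to flag at the end that it does not amount to a proof. Two of your intermediate steps deserve a caveat so that they are not mistaken for established results. First, the random self-reducibility argument implicitly assumes that re-randomizing by fresh indices induces the uniform distribution on genuine triples; but the index set $\mathbb{L}$ carries the bracketing-shape component as well as the integer part, and the paper's own entropy analysis (the dichotomy between even and odd bases) shows that the induced distribution on $\mathbb{E}^*_{q^2}$ is \emph{not} uniform over shapes, so uniformity of the re-randomized triple is itself unproven. Second, the Shoup-style generic bound presupposes that every reachable element is $g_q^{\mathbf{F}}$ with $\mathbf{F}$ determined, up to equivalence, by a polynomial over $\mathbb{Z}_q\times\mathbb{Z}_q$ in the hidden integer parts; this is exactly the characterization of the Logarithmetic that Open Problem~\ref{Open-problem:FindExplixitRulesForLogarithmetic} records as unknown, so the Schwartz--Zippel count cannot currently be carried out. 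Since you explicitly acknowledge both obstructions and conclude that the statement must remain Conjecture~\ref{Conj:DEDHP-is-hard-in-Sylow-q}, your proposal is consistent with the paper; it simply offers a more structured roadmap (self-reducibility plus a generic lower bound) than the paper's one-line appeal to Proposition~\ref{Prop:Non-distinguisher-in-Sylow-subquasigroups}, at the cost of resting on two further unproven uniformity and structure assumptions.
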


\begin{theorem}
	If the DEDHP conjecture is true, then a Diffie-Hellman key exchange protocol over finite entropoids is secure in the Canetti-Krawczyk  model of passive adversaries \cite{Canetti2001a}.
\end{theorem}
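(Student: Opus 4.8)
The plan is to give a reduction-style argument showing that any efficient passive adversary against the key-exchange protocol yields an efficient DEDHP distinguisher, so that security follows directly from Conjecture~\ref{Conj:DEDHP-is-hard-in-Sylow-q}. First I would pin down the protocol under analysis: Alice samples $(\mathbf{A},\mathfrak{b}_1)\xleftarrow{\mathbf{r}}\mathbb{L}$ and sends $u=g_q^{(\mathbf{A},\mathfrak{b}_1)}$, Bob samples $(\mathbf{B},\mathfrak{b}_2)\xleftarrow{\mathbf{r}}\mathbb{L}$ and sends $v=g_q^{(\mathbf{B},\mathfrak{b}_2)}$, and both derive the session key $k=g_q^{(\mathbf{A},\mathfrak{b}_1)(\mathbf{B},\mathfrak{b}_2)}$. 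Correctness, i.e.\ that the two endpoints compute the \emph{same} $k$, is precisely the palintropic identity~(\ref{Eq:Power-of-power-of-x-succinct}) of Theorem~\ref{Thm:Basic-theorem-for-risinng-to-power}, which gives $(u)^{(\mathbf{B},\mathfrak{b}_2)}=(v)^{(\mathbf{A},\mathfrak{b}_1)}$; I would record this up front, since the entire security statement presupposes a well-defined shared secret and exploits that the transcript $(u,v)$ together with $k$ form exactly a DEDHP tuple.

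Next I would recall the Canetti--Krawczyk session-key experiment of \cite{Canetti2001a} specialized to a passive (eavesdropping) adversary $\mathcal{A}$: $\mathcal{A}$ activates sessions, sees every transcript, may issue session-key reveal queries to all non-test sessions, then selects a fresh test session and is handed either the real key or an ideal key sampled as $g_q^{(\mathbf{C},\mathfrak{b}_3)}$ with $(\mathbf{C},\mathfrak{b}_3)\xleftarrow{\mathbf{r}}\mathbb{L}$; its advantage is the bias of its final bit-guess. Because $\mathcal{A}$ is passive, it never injects or alters flows, so every completed session is honestly generated and matched, which is what allows faithful simulation. I would then construct the distinguisher $\mathcal{D}$: given a DEDHP challenge $(U,V,W)=(g_q^{(\mathbf{A},\mathfrak{b}_1)},g_q^{(\mathbf{B},\mathfrak{b}_2)},g_q^{(\mathbf{C},\mathfrak{b}_3)})$, $\mathcal{D}$ guesses the index of the eventual test session, plants $U$ and $V$ as its two flows and answers the test query with $W$, while running all remaining sessions itself with freshly sampled indices from $\mathbb{L}$ so that every reveal query can be answered exactly. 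When $\mathcal{A}$ halts, $\mathcal{D}$ outputs $\mathcal{A}$'s guess.

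The analysis is then essentially the classical DDH-style one. When $(\mathbf{C},\mathfrak{b}_3)=(\mathbf{A},\mathfrak{b}_1)(\mathbf{B},\mathfrak{b}_2)$ the embedded session is a perfect simulation of an honest session with its real key, and when $(\mathbf{C},\mathfrak{b}_3)$ is random it is a perfect simulation with the ideal key; by Theorem~\ref{Thm:Basic-theorem-for-risinng-to-power} the planted transcript is distributed identically to an honest one. A standard guessing/hybrid step over the (polynomially many) sessions, with $Q$ bounding their number, yields $\mathrm{Adv}^{\mathrm{DEDHP}}_{\mathcal{D}}\ge \mathrm{Adv}_{\mathcal{A}}/Q$, so a non-negligible adversarial advantage forces a non-negligible DEDHP advantage, contradicting Conjecture~\ref{Conj:DEDHP-is-hard-in-Sylow-q}; hence $\mathrm{Adv}_{\mathcal{A}}$ is negligible and the protocol is secure.

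The main obstacle will be making the distributional match exact in the \emph{ideal} branch. The DEDHP random case replaces the key by $g_q^{(\mathbf{C},\mathfrak{b}_3)}$ for a uniformly random \emph{index}, not by a uniformly random \emph{element} of $\mathbb{E}^*_{q^2}$, and by the even/odd-base entropy dichotomy recorded in Conjecture~\ref{Conj:Conjecture-even-b} these two distributions need not coincide. I would resolve this either by defining the ideal session key in the CK experiment to be sampled identically as $g_q^{(\mathbf{C},\mathfrak{b}_3)}$ with $(\mathbf{C},\mathfrak{b}_3)\xleftarrow{\mathbf{r}}\mathbb{L}$, so the two games align by construction, or by restricting to odd bases, where the observed collision entropy $H_2(\xi)\approx\lambda/2$ makes $g_q^{(\mathbf{C},\mathfrak{b}_3)}$ statistically close to uniform and the residual statistical distance can be folded into the negligible term. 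Verifying this alignment, and confirming that the passive model indeed forbids exactly the queries that would let $\mathcal{A}$ read off the planted secret, are the delicate points; everything else is routine.
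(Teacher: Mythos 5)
The paper states this theorem with no proof at all --- it appears immediately after Conjecture~\ref{Conj:DEDHP-is-hard-in-Sylow-q} as a bare assertion --- so there is nothing in the text to compare your argument against. What you have written is the standard DDH-to-passive-security reduction transplanted to the entropoid setting, and in outline it is the right (indeed the only natural) argument: correctness from the palintropic identity of Theorem~\ref{Thm:Basic-theorem-for-risinng-to-power}, embedding of the DEDHP challenge $(U,V,W)$ into a guessed test session, self-simulation of all other sessions to answer reveal queries, and a $1/Q$ loss from the guessing step. You also correctly isolate the one genuinely delicate point, which the paper glosses over entirely: the ``ideal'' branch of DEDHP hands you $g_q^{(\mathbf{C},\mathfrak{b}_3)}$ for a uniformly random \emph{index}, and by the paper's own entropy analysis (Conjecture~\ref{Conj:Conjecture-even-b} and Appendix~B) the induced distribution on $\mathbb{E}^*_{q^2}$ is not uniform, so the CK ideal-key distribution must either be redefined to match or bounded in statistical distance. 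Your two proposed fixes are both reasonable, though the first (defining the ideal key to be sampled as $g_q^{(\mathbf{C},\mathfrak{b}_3)}$) changes the security statement rather than proving the usual one, and the second currently rests on an unproven conjecture about $H_2(\xi)$.

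One further mismatch deserves explicit treatment: the concrete protocol in the paper has Alice and Bob \emph{agree on a single odd base} $\mathfrak{b}$ and sample only $a\in\mathbb{Z}_p^*$ and a shape, whereas DEDHP samples $(\mathbf{A},\mathfrak{b}_1),(\mathbf{B},\mathfrak{b}_2)\xleftarrow{\mathbf{r}}\mathbb{L}$ with independent random bases. Your claim that ``the planted transcript is distributed identically to an honest one'' therefore does not hold verbatim; you need either to restate DEDHP with the base fixed to the protocol's agreed $\mathfrak{b}$, or to argue that the challenge distribution conditioned on the base equals the protocol's sampling. This is the same species of definitional misalignment you already flag for the key, and the same remedy applies, but it should be stated rather than absorbed into the word ``identically.'' With those two alignments made explicit, the reduction goes through and supplies a proof where the paper offers none.
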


We will make here a slight digression, and will relate the classical DLP with another problem over the classical group theory: finding roots. Then we will just translate it for the case of finite entropoids.
\begin{definition}[Computational Discrete Root Problem (CDRP)] A group $G$ of order $N$ and its generator $g$ are publicly known. Given $y = x^b$ and $b$, where $b \xleftarrow{\mathbf{r}} \mathbb{Z}_N$, compute $x = \sqrt[b]{y}$.
\end{definition}

In general, CDRP is an easy problem. However, there are instances where this problem is still hard, and we will discuss those instances now. 

One of the best generic algorithms for solving CDRP is by Johnston  \cite{conf/soda/Johnston99}. As mentioned there, CDRP can be reduced to solving the DLP in $G$, i.e. CDRP $\leq$ DLP. First of all $b$ is supposed to divide $N$, otherwise due to the cyclic nature of the group $G$, it is a straightforward technique that finds the $b$-th root: $x = \sqrt[b]{y} = y ^ {\frac{1}{b}\mod N}$. Let us denote $g_1 = g^b$, where $g$ is the generator of $G$. If we have a DLP solver for $G$, and if there exists a solution for the equation $y = g_1^a$, then DLP will find $a$ efficiently. Then we can compute $x$ as $x = g^a$. Johnston noticed that if $\frac{N}{b}$ is small, then DLP solver will be efficient with a complexity $O(\sqrt{\frac{N}{b}})$. On the other hand if  $\frac{N}{b}$ is not that small, Johnston made a reduction to  another DLP solver, by heavily using the reach algebraic structure of the finite cyclic groups generated by a gennerator $g$. The other DLP solver computes a discrete log of $x^{\frac{N}{b^k}}$ using the generator $g^\frac{N}{b^{k-1}}$, where $k$ is the largest power of $b$ such that $b^k$ still divides $N$. The  generic complexity of this DLP solver is $O((k-1)\sqrt{b})$. Now let us work in finite field $\mathbb{F}_p$ with the following prime number: $p = 2 q^3 + 1$ where $q$ has $\lambda$ bits. So, if fix the $b$-th root to be exactly $b = q$, then for the second DLP solver in the Johnson technique we have that $k = 3$, and it has an exponential complexity of $O((k-1)2^\frac{\lambda}{2})$.

%On the other hand the best algorithm for computing a $b$-th root in $\mathbb{F}_{q^m}$, where $q = p^d$ for some prime $p$ is by Barreto and Voloch in \cite{barreto2006efficient}. The complexity of their algorithm for certain choices of $m$ and $q$ is  $O((\log m + b \log q)m^2 \log^2 q) $. That complexity, adapted for the prime finite fields $\mathbb{F}_p$ where $\log p = \lambda$, that we use in this work, would be $O(b \lambda^3)$. If we ask a random $b \xleftarrow{\mathbf{r}} \mathbb{Z}_N$ to be from the interval $2^{\lambda - 1} \leq b \leq 2^\lambda$, we can say that the best known algorithms for computing the $b$-th roots have complexity $O(2^\lambda)$. 

%\begin{definition}[Computational Discrete Entropoid Root Problem (CDERP)] An entropoid $\mathbb{E}_{p^2}$ and a generator $g_{q_i}$ of one of its Sylow subquasigroups $\mathbb{E}^*_{\nu}$ are publicly known. Given $y = x^{(\mathbf{B}, \mathfrak{b})}$ and $(\mathbf{B}, \mathfrak{b})$, where $x, y \in \mathbb{E}^*_{\nu}$ and $(\mathbf{B},\mathfrak{b}) \xleftarrow{\mathbf{r}} \mathbb{L}$, compute $x = \sqrt[(\mathbf{B},\mathfrak{b})]{y}$. \end{definition}

\begin{definition}[Computational Discrete Entropoid Root Problem (CDERP)] An entropoid $\mathbb{E}_{p^2}$ and a generator $g$ of its multiplicative quasigroups $\mathbb{E}^*_{(p-1)^2}$ are publicly known. Given $y = x^{(\mathbf{B}, \mathfrak{b})}$ and $(\mathbf{B}, \mathfrak{b})$, where $x, y \in \mathbb{E}^*_{(p-1)^2}$ and $(\mathbf{B},\mathfrak{b}) \xleftarrow{\mathbf{r}} \mathbb{L}$, compute $x = \sqrt[(\mathbf{B},\mathfrak{b})]{y}$. \end{definition}

A similar discussion applies for CDERP that it is not harder than DELP, i.e., CDERP $\leq$ DELP. However, notice that it is not possible directly to extend the Johnson technique to finite entropoids due to the lack of the associative law and because $\mathbb{E}^*_{(p-1)^2}$ is not a cyclic structure. So, at this moment, we can make the following plausible conjecture.
\begin{conjecture}\label{Conj:CDERP-is-hard-in-Sylow-q}
	Let $\mathbb{E}_{p^2}$ be an entropoid, where $p = 2 q + 1$ is a safe prime with $\lambda$ bits and $g$ is the generator of its multiplicative quasigroups $\mathbb{E}^*_{(p-1)^2}$. Let $\mathcal{A}$ be an algorithm that solves CDERP in $\mathbb{E}^*_{(p-1)^2}$. Then the probability, over uniformly chosen $(\mathbf{B},\mathfrak{b}) \xleftarrow{\mathbf{r}} \mathbb{L}$ that  $\mathcal{A}(x^{(\mathbf{B}, \mathfrak{b})}, (\mathbf{B},\mathfrak{b}) ) = x$ is $\mathtt{negl}(\lambda)$.
\end{conjecture}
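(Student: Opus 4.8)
Because the statement is a computational hardness assumption rather than an unconditional claim, a self-contained proof is out of reach for the same reason it is for the classical discrete-logarithm assumption; the realistic goal is to convert it into a theorem that holds \emph{relative} to a better-studied assumption, or to prove an exponential lower bound in a suitable generic model of computation over the entropoid. The plan is to pursue both routes in parallel. As a preliminary observation, note that in $\mathbb{E}^*_{(p-1)^2}$ the index arithmetic (the Logarithmetic) is only implicitly defined (Open Problem~\ref{Open-problem:FindExplixitRulesForLogarithmetic}), so one cannot even ``divide out'' the exponent $(\mathbf{B},\mathfrak{b})$ as one would with $y^{1/b}$ in a cyclic group; this is the structural feature I would try to turn into a hardness guarantee.

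For the reduction route, the paper already gives the easy direction CDERP $\leq$ DELP, so I would look for a converse-flavoured reduction. Since root extraction in cyclic groups is genuinely easier than discrete log --- this is exactly what Johnston's algorithm exploits --- I do not expect a clean DELP $\leq$ CDERP reduction to exist; instead I would attempt to reduce the more specialized problem of inverting the exponentiation map of Definition~\ref{Def:Raising-to-power} on a \emph{fixed, public} shape $(\mathbf{B},\mathfrak{b})$ to CDERP, and argue that a base-recovering oracle would break the one-wayness underlying DELP. The only algebraic lever available in the non-associative setting is palintropy: Theorem~\ref{Thm:Basic-theorem-for-risinng-to-power} lets me transport an alleged root-finder along commuting exponentiations, so I would use it to re-randomize both the base and the shape of an instance and thereby relate worst-case to average-case hardness, which is the form the conjecture actually asserts.

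For the generic-model route, I would set up a Shoup/Maurer-style model in which the solver manipulates opaque handles for elements of $\mathbb{E}^*_{(p-1)^2}$ through oracle calls to $*$ and $\boxplus$, and bound the number of distinct elements it can synthesize after $t$ queries. Hardness would then follow from showing that the target base is information-theoretically hidden: by the Narayana partitioning of Theorem~\ref{Thm:Equivalent-Classes} and the collision-entropy estimate $H_2 \approx \lambda/2$ recorded for odd bases in the dichotomy subsection, a random shape leaves on the order of $2^{\lambda/2}$ near-equiprobable pre-images, so no query-bounded strategy can pin down a correct root except with probability about $t \cdot 2^{-\lambda/2}$. Because a solver need only output \emph{one} element of this pre-image set, the bound must be phrased in terms of the whole fibre of the exponentiation map, which is precisely where the Narayana statistics enter quantitatively.

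The main obstacle --- and the reason this is stated as a conjecture rather than proved --- is that $*$ is not a black box. Definition~\ref{Def:OperationStar} presents it as an explicit degree-two polynomial map over $\mathbb{F}_p$, and $\boxplus$ turns $(G,\boxplus,*)$ into a ringoid with a full distributive law, so a real solver can write $x^{(\mathbf{B},\mathfrak{b})} = y$ as a polynomial system in the two unknown coordinates of $x$ and try to solve it by resultants or Gr\"obner bases; ruling this out unconditionally is exactly the gap, and any generic-model bound is vacuous against it. Consequently the crux of an honest proof is to show that the non-associative bracketing injects enough combinatorial structure that the solving degree of this system grows with the shape length $k$ for random shapes. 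I would attack this first on the concrete family $p = 2q+1$, trying to prove a degree lower bound tied to the Narayana growth of Theorem~\ref{Thm:Equivalent-Classes}, before attempting the general case, and I would treat the resulting algebraic claim as the true technical heart on which the negligibility bound $\mathtt{negl}(\lambda)$ ultimately rests.
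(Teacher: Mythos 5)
The statement you were asked to prove is labelled a \emph{conjecture} in the paper, and the paper in fact offers no proof of it. Its entire supporting case consists of three informal remarks: the one-directional reduction CDERP $\leq$ DELP, the observation that Johnston's root-extraction technique does not transfer because $\mathbb{E}^*_{(p-1)^2}$ is neither associative nor cyclic, and the absence of any developed Logarithmetic for the succinct power indices. Your refusal to manufacture a ``proof'' of a computational hardness assumption is therefore the correct call, and your proposal is, if anything, substantially more careful than the paper's own discussion. The two concrete routes you sketch (a reduction relative to DELP exploiting palintropy for worst-case to average-case re-randomization, and a Shoup-style generic-model lower bound keyed to the Narayana/collision-entropy statistics) are both reasonable and neither appears in the paper. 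Most importantly, your closing paragraph isolates exactly the right weakness: the operation $*$ of Definition~\ref{Def:OperationStar} is not a black box but an explicit degree-two polynomial map over $\mathbb{F}_p$ sitting inside a distributive ringoid, so any generic-model argument is vacuous against a solver that writes $x^{(\mathbf{B},\mathfrak{b})}=y$ as a small polynomial system in the two coordinates of $x$ and attacks it algebraically. The paper never confronts this, yet it is the decisive issue --- the conjectured negligibility bound stands or falls on whether the solving degree of that system actually grows with the shape length, and nothing in the paper (the Narayana partitioning of Theorem~\ref{Thm:Equivalent-Classes} included) establishes that it does. In short: there is no gap in your reasoning relative to what the paper proves, because the paper proves nothing here; your analysis correctly identifies both why the claim cannot currently be proved and where an attack is most likely to come from.
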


We want to emphasize one essential comparison between CDRP in cyclic groups $G$ of order $N$ and CDERP in the multiplicative quasigroups $\mathbb{E}^*_{(p-1)^2}$. CDERP is easy problem for almost all root values $b$ except when $b = q$ in groups that have orders $N$ divisible by $q^k$ where $k \geq 2$. CDERP is conjectured that is hard in $\mathbb{E}^*_{(p-1)^2}$ (that has order $4 q^2$) for every randomly selected root $(\mathbf{B},\mathfrak{b}) \xleftarrow{\mathbf{r}} \mathbb{L}$. The conjecture is based on the fact that currently, there is no developed Logarithmetic for the succinct power indices, but more importantly, that $\mathbb{E}^*_{(p-1)^2}$ is neither a group nor a cyclic structure.

%\begin{conjecture}\label{Conj:CDERP-is-hard-in-Sylow-q}	Let $\mathbb{E}_{p^2}$ be an entropoid, where $p = 2 q + 1$ is a safe prime and $g_{q}$ is the generator of its Sylow $q$-subquasigroup $\mathbb{E}^*_{q^2}$. Then there is no algorithm $\mathcal{A}$ that solves CDERP in $\mathbb{E}^*_{q^2}$ .... \end{conjecture}

Continuing with the comparisons, let us now compare DELP with the classical DLP in finite groups or in finite fields. Several differences are notable:
\begin{enumerate}
	\item Operations for DELP are non-associative and non-commutative operations in an entropic quasigroup $(\mathbb{E}^*_{\nu}, *)$. At the same time, DLP is exclusively defined in groups $G$ that are mostly commutative (there are also DLPs over non-commutative groups, such as the isogenies between elliptical curves defined over the finite fields).
	\item All generic algorithms for solving DLP, exclusively without exceptions, use the fact that the group $G$ is cyclic of order $N$, generated by some generator element $g$ and that for every element $y \in G$ there is a unique index $i \in \{1, \ldots, N\}$ such that $y = g^i$ (in multiplicative notion). In DELP, there are generators for the multiplicative quasigroup $\mathbb{E}^*_{(p-1)^2}$ which has order $(p-1)^2$, but $\mathbb{E}^*_{(p-1)^2}$ is not a cyclic structure since for every $y \in \mathbb{E}^*_{(p-1)^2}$ there are many indices $(\mathbf{A}, \mathfrak{b})$ such that $y = g^{(\mathbf{A}, \mathfrak{b})}$, and finding only one of them will solve the DELP. Thus, at first sight, it might seem DELP is an easier task than DLP. However, with Proposition \ref{Prop:DLP-reduces-to-DELP} (given below), we show that DLP is no harder than DELP, i.e., DLP $\leq$ DELP. 
	\item We can take a conservative approach for modeling the complexity of solving DELP, and assume that eventually, an arithmetic (logarithmetic) for the power indices in finite entropoids will be developed (Open problem \ref{Open-problem:FindExplixitRulesForLogarithmetic}). In that case, an adaptation of the generic algorithms for solving DLP, such as Baby-step giant-step, Pollard rho, Pollard kangaroo, or Pohlig–Hellman for solving DELP, will address a problem with a search space size $N \approx p^2$. Since the complexity of a generic DLP algorithm is $O(\sqrt{N})$ we get that solving DELP with classical algorithms could possibly reach a complexity as low as $O(p)$. Extending this thinking for potential quantum algorithms that will solve DELP, we estimate that their complexity could potentially be as low as $O(p^{1/2})$.
\end{enumerate}

\begin{proposition}\label{Prop:DLP-reduces-to-DELP}
	Let $p = 2 q + 1$ is a safe prime number with $\lambda$ bits, and let $\mathbb{E}_{p^2}(a_3, a_8, b_2, b_7)$ is a given entropoid. If $\mathcal{A}$ is an efficient algorithm that solves DELP, then there exist an efficient algorithm $\mathcal{B}$  that solves DLP for every subgroup $\Gamma$ of $\mathbb{F}_p^*$.
\end{proposition}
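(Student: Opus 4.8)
The plan is to expose a hidden linear structure inside the entropoid and then to encode a field discrete-logarithm instance into a \emph{DELP} instance whose solution reveals the field logarithm through the integer part $a$ of the returned power index. First I would rewrite the operation of Definition \ref{Def:OperationStar} in the shifted and scaled coordinates $\hat x_1 = b_7\!\left(x_1 + \tfrac{a_3}{a_8}\right)$, $\hat x_2 = a_8\!\left(x_2 + \tfrac{b_2}{b_7}\right)$. Factoring $P_1$ and $P_2$ shows that in these coordinates the product collapses to the twisted monomial map $(\hat x_1,\hat x_2)*(\hat y_1,\hat y_2) = (\hat x_2\hat y_1,\ \hat x_1\hat y_2)$ on $(\mathbb{F}_p^*)^2$, with the multiplicative zero $\mathbf{0}_*$ sent to $(0,0)$ and $\mathbb{E}^*_{(p-1)^2}$ sent onto $(\mathbb{F}_p^*)^2$. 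Fixing a primitive root $r$ of $\mathbb{F}_p$ and passing to discrete logarithms $U=\log_r\hat x_1$, $V=\log_r\hat x_2$ turns $*$ into the linear rule $(U_1,V_1)*(U_2,V_2)=(V_1+U_2,\ U_1+V_2)$ over $\mathbb{Z}_{p-1}^2$. This isotopy is the crux of the whole argument. The map $\phi(x)=(\hat x_1,\hat x_2)$ is an efficiently computable affine bijection, whereas the logarithm enters only in the correctness analysis and is never evaluated by the reduction itself, so no circularity arises.

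Next I would prove, by induction on the bracketing tree using the linear rule above, that for any base with log-coordinates $(G_1,G_2)$ and any power index $\mathbf{A}=(a,a_s)$ the value $g^{\mathbf{A}}$ has log-coordinates $(\alpha G_1+\beta G_2,\ \beta G_1+\alpha G_2)$ for nonnegative integers $\alpha,\beta$ with $\alpha+\beta=a$; the split $(\alpha,\beta)$ depends on $a_s$, which recovers the $a-1$ Narayana classes of Theorem \ref{Thm:Equivalent-Classes}. The symmetric form is preserved under $*$ because $(V_L+U_R,\ U_L+V_R)$ maps $((\alpha_L,\beta_L),(\alpha_R,\beta_R))$ to $(\beta_L+\alpha_R,\ \alpha_L+\beta_R)$ in each coordinate. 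Consequently the sum of the two log-coordinates of $g^{\mathbf{A}}$ equals $a\,(G_1+G_2)$, so $a \bmod \mathrm{ord}(g)$ is an invariant of the value $g^{\mathbf{A}}$ that does not depend on which bracketing a solver happens to return; moreover $a$ is read directly off the succinct index, consistently with Proposition \ref{Prop:Addition-and-multiplication-of-integer-parts}.

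For the reduction $\mathcal{B}$, since $p-1=2q$ the only nontrivial case is $\Gamma=QR(p)$ of order $q$; \emph{DLP} in all of $\mathbb{F}_p^*$ then follows by Pohlig--Hellman/CRT from the trivial order-$2$ part and the order-$q$ part. Given a generator $\gamma$ of $QR(p)$ and a target $h=\gamma^n$, I would pick $c$ with $c\not\equiv\pm1\pmod q$, set the hat-coordinates $\hat g_q=(\gamma,\gamma^{c})$ and $\hat y=(h,h^{c})$, translate them to entropoid coordinates by $\phi^{-1}$, and feed $(\mathbb{E}_{p^2},g_q,y)$ to $\mathcal{A}$. The swap matrix $\left(\begin{smallmatrix}1&c\\ c&1\end{smallmatrix}\right)$ has determinant $1-c^2\not\equiv0\pmod q$, which guarantees that $g_q$ genuinely generates the Sylow $q$-subquasigroup $\mathbb{E}^*_{q^2}$, so the instance is well-formed and (as $g_q$ is a generator) $y\in\langle g_q\rangle$ admits a power index. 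When $\mathcal{A}$ returns $(\mathbf{A},\mathfrak{b})=(a,a_s,\mathfrak{b})$, the coordinate-sum invariant gives $a\,(1+c)\log_r\gamma \equiv (1+c)\,n\log_r\gamma \pmod{q}$; since $1+c$ and $\log_r\gamma$ are invertible modulo $q$, this forces $a\equiv n\pmod q$, and $\mathcal{B}$ outputs $n=a\bmod q$. Forming $\gamma^{c},h^{c}$, applying $\phi^{-1}$, one call to $\mathcal{A}$, reading $a$, and the final CRT step are all polynomial time.

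The substantive obstacle is the first step: recognizing and verifying that the rigid-looking nonlinear, non-associative operation is isotopic to a linear structure over $\mathbb{Z}_{p-1}$. Once that is in hand the reduction is essentially the remark that, although $\mathbb{E}^*_{(p-1)^2}$ is neither cyclic nor has unique power indices, every index representing a fixed target shares the same integer part modulo the order, and that integer part is exactly a field exponent. A secondary point to treat carefully is the degenerate small case $q=3$ (forcing $c\equiv\pm1$), which affects only fixed-size instances and not the asymptotic claim.
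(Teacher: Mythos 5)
Your proof is correct in its essentials, but it takes a genuinely different and considerably more ambitious route than the paper. The paper's proof is a two-line specialization: it instantiates $\mathcal{A}$ on the degenerate entropoid $\mathbb{E}_{p^2}(a_3=0,a_8=1,b_2=0,b_7=1)$, where $*$ is literally $(x_2y_1,x_1y_2)$, and feeds it the \emph{diagonal} element $g=(\gamma,\gamma)$, for which every bracketing of every power collapses to $(\gamma^a,\gamma^a)$; the returned integer part is the field logarithm. You instead keep the given parameters $(a_3,a_8,b_2,b_7)$, exhibit the explicit affine isotopy $\hat x_1=b_7(x_1+a_3/a_8)$, $\hat x_2=a_8(x_2+b_2/b_7)$ onto the twisted product (your factorization of $P_1,P_2$ does check out), and then extract $a \bmod q$ from the coordinate-sum invariant $\alpha+\beta=a$ of the off-diagonal generator $(\gamma,\gamma^c)$. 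What your version buys is real: it proves the statement as written (for the \emph{given} entropoid, not a conveniently chosen one), and it hands $\mathcal{A}$ an instance whose base point genuinely generates the Sylow $q$-subquasigroup, as the DELP definition demands -- whereas the paper's diagonal $(\gamma,\gamma)$ generates only an order-$|\Gamma|$ diagonal subgroupoid, so the paper is implicitly assuming $\mathcal{A}$ works on non-Sylow instances and on a degenerate parameter set. The cost is that you must carry two lemmas the paper avoids (the isotopy and the $(\alpha,\beta)$-invariant), and two of your side claims are asserted rather than fully proved: that the achievable pairs $(\alpha,\beta)$ with $1\le\alpha\le a-1$ cover $\mathbb{Z}_q^2$ (needed for $c\not\equiv\pm1$ to certify a Sylow generator), and that $y$ admits an index inside the succinct set $\mathbb{L}$ that $\mathcal{A}$ expects -- though the latter gap is shared by the paper's own argument.
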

\begin{proof}
	Let us use the algorithm $\mathcal{A}$ for an entropoid $\mathbb{E}_{p}(a_3 = 0, a_8 = 1, b_2 = 0, b_7 = 1)$. In that entropoid the operation $*$ becomes $$(x_1, x_2) * (y_1, y_2) = (x_2 y_1, x_1 y_2) .$$ 
	Let $\Gamma \subseteq \mathbb{F}_p^*$ is a nontrivial subgroup. Then, since $p = 2 q + 1$ where $q$ is a prime number, $\Gamma$ is either the quadratic residue group $\Gamma = QR(p)$ of order $q$ or $\Gamma = \mathbb{F}_p^*$ of order $(p-1)$. Let $\gamma$ be a generator of $\Gamma$. Apparently $\gamma \neq 0$ i.e. $\gamma \neq  -\frac{a_3}{a_8}$ and $\gamma \neq -\frac{b_2}{b_7}$. Then from Proposition \ref{Prop:Fields-Entropoid-connection-via-subgroups-in-F} it follows that $g = (\gamma, \gamma)$ is a generator of some subgroupoid  $(\mathbb{E}^*_{\nu}, *)$,  and the operation of exponentiation of $g$ in the entropoid, reduces to exponentiation in a finite field i.e. $$g^{(\mathbf{A}, b)} = g^{(a, a_s, b)} = (\gamma^a, \gamma^a).$$
	Thus, for every received $\gamma^a$, the algorithm $\mathcal{B}$ constructs the pair $(\gamma^a, \gamma^a)$ and asks the algorithm $\mathcal{A}$ to solve it. $\mathcal{A}$ solves it efficiently and returns $(\mathbf{A}, \mathfrak{b})$, from which $\mathcal{B}$ extracts the discrete logarithm $a$.
\end{proof}

So, in its generality, and currently without the arithmetic for the succinct power indices defined with Definition \ref{Def:Power-indices}, the best algorithms for solving DELP are practically the generic algorithms for random function inversion, i.e., the generic guessing algorithms. Two of them are given as Algorithm \ref{Alg:DELP-Solver} and Algorithm \ref{Alg:DELP-Solver-Iterative}.
\begin{algorithm}
	\small
	\caption{Randomized search solver for (DELP)
		\newline
		\textbf{Input:} Entropoid $\mathbb{E}_{p^2}$, generator $g$ of $\mathbb{E}_{(p-1)^2}^*$ and $y \in \mathbb{E}_{(p-1)^2}^*$.
		\newline
		\textbf{Output:}$(\mathbf{A}, \mathfrak{b})$ such that $y = g^{(\mathbf{A}, \mathfrak{b})}$.}
	\begin{algorithmic}[1]
		\Repeat
		\State Set $(\mathbf{A}, \mathfrak{b})  \xleftarrow{\mathbf{r}} \mathbb{L}$, where $\mathfrak{b} \geq 3$;
		\Until{$y = g^{(\mathbf{A}, \mathfrak{b})}$}
		\State Return $(\mathbf{A}, \mathfrak{b})$.
	\end{algorithmic}\label{Alg:DELP-Solver}
\end{algorithm}

\begin{algorithm}
	\small
	\caption{Brute force search solver for (DELP)
		\newline
		\textbf{Input:} Entropoid $\mathbb{E}_{p^2}$, generator $g$ of $\mathbb{E}_{p^2}^*$ and $y \in \mathbb{E}_{p^2}^*$.
		\newline
		\textbf{Output:}$(\mathbf{A}, \mathfrak{b})$ such that $y = g^{(\mathbf{A}, \mathfrak{b})}$.}
	\begin{algorithmic}[1]
		\State Set $\mathfrak{b} = 2 \mathfrak{b}' + 1$, and $\mathfrak{b} \leq b_{max}$;
		\For{$a = 2$ \texttt{to} $(p-1)^2$}
			\For{$a_s \in \mathbb{L}[a, \mathfrak{b}]$}
				\If{$y = g^{(a, a_s, \mathfrak{b})}$}
					\State Return $(\mathbf{A}, \mathfrak{b}) = (a, a_s, \mathfrak{b})$.
				\EndIf
			\EndFor	 
		\EndFor	
	\end{algorithmic}\label{Alg:DELP-Solver-Iterative}
\end{algorithm}

\subsection{DELP is secure  against Shor's quantum algorithm for DLP}\label{Sec:DELP-i-secure-against-quantum-attackers}
Shor's quantum algorithm breaks algorithms that rely on the difficulty of DLP defined over finite commutative groups. One of Shor'salgorithm's crucial components is the part of its quantum circuit that calculates the modular arithmetic for raising $g$ to any power, with the repeated squaring. That part of the Shor's quantum circuit for the repeated squaring works if the related group multiplication operation is associative and commutative. There are no variants of Shor's algorithm or any other quantum algorithm that will work if the underlying algebraic structure is non-commutative. Additionally, DELP is defined over entropoids that are both non-associative and non-commutative.

A designer of a quantum algorithm for solving DELP faces two challenges:
\begin{enumerate}
	\item Build quantum circuits that implement non-commutative operations of multiplication $*$.
	\item Build quantum circuits that perform an unknown pattern of non-associative and non-commutative multiplications $*$, where the number of possible patterns is exponentially high.
\end{enumerate}

We have to note that if the used base is $\mathfrak{b}=2$, the bracketing pattern is known, and there is a possibility to "reuse" the Shor's circuit. However, as we see from Corollary \ref{Cor:Probability-To-Be-Power-Of-2} the probability that the answer from that circuit will be correct is $\frac{1}{q}$. Thus, for $q$ being 128 or 256 bits, it would be a very inefficient quantum algorithm.

\section{Concrete instances of Entropoid Based Key Exchange and Digital Signature Algorithms}

\subsection{Choosing Parameters For a Key Exchange Algorithm Based on DELP}
Based on the discussion in Section \ref{Sec:Hard-problems} for achieving post-quantum security levels of $2^{64}$ and $2^{128}$ qubit operations we propose finite entropoids $\mathbb{E}_{p^2}$ to use safe prime numbers $p$ with 128 and 256 bits. For estimating the number of $*$ operations for performing one power operation, we use the equation (\ref{Eq:Upper-bound-number-of-multiplications-in-power}). We see that the number depends on the odd base $\mathfrak{b}$. Additionally, $*$ operation with a pre-computation of expressions that involve $ a_3, a_8, b_2, b_7$ can be computed with six modular additions and six modular multiplications in $\mathbb{F}_p$. The expected number of modular operations and the total communication cost for two security levels are given in Table \ref{Tab:Efficiency-128-256}. We can see that the number of modular operations increases with $\mathfrak{b}$, while the communication costs in both directions in total are 64 and 128 bytes, respectively. 

A formal description of an unauthenticated Diffie-Hellman protocol over finite entropoids is given as follows:  
\begin{description}
	\item[Agreed public parameters]\ \\
	\vspace{-0.7cm}
	\begin{enumerate}
		\item Alice and Bob agree on parameters for $\mathbb{E}_{p^2}( a_3, a_8, b_2, b_7)$, where $p = 2 q + 1$ is a prime number with a size of $\lambda = 128$ or $\lambda = 256$ bits, $q$ is also a prime number, the values $a_3, a_8, b_2, b_7 \in \mathbb{F}_p$ and operations for non-commutative and non-associative multiplication and exponentiation are defined as in Definition \ref{Def:OperationStar} and Definition \ref{Def:Raising-to-power}
		\item Alice and Bob agree on the generator $g_q = \mathtt{GenQ}(\lambda, p, a_3, a_8, b_2, b_7)$
		\item Alice and Bob agree on odd base $\mathfrak{b}$
	\end{enumerate}
	\item[Ephemeral key exchange phase]\ \\
	\vspace{-0.7cm}
	\begin{enumerate}
		\item Alice generates a random power index $(\mathbf{A}, \mathfrak{b}) = (a, a_s, \mathfrak{b})$ where $a \in \mathbb{Z}^*_{p}$
		\item Alice computes $K_a = g_q^{(\mathbf{A}, \mathfrak{b})}$ and sends it to Bob
		\item Bob generates a random power index $(\mathbf{B}, \mathfrak{b}) = (b, b_s, \mathfrak{b})$ where $b \in \mathbb{Z}^*_{p}$
		\item Bob computes $K_b = g_q^{(\mathbf{B}, \mathfrak{b})}$ and sends it to Alice
		\item Alice computes $K_{ab} = K_b^{(\mathbf{A}, \mathfrak{b})}$
		\item Bob computes $K_{ba} = K_a^{(\mathbf{B}, \mathfrak{b})}$
		\item $K_{ab} = K_{ba}$.
	\end{enumerate}
\end{description}

\begin{table}[htbp]
  \centering
  \includegraphics{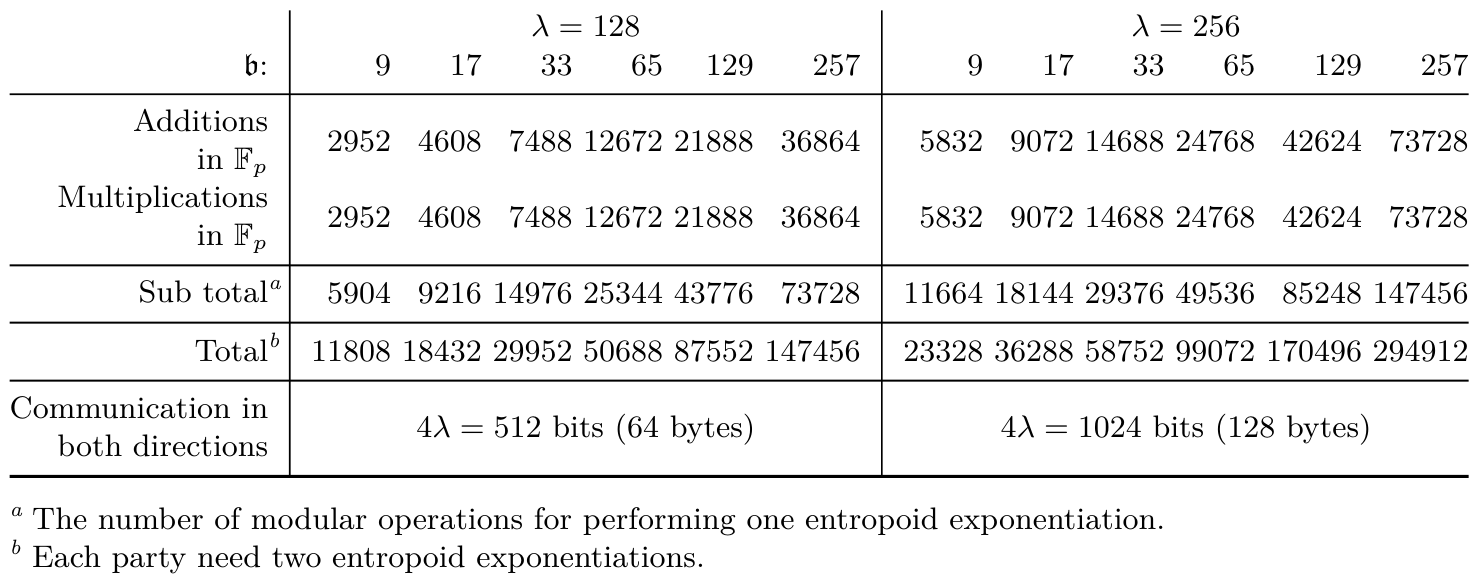}
  \caption{Some efficiency metrics for entropoid based Diffie-Hellman key exchange.}
  \label{Tab:Efficiency-128-256}%
\end{table}%

\subsection{Digital Signature Scheme Based on CDERP}
For defining a digital signature over finite entropoids, let us first fix the base $\mathfrak{b} = 257$. The reason for this is the fact that in that case for any power index $(\mathbf{B}, 257) = (b, b_s, 257)$, the bracketing pattern part $b_s = [P_0, \ldots P_k]_{256}$ has a very convenient interpretation as a list of bytes. \textbf{Remark:} Choosing a base $\mathfrak{b}=257$ might be too conservative and expensive - making the computations of exponentiation very slow. All proposed algorithms in this and the next sub-section can be carried out with $\mathfrak{b}=17$, which is also convenient since the bracketing patterns, in that case, become 4-bit nibbles. However, since this is the first introduction of a new signature scheme based on a new hardness problem, we propose more conservative parameters.

In the rest of this sub-section for the power indices instead of writing $(\mathbf{B}, 257)$ we will omit the base part and will simply write the bold letter $\mathbf{B}$ meaning $\mathbf{B} = (b, b_s, 257)$. We also assume that the public parameters $\mathbb{E}_{p^2}( a_3, a_8, b_2, b_7)$, where $p = 2 q + 1$ is a prime number with a size of $\lambda \in \{128, 192, 256\}$ bits, $q$ is also a prime number and the values $a_3, a_8, b_2, b_7 \in \mathbb{F}_p$ are known and fixed.

%We will use the NIST standardized cryptographic hash functions $\mathtt{SHAXXX}$ where $\mathtt{XXX} \in \{256, 384, 512\}$ and we will use the following notation. Let $M$ be any message, and let $\mathtt{SHAXXX}(M) = h_1 || h_2$ where $|h_1| = |h_2| = \frac{\mathtt{XXX}}{2}$ bits. We define $truncate_\mathbb{L}(\mathtt{SHAXXX}(M)) = (h, h_s)$ where $h = h_1$ and the sequence of bits of $h_1$ are interpreted as a little-endian encoding for a number $h \in \mathbb{Z}_{2^{\frac{\mathtt{XXX}}{2}}}$. We partition the bits of $h_2$ in a list of $k_{max} \in \{32, 48, 64\}$ nibbles $[P_0, \ldots P_{k_{max}}]_{16}$. Then we compute $k = \log_{17}h - 1$. Finally we interpret $h_s$ as a list of the first $k+1$ nibbles $h_s = [P_0, \ldots P_k]_{16}$. If $k < k_{max} - 1$ we truncate the remaining nibbles. With this we defined a mapping $Hash_{\mathtt{XXX}} :  \{0, 1\}^* \mapsto \mathbb{L}$ 
%\begin{equation}
%	Hash_{\mathtt{XXX}} (M) = truncate_\mathbb{L}(\mathtt{SHAXXX}(M)) = (h, h_s).
%\end{equation}

We will use the NIST standardized cryptographic hash function $\mathtt{SHAXXX}$ where $\mathtt{XXX} \in \{256, 384, 512\}$ and will use the following notation. Let $M$ be any message, and let $\mathtt{SHAXXX}(M) = h_1 || h_2$ where $|h_1| = |h_2| = \frac{\mathtt{XXX}}{2}$ bits. We define $truncate_\mathbb{L}(\mathtt{SHAXXX}(M)) = (h, h_s)$ where $h = h_1$ and the sequence of bits of $h_1$ are interpreted as a little-endian encoding for a number $h \in \mathbb{Z}_{2^\frac{\mathtt{XXX}}{2}}$. We partition the bits of $h_2$ in a list of $k_{max} \in \{16, 24, 32\}$ bytes $h_2 =  [P_0, \ldots P_{k_{max}-1}]_{256}$. Then we compute $k = \log_{257}h - 1$. Finally we interpret $h_s$ as a list of the first $k+1$ bytes $h_s = [P_0, \ldots P_k]_{256}$. If $k < k_{max} - 1$ we truncate the remaining bytes. The set of all such power indices  $\mathbb{L}[h, 257]$ is shortly denoted as $\mathbb{L}_{\mathtt{257}}$. With this we defined a mapping $Hash_{\mathtt{XXX}} :  \{0, 1\}^* \mapsto \mathbb{L}_{\mathtt{257}}$ as follows:
\begin{equation}
	Hash_{\mathtt{XXX}} (M) = truncate_\mathbb{L}(\mathtt{SHAXXX}(M)) = (h, h_s, 257).
\end{equation}

The signature scheme is designed by a Fiat-Shamir transformation \cite{fiat87how} of an identification scheme, and it looks similar to Schnorr identification scheme \cite{Schnorr1990}, but the security is based on the hardness of CDERP. 

\textbf{Note:} In this first version of the paper, we are not giving clear formal proof of the scheme's security in the EUF-CMA security model (Existential Unforgeability under Chosen Message Attack). That proof will be given either as a separate work or in the updated versions of this paper. Instead, we are giving here an initial discussion about the security properties of the proposed signature scheme.

Let us call the key generation algorithm $\mathtt{GenKey}$. It is given in Algorithm \ref{Alg:GenKey}. We assume that a generator $g$ for the quasigroup $\mathbb{E}^*_{(p-1)^2}$ is publicly known and standardized and that a power index $\mathbf{B} =  Hash_{\mathtt{XXX}} (\mathtt{``abc"})$ is also predetermined and fixed. The message $\mathtt{``abc"}$ can be any string such as \texttt{``This is a seed message for fixing the value of the public root."}

\begin{algorithm}
	\small
	\caption{$\mathtt{GenKey}$: Generate a key pair $(\mathtt{PrivateKey}, \mathtt{PublicKey})$.
		\newline
		\textbf{Input:} ;
		\newline
		\textbf{Output:} $(\mathtt{PrivateKey}, \mathtt{PublicKey})$.}
	\begin{algorithmic}[1]
		\State Set $x \xleftarrow{\mathbf{r}} \mathbb{E}^*_{(p-1)^2}$
		\State Set $\mathtt{PrivateKey} = x$
		\State Set $y = x^{\mathbf{B}}$
		\State Set $\mathtt{PublicKey} = y$
		\State Return $(\mathtt{PrivateKey}, \mathtt{PublicKey})$
	\end{algorithmic}\label{Alg:GenKey}
\end{algorithm}

Let us now describe the following identification scheme:

\begin{figure}[h]
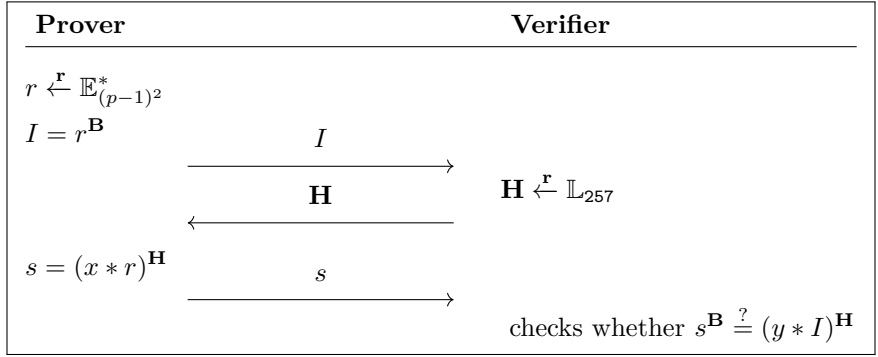

	\begin{center}
		\fbox{
			\pseudocode{%
				\textbf{ Prover} \< \< \textbf{ Verifier}  \\[0.1\baselineskip][\hline]
				\<\< \\[-0.5\baselineskip]
				r \xleftarrow{\mathbf{r}} \mathbb{E}^*_{(p-1)^2} \< \< \\
				I = r^{\mathbf{B}} \<\< \\[-4ex]
				\< \sendmessageright*{I} \< \\[-2ex]
				\<\< \mathbf{H} \xleftarrow{\mathbf{r}} \mathbb{L}_{\mathtt{257}} \\[-4ex]
				\< \sendmessageleft*{\mathbf{H}} \< \\
				s = (x * r)^\mathbf{H} \<\< \\[-4ex]
				\< \sendmessageright*{s} \< \\[-2ex]
				\<\< \text{ checks whether } s^\mathbf{B} \stackrel{?}{=} (y * I)^\mathbf{H}
			}
		}
	\end{center}
	\caption{An ID scheme based on the hardness of CDERP}\label{Fig:ID-scheme-based-on-CDERP}
\end{figure}

Since $I$ is a uniformly random element from $ \mathbb{E}^*_{(p-1)^2}$, and $\mathbf{H}$ is a uniformly random element from  $\mathbb{L}_{\mathtt{257}}$, the distribution of $s$ is also uniformly random. Thus, an attacker can simulate the transcripts of honest executions by randomly producing triplets $(I, \mathbf{H}, s)$, without a knowledge of the private key. However, since $s = \sqrt[\mathbf{B}]{(y * I)^\mathbf{H}}$, if the produced transcripts are verified and true, it implies that the attacker can compute the $\mathbf{B}$-th root, i.e., can solve the CDERP. From this discussion, we give (without proof) the following Theorem:

\begin{theorem}
	If the computational discrete entropoid root problem is hard in $ \mathbb{E}^*_{(p-1)^2}$, then the identification scheme given in Figure \ref{Fig:ID-scheme-based-on-CDERP} is secure.
\end{theorem}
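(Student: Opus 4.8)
The plan is to treat the protocol in Figure \ref{Fig:ID-scheme-based-on-CDERP} as a canonical three-move (commit--challenge--response) identification scheme and to establish the three properties that, by the standard framework for such schemes, together imply security against impersonation: completeness, special honest-verifier zero-knowledge, and a knowledge-extraction (special soundness) property whose extracted object is exactly a $\mathbf{B}$-th root of the public key $y$. Since the witness for the statement ``$y$ lies in the image of $x \mapsto x^{\mathbf{B}}$'' is precisely an $x$ with $x^{\mathbf{B}} = y$, extracting a witness is by definition solving CDERP on $y$; thus a successful impersonator is converted into a CDERP solver, contradicting the hardness assumption.

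Completeness is a direct computation with the entropic identities. Writing $u = x * r$, the response satisfies $s^{\mathbf{B}} = \big((x*r)^{\mathbf{H}}\big)^{\mathbf{B}}$; by the palintropic identity of Theorem \ref{Thm:Basic-theorem-for-risinng-to-power} this equals $(x*r)^{\mathbf{B}\mathbf{H}} = \big((x*r)^{\mathbf{B}}\big)^{\mathbf{H}}$, and by the power-of-product rule of Theorem \ref{Thm:Basic-theorem-for-powers-in-entropic-groupoids} we have $(x*r)^{\mathbf{B}} = x^{\mathbf{B}} * r^{\mathbf{B}} = y * I$, so $s^{\mathbf{B}} = (y*I)^{\mathbf{H}}$ and the verifier accepts. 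For honest-verifier zero-knowledge I would exhibit a perfect simulator that never touches $x$: sample $u \xleftarrow{\mathbf{r}} \mathbb{E}^*_{(p-1)^2}$, use the quasigroup solvability of $(\mathbb{E}^*_{(p-1)^2},*)$ to solve $y * I = u^{\mathbf{B}}$ for the unique $I$, receive (or sample) the challenge $\mathbf{H}$, and output $s = u^{\mathbf{H}}$. The triple $(I,\mathbf{H},s)$ passes verification by the same computation as above, and its distribution is identical to an honest transcript: in a real run $u = x*r$ is uniform whenever $r$ is (left multiplication by $x$ is a bijection of the quasigroup), and the honest $I = r^{\mathbf{B}}$ satisfies the same equation $y*I = u^{\mathbf{B}}$ that uniquely determines the simulated $I$. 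Hence eavesdropping transcripts reveal nothing about the private key.

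The remaining and decisive step is soundness. Here I would embed a CDERP instance $(y,\mathbf{B})$ as the public key, run the impersonator, and rewind it at the challenge step to obtain two accepting transcripts $(I,\mathbf{H}_1,s_1)$ and $(I,\mathbf{H}_2,s_2)$ with $\mathbf{H}_1 \neq \mathbf{H}_2$, so that $s_1^{\mathbf{B}} = (y*I)^{\mathbf{H}_1}$ and $s_2^{\mathbf{B}} = (y*I)^{\mathbf{H}_2}$, and attempt to distill an element $w$ with $w^{\mathbf{B}} = y$. This is exactly where the main obstacle lies. In the group (Schnorr) analogue one cancels the commitment and inverts the difference of challenges to recover the witness; both moves are unavailable here, because $\mathbb{E}^*_{(p-1)^2}$ is non-associative, non-commutative and \emph{not} cyclic, and because no arithmetic (logarithmetic) for the succinct power indices $\mathbf{H}_1,\mathbf{H}_2$ is yet known --- this is precisely Open Problem \ref{Open-problem:FindExplixitRulesForLogarithmetic}. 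Consequently the two equations yield at best a $\mathbf{B}$-th root of the adversarially chosen $v = y*I$ rather than of $y$ itself, and one must additionally control how $I$ relates to $y$ (for instance by forcing $I$ to carry a known $\mathbf{B}$-th-power structure, as in the simulator) in order to transfer the extracted root from $v$ back to $y$.

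I therefore expect the bulk of the effort to be this extraction argument: showing that any prover answering a noticeable fraction of challenges for a fixed commitment must implicitly compute $\mathbf{B}$-th roots, so that the rewound pair of transcripts yields a CDERP solution. Once such an extractor is in place, the reduction to CDERP and the conclusion ``secure'' follow from the generic impersonation-security theorem combined with the perfect zero-knowledge established above. Given the present absence of a logarithmetic, I would fall back, if necessary, either to proving extraction only for the restricted commitments a simulator-style reduction can enforce, or to stating security relative to an explicit knowledge-of-root assumption, leaving the unconditional extractor as the key remaining point.
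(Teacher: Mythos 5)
You should first be aware that the paper does not actually prove this theorem: it is stated explicitly ``without proof,'' preceded only by an informal discussion claiming that transcripts are simulatable and that producing a verifying response amounts to computing a $\mathbf{B}$-th root of $(y*I)^{\mathbf{H}}$. Your proposal therefore goes considerably further than the paper. Your completeness computation is correct and matches the paper's own correctness check for the derived signature scheme ($s^{\mathbf{B}}=((x*r)^{\mathbf{H}})^{\mathbf{B}}=((x*r)^{\mathbf{B}})^{\mathbf{H}}=(x^{\mathbf{B}}*r^{\mathbf{B}})^{\mathbf{H}}=(y*I)^{\mathbf{H}}$, using Theorems \ref{Thm:Basic-theorem-for-powers-in-entropic-groupoids} and \ref{Thm:Basic-theorem-for-risinng-to-power}). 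Your honest-verifier zero-knowledge simulator is also a genuine improvement over the paper's one-line claim: sampling $u$ uniformly, solving $y*I=u^{\mathbf{B}}$ by quasigroup left division, and setting $s=u^{\mathbf{H}}$ does reproduce the honest distribution exactly, since $r\mapsto x*r$ is a bijection of the quasigroup and the honest $I=r^{\mathbf{B}}$ is the unique solution of the same equation.

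The genuine gap is the one you yourself flag: special soundness. Two accepting transcripts $(I,\mathbf{H}_1,s_1)$ and $(I,\mathbf{H}_2,s_2)$ give $s_j^{\mathbf{B}}=(y*I)^{\mathbf{H}_j}$, and without a logarithmetic for the succinct power indices (Open Problem \ref{Open-problem:FindExplixitRulesForLogarithmetic}) there is no known way to combine them into an element $w$ with $w^{\mathbf{B}}=y$; at best one obtains information about roots of the adversarially chosen $y*I$. Your proposal does not close this gap, but neither does the paper --- the paper's assertion that a verifying forgery ``implies that the attacker can compute the $\mathbf{B}$-th root'' is exactly the unproven extraction step. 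So your write-up is best read as a correct and more rigorous reconstruction of the parts of the argument that can currently be carried out (completeness and perfect HVZK), together with an accurate diagnosis that the theorem as stated rests on an extraction claim that remains conjectural; if you wanted a complete statement you would indeed have to do what you suggest, namely reformulate the theorem relative to an explicit knowledge-of-root assumption.
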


The Fiat-Shamir transformation of the identification scheme presented in Figure \ref{Fig:ID-scheme-based-on-CDERP} gives a digital signature scheme with three algorithms: $\mathtt{GenKey}$ for key generation (already presented in Algorithm \ref{Alg:GenKey}), $\mathtt{Sign}$ for digital signing (given in Algorithm \ref{Alg:Sign}) and $\mathtt{Verify}$ for signature verification (given in Algorithm \ref{Alg:Verify}). %For all three algorithms it is assumed that the public parameters: $\lambda = 256$ bit safe prime $p = 2 q + 1$, constants $a_3, a_8, b_2, b_7 \in \mathbb{F}_{p}$ and the power index $\mathbf{B} =  Hash_{\mathtt{XXX}} (\mathtt{``abc"})$ are predetermined and fixed.

%Notice that in step 2 of the algorithms $\mathtt{GenKey}$ and $\mathtt{Sign}$ instead of directly choosing random elements $x$ and $r$ from $\mathbb{E}^*_{p^2}$, first in step 1 we generate random power indices $\mathbf{A}$ and $\mathbf{R}$ and obtain $x$ and $r$ via exponentiation with the generator $g_q$. This ensures that all elements belong to the Sylow $q$-subquasigroup $\mathbb{E}^*_{q^2}$ where we  want to work.

\begin{algorithm}
	\small
	\caption{$\mathtt{Sign}$: Sign a message.
		\newline
		\textbf{Input:} A message $M$, and $\mathtt{PrivateKey} = x$;
		\newline
		\textbf{Output:} $(M, \sigma)$ where $\sigma = (I, s)$ is the  digital signature of the message $M$.}
	\begin{algorithmic}[1]
		\State Set $r \xleftarrow{\mathbf{r}} \mathbb{E}^*_{(p-1)^2}$
		\State Set $I = r^{\mathbf{B}}$
		\State Set $\mathbf{H} = Hash_{\mathtt{XXX}} (I || M)$
		\State Set $s = (x * r)^\mathbf{H}$
		\State Set $\sigma = (I, s)$
		\State Return $(M, \sigma)$
	\end{algorithmic}\label{Alg:Sign}
\end{algorithm}
\begin{algorithm}
	\small
	\caption{$\mathtt{Verify}$: Verify a digital signature.
		\newline
		\textbf{Input:} A pair $(M, \sigma)$, and $\mathtt{PublicKey} = y$;
		\newline
		\textbf{Output:} \texttt{True} or \texttt{False}.}
	\begin{algorithmic}[1]
		\State Set $\mathbf{H} = Hash_{\mathtt{XXX}} (I || M)$
		\If{$s^\mathbf{B} \stackrel{?}{=} (y * I)^\mathbf{H}$}
		\State Return \texttt{True}
		\Else
		\State Return \texttt{False}
		\EndIf
	\end{algorithmic}\label{Alg:Verify}
\end{algorithm}

We check the correctness of the signature scheme as follows: 
$$s^\mathbf{B} = \big((x * r)^\mathbf{H}\big)^\mathbf{B} =  \big(x^\mathbf{H}\big)^\mathbf{B} * \big(r^\mathbf{H}\big)^\mathbf{B} = \big(x^\mathbf{B}\big)^\mathbf{H} * \big(r^\mathbf{B}\big)^\mathbf{H} =  y^\mathbf{H} * I^\mathbf{H} = (y * I)^\mathbf{H}.$$

An attacker can forge signatures in one of the following ways
\begin{enumerate}
	\item For an existing pair $(M, \sigma)$, find a second preimage $I || M'$ such that $\mathbf{H} = Hash_{\mathtt{XXX}} (I || M')$.  In that case $(M', \sigma)$ is a valid pair.
	\item Compute a discrete entropoid $\mathbf{B}$-root of $y$. In that case the attacker will know the $\mathtt{PrivateKey} = x = \sqrt[\mathbf{B}]{y}$.
	\item Generate random $I$, random message $M$, and compute the corresponding $\mathbf{H} = Hash_{\mathtt{XXX}} (I || M)$. Then compute a discrete entropoid $\mathbf{B}$-root $s = \sqrt[\mathbf{B}]{z}$ where $z = (y * I)^\mathbf{H}$.
\end{enumerate}

We want to emphasize that finding a collision $\mathbf{H} = Hash_{\mathtt{XXX}} (I_1 || M_1) =  Hash_{\mathtt{XXX}} (I_2 || M_2)$ is not enough to forge a signature since the attacker in order to produce $s$ has to perform the operation of computing a discrete entropoid $\mathbf{B}$-root for $x = \sqrt[\mathbf{B}]{y}$ or for $s = \sqrt[\mathbf{B}]{z}$.

However, there is a collision finding strategy that will help the attacker to simulate the $\mathbf{B}$-root computation  and that is the classical Diffie and Hellman Meet-in-the-middle attack \cite{diffie1977special}. For achieving the goal  of having a probability 1/2 for finding  $\mathbf{B}$-root $s = \sqrt[\mathbf{B}]{z}$ where $z = (y * \sigma_1)^\mathbf{H}$ the attacker needs to build two tables $T_1$ and $T_2$ where $T_1$ will contain pairs of elements from $\mathbb{E}^*_{(p-1)^2}$ and $T_2$ will contain quadruples as described below:
\small{
	\begin{align}
		T_1  = &  [(z_i, z_i^\mathbf{B}) \ | \ \text{for } z_i  \xleftarrow{\mathbf{r}}  \mathbb{E}^*_{(p-1)^2}, i \in \{1, \ldots p-1\} ] \\
		\text{and} \nonumber \\
		T_2  = &  [(I_i, M_i, \mathbf{H}_i, (y * I_i)^{\mathbf{H}_i}) \ | \ \text{for } I_i \xleftarrow{\mathbf{r}} \mathbb{E}^*_{(p-1)^2}, M_i \xleftarrow{\mathbf{r}} \{0, 1\}^*, \mathbf{H}_i = Hash_{\mathtt{XXX}} (I_i || M_i), i \in \{1, \ldots p-1\} ]
	\end{align}
}

During the build-up of the tables if the attacker is lucky, it can even find an entry in $T_1$ that has an item $z_i^\mathbf{B}  = y$. In that case it found $\mathbf{B}$-root for $x = \sqrt[\mathbf{B}]{y} = z_i$. For the size of $T_1$ being $p-1$, the probability of this event is $\frac{p-1}{(p-1)^2} = \frac{1}{p-1}$. The attacker can also search for collisions $z_i^\mathbf{B}  = (y * I_j)^{\mathbf{H}_j}$. If that happens, it would found $\mathbf{B}$-root for $s = \sqrt[\mathbf{B}]{z_i}$. For the size of $T_1$ and $T_2$ being $p-1$, the probability of this event is around 0.5. So the memory complexity for this attack is $O(2p) = O(2^{\lambda+1})$ and the time complexity is also $O(2^{\lambda+1})$.

% Table generated by Excel2LaTeX from sheet 'Sheet5'
\begin{table}[htbp]
	\centering
	
	\begin{tabular}{|c|c|c|c|c|c|}
		\hline
		\multicolumn{1}{|p{4.5em}|}{\centering {\footnotesize $\lambda$, $\mathbb{F}_p$, $p = 2 q +1$,  $\lambda = \lceil  \log p \rceil$} } & \multicolumn{1}{p{5.5em}|}{\centering EUF-CMA classical security} & \multicolumn{1}{p{5.5em}|}{\centering EUF-CMA quantum security} & \multicolumn{1}{p{5em}|}{\centering PublicKey size (bytes)} & \multicolumn{1}{p{5em}|}{\centering PrivateKey size (bytes)} & \multicolumn{1}{p{5em}|}{\centering Signature size (bytes)} \bigstrut\\
		\hline
		128             & $2^{128}$         & $2^{85}$          & 32              & 32              & 64 \bigstrut\\
		\hline
		192             & $2^{192}$         & $2^{128}$         & 48              & 48              & 96 \bigstrut\\
		\hline
		256             & $2^{256}$         & $2^{171}$         & 64              & 64              & 128 \bigstrut\\
		\hline
	\end{tabular}%
	\caption{A summary table for the characteristics of the entropoid digital signature scheme based on CDERP}
	\label{Tab:EntropoidDigitalSignatures}%
\end{table}%

For a similar quantum collision search, we first have to assume that the attacker has overcome the challenges discussed at the end of Section \ref{Sec:DELP-i-secure-against-quantum-attackers}. While in this situation the associative pattern $\mathbf{B}$ used in table $T_1$ is known and fixed, for every entry in table $T_2$ the associative patterns $\mathbf{H}_i$ are entangled with the choices of $I_i$ and $M_i$, and the output of the hash function $\mathbf{H}_i = Hash_{\mathtt{XXX}} (I_i || M_i)$. So, the attacker faces again the two challenges: To build quantum circuits that implement non-commutative operations of multiplication $*$, and to build quantum circuits that perform an exponential number of non-associative and non-commutative multiplication patterns. If it overcomes those challenges, we can assume that the complexity for the quantum search \cite{tani2009claw} for entropoid collisions could be as low as $O(q^{\frac{2}{3}})$.

As a summary of all discussion in this section, we give a Table \ref{Tab:EntropoidDigitalSignatures}.

\subsection{Digital Signature Scheme Based on reducing CDERP to DELP for Specific Roots $\mathbf{B}$}
The signature scheme proposed in the previous sub-section is based on the new assumption about the hardness of CDERP. In case that assumption turns out to be false, we propose here an alternative and more conservative signature scheme that relies its security on the hardness of solving the discrete entropoid logarithm problem.

The more conservative scheme is similar to the one given in the previous sub-section, with the following differences. The entropoid $\mathbb{E}_{p^2}( a_3, a_8, b_2, b_7)$, is defined with a prime number $p = 2 q + 1$ where the bit size of $p$ is $\lambda \in \{256, 384, 512\}$ bits. The root $\mathbf{B}$ as a public parameter has the following format:\footnote{The same remark about the possibility to use base $\mathfrak{b}=17$ applies also for this signature scheme.} $\mathbf{B} = (q, b_s, 257)$, where $q$ is the prime number in the construction of $p$, and $b_s = \mathtt{SHAXXX} (\mathtt{``abc"})$. The mapping $Hash_{\mathtt{XXX}} :  \{0, 1\}^* \mapsto \mathbb{L}_{q, \mathtt{257}}$ is now defined as:
\begin{equation}
	Hash_{\mathtt{XXX}} (M) = (q, \mathtt{SHAXXX}(M), 257).
\end{equation}

Now, with this different hashing, the algorithms $\mathtt{GenKey}$, $\mathtt{Sign}$ and $\mathtt{Verify}$ are the same as in the previous case.

The same security analysis applies here, with one additional safety layer. Let us suppose that a Logarithmetic for our succinct power indices will be developed and that the Johnston root-finding algorithm will be adapted for the entropoids $\mathbb{E}_{p^2}$. Since the size of maximal multiplicative groupoid  $\mathbb{E}_{(p-1)^2}$ is $4 q^2$, and since we have a fixed value $q$ in the public root value $\mathbf{B} = (q, b_s, 257)$, even the hypothetical version of the Johnston algorithm will reduce to finding discrete entropoid logarithm in $\mathbb{E}_{(p-1)^2}$.

The consequences of doubling the bit sizes of $p$ are the doubling of the keys and signatures of the proposed signature scheme and are given in Table \ref{Tab:EntropoidDigitalSignatures02}.

% Table generated by Excel2LaTeX from sheet 'Sheet5'
\begin{table}[htbp]
	\centering
	
	\begin{tabular}{|c|c|c|c|c|c|}
		\hline
		\multicolumn{1}{|p{4.5em}|}{\centering {\footnotesize $\lambda$, $\mathbb{F}_p$, $p = 2 q +1$,  $\lambda = \lceil  \log p \rceil$} } & \multicolumn{1}{p{5.5em}|}{\centering EUF-CMA classical security} & \multicolumn{1}{p{5.5em}|}{\centering EUF-CMA quantum security} & \multicolumn{1}{p{5em}|}{\centering PublicKey size (bytes)} & \multicolumn{1}{p{5em}|}{\centering PrivateKey size (bytes)} & \multicolumn{1}{p{5em}|}{\centering Signature size (bytes)} \bigstrut\\
		\hline
		256             & $2^{128}$         & $2^{85}$          & 64              & 64              & 128 \bigstrut\\
		\hline
		384             & $2^{192}$         & $2^{128}$         & 96              & 96              & 192 \bigstrut\\
		\hline
		512             & $2^{256}$         & $2^{171}$         & 128             & 128              & 256 \bigstrut\\
		\hline
	\end{tabular}%
	\caption{A summary table for the characteristics of the entropoid digital signature scheme based on reduction of CDERP to DELP}
	\label{Tab:EntropoidDigitalSignatures02}%
\end{table}%

%\section{Summary of Open Problems in Entropoid Based Cryptography}

\section{Conclusions}

The algebraic structures that are non-commutative and non-associative known as entropic groupoids that satisfy the \emph{"Palintropic"} property i.e., $x^{\mathbf{A} \mathbf{B}} = (x^{\mathbf{A}})^{\mathbf{B}} = (x^{\mathbf{B}})^{\mathbf{A}} = x^{\mathbf{B} \mathbf{A}}$ were proposed by Etherington in '40s from the 20th century. Those relations are exactly the Diffie-Hellman key exchange protocol relations used with groups. The arithmetic for non-associative power indices known as Logarithmetic was also proposed by Etherington and later developed by others in the period of '50s-'70s. However, there was never proposed a succinct notation for exponentially large non-associative power indices that will have the property of fast exponentiation similarly as the fast exponentiation is achieved with ordinary arithmetic via the consecutive rising to the powers of two. 

In this paper, we defined ringoid algebraic structures $(G, \boxplus, *)$ where $(G, \boxplus) $ is an Abelian group and $(G, *)$ is a non-commutative and non-associative groupoid with an entropic and palintropic subgroupoid which is a quasigroup, and we named those structures as Entropoids. We further defined succinct notation for non-associative bracketing patterns and proposed algorithms for fast exponentiation with those patterns. 

Next, by analogy with the developed cryptographic theory of discrete logarithm problems, we defined several hard problems in Entropoid based cryptography, and based on that, we proposed an entropoid Diffie-Hellman key exchange protocol and an entropoid signature schemes. Due to the non-commutativity and non-associativity, the entropoid based cryptographic primitives are supposed to be resistant to quantum algorithms. At the same time, due to the proposed succinct notation for the power indices, the communication overhead in the entropoid based Diffie-Hellman key exchange is very low: for 128 bits of security, 64 bytes in total are communicated in both directions, and for 256 bits of security, 128 bytes in total are communicated in both directions. 

In this paper, we also proposed two entropoid based digital signature schemes. The schemes are constructed with the Fiat-Shamir transformation of an identification scheme which security relies on a new hardness assumption: computing roots in finite entropoids is hard. If this assumption withstands the time's test, the first proposed signature scheme has very attractive properties: for the classical security levels between 128 and 256 bits, the public and private key sizes are between 32 and 64, and the signature sizes are between 64 and 128 bytes. The second signature scheme reduces the finding of the roots in finite entropoids to computing discrete entropoid logarithms. In our opinion, this is a safer but more conservative design, and the price is in doubling the key sizes and the signature sizes.

We give a proof-of-concept implementation in SageMath 9.2 for all proposed algorithms and schemes in Appendix C.

We hope that this paper will initiate further research in Entropoid Based Cryptography.

%\bibliographystyle{unsrt}
%\bibliography{bibliography}

\newpage
\appendix
\section{Examples for $\mathbb{E}_{{11}^2}$, $\mathbb{E}_{{13}^2}$, $\mathbb{E}_{{19}^2}$ and $\mathbb{E}_{{23}^2}$}\label{App:Examples-for-conjecture}

	Let us define the following finite entropoids: 
	\begin{enumerate}
		\item $\mathbb{E}_{{11}^2}(a_3=9, a_8=1, b_2=8, b_7=9)$, which has $\mathbf{0}_* =  (2, 4)$, $\mathbf{1}_* = (7, 5)$ and $x * y = (x_1, x_2) * (y_1, y_2) = ( x_2 y_1 + 9 x_2 + 7 y_1 + 10, \ \ \ 9 x_1 y_2 + 8 x_1 + 4 y_2 + 10)$;
		\item $\mathbb{E}_{{13}^2}(a_3=10, a_8=2, b_2=3, b_7=9)$, which has $\mathbf{0}_* =  (8, 4)$, $\mathbf{1}_* = (11, 11)$ and $x * y = (x_1, x_2) * (y_1, y_2) = ( 2 x_2 y_1 + 10 x_2 + 5 y_1 + 7, \ \ \ 9 x_1 y_2 + 3 x_1 + 6 y_2 + 6)$;
		\item $\mathbb{E}_{{19}^2}(a_3=18, a_8=11, b_2=14, b_7=10)$, which has $\mathbf{0}_* =  (7, 10)$, $\mathbf{1}_* = (9, 17)$ and $x * y = (x_1, x_2) * (y_1, y_2) = ( 11 x_2 y_1 + 18 x_2 + 4 y_1 + 17, \ \ \ 10 x_1 y_2 + 14 x_1 + 6 y_2 + 7)$;
		\item $\mathbb{E}_{{23}^2}(a_3=15, a_8=13, b_2=9, b_7=14)$, which has $\mathbf{0}_* =  (13, 1)$, $\mathbf{1}_* = (18, 17)$ and $x * y = (x_1, x_2) * (y_1, y_2) = ( 13 x_2 y_1 + 15 x_2 + 10 y_1 + 21, \ \ \ 14 x_1 y_2 + 9 x_1 + 2 y_2 + 22)$.
	\end{enumerate}
	
	We present their elements as square $p\times p$ arrays as in Example \ref{Ex:GF(7)} and in cells with coordinates $(x_1, x_2)$ we put the values that are the size of the sets $\langle x \rangle_2$ and $\langle x \rangle$.
	
	The colored cells has the following meaning:
	\begin{enumerate}
		\item The yellow highlighted elements do not belong to the multiplicative quasigroup $(\mathbb{E}_{{p}^2}^*, *)$;
		\item The green highlighted elements $x = (x_1, x_2) $ are generators for both the maximal length cyclic subgroupoid $\langle x \rangle_2$ with $2 (p - 1)$ elements, and are generators of the multiplicative quasigroup $(\mathbb{E}_{{p}^2}^*, *)$;
		\item The red highlighted elements $x = (x_1, x_2) $ are generators for a maximal length cyclic subgroupoid $\langle x \rangle_2$ with $2 (p - 1)$ elements, \textbf{but are not} generators of the multiplicative quasigroup $(\mathbb{E}_{{p}^2}^*, *)$.
		\item Blue highlighted element for  $\mathbb{E}_{{11}^2}$ and $\mathbb{E}_{{23}^2}$ denote the generators of the Sylow  $q$-subgroupoids with 25 and 121 elements (11 and 23 are "safe primes" i.e. $11 = 2 \times 5 + 1$ and $23 = 2 \times 11 + 1$). 
	\end{enumerate}
	
	%f1 = aa8*X[1]*Y[0] + aa3*X[1] + aa4*Y[0] + aa1;
	%f2 = bb7*X[0]*Y[1] + bb2*X[0] + bb5*Y[1] + bb1;
	
	\arrayrulecolor{cyan}	
	\begin{table}[!ht]
		%\caption{Global caption}
		\begin{minipage}{.5\linewidth}
			\centering
			{\footnotesize
			$
			\setlength{\extrarowheight}{4mm}
			\begin{NiceArray}{rrrrrrrrrrr}%
				[small, hvlines, first-row, first-col,code-before ={\cellcolor{red}{1-1, 2-3, 4-7, 5-9, 6-11, 7-2, 10-8, 11-10}
					\cellcolor{green}{1-4, 1-7, 1-11, 2-6, 2-8, 2-9, 4-1, 4-2, 4-4, 5-3, 5-6, 5-10, 6-1, 6-2, 6-4, 7-4, 7-7, 7-11, 8-1, 8-2, 8-7, 8-11, 9-3, 9-8, 9-9, 9-10, 10-3, 10-6, 10-10, 11-6, 11-8, 11-9} 	
					\rowcolor{yellow}{3}  
					\columncolor{yellow}{5}},
				columns-width = 4mm
				]%
				&  0 &  1 &  2 &  3  & 4 &  5 &  6 &  7 &  8 &  9 & 10 \\
				0  & 20 &  4 & 10 & 20  & 2 & 10 & 20 &  2 &  5 & 10 & 20 \\
				1  & 10 & 10 & 20 & 10  & 2 & 20 & 10 & 20 & 20 &  4 &  2 \\
				2  &  2 &  2 &  2 &  2  & 1 &  2 &  2 &  2 &  2 &  2 &  2 \\
				3  & 20 & 20 &  5 & 20  & 2 & 10 & 20 & 10 & 10 &  2 &  4 \\
				4  & 10 &  2 & 20 & 10  & 2 & 20 & 10 &  4 & 20 & 20 & 10 \\
				5  & 20 & 20 &  2 & 20  & 2 & 10 &  4 & 10 & 10 &  5 & 20 \\
				6  &  4 & 20 & 10 & 20  & 2 & 10 & 20 &  5 &  2 & 10 & 20 \\
				7  & 20 & 20 & 10 &  4  & 2 &  1 & 20 & 10 & 10 & 10 & 20 \\
				8  & 10 & 10 & 20 &  2  & 2 &  4 & 10 & 20 & 20 & 20 & 10 \\
				9  &  2 & 10 & 20 & 10  & 2 & 20 & 10 & 20 &  4 & 20 & 10 \\
				10  & 10 & 10 &  4 & 10  & 2 & 20 &  2 & 20 & 20 & 20 & 10 \\
			\end{NiceArray}
			$
			}
			\caption{The size of the sets $\langle x \rangle_2$ for $x \in \mathbb{E}_{{11}^2}$}\label{Table-E11-a}
		\end{minipage}%
		\begin{minipage}{.5\linewidth}
			\centering
			{\footnotesize			
			$
			\setlength{\extrarowheight}{4mm}
			\begin{NiceArray}{rrrrrrrrrrr}%
				[small, hvlines, first-row, first-col,code-before ={\cellcolor{red}{1-1, 2-3, 4-7, 5-9, 6-11, 7-2, 10-8, 11-10}
					\cellcolor{green}{1-4, 1-7, 1-11, 2-6, 2-8, 2-9, 4-1, 4-2, 4-4, 5-3, 5-6, 5-10, 6-1, 6-2, 6-4, 7-4, 7-7, 7-11, 8-1, 8-2, 8-7, 8-11, 9-3, 9-8, 9-9, 9-10, 10-3, 10-6, 10-10, 11-6, 11-8, 11-9}
					\cellcolor{blue!20}{1-3, 1-6, 1-10, 4-6, 4-8, 4-9, 6-6, 6-8, 6-9, 7-3, 7-6, 7-10, 8-3, 8-8, 8-9, 8-10} 	
					\rowcolor{yellow}{3}  
					\columncolor{yellow}{5}},
				columns-width = 4mm
				]%
				&   0 &   1 &   2 &   3  &  4 &   5 &   6 &   7 &   8 &   9 &  10 \\
				0   &  20 &  20 &  25 & 100  &  2 &  25 & 100 &   5 &   5 &  25 & 100 \\
				1   &  50 &  50 &  20 &  50  &  2 & 100 &  10 & 100 & 100 &  20 &  10 \\
				2   &   2 &   2 &   2 &   2  &  1 &   2 &   2 &   2 &   2 &   2 &   2 \\
				3   & 100 & 100 &   5 & 100  &  2 &  25 &  20 &  25 &  25 &   5 &  20 \\
				4   &  10 &  10 & 100 &  50  &  2 & 100 &  50 &  20 &  20 & 100 &  50 \\
				5   & 100 & 100 &   5 & 100  &  2 &  25 &  20 &  25 &  25 &   5 &  20 \\
				6   &  20 &  20 &  25 & 100  &  2 &  25 & 100 &   5 &   5 &  25 & 100 \\
				7   & 100 & 100 &  25 &   4  &  2 &   1 & 100 &  25 &  25 &  25 & 100 \\
				8   &  50 &  50 & 100 &   2  &  2 &   4 &  50 & 100 & 100 & 100 &  50 \\
				9   &  10 &  10 & 100 &  50  &  2 & 100 &  50 &  20 &  20 & 100 &  50 \\
				10  &  50 &  50 &  20 &  50  &  2 & 100 &  10 & 100 & 100 &  20 &  10 \\
			\end{NiceArray}
			$
			}
			\caption{The size of the sets $\langle x \rangle$ for $x \in \mathbb{E}_{{11}^2}$}\label{Table-E11-b}
		\end{minipage} 
	\end{table}
	\arrayrulecolor{black}
	
	\arrayrulecolor{cyan}	
	\begin{table}[!ht]
		%\caption{Global caption}
		\begin{minipage}{.5\linewidth}
			\centering
			{\footnotesize			
			$
			\setlength{\extrarowheight}{4mm}
			\begin{NiceArray}{rrrrrrrrrrrrr}%
				[small, hvlines, first-row, first-col,code-before ={\cellcolor{red}{1-3, 1-7, 2-10, 2-13, 3-10, 3-13, 4-3, 4-7, 5-4, 5-6, 8-2, 8-8, 10-2, 10-8, 13-4, 13-6}
					\cellcolor{green}{1-11, 1-12, 2-11, 2-12, 3-11, 3-12, 4-11, 4-12, 5-1, 5-9, 6-2, 6-4, 6-6, 6-8, 7-3, 7-7, 7-10, 7-13, 8-1, 8-9, 10-1, 10-9, 11-3, 11-7, 11-10, 11-13, 12-2, 12-4, 12-6, 12-8, 13-1, 13-9} 
					\rowcolor{yellow}{9}  
					\columncolor{yellow}{5}},
				columns-width = 4mm
				]%
				&  0 &  1 &  2 &  3  & 4 &  5 &  6 &  7 &  8 &  9 & 10 & 11 & 12 \\
				0  & 12 &  6 & 24 &  2 &  2 &  4 & 24 & 12 &  6 &  8 & 24 & 24 &  8 \\
				1  & 12 &  2 &  8 &  6 &  2 & 12 &  8 &  4 &  6 & 24 & 24 & 24 & 24 \\
				2  &  6 &  4 &  8 & 12 &  2 &  6 &  8 &  2 & 12 & 24 & 24 & 24 & 24 \\
				3  &  6 & 12 & 24 &  4 &  2 &  2 & 24 &  6 & 12 &  8 & 24 & 24 &  8 \\
				4  & 24 &  8 &  2 & 24 &  2 & 24 &  4 &  8 & 24 & 12 & 12 &  6 &  3 \\
				5  &  8 & 24 & 12 & 24 &  2 & 24 &  6 & 24 &  8 &  6 &  2 &  4 & 12 \\
				6  &  2 & 12 & 24 & 12 &  2 &  6 & 24 &  6 &  4 & 24 &  8 &  8 & 24 \\
				7  & 24 & 24 & 12 &  8 &  2 &  8 &  6 & 24 & 24 &  2 &  6 & 12 &  4 \\
				8  &  2 &  2 &  2 &  2 &  1 &  2 &  2 &  2 &  2 &  2 &  2 &  2 &  2 \\
				9  & 24 & 24 &  3 &  8 &  2 &  8 & 12 & 24 & 24 &  4 & 12 &  6 &  2 \\
				10  &  4 &  6 & 24 &  6 &  2 & 12 & 24 & 12 &  2 & 24 &  8 &  8 & 24 \\
				11  &  8 & 24 &  6 & 24 &  2 & 24 & 12 & 24 &  8 & 12 &  4 &  1 &  6 \\
				12  & 24 &  8 &  4 & 24 &  2 & 24 &  2 &  8 & 24 &  6 &  6 & 12 & 12 \\
			\end{NiceArray}
			$
			}
			\caption{The size of the sets $\langle x \rangle_2$ for $x \in \mathbb{E}_{{13}^2}$}\label{Table-E13-a}
		\end{minipage}%
		\begin{minipage}{.5\linewidth}
			\centering
			{\footnotesize			
			$
			\setlength{\extrarowheight}{4mm}
			\begin{NiceArray}{rrrrrrrrrrrrr}%
				[small, hvlines, first-row, first-col,code-before ={\cellcolor{red}{1-3, 1-7, 2-10, 2-13, 3-10, 3-13, 4-3, 4-7, 5-4, 5-6, 8-2, 8-8, 10-2, 10-8, 13-4, 13-6}
					\cellcolor{green}{1-11, 1-12, 2-11, 2-12, 3-11, 3-12, 4-11, 4-12, 5-1, 5-9, 6-2, 6-4, 6-6, 6-8, 7-3, 7-7, 7-10, 7-13, 8-1, 8-9, 10-1, 10-9, 11-3, 11-7, 11-10, 11-13, 12-2, 12-4, 12-6, 12-8, 13-1, 13-9} 
					\rowcolor{yellow}{9}  
					\columncolor{yellow}{5}},
				columns-width = 4mm
				]%
				&   0 &   1 &   2 &   3  &  4 &   5 &   6 &   7 &   8 &   9 &  10 &  11 &  12  \\
				0   &  36 &  12 &  48 &  12 &   2 &  12 &  48 &  12 &  36 &  48 & 144 & 144 &  48  \\
				1   &  36 &  12 &  48 &  12 &   2 &  12 &  48 &  12 &  36 &  48 & 144 & 144 &  48  \\
				2   &  36 &  12 &  48 &  12 &   2 &  12 &  48 &  12 &  36 &  48 & 144 & 144 &  48  \\
				3   &  36 &  12 &  48 &  12 &   2 &  12 &  48 &  12 &  36 &  48 & 144 & 144 &  48  \\
				4   & 144 &  48 &   3 &  48 &   2 &  48 &  12 &  48 & 144 &  12 &  36 &   9 &   3  \\
				5   &  16 & 144 &  36 & 144 &   2 & 144 &  18 & 144 &  16 &  18 &   2 &   4 &  36  \\
				6   &   4 &  36 & 144 &  36 &   2 &  36 & 144 &  36 &   4 & 144 &  16 &  16 & 144  \\
				7   & 144 &  48 &  12 &  48 &   2 &  48 &   6 &  48 & 144 &   6 &  18 &  36 &  12  \\
				8   &   2 &   2 &   2 &   2 &   1 &   2 &   2 &   2 &   2 &   2 &   2 &   2 &   2  \\
				9   & 144 &  48 &   3 &  48 &   2 &  48 &  12 &  48 & 144 &  12 &  36 &   9 &   3  \\
				10  &   4 &  36 & 144 &  36 &   2 &  36 & 144 &  36 &   4 & 144 &  16 &  16 & 144  \\
				11  &  16 & 144 &   9 & 144 &   2 & 144 &  36 & 144 &  16 &  36 &   4 &   1 &   9  \\
				12  & 144 &  48 &  12 &  48 &   2 &  48 &   6 &  48 & 144 &   6 &  18 &  36 &  12  \\
			\end{NiceArray}
			$
			}
			\caption{The size of the sets $\langle x \rangle$ for $x \in \mathbb{E}_{{13}^2}$}\label{Table-E13-b}
		\end{minipage} 
	\end{table}
	\arrayrulecolor{black}

	\arrayrulecolor{cyan}
	\begin{table}[!ht]
		\centering
		{\footnotesize			
		$
		\setlength{\extrarowheight}{4mm}
		\begin{NiceArray}{rrrrrrrrrrrrrrrrrrr}%
			[small, hvlines, first-row, first-col,code-before ={\cellcolor{red}{1-2, 1-5, 1-7, 2-6, 2-13, 2-14, 4-6, 4-13, 4-14, 7-2, 7-5, 7-7, 9-1, 9-15, 9-17, 12-8, 12-9, 12-16, 14-8, 14-9, 14-16, 15-1, 15-15, 15-17, 16-2, 16-5, 16-7, 17-8, 17-9, 17-16, 18-6, 18-13, 18-14, 19-1, 19-15, 19-17} 
				\cellcolor{green}{1-4, 1-10, 1-19, 2-4, 2-10, 2-19, 3-2, 3-5, 3-6, 3-7, 3-13, 3-14, 4-4, 4-10, 4-19, 5-1, 5-8, 5-9, 5-15, 5-16, 5-17, 6-1, 6-8, 6-9, 6-15, 6-16, 6-17, 7-4, 7-10, 7-19, 9-3, 9-12, 9-18, 10-2, 10-5, 10-6, 10-7, 10-13, 10-14, 11-2, 11-5, 11-6, 11-7, 11-13, 11-14, 12-3, 12-12, 12-18, 13-1, 13-8, 13-9, 13-15, 13-16, 13-17, 14-3, 14-12, 14-18, 15-3, 15-12, 15-18, 16-4, 16-10, 16-19, 17-3, 17-12, 17-18, 18-4, 18-10, 18-19, 19-3, 19-12, 19-18} 
				\rowcolor{yellow}{8}  
				\columncolor{yellow}{11}},
			columns-width = 4mm
			]%
			&  0 &  1 &  2 &  3  & 4 &  5 &  6 &  7 &  8 &  9 & 10 & 11 & 12 & 13 & 14 & 15 & 16 & 17 & 18  \\
			0  & 18 & 36 & 18 & 36 & 36 & 12 & 36 &  6 &  2 & 36 &  2 & 18 &  4 & 12 &  9 &  6 & 18 & 18 & 36  \\
			1  &  6 & 12 & 18 & 36 & 12 & 36 &  4 & 18 &  9 & 36 &  2 & 18 & 36 & 36 &  2 & 18 &  6 & 18 & 36  \\
			2  & 18 & 36 &  3 & 12 & 36 & 36 & 36 & 18 & 18 &  4 &  2 &  2 & 36 & 36 & 18 & 18 & 18 &  6 & 12  \\
			3  &  6 & 12 & 18 & 36 &  4 & 36 & 12 & 18 & 18 & 36 &  2 & 18 & 36 & 36 &  6 &  9 &  2 & 18 & 36  \\
			4  & 36 & 18 &  4 &  6 & 18 & 18 & 18 & 36 & 36 &  6 &  2 & 12 & 18 & 18 & 36 & 36 & 36 & 12 &  2  \\
			5  & 36 & 18 & 12 &  2 & 18 & 18 & 18 & 36 & 36 &  6 &  2 & 12 & 18 & 18 & 36 & 36 & 36 &  4 &  6  \\
			6  & 18 & 36 & 18 & 36 & 36 &  4 & 36 &  6 &  6 & 36 &  2 & 18 & 12 & 12 & 18 &  2 &  9 & 18 & 36  \\
			7  &  2 &  2 &  2 &  2 &  2 &  2 &  2 &  2 &  2 &  2 &  1 &  2 &  2 &  2 &  2 &  2 &  2 &  2 &  2  \\
			8  & 36 & 18 & 36 & 18 & 18 &  2 & 18 & 12 & 12 & 18 &  2 & 36 &  6 &  6 & 36 &  4 & 36 & 36 & 18  \\
			9  & 18 & 36 &  6 &  4 & 36 & 36 & 36 & 18 & 18 & 12 &  2 &  6 & 36 & 36 & 18 & 18 & 18 &  1 & 12  \\
			10  & 18 & 36 &  2 & 12 & 36 & 36 & 36 & 18 & 18 & 12 &  2 &  3 & 36 & 36 & 18 & 18 & 18 &  6 &  4  \\
			11  & 12 &  6 & 36 & 18 &  2 & 18 &  6 & 36 & 36 & 18 &  2 & 36 & 18 & 18 & 12 & 36 &  4 & 36 & 18  \\
			12  & 36 & 18 & 12 &  6 & 18 & 18 & 18 & 36 & 36 &  2 &  2 &  4 & 18 & 18 & 36 & 36 & 36 & 12 &  6  \\
			13  & 12 &  6 & 36 & 18 &  6 & 18 &  2 & 36 & 36 & 18 &  2 & 36 & 18 & 18 &  4 & 36 & 12 & 36 & 18  \\
			14  & 36 & 18 & 36 & 18 & 18 &  6 & 18 & 12 &  4 & 18 &  2 & 36 &  2 &  6 & 36 & 12 & 36 & 36 & 18  \\
			15  &  9 & 36 & 18 & 36 & 36 & 12 & 36 &  2 &  6 & 36 &  2 & 18 & 12 &  4 & 18 &  6 & 18 & 18 & 36  \\
			16  &  4 &  2 & 36 & 18 &  6 & 18 &  6 & 36 & 36 & 18 &  2 & 36 & 18 & 18 & 12 & 36 & 12 & 36 & 18  \\
			17  &  2 &  4 & 18 & 36 & 12 & 36 & 12 &  9 & 18 & 36 &  2 & 18 & 36 & 36 &  6 & 18 &  6 & 18 & 36  \\
			18  & 36 & 18 & 36 & 18 & 18 &  6 & 18 &  4 & 12 & 18 &  2 & 36 &  6 &  2 & 36 & 12 & 36 & 36 & 18  \\
		\end{NiceArray}
		$
		}
		\caption{The size of the sets $\langle x \rangle_2$ for $x \in \mathbb{E}_{{19}^2}$}\label{Table-E19-a}
	\end{table}
	\begin{table}[!ht]
		\centering
		{\footnotesize			
		$
		\setlength{\extrarowheight}{4mm}
		\begin{NiceArray}{rrrrrrrrrrrrrrrrrrr}%
			[small, hvlines, first-row, first-col,code-before ={\cellcolor{red}{1-2, 1-5, 1-7, 2-6, 2-13, 2-14, 4-6, 4-13, 4-14, 7-2, 7-5, 7-7, 9-1, 9-15, 9-17, 12-8, 12-9, 12-16, 14-8, 14-9, 14-16, 15-1, 15-15, 15-17, 16-2, 16-5, 16-7, 17-8, 17-9, 17-16, 18-6, 18-13, 18-14, 19-1, 19-15, 19-17}
				\cellcolor{green}{1-4, 1-10, 1-19, 2-4, 2-10, 2-19, 3-2, 3-5, 3-6, 3-7, 3-13, 3-14, 4-4, 4-10, 4-19, 5-1, 5-8, 5-9, 5-15, 5-16, 5-17, 6-1, 6-8, 6-9, 6-15, 6-16, 6-17, 7-4, 7-10, 7-19, 9-3, 9-12, 9-18, 10-2, 10-5, 10-6, 10-7, 10-13, 10-14, 11-2, 11-5, 11-6, 11-7, 11-13, 11-14, 12-3, 12-12, 12-18, 13-1, 13-8, 13-9, 13-15, 13-16, 13-17, 14-3, 14-12, 14-18, 15-3, 15-12, 15-18, 16-4, 16-10, 16-19, 17-3, 17-12, 17-18, 18-4, 18-10, 18-19, 19-3, 19-12, 19-18} \rowcolor{yellow}{8}  \columncolor{yellow}{11}},
			columns-width = 4mm
			]%
			&   0 &   1 &   2 &   3 &   4 &   5 &   6 &   7 &   8 &   9 &  10 &  11 &  12 &  13 &  14 &  15 &  16 &  17 &  18 \\
			0   &  27 & 108 &  81 & 324 & 108 & 108 &  36 &  27 &   9 & 324 &   2 &  81 &  36 & 108 &   9 &  27 &  27 &  81 & 324 \\
			1   &  27 & 108 &  81 & 324 & 108 & 108 &  36 &  27 &   9 & 324 &   2 &  81 &  36 & 108 &   9 &  27 &  27 &  81 & 324 \\
			2   &  81 & 324 &   3 &  36 & 324 & 324 & 324 &  81 &  81 &  12 &   2 &   3 & 324 & 324 &  81 &  81 &  81 &   9 &  12 \\
			3   &  27 & 108 &  81 & 324 &  36 &  36 & 108 &  27 &  27 & 324 &   2 &  81 & 108 & 108 &  27 &   9 &   9 &  81 & 324 \\
			4   & 324 & 162 &  12 &  18 & 162 & 162 & 162 & 324 & 324 &   6 &   2 &  12 & 162 & 162 & 324 & 324 & 324 &  36 &   6 \\
			5   & 324 & 162 &  36 &   2 & 162 & 162 & 162 & 324 & 324 &  18 &   2 &  36 & 162 & 162 & 324 & 324 & 324 &   4 &  18 \\
			6   &  27 & 108 &  81 & 324 &  36 &  36 & 108 &  27 &  27 & 324 &   2 &  81 & 108 & 108 &  27 &   9 &   9 &  81 & 324 \\
			7   &   2 &   2 &   2 &   2 &   2 &   2 &   2 &   2 &   2 &   2 &   1 &   2 &   2 &   2 &   2 &   2 &   2 &   2 &   2 \\
			8   & 108 &  54 & 324 & 162 &  18 &  18 &  54 & 108 & 108 & 162 &   2 & 324 &  54 &  54 & 108 &  36 &  36 & 324 & 162 \\
			9   &  81 & 324 &   9 &   4 & 324 & 324 & 324 &  81 &  81 &  36 &   2 &   9 & 324 & 324 &  81 &  81 &  81 &   1 &  36 \\
			10  &  81 & 324 &   3 &  36 & 324 & 324 & 324 &  81 &  81 &  12 &   2 &   3 & 324 & 324 &  81 &  81 &  81 &   9 &  12 \\
			11  & 108 &  54 & 324 & 162 &  18 &  18 &  54 & 108 & 108 & 162 &   2 & 324 &  54 &  54 & 108 &  36 &  36 & 324 & 162 \\
			12  & 324 & 162 &  12 &  18 & 162 & 162 & 162 & 324 & 324 &   6 &   2 &  12 & 162 & 162 & 324 & 324 & 324 &  36 &   6 \\
			13  & 108 &  54 & 324 & 162 &  54 &  54 &  18 & 108 &  36 & 162 &   2 & 324 &  18 &  54 &  36 & 108 & 108 & 324 & 162 \\
			14  & 108 &  54 & 324 & 162 &  54 &  54 &  18 & 108 &  36 & 162 &   2 & 324 &  18 &  54 &  36 & 108 & 108 & 324 & 162 \\
			15  &   9 &  36 &  81 & 324 & 108 & 108 & 108 &   9 &  27 & 324 &   2 &  81 & 108 &  36 &  27 &  27 &  27 &  81 & 324 \\
			16  &  36 &  18 & 324 & 162 &  54 &  54 &  54 &  36 & 108 & 162 &   2 & 324 &  54 &  18 & 108 & 108 & 108 & 324 & 162 \\
			17  &   9 &  36 &  81 & 324 & 108 & 108 & 108 &   9 &  27 & 324 &   2 &  81 & 108 &  36 &  27 &  27 &  27 &  81 & 324 \\
			18  &  36 &  18 & 324 & 162 &  54 &  54 &  54 &  36 & 108 & 162 &   2 & 324 &  54 &  18 & 108 & 108 & 108 & 324 & 162 \\
		\end{NiceArray}
		$
		}
		\caption{The size of the sets $\langle x \rangle$ for $x \in \mathbb{E}_{{19}^2}$}\label{Table-E19-b}
	\end{table}
	\arrayrulecolor{black}

	\arrayrulecolor{cyan}
	\begin{table}[!ht]
		\centering
		{\footnotesize			
		$
		\setlength{\extrarowheight}{4mm}
		\begin{NiceArray}{rrrrrrrrrrrrrrrrrrrrrrr}%
			[small, hvlines, first-row, first-col, code-before ={\cellcolor{red}{1-16, 2-22, 3-5, 4-11, 5-17, 6-23, 7-6, 8-12, 10-1, 11-7, 12-13, 13-19, 15-8, 16-14, 17-20, 18-3, 20-15, 21-21, 22-4, 23-10} \cellcolor{green}{1-1, 1-7, 1-9, 1-12, 1-13, 1-19, 1-21, 1-22, 1-23, 2-7, 2-9, 2-12, 2-13, 2-16, 2-17, 2-19, 2-21, 2-23, 3-4, 3-6, 3-8, 3-10, 3-11, 3-14, 3-15, 3-18, 3-20, 4-3, 4-4, 4-5, 4-6, 4-8, 4-14, 4-15, 4-18, 4-20, 5-1, 5-7, 5-9, 5-12, 5-13, 5-19, 5-21, 5-22, 5-23, 6-1, 6-7, 6-9, 6-13, 6-16, 6-17, 6-19, 6-21, 6-22, 7-3, 7-4, 7-5, 7-8, 7-10, 7-11, 7-14, 7-15, 7-18, 8-1, 8-7, 8-9, 8-13, 8-16, 8-17, 8-19, 8-21, 8-22, 9-3, 9-4, 9-5, 9-6, 9-8, 9-10, 9-11, 9-14, 9-15, 9-20, 10-7, 10-9, 10-12, 10-13, 10-16, 10-17, 10-19, 10-21, 10-23, 11-1, 11-9, 11-12, 11-13, 11-16, 11-17, 11-19, 11-22, 11-23, 12-1, 12-7, 12-9, 12-12, 12-16, 12-17, 12-21, 12-22, 12-23, 13-1, 13-7, 13-9, 13-12, 13-16, 13-17, 13-21, 13-22, 13-23, 15-3, 15-4, 15-5, 15-6, 15-10, 15-11, 15-15, 15-18, 15-20, 16-3, 16-4, 16-5, 16-6, 16-10, 16-11, 16-15, 16-18, 16-20, 17-3, 17-4, 17-5, 17-8, 17-10, 17-11, 17-14, 17-15, 17-18, 18-4, 18-6, 18-8, 18-10, 18-11, 18-14, 18-15, 18-18, 18-20, 19-1, 19-7, 19-12, 19-13, 19-16, 19-17, 19-19, 19-21, 19-22, 19-23, 20-3, 20-5, 20-6, 20-8, 20-10, 20-11, 20-14, 20-18, 20-20, 21-1, 21-9, 21-12, 21-13, 21-16, 21-17, 21-19, 21-22, 21-23, 22-3, 22-5, 22-6, 22-8, 22-10, 22-11, 22-14, 22-18, 22-20, 23-3, 23-4, 23-5, 23-6, 23-8, 23-14, 23-15, 23-18, 23-20}
				\rowcolor{yellow}{14}  \columncolor{yellow}{2} }, 
			%code-after = {	\begin{tikzpicture}
			%					[name suffix = -large,
			%					 thick,
			%					every node/.style = {draw=red,
			%						inner sep = -1.5 pt}]
			%					\node [fit = (1-1)] {} ;
			%					\node [fit = (2-2)] {} ;
			%					\node [fit = (3-3)] {} ;
			%					\node [fit = (4-4)] {} ;
			%					\node [fit = (10-12)] {} ;
			%				\end{tikzpicture}},
			columns-width = 4mm,
			%cell-space-limits=2pt, margin
			]%
			&  0 &  1 &  2 &  3  & 4 &  5 &  6 &  7 &  8 &  9 & 10 & 11 & 12 & 13 & 14 & 15 & 16 & 17 & 18 & 19 & 20 & 21 & 22  \\
			0  & 44 &  2 & 22 & 22 & 22 & 22 & 44 & 22 & 44 &  2 & 11 & 44 & 44 & 22 & 22 & 44 &  4 & 22 & 44 & 22 & 44 & 44 & 44  \\
			1  &  4 &  2 &  2 & 22 & 11 & 22 & 44 & 22 & 44 & 22 & 22 & 44 & 44 & 22 & 22 & 44 & 44 & 22 & 44 & 22 & 44 & 44 & 44  \\
			2  &  2 &  2 &  4 & 44 & 44 & 44 & 22 & 44 & 22 & 44 & 44 & 22 & 22 & 44 & 44 & 22 & 22 & 44 & 22 & 44 & 22 & 22 & 22  \\
			3  & 22 &  2 & 44 & 44 & 44 & 44 & 22 & 44 & 22 &  4 & 44 & 22 & 22 & 44 & 44 & 22 &  2 & 44 & 22 & 44 & 22 & 22 & 22  \\
			4  & 44 &  2 & 22 & 22 & 22 & 22 & 44 & 22 & 44 & 11 &  2 & 44 & 44 & 22 & 22 &  4 & 44 & 22 & 44 & 22 & 44 & 44 & 44  \\
			5  & 44 &  2 & 22 & 11 & 22 & 22 & 44 & 22 & 44 & 22 & 22 &  4 & 44 & 22 &  2 & 44 & 44 & 22 & 44 & 22 & 44 & 44 & 44  \\
			6  & 22 &  2 & 44 & 44 & 44 & 44 &  2 & 44 & 22 & 44 & 44 & 22 & 22 & 44 & 44 & 22 & 22 & 44 & 22 &  4 & 22 & 22 & 22  \\
			7  & 44 &  2 & 22 &  2 & 22 & 22 & 44 & 22 & 44 & 22 & 22 & 44 & 44 & 22 & 11 & 44 & 44 & 22 & 44 & 22 & 44 & 44 &  4  \\
			8  & 22 &  2 & 44 & 44 & 44 & 44 & 22 & 44 &  2 & 44 & 44 & 22 & 22 & 44 & 44 & 22 & 22 &  4 & 22 & 44 & 22 & 22 & 22  \\
			9  & 44 &  2 & 11 & 22 &  2 & 22 & 44 & 22 & 44 & 22 & 22 & 44 & 44 & 22 & 22 & 44 & 44 & 22 & 44 & 22 & 44 &  4 & 44  \\
			10  & 44 &  2 & 22 & 22 & 22 &  2 & 44 & 22 & 44 & 22 & 22 & 44 & 44 & 22 & 22 & 44 & 44 & 22 & 44 & 11 &  4 & 44 & 44  \\
			11  & 44 &  2 & 22 & 22 & 22 & 22 & 44 &  2 & 44 & 22 & 22 & 44 & 44 & 11 & 22 & 44 & 44 & 22 &  4 & 22 & 44 & 44 & 44  \\
			12  & 44 &  2 & 22 & 22 & 22 & 22 & 44 & 11 & 44 & 22 & 22 & 44 &  4 &  2 & 22 & 44 & 44 & 22 & 44 & 22 & 44 & 44 & 44  \\
			13  &  2 &  1 &  2 &  2 &  2 &  2 &  2 &  2 &  2 &  2 &  2 &  2 &  2 &  2 &  2 &  2 &  2 &  2 &  2 &  2 &  2 &  2 &  2  \\
			14  & 22 &  2 & 44 & 44 & 44 & 44 & 22 & 44 & 22 & 44 & 44 & 22 &  2 &  4 & 44 & 22 & 22 & 44 & 22 & 44 & 22 & 22 & 22  \\
			15  & 22 &  2 & 44 & 44 & 44 & 44 & 22 &  4 & 22 & 44 & 44 & 22 & 22 & 44 & 44 & 22 & 22 & 44 &  2 & 44 & 22 & 22 & 22  \\
			16  & 22 &  2 & 44 & 44 & 44 &  4 & 22 & 44 & 22 & 44 & 44 & 22 & 22 & 44 & 44 & 22 & 22 & 44 & 22 & 44 &  2 & 22 & 22  \\
			17  & 22 &  2 & 44 & 44 &  4 & 44 & 22 & 44 & 22 & 44 & 44 & 22 & 22 & 44 & 44 & 22 & 22 & 44 & 22 & 44 & 22 &  2 & 22  \\
			18  & 44 &  2 & 22 & 22 & 22 & 22 & 44 & 22 &  4 & 22 & 22 & 44 & 44 & 22 & 22 & 44 & 44 &  1 & 44 & 22 & 44 & 44 & 44  \\
			19  & 22 &  2 & 44 &  4 & 44 & 44 & 22 & 44 & 22 & 44 & 44 & 22 & 22 & 44 & 44 & 22 & 22 & 44 & 22 & 44 & 22 & 22 &  2  \\
			20  & 44 &  2 & 22 & 22 & 22 & 11 &  4 & 22 & 44 & 22 & 22 & 44 & 44 & 22 & 22 & 44 & 44 & 22 & 44 &  2 & 44 & 44 & 44  \\
			21  & 22 &  2 & 44 & 44 & 44 & 44 & 22 & 44 & 22 & 44 & 44 &  2 & 22 & 44 &  4 & 22 & 22 & 44 & 22 & 44 & 22 & 22 & 22  \\
			22  & 22 &  2 & 44 & 44 & 44 & 44 & 22 & 44 & 22 & 44 &  4 & 22 & 22 & 44 & 44 &  2 & 22 & 44 & 22 & 44 & 22 & 22 & 22  \\
		\end{NiceArray}
		$
		}
		\caption{The size of the sets $\langle x \rangle_2$ for $x \in \mathbb{E}_{{23}^2}$}\label{Table-E23-a}
	\end{table}
	\begin{table}[!ht]
		\centering
		{\footnotesize			
		$
		\setlength{\extrarowheight}{4mm}
		\begin{NiceArray}{rrrrrrrrrrrrrrrrrrrrrrr}%
			[small, hvlines, first-row, first-col, code-before ={\cellcolor{red}{1-16, 2-22, 3-5, 4-11, 5-17, 6-23, 7-6, 8-12, 10-1, 11-7, 12-13, 13-19, 15-8, 16-14, 17-20, 18-3, 20-15, 21-21, 22-4, 23-10}
			\cellcolor{blue!20}{1-3, 1-4, 1-5, 1-6, 1-8, 1-14, 1-15, 1-18, 1-20, 2-4, 2-6, 2-8, 2-10, 2-11, 2-14, 2-15, 2-18, 2-20, 5-3, 5-4, 5-5, 5-6, 5-8, 5-14, 5-15, 5-18, 5-20, 6-3, 6-5, 6-6, 6-8, 6-10, 6-11, 6-14, 6-18, 6-20, 8-3, 8-5, 8-6, 8-8, 8-10, 8-11, 8-14, 8-18, 8-20, 10-4, 10-6, 10-8, 10-10, 10-11, 10-14, 10-15, 10-18, 10-20, 11-3, 11-4, 11-5, 11-8, 11-10, 11-11, 11-14, 11-15, 11-18, 12-3, 12-4, 12-5, 12-6, 12-10, 12-11, 12-15, 12-18, 12-20, 13-3, 13-4, 13-5, 13-6, 13-10, 13-11, 13-15, 13-18, 13-20, 19-3, 19-4, 19-5, 19-6, 19-8, 19-10, 19-11, 19-14, 19-15, 19-20, 21-3, 21-4, 21-5, 21-8, 21-10, 21-11, 21-14, 21-15, 21-18}
			 \cellcolor{green}{1-1, 1-7, 1-9, 1-12, 1-13, 1-19, 1-21, 1-22, 1-23, 2-7, 2-9, 2-12, 2-13, 2-16, 2-17, 2-19, 2-21, 2-23, 3-4, 3-6, 3-8, 3-10, 3-11, 3-14, 3-15, 3-18, 3-20, 4-3, 4-4, 4-5, 4-6, 4-8, 4-14, 4-15, 4-18, 4-20, 5-1, 5-7, 5-9, 5-12, 5-13, 5-19, 5-21, 5-22, 5-23, 6-1, 6-7, 6-9, 6-13, 6-16, 6-17, 6-19, 6-21, 6-22, 7-3, 7-4, 7-5, 7-8, 7-10, 7-11, 7-14, 7-15, 7-18, 8-1, 8-7, 8-9, 8-13, 8-16, 8-17, 8-19, 8-21, 8-22, 9-3, 9-4, 9-5, 9-6, 9-8, 9-10, 9-11, 9-14, 9-15, 9-20, 10-7, 10-9, 10-12, 10-13, 10-16, 10-17, 10-19, 10-21, 10-23, 11-1, 11-9, 11-12, 11-13, 11-16, 11-17, 11-19, 11-22, 11-23, 12-1, 12-7, 12-9, 12-12, 12-16, 12-17, 12-21, 12-22, 12-23, 13-1, 13-7, 13-9, 13-12, 13-16, 13-17, 13-21, 13-22, 13-23, 15-3, 15-4, 15-5, 15-6, 15-10, 15-11, 15-15, 15-18, 15-20, 16-3, 16-4, 16-5, 16-6, 16-10, 16-11, 16-15, 16-18, 16-20, 17-3, 17-4, 17-5, 17-8, 17-10, 17-11, 17-14, 17-15, 17-18, 18-4, 18-6, 18-8, 18-10, 18-11, 18-14, 18-15, 18-18, 18-20, 19-1, 19-7, 19-12, 19-13, 19-16, 19-17, 19-19, 19-21, 19-22, 19-23, 20-3, 20-5, 20-6, 20-8, 20-10, 20-11, 20-14, 20-18, 20-20, 21-1, 21-9, 21-12, 21-13, 21-16, 21-17, 21-19, 21-22, 21-23, 22-3, 22-5, 22-6, 22-8, 22-10, 22-11, 22-14, 22-18, 22-20, 23-3, 23-4, 23-5, 23-6, 23-8, 23-14, 23-15, 23-18, 23-20} \rowcolor{yellow}{14}  \columncolor{yellow}{2} }, 
			%code-after = {	\begin{tikzpicture}
			%					[name suffix = -large,
			%					 thick,
			%					every node/.style = {draw=red,
			%						inner sep = -1.5 pt}]
			%					\node [fit = (1-1)] {} ;
			%					\node [fit = (2-2)] {} ;
			%					\node [fit = (3-3)] {} ;
			%					\node [fit = (4-4)] {} ;
			%					\node [fit = (10-12)] {} ;
			%				\end{tikzpicture}},
			columns-width = 4mm,
			%cell-space-limits=2pt, margin
			]%
			&   0 &   1 &   2 &   3 &   4 &   5 &   6 &   7 &   8 &   9 &  10 &  11 &  12 &  13 &  14 &  15 &  16 &  17 &  18 &  19 &  20 &  21 &  22 \\
			0   & 484 &   2 & 121 & 121 & 121 & 121 & 484 & 121 & 484 &  11 &  11 & 484 & 484 & 121 & 121 &  44 &  44 & 121 & 484 & 121 & 484 & 484 & 484 \\
			1   &  44 &   2 &  11 & 121 &  11 & 121 & 484 & 121 & 484 & 121 & 121 & 484 & 484 & 121 & 121 & 484 & 484 & 121 & 484 & 121 & 484 &  44 & 484 \\
			2   &  22 &   2 &  44 & 484 &  44 & 484 & 242 & 484 & 242 & 484 & 484 & 242 & 242 & 484 & 484 & 242 & 242 & 484 & 242 & 484 & 242 &  22 & 242 \\
			3   & 242 &   2 & 484 & 484 & 484 & 484 & 242 & 484 & 242 &  44 &  44 & 242 & 242 & 484 & 484 &  22 &  22 & 484 & 242 & 484 & 242 & 242 & 242 \\
			4   & 484 &   2 & 121 & 121 & 121 & 121 & 484 & 121 & 484 &  11 &  11 & 484 & 484 & 121 & 121 &  44 &  44 & 121 & 484 & 121 & 484 & 484 & 484 \\
			5   & 484 &   2 & 121 &  11 & 121 & 121 & 484 & 121 & 484 & 121 & 121 &  44 & 484 & 121 &  11 & 484 & 484 & 121 & 484 & 121 & 484 & 484 &  44 \\
			6   & 242 &   2 & 484 & 484 & 484 &  44 &  22 & 484 & 242 & 484 & 484 & 242 & 242 & 484 & 484 & 242 & 242 & 484 & 242 &  44 &  22 & 242 & 242 \\
			7   & 484 &   2 & 121 &  11 & 121 & 121 & 484 & 121 & 484 & 121 & 121 &  44 & 484 & 121 &  11 & 484 & 484 & 121 & 484 & 121 & 484 & 484 &  44 \\
			8   & 242 &   2 & 484 & 484 & 484 & 484 & 242 & 484 &   2 & 484 & 484 & 242 & 242 & 484 & 484 & 242 & 242 &   4 & 242 & 484 & 242 & 242 & 242 \\
			9   &  44 &   2 &  11 & 121 &  11 & 121 & 484 & 121 & 484 & 121 & 121 & 484 & 484 & 121 & 121 & 484 & 484 & 121 & 484 & 121 & 484 &  44 & 484 \\
			10  & 484 &   2 & 121 & 121 & 121 &  11 &  44 & 121 & 484 & 121 & 121 & 484 & 484 & 121 & 121 & 484 & 484 & 121 & 484 &  11 &  44 & 484 & 484 \\
			11  & 484 &   2 & 121 & 121 & 121 & 121 & 484 &  11 & 484 & 121 & 121 & 484 &  44 &  11 & 121 & 484 & 484 & 121 &  44 & 121 & 484 & 484 & 484 \\
			12  & 484 &   2 & 121 & 121 & 121 & 121 & 484 &  11 & 484 & 121 & 121 & 484 &  44 &  11 & 121 & 484 & 484 & 121 &  44 & 121 & 484 & 484 & 484 \\
			13  &   2 &   1 &   2 &   2 &   2 &   2 &   2 &   2 &   2 &   2 &   2 &   2 &   2 &   2 &   2 &   2 &   2 &   2 &   2 &   2 &   2 &   2 &   2 \\
			14  & 242 &   2 & 484 & 484 & 484 & 484 & 242 &  44 & 242 & 484 & 484 & 242 &  22 &  44 & 484 & 242 & 242 & 484 &  22 & 484 & 242 & 242 & 242 \\
			15  & 242 &   2 & 484 & 484 & 484 & 484 & 242 &  44 & 242 & 484 & 484 & 242 &  22 &  44 & 484 & 242 & 242 & 484 &  22 & 484 & 242 & 242 & 242 \\
			16  & 242 &   2 & 484 & 484 & 484 &  44 &  22 & 484 & 242 & 484 & 484 & 242 & 242 & 484 & 484 & 242 & 242 & 484 & 242 &  44 &  22 & 242 & 242 \\
			17  &  22 &   2 &  44 & 484 &  44 & 484 & 242 & 484 & 242 & 484 & 484 & 242 & 242 & 484 & 484 & 242 & 242 & 484 & 242 & 484 & 242 &  22 & 242 \\
			18  & 484 &   2 & 121 & 121 & 121 & 121 & 484 & 121 &   4 & 121 & 121 & 484 & 484 & 121 & 121 & 484 & 484 &   1 & 484 & 121 & 484 & 484 & 484 \\
			19  & 242 &   2 & 484 &  44 & 484 & 484 & 242 & 484 & 242 & 484 & 484 &  22 & 242 & 484 &  44 & 242 & 242 & 484 & 242 & 484 & 242 & 242 &  22 \\
			20  & 484 &   2 & 121 & 121 & 121 &  11 &  44 & 121 & 484 & 121 & 121 & 484 & 484 & 121 & 121 & 484 & 484 & 121 & 484 &  11 &  44 & 484 & 484 \\
			21  & 242 &   2 & 484 &  44 & 484 & 484 & 242 & 484 & 242 & 484 & 484 &  22 & 242 & 484 &  44 & 242 & 242 & 484 & 242 & 484 & 242 & 242 &  22 \\
			22  & 242 &   2 & 484 & 484 & 484 & 484 & 242 & 484 & 242 &  44 &  44 & 242 & 242 & 484 & 484 &  22 &  22 & 484 & 242 & 484 & 242 & 242 & 242 \\
		\end{NiceArray}
		$
		}
		\caption{The size of the sets $\langle x \rangle$ for $x \in \mathbb{E}_{{23}^2}$}\label{Table-E23-b}
	\end{table}
	\arrayrulecolor{black}

\clearpage
%\newpage
\section{Observation for the dichotomy between even and odd bases}\label{App:Even-Odd-Example}
	Let us use the following Entropoid $\mathbb{E}_{{49223}^2}(a_3=33170, a_8=13052, b_2=12476, b_7=19648)$, which has $\mathbf{0}_* =  (20898, 8427)$, $\mathbf{1}_* = (29739, 25115)$ and $x * y = (x_1, x_2) * (y_1, y_2) = ( 13052 x_2 y_1 + 33170 x_2 + 24201 y_1 + 34725, \ \ \ 19648 x_1 y_2 + 12476 x_1 + 14362 y_2 + 19210)$. For a generator let us use $g = (21287, 34883)$.	
	
	%f1 = aa8*X[1]*Y[0] + aa3*X[1] + aa4*Y[0] + aa1;
	%f2 = bb7*X[0]*Y[1] + bb2*X[0] + bb5*Y[1] + bb1;
	
	For $\mathfrak{b}=3$ and for $i = 2, 3, 4$ one can check that these are the following outcomes:
	\begin{description}
		\item[$i = 2$,] $\mathbb{L}(i) = \{ [0, 0], [1, 0], [0, 1], [1, 1] \}$.\\
		$g^{(3, \mathbb{L}(i), \mathfrak{b})} = \{ (22143, 3374), (22143, 3374), (9735, 2125), (9735, 2125) \} $. As we can see there are only two outcomes: $g_{2,1}= (22143, 3374)$ and $g_{2,2}=(9735, 2125)$, so we get $r_2 = 2$, $\xi_2 = \{ C_{i,1}, C_{i, r_2}  \}$, where $C_{i,1} = \{[0, 0], [1, 0] \} $ and  $C_{i,2} = \{[0, 1], [1, 1] \} $. From this we get $	H_{\infty}(\xi_2)= H_{2}(\xi_2) = H_1(\xi_2) = 1$.
		\item[$i = 3$,] $\mathbb{L}(i) = \{ [0, 0, 0], [1, 0, 0], [0, 1, 0], [1, 1, 0], [0, 0, 1], [1, 0, 1], [0, 1, 1], [1, 1, 1] \}$. \\
		$g^{(9, \mathbb{L}(i), \mathfrak{b})} = \{ (12320, 26593), (12320, 26593), (28416, 42082), (28416, 42082), (28416, 42082),$\\
		$(28416, 42082), (12320, 26593), (12320, 26593) \} $. As we can see there are again only two outcomes: $g_{3,1}= (12320, 26593)$ and $g_{3,2}= (28416, 42082)$. So, again we have $r_3 = 2$, and now $\xi_3 = \{ C_{i,1}, C_{i, r_3}  \}$, where $C_{i,1} = \{ [   [0, 0, 0],      [1, 0, 0],      [0, 1, 1],      [1, 1, 1]] \} $ and  $C_{i,2} = \{ [0, 1, 0],          [1, 1, 0],        [0, 0, 1],      [1, 0, 1] \} $. From this we get $	H_{\infty}(\xi_3)= H_{2}(\xi_3) = H_1(\xi_3) = 1$.
		\item[$i = 4$,] $\mathbb{L}(i) = \{ [0, 0, 0, 0], [1, 0, 0, 0], \ldots, [1, 1, 1, 1]	 \}$. \\
		$g^{(27, \mathbb{L}(i), \mathfrak{b})} = \{ (42159, 1249), (42159, 1249), (46373, 13249), \ldots \} $. One can see that again there are only two outcomes: $g_{4,1}= (42159, 1249)$ and $g_{4,2}= (46373, 13249)$. So, $r_4 = 2$, and now $\xi_4 = \{ C_{i,1}, C_{i, r_4}  \}$, where $|C_{i,1}| = 8 $ and  $|C_{i,2}| = 8$. From this we get $	H_{\infty}(\xi_4)= H_{2}(\xi_4) = H_1(\xi_4) = 1$.	
	\end{description}
	
	For $\mathfrak{b}=4$ and for $i = 2, 3, \ldots 9$ a summary table of the obtained calculations is given in Table \ref{Table:Base-4}. As we can see, the sets $\xi_i$ are partitioned always in 3 subsets, but the entropies tend to 0 as $i$ is increasing.
	% Table generated by Excel2LaTeX from sheet 'Sheet1'
	\begin{table}[htbp]
		\scriptsize 
		\centering
		\renewcommand{\arraystretch}{1.0}
		\begin{tabular}{|c|r|r|r|r|r|c|r|r|r|r|r||}
			\rowcolor[rgb]{ 1,  .78,  .808} \multicolumn{12}{c}{$\mathfrak{b} = 4$} \bigstrut[b]\\
			\hline
			\multicolumn{6}{|c|}{$i = 2$}                                                                             & \multicolumn{6}{c||}{$i = 3$} \bigstrut\\
			\hline
			$r_i$           & $g_{i,j}$       & $n_{ij}$        & $H_{\infty}$    & $H_2$           & $H_1$           & $r_i$           & $g_{i,j}$       & $n_{ij}$        & $H_{\infty}$    & $H_2$           & $H_1$ \bigstrut\\
			\hline
			\multirow{4}[8]{*}{3} & (2847, 43103)   & 3               & \multirow{4}[8]{*}{1.585} & \multirow{4}[8]{*}{1.585} & \multirow{4}[8]{*}{1.585} & \multirow{4}[8]{*}{3} & (37676, 4224)   & 6               & \multicolumn{1}{c|}{\multirow{4}[8]{*}{0.848}} & \multicolumn{1}{c|}{\multirow{4}[8]{*}{1.295}} & \multicolumn{1}{c||}{\multirow{4}[8]{*}{1.436}} \bigstrut\\
			\cline{2-3}\cline{8-9}                    & (12306, 3250)   & 3               &                 &                 &                 &                 & (14769, 4826)   & 6               &                 &                 &  \bigstrut\\
			\cline{2-3}\cline{8-9}                    & (43283, 29857)  & 3               &                 &                 &                 &                 & (27843, 29019)  & 15              &                 &                 &  \bigstrut\\
			\cline{2-3}\cline{8-9}                    & $\Sigma$        & 9               &                 &                 &                 &                 & $\Sigma$        & 27              &                 &                 &  \bigstrut\\
			\hline
			\multicolumn{6}{|c|}{$i = 4$}                                                                             & \multicolumn{6}{c||}{$i = 5$} \bigstrut\\
			\hline
			$r_i$           & $g_{i,j}$       & $n_{ij}$        & $H_{\infty}$    & $H_2$           & $H_1$           & $r_i$           & $g_{i,j}$       & $n_{ij}$        & $H_{\infty}$    & $H_2$           & $H_1$ \bigstrut\\
			\hline
			\multirow{4}[8]{*}{3} & (9873, 27342)   & 12              & \multirow{4}[8]{*}{0.507} & \multirow{4}[8]{*}{0.891} & \multirow{4}[8]{*}{1.173} & \multirow{4}[8]{*}{3} & (10067, 22108)  & 24              & \multirow{4}[8]{*}{0.317} & \multirow{4}[8]{*}{0.592} & \multirow{4}[8]{*}{0.914} \bigstrut\\
			\cline{2-3}\cline{8-9}                    & (44897, 4336)   & 12              &                 &                 &                 &                 & (6487, 4975)    & 24              &                 &                 &  \bigstrut\\
			\cline{2-3}\cline{8-9}                    & (31057, 15755)  & 57              &                 &                 &                 &                 & (22832, 44737)  & 195             &                 &                 &  \bigstrut\\
			\cline{2-3}\cline{8-9}                    & $\Sigma$        & 81              &                 &                 &                 &                 & $\Sigma$        & 243             &                 &                 &  \bigstrut\\
			\hline
			\multicolumn{6}{|c|}{$i = 6$}                                                                             & \multicolumn{6}{c||}{$i = 7$} \bigstrut\\
			\hline
			$r_i$           & $g_{i,j}$       & $n_{ij}$        & $H_{\infty}$    & $H_2$           & $H_1$           & $r_i$           & $g_{i,j}$       & $n_{ij}$        & $H_{\infty}$    & $H_2$           & $H_1$ \bigstrut\\
			\hline
			\multirow{4}[8]{*}{3} & (19981, 22570)  & 48              & \multirow{4}[8]{*}{0.204} & \multirow{4}[8]{*}{0.391} & \multirow{4}[8]{*}{0.694} & \multirow{4}[8]{*}{3} & (43901, 19938)  & 96              & \multirow{4}[8]{*}{0.133} & \multirow{4}[8]{*}{0.258} & \multirow{4}[8]{*}{0.517} \bigstrut\\
			\cline{2-3}\cline{8-9}                    & (35514, 19869)  & 48              &                 &                 &                 &                 & (2901, 22539)   & 96              &                 &                 &  \bigstrut\\
			\cline{2-3}\cline{8-9}                    & (31074, 13020)  & 633             &                 &                 &                 &                 & (1892, 14331)   & 1995            &                 &                 &  \bigstrut\\
			\cline{2-3}\cline{8-9}                    & $\Sigma$        & 729             &                 &                 &                 &                 & $\Sigma$        & 2187            &                 &                 &  \bigstrut\\
			\hline
			\multicolumn{6}{|c|}{$i = 8$}                                                                             & \multicolumn{6}{c||}{$i = 9$} \bigstrut\\
			\hline
			$r_i$           & $g_{i,j}$       & $n_{ij}$        & $H_{\infty}$    & $H_2$           & $H_1$           & $r_i$           & $g_{i,j}$       & $n_{ij}$        & $H_{\infty}$    & $H_2$           & $H_1$ \bigstrut\\
			\hline
			\multirow{4}[8]{*}{3} & (42829, 25216)  & 192             & \multirow{4}[8]{*}{0.087} & \multirow{4}[8]{*}{0.171} & \multirow{4}[8]{*}{0.380} & \multirow{4}[8]{*}{3} & (9873, 27342)   & 384             & \multirow{4}[8]{*}{0.057} & \multirow{4}[8]{*}{0.114} & \multirow{4}[8]{*}{0.277} \bigstrut\\
			\cline{2-3}\cline{8-9}                    & (18292, 35754)  & 192             &                 &                 &                 &                 & (31057, 15755)  & 384             &                 &                 &  \bigstrut\\
			\cline{2-3}\cline{8-9}                    & (44720, 23968)  & 6177            &                 &                 &                 &                 & (44897, 4336)   & 18915           &                 &                 &  \bigstrut\\
			\cline{2-3}\cline{8-9}                    & $\Sigma$        & 6561            &                 &                 &                 &                 & $\Sigma$        & 19683           &                 &                 &  \bigstrut\\
			\hline
		\end{tabular}%
		\caption{A summary table of calculations with base $\mathfrak{b}=4$}  \label{Table:Base-4}%
	\end{table}%
	
	For odd bases $\mathfrak{b}=5$ and $\mathfrak{b}=7$ things are getting more interesting, as the number $r_i$ of partition parts for the sets $\xi_i$ is increasing as $i$ increases. The entropies $H_{\infty}$, $H_{2}$ and $H_1$ are also increasing.
	
	On the other hand, for even bases $\mathfrak{b}=6$ and $\mathfrak{b}=8$, we see that the number of partitions increases as well, but the entropies $H_{\infty}$, $H_{2}$ and $H_1$, after an initial increase, start to decrease (and trend to zero).
	
	The summary of the calculations is given in Table \ref{Table:Base-5-8}, where in order to keep a reasonable table space we omitted the representatives $g_{i,j}$ of the partitioned classes. 
	% Table generated by Excel2LaTeX from sheet 'Sheet3'
	\begin{table}[htbp]
		\centering
		\scriptsize 
		\centering
		\renewcommand{\arraystretch}{1.0}    
		\begin{tabular}{|r|r|r|r|r|r|r|r|r|r|r|r|r|r|}
			\hline
			\rowcolor[rgb]{ 1,  .78,  .808} \multicolumn{7}{|c|}{$\mathfrak{b} = 5$}                                                                                               & \multicolumn{7}{c|}{$\mathfrak{b} = 6$} \bigstrut\\
			\hline
			$i$             & $r_i$           & $\min n_{ij}$   & $\max n_{ij}$   & $H_{\infty}$    & $H_2$           & $H_1$           & $i$             & $r_i$           & $\min n_{ij}$   & $\max n_{ij}$   & $H_{\infty}$    & $H_2$           & $H_1$ \bigstrut\\
			\hline
			\hline
			2               & 4               & 4               & 5               & 2.000           & 2.000           & 2.000           & 2               & 5               & 5               & 5               & 2.322           & 2.322           & 2.322 \bigstrut\\
			\hline
			3               & 6               & 8               & 16              & 2.000           & 2.415           & 2.500           & 3               & 7               & 10              & 45              & 1.474           & 2.276           & 2.543 \bigstrut\\
			\hline
			4               & 8               & 16              & 48              & 2.415           & 2.678           & 2.811           & 4               & 9               & 20              & 305             & 1.035           & 1.841           & 2.439 \bigstrut\\
			\hline
			5               & 10              & 32              & 192             & 2.415           & 2.871           & 3.031           & 5               & 11              & 40              & 1845            & 0.760           & 1.429           & 2.218 \bigstrut\\
			\hline
			6               & 12              & 64              & 640             & 2.678           & 3.023           & 3.198           & 6               & 13              & 80              & 10505           & 0.573           & 1.104           & 1.961 \bigstrut\\
			\hline
			7               & 14              & 128             & 2560            & 2.678           & 3.148           & 3.333           & 7               & 15              & 160             & 57645           & 0.439           & 0.857           & 1.704 \bigstrut\\
			\hline
			8               & 16              & 256             & 8960            & 2.871           & 3.255           & 3.447           & 8               & 17              & 320             & 308705          & 0.340           & 0.669           & 1.464 \bigstrut\\
			\hline
			9               & 18              & 512             & 35840           & 2.871           & 3.348           & 3.544           & 9               & 19              & 640             & 1625445         & 0.265           & 0.524           & 1.247 \bigstrut\\
			\hline
			\rowcolor[rgb]{ 1,  .78,  .808} \multicolumn{7}{|c|}{$\mathfrak{b} = 7$}                                                                                               & \multicolumn{7}{c|}{$\mathfrak{b} = 8$} \bigstrut\\
			\hline
			$i$             & $r_i$           & $\min n_{ij}$   & $\max n_{ij}$   & $H_{\infty}$    & $H_2$           & $H_1$           & $i$             & $r_i$           & $\min n_{ij}$   & $\max n_{ij}$   & $H_{\infty}$    & $H_2$           & $H_1$ \bigstrut\\
			\hline
			\hline
			2               & 6               & 6               & 6               & 2.585           & 2.585           & 2.585           & 2               & 7               & 7               & 7               & 2.807           & 2.807           & 2.807 \bigstrut\\
			\hline
			3               & 12              & 12              & 24              & 3.170           & 3.433           & 3.503           & 3               & 13              & 14              & 91              & 1.914           & 3.054           & 3.408 \bigstrut\\
			\hline
			4               & 20              & 24              & 144             & 3.170           & 3.971           & 4.124           & 4               & 21              & 28              & 889             & 1.433           & 2.622           & 3.548 \bigstrut\\
			\hline
			5               & 30              & 48              & 576             & 3.755           & 4.360           & 4.580           & 5               & 31              & 56              & 7735            & 1.120           & 2.146           & 3.467 \bigstrut\\
			\hline
			6               & 42              & 96              & 2880            & 4.018           & 4.666           & 4.933           & 6               & 43              & 112             & 63217           & 0.896           & 1.751           & 3.278 \bigstrut\\
			\hline
			7               & 56              & 192             & 17280           & 4.018           & 4.918           & 5.220           & 7               & 57              & 224             & 496951          & 0.729           & 1.437           & 3.039 \bigstrut\\
			\hline
			8               & 72              & 384             & 80640           & 4.380           & 5.132           & 5.459           & 8               & 73              & 448             & 3805249         & 0.599           & 1.188           & 2.780 \bigstrut\\
			\hline
			9               & 90              & 768             & 430080          & 4.550           & 5.318           & 5.664           & 9               & 91              & 896             & 28596295        & 0.497           & 0.988           & 2.521 \bigstrut\\
			\hline
		\end{tabular}%
		\caption{A summary table of calculations with bases $\mathfrak{b}=5, 6, 7, 8$}  \label{Table:Base-5-8}%
	\end{table}%

	\begin{figure}[h!]
		\centering
		\includegraphics[scale=0.5]{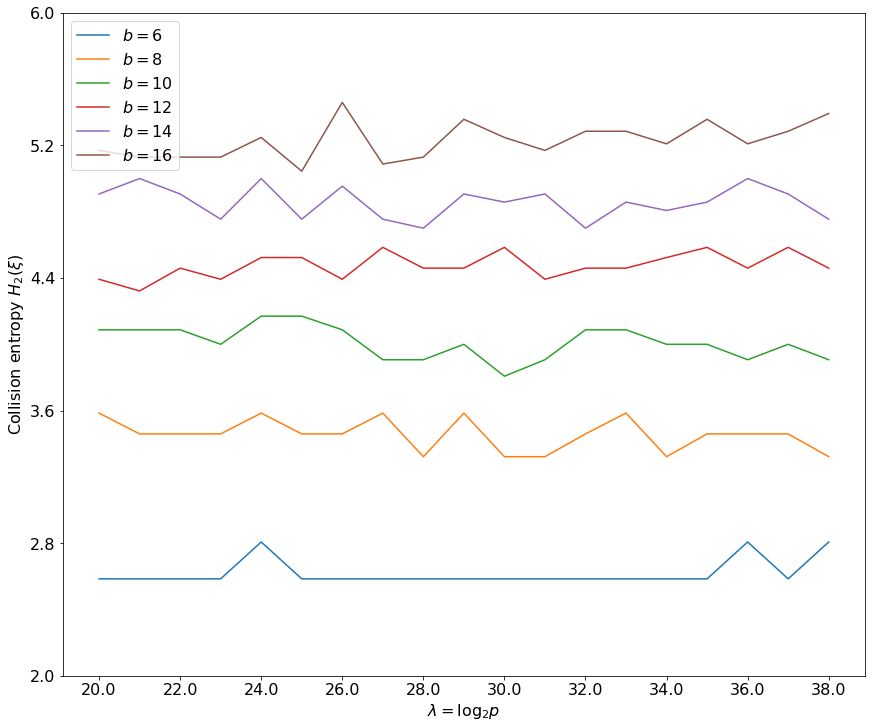}
		\caption{Collision entropy $H_2(\xi)$ experimentally calculated for bases $\mathfrak{b} = 6, 8, \ldots, 16$, for different entropoids $\mathbb{E}_p$ where $\lambda = \log_2 p$ varies in the interval $[20, 38]$. The collision entropy values were computed as an average of 100 experiments. }
		\label{Figure:EvenBases}
	\end{figure}
	\begin{figure}[h!]
		\centering
		\includegraphics[scale=0.5]{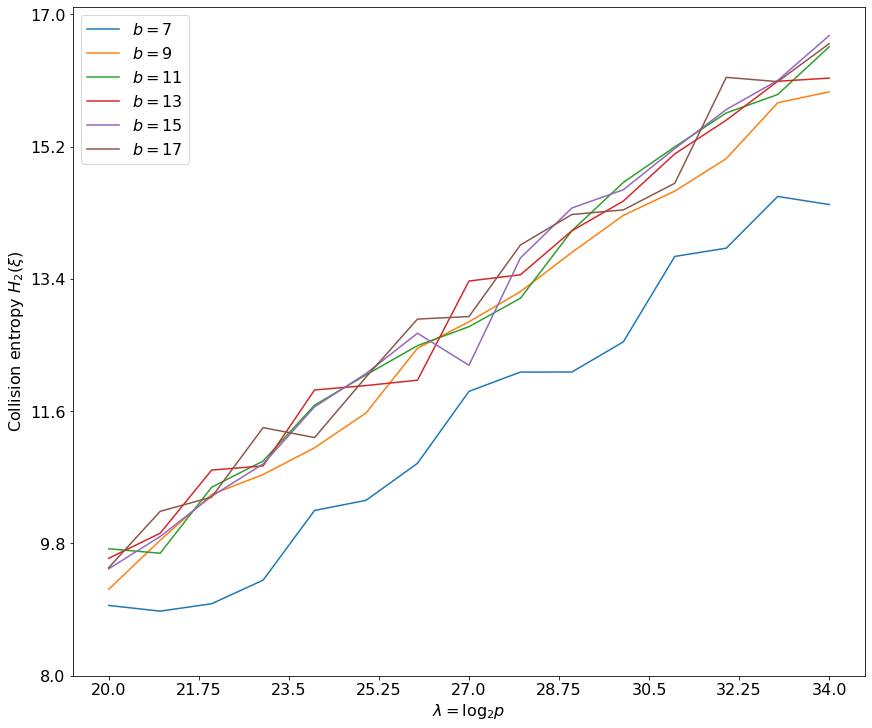}
		\caption{Collision entropy $H_2(\xi)$ experimentally calculated for bases $\mathfrak{b} = 7, 9, \ldots, 17$, for different entropoids $\mathbb{E}_p$ where $\lambda = \log_2 p$ varies in the interval $[20, 34]$. The collision entropy values were computed as an average of 10 experiments.}
		\label{Figure:OddBases}
	\end{figure}

%\newpage	
%\pagebreak
\clearpage
%\section{}\label{App:Proof-of-concept-SageMath}
\section{Proof-of-concept SageMath Jupyter implementation of the algorithms given in "Entropoid Based Cryptography"}\label{App:Proof-of-concept-SageMath}
%\addcontentsline{toc}{section}{C \ Proof-of-concept SageMath Jupyter implementation}
%\includepdf[pages=-,pagecommand={}]{SageMathImplementation.pdf}
The file \texttt{Proof\_of\_concept\_SageMath\_Jupyter\_implementation.ipynb} is provided in the folder \texttt{/anc/} with this Arxiv submission.

\end{document}